\documentclass[a4paper,11pt]{article}
\usepackage{jheppub} 
\usepackage{lineno}
\usepackage[dvipsnames]{xcolor} 
\usepackage{dirtytalk}
\usepackage{ytableau}
\usepackage[dvipsnames]{xcolor}
\usepackage{dsfont}
\usepackage{mathdots}
\usepackage{colortbl}
\usepackage{braket}
\usepackage{amsthm}
\usepackage{amsmath}
\usepackage{float}
\usepackage{bbm, dsfont}
\usepackage[normalem]{ulem}
\usepackage{slashed}
\usepackage{mathtools}
\usepackage{multirow}
\usepackage{cancel}
\usepackage{cleveref}
\usepackage{orcidlink}
\usepackage{enumerate}
\usepackage{fontawesome}
\usepackage{adjustbox}
\usepackage{tcolorbox}
\usepackage{relsize}
\usepackage{csquotes}
\usepackage{enumitem}
\usepackage{tikz-cd}
\usepackage{tikz}
    \usetikzlibrary{positioning}
    \usetikzlibrary{arrows}
    \usetikzlibrary{shapes}
    \usepackage{pgfplots}
    \usetikzlibrary{calc}
    \usetikzlibrary{decorations.markings}
     \usetikzlibrary{decorations.pathmorphing}
    \tikzset{snake it/.style={decorate, decoration=snake}}

    \usetikzlibrary{matrix}

    \pgfplotsset{compat=1.18}

    \renewcommand{\ker}{\operatorname{Ker}}

\NewDocumentEnvironment{examplenew}{O{} m}
{
    \ifnum#2=1
    \setcounter{examplenew}{1}%
    \refstepcounter{subexampleone}%
    \textbf{Example \thesubexampleone}%
  \else\ifnum#2=2
    \setcounter{examplenew}{2}%
    \refstepcounter{subexampletwo}%
    \textbf{Example \thesubexampletwo}%
  \else\ifnum#2=3
    \setcounter{examplenew}{3}
    \refstepcounter{subexamplethree}%
    \textbf{Example \thesubexamplethree}%
     \else\ifnum#2=4
    \setcounter{examplenew}{4}
    \refstepcounter{subexamplefour}%
    \textbf{Example \thesubexamplefour}%
  \else
    \PackageError{examplenew}{Second argument must be 1..3}{}
  \fi\fi\fi\fi
  \ifx&#1&\else ~(#1)~\fi
}
{%
  \par \medskip \noindent
}
\newcounter{examplenew} 
\newcounter{subexampleone}[examplenew]
\newcounter{subexampletwo}[examplenew]
\newcounter{subexamplethree}[examplenew]
\newcounter{subexamplefour}[examplenew]
\renewcommand{\thesubexampleone}{\arabic{examplenew}.\arabic{subexampleone}}
\renewcommand{\thesubexampletwo}{\arabic{examplenew}.\arabic{subexampletwo}}
\renewcommand{\thesubexamplethree}{\arabic{examplenew}.\arabic{subexamplethree}}
\renewcommand{\thesubexamplefour}{\arabic{examplenew}.\arabic{subexamplefour}}

\definecolor{ourblue}{RGB}{0, 57, 120} %
\colorlet{ourlightblue}{ourblue!50!white}
\newcommand{\bs}[1]{\boldsymbol{#1}}
\newcommand{\eps}{\varepsilon}
\newcommand{\rd}{\mathrm{d}}

\DeclareMathOperator{\Ima}{Im}

\newcommand{\Gt}{\tilde{G}}
\newcommand{\Et}{\tilde{E}}

\newcommand{\bd}{\bs{d}}
\newcommand{\bk}{\bs{k}}
\newcommand{\bp}{\bs{p}}
\newcommand{\bq}{\bs{q}}

\newcommand{\bv}{\bs{v}}
\newcommand{\bx}{\bs{x}}
\newcommand{\bz}{\bs{z}}
\newcommand{\bA}{\bs{A}}
\newcommand{\bB}{\bs{B}}
\newcommand{\bC}{\bs{C}}
\newcommand{\bD}{\bs{D}}
\newcommand{\bE}{\bs{E}}

\newcommand{\bG}{\bs{G}}
\newcommand{\bH}{\bs{H}}
\newcommand{\bI}{\bs{I}}
\newcommand{\bJ}{\bs{J}}

\newcommand{\bL}{\bs{L}}
\newcommand{\bM}{\bs{M}}
\newcommand{\bN}{\bs{N}}
\newcommand{\bO}{\bs{O}}
\newcommand{\bP}{\bs{P}}
\newcommand{\bQ}{\bs{Q}}
\newcommand{\bR}{\bs{R}}
\newcommand{\bT}{\bs{T}}
\newcommand{\bS}{\bs{S}}
\newcommand{\bU}{\bs{U}}
\newcommand{\bV}{\bs{V}}
\newcommand{\bW}{\bs{W}}

\newcommand{\bZ}{\bs{Z}}
\newcommand{\bkappa}{\bs{\kappa}}
\newcommand{\bnu}{\bs{\nu}}
\newcommand{\cF}{\mathcal{F}}
\newcommand{\cG}{\mathcal{G}}
\newcommand{\cH}{\mathcal{H}}

\newcommand{\cU}{\mathcal{U}}

\newcommand{\bDelta}{\bs{\Delta}}
\newcommand{\bGamma}{\bs{\Gamma}}
\newcommand{\bSigma}{\bs{\Sigma}}
\newcommand{\bOmega}{\bs{\Omega}}
\newcommand{\bPi}{\bs{\Pi}}
\newcommand{\Gp}{\bG(\bp)}

\newcommand{\rint}{\mathrm{int}}
\newcommand{\rext}{\mathrm{ext}}
\newtheorem{proposition}{Proposition}

\newtheorem{theorem}{Theorem}
\newtheorem{definition}{Definition}
\newtheorem{lemma}{Lemma}
\newtheorem*{lemma*}{Lemma}

\theoremstyle{definition}
\newtheorem{example}{Example}
\newtheorem*{example*}{Example}

\def\beq{\begin{equation}}
\def\eeq{\end{equation}}
\def\bsp#1\esp{\begin{split}#1\end{split}}
\newcommand{\cV}{\mathcal{V}}
\DeclareMathOperator{\Aut}{Aut}
\DeclareMathOperator{\Sym}{Sym}
\DeclareMathOperator{\TSym}{TSym}
\DeclareMathOperator{\Tr}{Tr}
\DeclareMathOperator{\id}{id}
\DeclareMathOperator{\sign}{sign}
\DeclareMathOperator{\im}{Im}

\renewcommand{\setminus}{\backslash}

\author{Claude Duhr, Sara Maggio,  Cathrin Semper, Sven F. Stawinski}
\affiliation{
\vskip 0.5 em
Bethe Center for Theoretical Physics, Universit\"at Bonn, D-53115, Germany
\vskip 0.5 em}

\emailAdd{cduhr@uni-bonn.de, smaggio@uni-bonn.de, csemper@uni-bonn.de,
sstawins@uni-bonn.de}

\title{Discrete symmetries of Feynman integrals}
\abstract{We perform a comprehensive study of a certain class of discrete symmetries of families of Feynman integrals, defined as affine changes of variables that map different sectors of the family into each other. We show that these transformations are always encoded into permutations of the Feynman parameters that relate the Lee-Pomeransky polynomials of the two sectors, irrespective of the integral representation used to define the Feynman integrals. We then construct an affine map in loop-momentum space that encodes such a permutation. We also show that these symmetries can be naturally embedded into the framework of twisted cohomology theories, and the period and intersection parings are invariant under the symmetry transformations. If we focus on symmetries within a fixed sector, we obtain a group acting on the twisted cohomology group, and we study the decomposition of this action into irreducible representations. One of our main mathematical results is that the character of this representation is proportional to the Euler characteristic of the corresponding fixed-point set. We then study the implications for Feynman integrals, in particular for the intersection matrix in a canonical basis. We also present a formula for the number of master integrals in a given sector in the presence of a non-trivial symmetry group in terms of the Euler characteristics of fixed-point sets. As an application, we obtain the numbers of master integrals for banana integrals with up to four loops for arbitrary configurations of non-zero masses. In order to achieve our results, we had to combine tools from various different areas of mathematics, including graph theory, group theory and algebraic topology.}

\begin{document}

\begin{flushright}
    BONN-TH/2026-08
    \end{flushright}

\maketitle


\section{Introduction}
\label{sec:introduction}
Symmetries play a fundamental role in many areas of modern theoretical physics. For example, they are crucial in the construction of interacting gauge theories and Quantum Field Theories (QFTs), including the Standard Model of particle physics. They also put strong constraints on the structure of the $S$-matrix and correlation functions of a QFT. In some cases, the symmetries are so strong that they allow one to fully determine, or at least constrain, the exact results. The arguably most prominent examples of this are conformal and integrable QFTs, where symmetries can be exploited to fix certain observables. Even if the symmetries do not completely determine the final result, having a good understanding of all symmetries can substantially simplify the problem and it is often the first step towards devising new and efficient techniques for computations.

In this paper we focus on a certain class of discrete symmetries that arise in the computation of higher-order perturbative corrections to scattering amplitudes and correlation functions. The perturbative expansion is conveniently organized into Feynman diagrams, where the order of the expansion is directly linked to the number of loops of the graphs. A Feynman graph with $L$ independent loops gives rise to a Feynman integral where we integrate over $L$ momenta not fixed by momentum conservation. It is well known that Feynman integrals may exhibit symmetries that are not necessarily manifest from the action defining a QFT. For example, some Feynman integrals exhibit hidden or dual conformal symmetries~\cite{Drummond:2006rz,Loebbert:2020hxk}, which can sometimes even be enlarged to a Yangian symmetry~\cite{Chicherin:2017frs,Loebbert:2019vcj,Corcoran:2021gda,Kazakov:2023nyu,Loebbert:2024qbw,Loebbert:2025abz}. Feynman integrals also inherit symmetries from the associated Newton polytope~\cite{delaCruz:2024ssb}. Recently, also a new, mysterious, antipodal symmetry of certain classes of Feynman integrals and amplitudes was discovered~\cite{Arkani-Hamed:2017ahv,Dixon:2021tdw,Dixon:2022xqh,Dixon:2023kop,Dixon:2025zwj}.

The main focus of our paper is another class of symmetries for Feynman integrals, defined as certain affine changes of variables that permute the singular loci of the integrand, i.e., the propagators. The motivation to study such transformations comes from the fact that Feynman integrals are naturally organized into families of integrals that only differ by the exponents of the propagators. It is well known that using integration-by-parts (IBP) identities~\cite{Tkachov:1981wb,Chetyrkin:1981qh}, every member of a family can be written as linear combination of so-called master integrals, and the set of master integrals is known to be finite~\cite{smirnov2010numbermasterintegralsfinite}. There is a natural filtration by \emph{sectors}, coming from the set of propagators that are raised to positive powers. Understanding the symmetries of the family allows one to relate integrals from different sectors to each other~\cite{Pak:2011xt,Lee:2012cn,Wu:2024paw}. This may speed up the solution of the IBP system, and various public tools for IBP reduction try to exploit symmetries~\cite{Smirnov:2008iw,Smirnov:2014hma,Smirnov:2019qkx,Smirnov:2023yhb,Smirnov:2025prc,Lee:2012cn,Lee:2013mka,Maierhofer:2017gsa,Klappert:2020nbg,Lange:2025ofh,Wu:2025aeg}. These symmetries, however, are often determined in a heuristic manner, and their structure is still relatively poorly understood. For example, it was recently observed that there are symmetries between sectors that require kinematics-dependent transformations~\cite{Wu:2024paw}, and these symmetries had not been considered by IBP reduction codes so far. Moreover, symmetries may reduce the number of master integrals within a sector. While it is known that in the absence of symmetries the number of master integrals in a sector can often be computed as a certain Euler characteristic~\cite{Bitoun:2017nre,bitoun2018numbermasterintegralseuler,Mizera_2018} or by counting numbers of critical points~\cite{Lee:2013hzt}, these formul\ae\ are known to often overcount the number of master integrals in the presence of symmetries.

In our paper, we provide a detailed and comprehensive study of these classes of discrete symmetries. In order to achieve this, we combine a large variety of mathematical tools from graph theory, twisted cohomology theory, group theory and algebraic topology. In the first part of our paper, we address the question of how to characterize these symmetries in loop-momentum space, including the recently discovered cases where the transformations are kinematics-dependent~\cite{Wu:2024paw}. Using tools from graph theory, we show that these symmetries can be succinctly described as the permutations of the Feynman parameters that map the Lee-Pomeransky polynomials of the two sectors into each other. Via matroid theory, these transformations can in turn be described as operations on the underlying Feynman graphs. Our construction also clarifies the role of the kinematics-dependent transformations discussed in ref.~\cite{Wu:2024paw}.

In the second part we study these symmetries from a purely mathematical standpoint. It was realized a couple of years ago that an appropriate mathematical framework to study dimensionally-regulated Feynman integrals is twisted cohomology theory~\cite{Mastrolia:2018uzb}. We show that there is a natural way to incorporate symmetry transformations into twisted cohomology theories. 
If we focus on the symmetries within a fixed sector, then we obtain a group of symmetries acting linearly on the twisted cohomology group, leaving the scalar products on all relevant vector spaces invariant. An important tool when studying representations of groups is their decomposition into irreducible representations. In the context of Feynman integrals, these groups are always finite, and it is well known from group theory that for finite groups the decomposition into irreducible representations is controlled by the characters of the representation. Our main mathematical result is that we can describe the character of the representation on the twisted cohomology group, and thus its decomposition into irreducible representations, purely in terms of topological properties of the underlying geometric space, namely the Euler characteristics of the fixed-point sets. The proof of this result heavily relies on tools from algebraic topology, in particular Lefschetz numbers and the celebrated Lefschetz fixed-point theorem.

In the third part, we apply these purely mathematical results to Feynman integrals. After discussing the connection between symmetries and so-called canonical master integrals~\cite{Henn:2013pwa}, we derive our main result, namely a compact formula for the number of master integrals in a sector in the presence of a non-trivial symmetry group. Remarkably, the number of master integrals can still be obtained from an Euler characteristic computation even in the presence of non-trivial symmetries, generalizing the connection between Euler characteristics and numbers of master integrals known from the situation without symmetries. As an application, we compute the number of master integrals for banana integrals with up to four loops and with any configuration of massive propagators.

 Our work combines tools from various fields of mathematics, including graph and matroid theory, twisted cohomology, the representation theory of finite groups and algebraic topology, and we apply them to prove results relevant for Feynman integrals. In the design of the proofs we made use of AI, in particular ChatGPT 5.2. Our results show that such tools can be successfully used to perform frontier research in theoretical and mathematical physics. At the same time, we emphasize that in various instances the AI-generated mathematical arguments were (subtly) flawed, and we had to carefully check and (re-)derive all steps.

Our paper is structured as follows: Since our main results are of technical nature and the relevant mathematical proofs require knowledge from various different areas of mathematics, we present a summary of our main results from a physics perspective in section~\ref{sec:summary}. Section~\ref{sec:summary} is intended for readers primarily interested in applications to physics, and it can be read standalone. In section~\ref{sec:symfeyn} we explore symmetries of families of Feynman integrals from a graph-theoretical perspective, and we show that there is a bijection between symmetry transformations and permutations of Feynman parameters that relate the Lee-Pomeransky polynomials. In section~\ref{sec:symandtwist} we study symmetry transformations for twisted cohomology theories from a purely mathematical standpoint, and we show that the period and intersection pairings are invariant. In section~\ref{sec:twistedsymparam} we discuss how these symmetries are encoded into permutations of Feynman parameters,
irrespective of the twisted cohomology theory used to define the family. In section~\ref{sec:single_sector} we focus on symmetry groups of a single sector, and we discuss representation-theoretic aspects and the implications for dimensionally-regulated Feynman integrals. In section~\ref{sec:OrbitSpaceEuler} we prove that for finite groups the decomposition into irreducible representations is governed by purely topological quantities attached to the twisted cohomology groups, in particular that the character of the representation is given by the Euler characteristic of the fixed-point sets. In section~\ref{sec:number_MIs} we apply this mathematical result to Feynman integrals and present a formula for the number of master integrals in a sector in the presence of a non-trivial symmetry group. We draw our conclusions in section~\ref{sec:conclusion}. We also include several appendices where we collect some mathematical background material required throughout the main text.


\section{Summary of the main results}
\label{sec:summary}

The purpose of this paper is to provide a comprehensive study of certain classes of symmetries of Feynman integrals. To this effect, we combine tools from different areas, including twisted cohomology, graph theory, group theory and algebraic topology. We will discuss the relevant mathematical details in later sections, including rigorous proofs. In this section we present a summary of our main findings, to allow the reader primarily interested in the physics applications to grasp the main ideas without the detailed technical background required for the derivations. This section can be read standalone, without having to read the subsequent sections. 

\subsection{Feynman integrals}
\label{sec:summary_Feynman_integrals}
\paragraph{Definitions.} 
The main objects of interest in this paper are multiloop Feynman integrals, which can be defined as
\begin{equation}\label{eq:loop_momentum_def}
    I_{\bs{\nu}}(\bs{s},\eps)=e^{L\gamma_{\mathrm{E}}\eps}\int\left(\prod_{j=1}^L\frac{\rd^D k_j}{i\pi^{\frac{D}{2}}}\right)\frac{1}{D_1^{\nu_1}\dots D_P^{\nu_P}} \,,
\end{equation}
where $\gamma_{\mathrm{E}}=-\Gamma'(1)$ is the Euler-Mascheroni constant and $\bs{\nu}=(\nu_1,\dots ,\nu_P)$ is a vector of integers. The vector $\bs{s} = (\{p_i\cdot p_j\},\{m_i^2\})$ collects the independent kinematic variables (Mandelstam invariants and propagator masses) on which the integral depends. We will generally write the (inverse) propagators as $D_k=q_k^2-m_k^2$, where $m_k^2$ is the squared mass of the $k^{\textrm{th}}$ propagator and the \emph{edge momenta} $q_k$ are linear combinations of the $L$ loop momenta $k_j$ and the $E$ independent external momenta $p_l$, 
\beq\label{eq:q_to_kp}
q_k = \sum_{j=1}^LC_{jk}\, k_j + \sum_{l=1}^EE_{lk}\,p_l\,, \qquad k=1,\ldots,P\,.
\eeq
 We will mostly be interested in dimensionally-regulated integrals~\cite{tHooft:1972tcz,Cicuta:1972jf,Bollini:1972ui} in $D=d-2\eps$ dimensions, where $d$ is an (even) integer. 
 
 There are various other representations of Feynman integrals, like the Feynman parameter, Lee-Pomeransky~\cite{Lee:2013hzt} or Baikov representations~\cite{Baikov:1996iu,Frellesvig:2017aai,Frellesvig:2024ymq}. The Baikov and Lee-Pomeransky representations will play a role at some point in this paper. We therefore briefly review them here.
 
 The Lee-Pomeransky representation is given by
\begin{equation}\label{LeePomeransky}
    I_{\bnu}(\bs{s},\eps)=\frac{e^{L\gamma_{\mathrm{E}}\eps}\,(-1)^{\nu}\,\Gamma\left( \frac{D}{2}\right)}{\Gamma\left(\frac{(L+1)D}{2} -\nu\right)\prod_{j=1}^{P}\Gamma(\nu_j)}\left(\prod_{i=1}^P\int_0^{\infty}\rd x_i\, x_i^{\nu_i-1}\right)\cG(\bx,\bs{s})^{-\frac{D}{2}} \,,
\end{equation}
with $\nu=\sum_{i=1}^P \nu_i$.
Here $\cG$ is the \emph{Lee-Pomeransky polynomial}, given in terms of the first and second \emph{Symanzik polynomials} appearing in the Feynman parameter representation of the integral,
\begin{equation}\label{eq:GUF}
    \cG(\bx,\bs{s})=\cU(\bx)+\cF(\bx,\bs{s}) \,.
\end{equation}
The Symanzik polynomials can be explicitly written down from the topology of the associated Feynman graph (see review below).

In the Baikov representation a Feynman integral takes the form
\begin{align}
\label{eq_Baikov}
I_{\bnu}(\bs{s},\eps)&= \frac{e^{L\gamma_E\varepsilon}  \,\bG(p_1,\ldots,p_E)^\frac{-D+E+1}{2}}{\pi^{\frac{1}{2}(n-L)} \det \bC\, \prod_{j=1}^L \Gamma\!\left(\frac{D-E+1-j}{2}\right)}\,\hat{I}^D_{\bnu}(\bs{s},\eps)\,,
\end{align}
with
\begin{align}
\label{eq_Baikov2}
\hat{I}_{\bs{\nu}}^D \left(\{p_i\cdot p_j\},\{m_i^2\},\eps \right)&=\int_{\mathcal{C}} \rd^{n} z\, \mathcal{B}(\bs{z})^{\frac{D-L-E-1}{2}} \prod_{s=1}^{P} z_s^{-\nu_s}\, ,
\end{align}
where the integration cycle is given by (cf.,~e.g.,~ref.~\cite{Frellesvig:2024ymq})
\beq\label{eq:Baikov_contour}
\mathcal{C} = \left\{\bz\in\mathbb{R}^n: \frac{\mathcal{B}(\bz)}{G(\bs{p})}>0\right\}\,.
\eeq
The prefactor includes the Gram determinant
\begin{align}\label{eq:Gram_def}
   G(v_1,\dots, v_k)=\det \bG(v_1,\dots, v_k)\,,\qquad \bG(v_1,\dots, v_k)=\left(v_i\cdot v_j\right)_{1\le i,j\le k}\,,
\end{align}
and the integrand contains the \textit{Baikov polynomial}
\begin{align}\label{eq_Baikvpolynomial}
    \mathcal{B}(\bs{z}) =G(k_1,\ldots,k_L,p_1,\ldots,p_E) \,,
\end{align}
with $\bs{z}=(z_1,\ldots,z_{n})$, where the number $n$ of integration variables is 
\begin{align}\label{baikovnumervars}
    n= \frac{1}{2}L (L+1)+EL\, .  
\end{align}
We may choose the first $P$ integration variables to be the propagators themselves. If $P<n$, we introduce additional propagators and set their exponents $\nu_i$ to zero.  
Then, the Jacobian in eq.~\eqref{eq_Baikov} is given by $\det \bC$. 
Note that the Lee-Pomeransky and Baikov representations are closely related, and can be obtained from each other by a suitable variable change~\cite{Chen:2023eqx}.

\paragraph{Feynman graphs.} It is well known that one may attach a graph to a (scalar) Feynman integral. More specifically, to the integral in eq.~\eqref{eq:loop_momentum_def} we may attach the graph $G$ with $L$ loops whose $P$ \emph{internal edges} represent the propagators. The external momenta are represented by \emph{external edges}, distinguished by the fact that they are connected to an \emph{external vertex} of valency one.\footnote{The valency of a vertex is the number of edges attached to it.} Each internal edge $e$ is labelled by its edge momentum $q_e$, its mass $m_e$ and its exponent $\nu_e$. 

The Feynman graph $G$ contains enough information to reconstruct the Feynman integral. For example, if we focus on the momentum representation and
if we require that momentum is conserved at each vertex, then we can write every edge momentum as a linear combination of the $E$ independent external momenta and the $L$ loop momenta, cf.~eq.~\eqref{eq:q_to_kp}. This rewriting is not unique, and involves a choice of basis for the loop momenta $k_j$. One may perform this rewriting using only graph-theoretical information, and the coefficients $C_{jk}$ and $E_{lk}$ in eq.~\eqref{eq:q_to_kp} have a graph-theoretical interpretation. More precisely, let us package the independent loop, external and edge momenta into vectors,
\beq
\bq = \begin{pmatrix}q_1 \\ \vdots\\q_P\end{pmatrix}\,,\qquad
\bk = \begin{pmatrix}k_1\\\vdots\\k_L\end{pmatrix}\,,\qquad
\bp = \begin{pmatrix}p_1\\\vdots\\p_E\end{pmatrix}\,.
\label{eq:momentum_vectors}
\eeq
Equation~\eqref{eq:q_to_kp} can then be cast in the form 
\beq\label{eq:q_k_p}
\bq = \bC^T\bk + \bE^T\bp\,,
\eeq
where $\bC$ and $\bE$ are matrices that can be defined purely graph-theoretically. We will call the parametrization of the edge-momenta in eq.~\eqref{eq:q_k_p} an \emph{edge-momentum parametrization}. In particular, $\bC$ is an $L\times P$ matrix of rank $L$ called the \emph{cycle basis matrix}, and $\bE$ is an $E\times P$ matrix of rank $E$, which can be related to the \emph{edge-flow matrix} of the graph (see section~\ref{sec:edgeMomParams}).

Also the Symanzik polynomials can be defined purely graph-theoretically~\cite{Bogner:2010kv}. We have
\beq\bsp
\label{eq:UFDef}
\cU(\bx) &\,= \sum_{T\in F_1}\prod_{e\notin T}x_e\,,\\
 \cF(\bx,\bs{s})&\,=\left(\sum_{e}m_e^2\,x_e\right)\cU(\bx) +\sum_{(T_1,T_2)\in F_2}(-s_{T_1,T_2})\prod_{e\notin T_1\cup T_2}x_e\,,
\esp\eeq
where  $F_p$ denotes the set of spanning $p$-forests of $G$ and $s_{T_1,T_2}$ is the square of the momentum flowing from $T_1$ to $T_2$.

\paragraph{IBP reduction and master integrals.} It is often convenient to consider \emph{families} of Feynman integrals that only differ by the values of the exponents of the propagators, i.e., by the values of the vector $\bs{\nu}$. We may then consider the vector space generated by linear combinations of Feynman integrals for different $\bs{\nu}$, and the coefficients of the linear combinations are taken from the field $\cF=\mathbb{Q}(\eps,\bs{s})$ of rational functions in the dimensional regulator $\eps$ and the kinematic variables $\bs{s}$. We can assign a Feynman graph $G$ to a family of integrals (the Feynman graph for the member $\bs{\nu}=(1,\ldots,1)$ of the family), and we denote by $\cV_G$ the $\cF$-vector space generated by the members of the family.

It is well known that not all integrals of a given family are independent, but there are linear relations among them, the so-called \emph{integration-by-parts} (IBP) relations~\cite{Tkachov:1981wb,Chetyrkin:1981qh},
\begin{equation}\label{eq:IBP_tot_diff}
\int \rd^D k_j\,\frac{\partial}{\partial k_j^\mu}\left(\frac{v^{\mu}}{D_1^{\nu_1}\dots D_P^{\nu_P}} \right)=0\,,
\end{equation}
where $v^\mu$ is any internal or external momentum. This identity leads to $\cF$-linear relations among different members of the family. The IBP relations can be solved to express all integrals in the family in terms of a finite generating set of integrals. A basis for $\cV_G$ is commonly referred to as \emph{master integrals} in the literature. The number of master integrals is always finite~\cite{Smirnov:2010hn,Bitoun:2017nre}. In the following we denote the number of master integrals by $N$, and we collect the master integrals into a vector $\bI(\bs{s},\eps) = \big(I_1(\bs{s},\eps),\ldots,I_N(\bs{s},\eps)\big)^T$.

To every element $I_{\bs{\nu}}$ of the family, we assign a list $\vartheta^-(\bs{\nu})=(\theta(\nu_1),\dots ,\theta(\nu_P))\in\{0,1\}^P$, where
\begin{equation}\label{eq:heaviside}
    \theta(x)=\left\{ \begin{array}{cl} 
    1, & \text{if }x>0\,, \\
    0, & \text{else}\,,
    \end{array} \right.
\end{equation}
is the Heaviside step function. We then define a \emph{sector} as the subvector space of $\cV_G$ spanned by the integrals $I_{\bs{\nu}}$ that satisfy $\vartheta^-(\bs{\nu})\preceq\Theta$ for some fixed $\Theta=(r_1,\dots, r_P)\in\{0,1\}^{P}$.\footnote{Note that this definition slightly differs from the one usually given in the physics literature, where the sectors are defined by an equality $\vartheta^-(\bs{\nu}) = \Theta$. From a mathematical standpoint, it is more natural to include also integrals with fewer propagators, because IBP relations relate integrals with possibly fewer propagators.} Here $\preceq$ is the partial order on $\{0,1\}^{P}$ defined as follows: if $\Theta=(r_1,\dots, r_P)$ and $\Theta'=(r'_1,\dots, r'_P)$ are two elements of $\{0,1\}^{P}$, then we say that $\Theta'\preceq \Theta$ if $r_j'\le r_j$ for all $j=1,\ldots,P$. Integrals in the same sector share the same upper bound on the set of `active' propagators (by which we mean that they appear in the denominator of the integral), but may differ in the exponents of these propagators as well as their numerators. We will denote the set of indices of the active propagators in a sector $\Theta$ by $\bd_{\Theta}=\{d_1,\dots d_{P_\Theta}\}\subseteq\{1,\dots ,P\}$, i.e., if $\Theta=(r_1,\ldots,r_P)$, then
\beq
\bd_{\Theta} = \{i: r_i = 1\}\,.
\label{eq:dtheta_summary}
\eeq

We can associate a Feynman graph $G_{\Theta}$ to every sector $\Theta$, obtained by starting from the Feynman graph $G$ that represents the whole family and contracting all edges that do not correspond to active propagators in the sector $\Theta$. The \emph{top sector} is given by the collection of integrals $I_{\bnu}$ that satisfy $\vartheta^-(\bnu)=(1,\dots ,1)$, i.e., where all allowed propagators are active. Its associated graph is the graph $G$ representing the full family.

There is a natural partial order on the sectors, defined by the partial order $\preceq$ on $\{0,1\}^P$.  A sector that is less than the top sector is called a \emph{subsector}. We then say that a master integral is \emph{irreducible} and \emph{belongs to a sector $\Theta$} if it cannot be written as a linear combination of integrals from lower sectors $\Theta'\prec \Theta$. It is always possible to choose the master integrals that belong to a given sector in such a way that the numerator is trivial, i.e., the entries of the vector $\bs{\nu}$ are positive or zero. We assume from now on that the elements of our vector $\bI$ of master integrals have trivial numerators, and that they are irreducible and ordered in a way compatible with the partial order on the sectors.

\paragraph{Differential equations.}
It follows from the previous considerations that every member of our family can be computed if we know the corresponding vector of master integrals. 
One of the most commonly used methods to compute Feynman integrals is the method of differential equations~\cite{Kotikov:1990kg,Kotikov:1991hm,Kotikov:1991pm,Gehrmann:1999as,Henn:2013pwa}.  The vector $\bs{I}$  fulfils a differential equation of the form 
\begin{equation}\label{eq:DEQ_prototype}
\rd \bI(\bs{s},\eps) = \bOmega(\bs{s},\eps)\bI(\bs{s},\eps)\,,
\end{equation}
where $\rd = \sum_{i=1}^r\rd s_i\,\partial_{s_i}$ denotes the exterior derivative with respect to the external kinematic parameters $\bs{s}= (s_1,\ldots,s_r)$.  The entries of the matrix $\bOmega(\bs{s},\eps)$ are rational one-forms in $\bs{s}$ and rational functions in $\eps$, i.e., they can be expressed in the form 
\begin{equation}
\bOmega(\bs{s},\eps) = \sum_{i=1}^r\rd s_i\,\bOmega_i(\bs{s},\eps)\,,
\end{equation}
where the $\bOmega_i(\bs{s},\eps)$ are matrices of rational functions in $\bs{s}$ and $\eps$. Note that differentiation is compatible with the partial order on the sectors, and so the matrix $\bOmega(\bs{s},\eps)$ is block lower-triangular if the vector of master integrals is chosen appropriately. The blocks on the diagonal are associated with a given sector, and describe the differential equations satisfied by the maximal cuts of the master integrals that belong to that sector~\cite{Primo:2016ebd,Primo:2017ipr,Frellesvig:2017aai,Bosma:2017ens}.

Even if we choose a basis of master integrals that are irreducible and ordered according to the partial order on the sectors, there is some arbitrariness in the choice of basis. We may define a new vector of master integrals,
\begin{align}
\label{conventiontrafo}
 \bs{J}(\bs{s},\eps)    = \bs{U}(\bs{s},\eps)\bs{I}(\bs{s},\eps)\, ,
\end{align}
where $\bs{U}(\bs{s},\eps)$ is some matrix of full rank compatible with the partial order on the sectors. A judicious choice of basis may have an impact on our ability to solve the system of differential equations for the master integrals. 
A particularly convenient choice is a so-called \emph{canonical} basis, first introduced in ref.~\cite{Henn:2013pwa} in the context of Feynman integrals that evaluate to multiple polylogarithms~\cite{Goncharov:1998kja,Remiddi:1999ew,Gehrmann:1999as}. Various proposals have been made for how to obtain canonical differential equations for Feynman integrals that go beyond polylogarithms. While there is still no general consensus for what a good definition of a canonical basis beyond polylogarithms is, it is generally agreed that in a canonical basis the differential equations are $\eps$-factorized,\footnote{For an approach that advocates a form of the differential equations without $\eps$-factorization, see ref.~\cite{Chaubey:2025adn}.} i.e., the differential equation for $\bJ(\bs{s},\eps)$ can be cast in the form
\beq
\rd\bJ(\bs{s},\eps) = \eps\,\bs{\Omega}_c(\bs{s})\,\bJ(\bs{s},\eps)\,,
\eeq
where $\bs{\Omega}_c(\bs{s})$ is a matrix of one-forms that are independent of $\eps$. The requirement of $\eps$-factorization alone is not sufficient to uniquely characterize canonical bases (see, e.g., ref.~\cite{Frellesvig:2023iwr} for a discussion and comparison of different types of $\eps$-factorized bases). 
Here we work with the operational definitions of canonical bases from refs.~\cite{Pogel:2022vat,Pogel:2022ken,Pogel:2022yat,Gorges:2023zgv,Duhr:2025lbz,Maggio:2025jel,e-collaboration:2025frv}, which are expected to deliver equivalent results. In particular, in ref.~\cite{Duhr:2025lbz} it was proposed that a canonical basis is characterized not only by $\eps$-factorization, but in addition the differential forms in $\bs{\Omega}_c(\bs{s})$ should have at most simple poles and define independent cohomology classes. The advantage of such a canonical basis is that the differential equations can easily be solved as a power series in $\eps$ whose coefficients are constant linear combinations of iterated integrals~\cite{ChenSymbol}, and the individual iterated integrals are linearly independent~\cite{deneufchatel:hal-00558773,Duhr:2024xsy} (i.e., there are no hidden zeroes, in the sense that there is no non-trivial linear combination of iterated integrals that evaluates to zero).


\subsection{Symmetries of Feynman integrals}
\label{sec:symmOfFIsOverview}
\subsubsection{Definitions}
\label{sec:summary_symmetries_def}
The main topic of our paper are symmetries of Feynman integrals. We now precisely define what we mean by this.

\paragraph{Symmetry transformations between sectors.} Our definition of symmetry transformation follows closely the definition presented in ref.~\cite{Wu:2024paw}. Consider two sectors $\Theta_1$ and $\Theta_2$, and
let $(\bq_1, \bk_1, \bp_1)$ and $(\bq_2, \bk_2, \bp_2)$ denote the vectors of edge, loop and external momenta of the sectors $\Theta_1$ and $\Theta_2$, respectively (see eq.~\eqref{eq:momentum_vectors}).
A \emph{symmetry transformation} from $\Theta_1$ to $\Theta_2$ is an affine change of variables and a linear map on the external momenta
\begin{equation}
\label{eq:symmTrans}
    \bk_1=\bL\bk_2+\bM\bp_2,\qquad \bp_1=\bN\bp_2 \,,
\end{equation}
 such that the following properties hold~\cite{Wu:2024paw}: 
\begin{enumerate}
    \item The Jacobian is trivial: $\det\bL=\pm 1$.
    \item There is a bijection on the propagators, i.e., there is a bijective map $\alpha : \bd_{\Theta_2}\rightarrow \bd_{\Theta_1}$ such that $\bq^2_{1,\alpha(i)}(\bk_1,\bp_1)=\bq^2_{2,i}(\bk_2,\bp_2)$ and $m_i^2=m_{\alpha(i)}^2$ for all $i\in \bd_{\Theta_2}$.
    \item The scalar products between external momenta are invariant: $p_{1,i}\cdot p_{1,j}=p_{2,i}\cdot p_{2,j}$ for all $1\le i,j\le E$.\footnote{In case the Feynman integral factorizes into simpler integrals, we require invariance only for the scalar products that the individual factors depend on.} 
\end{enumerate}
We denote the set of all symmetry transformations from $\Theta_1$ to $\Theta_2$ by $\Sym(\Theta_1,\Theta_2,\bs{s})$. We stress that the set of symmetry transformations between sectors will typically depend on the kinematic point $\bs{s}$ (e.g., certain masses may need to be equal for a non-trivial symmetry transformation to exist). We will typically work with a fixed kinematic point $\bs{s}$,\footnote{We assume that the fixed kinematic point is generic, i.e., that there are no non-trivial relations between the off-diagonal entries of the Gram matrix $\bs{G}(\bs{p})$. In particular, we assume that we work in a generic number of dimensions, and we assume that the determinant of the Gram matrix $\bs{G}(\bs{p})$ is non-vanishing.} and we often suppress the dependence on $\bs{s}$, i.e., we will simply write $\Sym(\Theta_1,\Theta_2)$ instead of $\Sym(\Theta_1,\Theta_2,\bs{s})$.
Note that the second defining condition implies that a symmetry transformation between two sectors can only exist if these sectors have the same number of active propagators. The third defining condition can be cast in the form of an invariance property of the Gram matrix $\bG(\bp)$ (cf.~eq.~\eqref{eq:Gram_def}): 
\beq\label{graminv}
\bG(\bp_1) = \bN\bG(\bp_2)\bN^T = \bG(\bp_2)\,.
\eeq
Note that this implies 
\beq
\label{eq:detN}
\det\bN = \pm1\,.
\eeq

We may ask about the form of the entries of the matrices $\bL$, $\bM$ and $\bN$, in particular if they are constants. In ref.~\cite{Wu:2024paw} it was shown that there are examples of families of two-loop integrals where one can can relate two different sectors $\Theta_1$ and $\Theta_2$ by a symmetry transformation whose matrix $\bN$ is kinematics-dependent. We therefore assume from now on that the entries of these matrices are taken from the field $\cF$ introduced earlier.\footnote{We could even restrict the discussion to the subfield of $\cF$ of rational function that are independent of $\eps$. As we will see, this restriction is not relevant.} In ref.~\cite{Wu:2024paw} it was argued that the origin of the kinematic dependence can be traced back to the fact that subsectors may only depend on certain combinations of the independent external momenta, and an algorithm was presented to find such kinematics-dependent symmetry transformations. It is an interesting question to determine the complete set of allowed symmetry transformations between two sectors if arbitrary kinematics-dependent transformations are allowed. A first goal of our paper is to provide such a complete description. 

\paragraph{The symmetry group of a sector.} If we focus on symmetry transformations within a given sector $\Theta=\Theta_1=\Theta_2$, we obtain a group which we call the \emph{symmetry group of the sector $\Theta$}. We denote it by $\Aut(\Theta,\bs{s}) := \Sym(\Theta,\Theta,\bs{s})$. We will again often simply write $\Aut(\Theta)$ instead of $\Aut(\Theta,\bs{s})$. The elements of this group can be represented by $(L+E)\times(L+E)$ matrices
\beq
\label{eq:defTMatrix}
\bT = \begin{pmatrix} \bL & \bM \\ \bs{0} & \bN\end{pmatrix}\,,
\eeq
and the group law is simply given by ordinary matrix multiplication. Note that $\det \bT=\pm1$. These matrices act on the loop and external momenta via
\beq
\begin{pmatrix}\bk \\ \bp\end{pmatrix} = \bT \begin{pmatrix}\bk' \\ \bp'\end{pmatrix}\,.
\eeq
Following the previous discussion, we assume that the entries of the matrix $\bT$ are taken from the field $\cF$. 
An element of $\Aut(\Theta)$ sends integrals from the sector $\Theta$ to some other integrals from the same sector. In particular, the symmetry group acts on the master integrals. Since the number of master integrals in each sector is finite, we obtain a finite-dimensional representation of $\Aut(\Theta)$ on the $\cF$-vector space generated by the master integrals of the sector.

Some of the elements of $\Aut(\Theta)$ can be read off directly from the Feynman graph $G_{\Theta}$ associated to the sector. In particular, every automorphism of $G_{\Theta}$ defines an element of $\Aut(\Theta)$. However, as we will soon see, in some instances the group $\Aut(\Theta)$ may contain more elements than just the automorphisms. 

\subsubsection{A complete description of symmetry transformations}
\label{eq:symmTransfDescr}
We now present our first main result, namely a complete description of the set $\Sym(\Theta_1,\Theta_2)$ of symmetry transformations from a sector $\Theta_1$ to a sector $\Theta_2$. We have already mentioned that in ref.~\cite{Wu:2024paw} it was observed on examples that symmetry transformations between two sectors may require a kinematics-dependent matrix $\bs{N}$, and it was pointed out that this dependence on kinematics can be traced back to the appearance of \emph{momentum groups} in subsectors. We start by reviewing and explaining these concepts, and then separately discuss the cases with and without momentum groups.

\begin{figure}[t]
    \centering
    \includegraphics[width=0.25\textwidth]{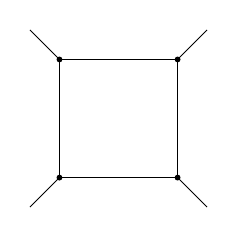}
    \includegraphics[width=0.25\textwidth]{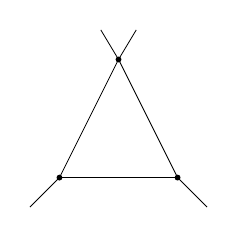}
    \caption{Example of a momentum-grouped graph (left) and a non momentum-grouped graph (right), corresponding to a subsector.}
    \label{fig:momGrouping}
\end{figure}
\paragraph{Momentum groups.} Consider a family of Feynman integrals with $E$ independent external momenta $\bp$ whose top sector can be represented by a Feynman graph $G$. We may assume without loss of generality that for every internal vertex in this graph, there is at most one external edge attached to it, and so there is at most one external momentum flowing in or out of any given internal vertex. 

Consider now a proper subsector $\Theta$ of this family.\footnote{By \emph{proper} we mean that this subsector is irreducible and not equal to the top sector.} Then the Feynman graph $G_{\Theta}$ that represents this sector is obtained from $G$ by contracting all edges that correspond to the propagators that are not active. The graph $G_{\Theta}$ may then have multiple external edges attached to the same internal vertex. Following ref.~\cite{Wu:2024paw}, we call the set of two or more external edges connected to the same vertex a \emph{momentum group}. We say that a graph or sector is \emph{momentum-grouped} if it does not contain any momentum groups, i.e., if at most one external edge is connected to every vertex (see figure \ref{fig:momGrouping} for an illustration). Note that we assume that the top sector is always momentum-grouped. 

If a sector $\Theta$ is not momentum-grouped, then the Feynman integrals from that sector depend on the external momenta $\bp$ only through the sum of the momenta in every momentum group. It is easy to see that we can always define a momentum-grouped graph $\widetilde{G}_{\Theta}$ with $\widetilde{E}<E$ independent external momenta $\bs{\tilde{p}} = (\tilde{p}_1,\ldots,\tilde{p}_{\widetilde{E}})^T$ that gives rise to the same Feynman integral as $G_{\Theta}$. The momenta $\bs{\tilde{p}}$ are linear combinations of the momenta $\bp$, and there is an $\widetilde{E}\times E$ matrix $\bS$ such that
\beq\label{eq:p_to_ptilde}
\bs{\tilde{p}} = \bS\bp\,.
\eeq
Note that $\bS$ must have full rank $\widetilde{E}<E$, because otherwise the grouped momenta cannot be independent.

\paragraph{Momentum-grouped sectors.} Let us discuss the concrete form of the set of symmetry transformations between two momentum-grouped sectors $\Theta_1$ and $\Theta_2$. We first introduce some sets of symmetry transformations between polynomials.

Consider two polynomials $f_1$, $f_2$ in $P$ variables $\bx = (x_1,\ldots,x_P)^T$. We define $\mathbb{S}(f_1,f_2)$ to be the set of permutations of these variables that maps one polynomial to the other, 
\beq \label{eq:symleepom}
\mathbb{S}(f_1,f_2) := \{\sigma\in S_P : f_1(\sigma(\bx)) = f_2(\bx)\}\,,
\eeq
where $(\sigma(\bx))_i=x_{\sigma^{-1}(i)}$. We also define $\mathbb{G}(f) := \mathbb{S}(f,f)$. In other words, $\mathbb{G}(f)$ is the group of permutations that leave the polynomial $f$ invariant.

We can see from the definition of a symmetry transformation in eq.~\eqref{eq:symmTrans} that it maps the set of active propagators from one sector bijectively to the active propagators from the other. This also induces a bijection $\sigma$ between the Feynman parameters $x_e$ that define the Symanzik and Lee-Pomeransky polynomials of the two sectors, and so we obtain an element $\sigma\in \mathbb{S}(\cG_1,\cG_2)$, where $\cG_1$ and $\cG_2$ denote the Lee-Pomeransky polynomials (for a fixed value of $\bs{s}$) obtained from the graphs $G_{\Theta_1}$ and $G_{\Theta_2}$, respectively.

In section~\ref{eq:symmMomGrouped} we will show that in the scenario in which the Mandelstam invariants $p_i\cdot p_j$ with $1\le i< j\le E$ take generic values, the converse of the previous statement is also true, namely that whenever the Mandelstam invariants take generic values, then every symmetry transformation is induced by a bijection $\sigma\in \mathbb{S}(\cG_1,\cG_2)$. More precisely, we will show that there is a bijection:
\beq\label{eq:S_T1_T2} 
\Sym(\Theta_1,\Theta_2,\bs{s}) \simeq \mathbb{Z}_2^c\times \mathbb{S}(\cG_1(\cdot,\bs{s}),\cG_2(\cdot,\bs{s}))  \,.
\eeq
Let us discuss this result in some detail. 
Let us start with the $\mathbb{Z}_2$-factors. They act on the loop and external momenta in such a way that a subset of edge momenta change sign (clearly changing the signs of edge momenta leaves the propagators invariant). The number $c$ of such $\mathbb{Z}_2$-factors depends on the precise form of the graph and the external momenta. Since in any case these transformations act trivially on the propagators and the Feynman integrals, we will not discuss them any further. 

The non-trivial symmetry transformations come from non-trivial bijections $\sigma\in \mathbb{S}(\cG_1,\cG_2)$. One of our main results is that it is possible to use concepts from graph theory to explicitly construct the matrices $\bL_{\sigma}$, $\bM_{\!\sigma}$ and $\bN_{\!\sigma}$ that correspond to the bijection $\sigma\in \mathbb{S}(\cG_1,\cG_2)$. To see how this works, assume that we pick independent edge, loop and external momenta $\bq_l$, $\bk_l$ and $\bp_l$ in each sector $\Theta_l$ respectively. 
Since every $\sigma\in \mathbb{S}(\cG_1,\cG_2)$ is a bijection of the Feynman parameters, it induces a bijection between the active propagators in the two sectors. We can then write  
\beq\label{eq:permutation}
\bq_1(\bk_1,\bp_1) = \bP_{\!\!\sigma}\,\bq_2(\bk_2,\bp_2)\,,
\eeq
where $\bP_{\!\!\sigma}$ is a (signed) permutation matrix.\footnote{A signed permutation matrix is a matrix that has exactly one non-zero entry in each row and column, and the value of that entry is $\pm1$. The signs correspond to orientation changes of the edges.}
From eq.~\eqref{eq:q_k_p} we know that we have
\beq
\label{eq:q12ParamGeneral}
\bq_l = \bC_l^T\bk_l + \bE_l^T\bp_l\,,\qquad l=1,2\,,
\eeq
where $\bC_l$ is the cycle basis matrix obtained from the graph $G_{\Theta_l}$ associated with the sector $\Theta_l$, and $\bE_l$ is related to a corresponding edge-flow matrix. In section~\ref{sec:proofSymLemma} we will show that the matrices $\bL_{\sigma}$, $\bM_{\!\sigma}$ and $\bN_{\!\sigma}$ can be constructed using only the permutation matrix $\bP_{\!\!\sigma}$ and graph-theoretical information contained in $\bC_l$ and $\bE_l$:
\begin{align}
\label{eq:symmMatricesFromAut_sum}
    \bL_{\sigma}&=(\bC_1\bC_1^T)^{-1}\bC_1\bP_{\!\!\sigma}\bC_2^T \,, \nonumber\\
    \bN_{\!\sigma}&=(\bE_1\bPi_{C_1}^{\perp}\bE_1^T)^{-1}\bE_1\bPi_{C_1}^{\perp}\bP_{\!\!\sigma}\bPi_{C_2}^{\perp}\bE_2^T \,, \\
    \bM_{\!\sigma}&=(\bC_1\bC_1^T)^{-1}\bC_1\bP_{\!\!\sigma}\bE_2^T-(\bC_1\bC_1^T)^{-1}\bC_1\bE_1^T\bN_{\!\sigma} \,, \nonumber
\end{align}
where we defined the projectors,
\beq
\label{eq:projectorDef}
\bPi_{C_l}=\bC_l^T(\bC_l\bC_l^T)^{-1}\bC_l\textrm{~~~and~~~}\bPi_{C_l}^{\perp}=\mathds{1}-\bPi_{C_l}\,.
\eeq
Equation~\eqref{eq:symmMatricesFromAut_sum} explicitly establishes the bijection between the set of symmetry transformations between two sectors and the set of permutations that relate the corresponding Feynman parameters (modulo the $\mathbb{Z}_2$ factors). 

Let us make some comments about this result. First, if there are non-trivial linear relations among the Mandelstam invariants, one may find permutations  
$\sigma\in\mathbb{S}(\cG_1,\cG_2)$ that cannot be lifted to a symmetry transformation using the formulas in eq.~\eqref{eq:symmMatricesFromAut_sum}.  Second, $\mathbb{S}(\cG_1,\cG_2)$ is always by definition a finite set, and so the set of symmetry transformations $\Sym(\Theta_1,\Theta_2,\bs{s})$ for generic $\bs{s}$ is always finite.
Third, since the right-hand side of eq.~\eqref{eq:symmMatricesFromAut_sum} only involves constant matrices (either the signed permutation matrix $\bP_{\!\sigma}$ or the purely graph-theoretical matrices $\bC_l$ and $\bE_l$), we see that symmetry transformations between momentum-grouped sectors are always independent of the kinematics. Finally, it is natural to ask if the symmetry transformations themselves have interpretations in terms of graph theory. We will discuss this in more detail in section~\ref{eq:symmMomGrouped}. Here it suffices to say that every symmetry transformation indeed corresponds to a map between the graphs $G_{\Theta_1}$ and $G_{\Theta_2}$: it is either an isomorphism between $G_{\Theta_1}$ and $G_{\Theta_2}$ or a Whitney twist that maps $G_{\Theta_1}$ into $G_{\Theta_2}$, where one of the two components of the twist does not depend on any external momenta (see section~\ref{eq:symmMomGrouped} for details).

\paragraph{Non momentum-grouped sectors.} Assume that the family of integrals depends on $E$ independent external momenta $\bp$, and consider two sectors $\Theta_1$ and $\Theta_2$ of this family. They may not be momentum-grouped. Instead the corresponding integrals may depend on $\widetilde{E}<E$ independent external momenta $\bs{\tilde{p}}_l$,\footnote{If there is a symmetry between these two sectors, the integrals will necessarily depend on the same number of independent external momenta, so we can assume that the number $\widetilde{E}$ of independent external momenta is the same for both.} related to $\bp$ by (cf.~eq.~\eqref{eq:p_to_ptilde}),
\beq\label{eq:ptk_to_pk}
\bs{\tilde{p}}_l = \bS_l\bp\,,\qquad l=1,2\,,
\eeq
where $\bS_1$ and $\bS_2$ are two $\widetilde{E}\times E$ matrices of full rank $\widetilde{E}$.

To understand the structure of the possible symmetry transformations from $\Theta_1$ to $\Theta_2$, we may first look at the momentum-grouped graph $\widetilde{G}_{\Theta_l}$ obtained by replacing each momentum group in $G_{\Theta_l}$ by a sum of external momenta, or equivalently by a single external momentum from $\bs{\tilde{p}}_l$. Since the graphs $\widetilde{G}_{\Theta_l}$ are momentum-grouped, we may apply the result from the previous subsection. It follows that every symmetry transformation is induced by a bijection $\sigma\in \mathbb{S}(\cG_1,\cG_2)$. We may then use eq.~\eqref{eq:symmMatricesFromAut_sum} to construct matrices $\bs{\widetilde{L}}_{\sigma}$, $\bs{\widetilde{M}}_{\!\sigma}$ and $\bs{\widetilde{N}}_{\!\sigma}$ such that
\beq
\label{eq:symmTransfMomGrouped}
\bk_1 = \bs{\widetilde{L}}_{\sigma}\bk_2+\bs{\widetilde{M}}_{\!\sigma}\bs{\tilde{p}}_2\,,\qquad\bs{\tilde{p}}_1=\bs{\widetilde{N}}_{\!\sigma}\bs{\tilde{p}}_2 \,,
\eeq
defines a symmetry transformation between the associated momentum-grouped graphs $\widetilde{G}_{\Theta_1}$ and $\widetilde{G}_{\Theta_2}$. Note that the matrices $\bs{\widetilde{L}}_{\sigma}$, $\bs{\widetilde{M}}_{\!\sigma}$ and $\bs{\widetilde{N}}_{\!\sigma}$ are again constructed from purely graph-theoretical input, and in particular they are constant matrices.

We may now lift any symmetry transformation from $\widetilde{G}_{\Theta_1}$ to $\widetilde{G}_{\Theta_2}$ to a symmetry transformation between the non momentum-grouped sectors $\Theta_1$ and $\Theta_2$. The detailed derivation is provided in section~\ref{sec:nonMomGroupedSymms}. Here we only show the result. 
 The matrices $\bs{\widetilde{L}}_{\sigma}$,  $\bs{\widetilde{M}}_{\!\sigma}$ and $\bs{\widetilde{N}}_{\!\sigma}$ can be lifted to a symmetry transformation in $\Sym(\Theta_1,\Theta_2)$ via the relation
\beq
 \label{eq:symmTransfNonMomGrouped}
\begin{pmatrix}\bs{{L}}_{\sigma} & \bs{{M}}_{\!\sigma}\\ \bs{0} & \bs{{N}}_{\!\sigma}\end{pmatrix} = \begin{pmatrix}\mathds{1} & \bs{0}\\\bs{0} & \bS_1\\\bs{0} & \bS'_1(\bs{s})\end{pmatrix}^{-1}\begin{pmatrix}\bs{\widetilde{L}}_{\!\sigma} & \bs{\widetilde{M}}_{\!\sigma} &\bs{0}\\ \bs{0}
 & \bs{\widetilde{N}}_{\!\sigma}&\bs{0}\\\bs{0}&\bs{0} & \bs{\overline{O}}\end{pmatrix}\begin{pmatrix}\mathds{1} & \bs{0}\\\bs{0} & \bS_2\\\bs{0} & \bS'_2(\bs{s})\end{pmatrix}\,,
 \eeq
where $\bs{\overline{O}}$ is a matrix whose detailed form is not important here, and whose role will be explained in section~\ref{sec:nonMomGroupedSymms}. The matrix $\bS_l$ is defined in eq.~\eqref{eq:ptk_to_pk}, and $\bs{S'}_{\!\!l}(\bs{s})$ is the $(E-\widetilde{E})\times E$ matrix defined by the equation, 
\beq
\label{eq:SpDef}
\bS_l\bG(\bp)\bs{S'}_{\!\!l}(\bs{s})^T = 0\,,
\eeq
where $\bG(\bp)$ is the Gram matrix associated with the family. This equation can always be solved, and the rows of the matrix $\bs{S'}_{\!\!l}(\bs{s})$ form a basis for the kernel of $\bS_l\bG(\bp)$. Note that, since $\bG(\bp)$ obviously depends on the kinematics, the solution $\bs{S'}_{\!\!l}(\bs{s})$ will also be kinematics-dependent in general. This kinematic dependence then feeds into the symmetry transformation through the matrix $\bs{S}'_l(\bs{s})$. We note that this is the only source of kinematic dependence. 

\subsubsection{Constructing all symmetry transformations between two sectors}

We can use the results from the previous subsections to  obtain a concrete procedure to find and construct \emph{all} symmetry transformations between two sectors $\Theta_1$ and $\Theta_2$ (provided that the kinematics is generic):

\begin{enumerate}
\item We know that  symmetry transformations exist only between sectors with the same number of propagators, so it is enough to focus on such pair of sectors $(\Theta_1,\Theta_2)$ (including the cases $\Theta_1=\Theta_2$). This implies that the Lee-Pomeransky polynomials $\cG_1$ and $\cG_2$ associated with these two sectors depend on the same number of Feynman parameters $\bx$.
\item Given such a pair, we look for bijections $\sigma$ of the Feynman parameters such that $\cG_2(\bx,\bs{s}) = \cG_1(\sigma(\bx),\bs{s})$. An efficient algorithm to find these bijections was for example described in ref.~\cite{Pak:2011xt}. 
\item Given such a bijection $\sigma$, we may lift it to a symmetry transformation in loop-momentum space via eq.~\eqref{eq:symmMatricesFromAut_sum}. This lifting can be done by exclusively using information from the bijection $\sigma$ and from the underlying Feynman graphs.
\item If the sectors contain momentum groups encoded in the matrices $\bS_1$ and $\bS_2$, we can lift the symmetry transformation to the original sectors $\Theta_1$ and $\Theta_2$. This requires one to construct the matrices $\bs{S'}_{\!\!1}(\bs{s})$ and $\bs{S'}_{\!\!2}(\bs{s})$, which may in turn introduce a kinematic dependence into the symmetry transformation.
\end{enumerate}
We obtain in this way an effective algorithm to find symmetry transformations between sectors. This is not the first time that such an algorithm is proposed, cf.,~e.g.,~refs.~\cite{Pak:2011xt,Lee:2012cn,Wu:2024paw}. However, we feel that our proposal has several advantages: First, it allows one to construct the change of variables only using information from the permutation symmetries of the Lee-Pomeransky polynomials and the underlying Feynman graph, including cases where the symmetry transformation is kinematics-dependent. As far as we know, only the method of ref.~\cite{Wu:2024paw} manages to systematically find such transformations. Second, many approaches to finding symmetries use heuristics to find linear changes of variables in loop-momentum space that define symmetry transformations, and there is no guarantee that all transformations are found. Since eq.~\eqref{eq:S_T1_T2} is a bijection, we are guaranteed to find \emph{all} symmetry transformations between sectors.


\subsection{Twisted cohomology and symmetry transformations}\label{sec_cohomandsym}

In the previous subsection we have discussed the structure of the set of symmetry transformations between two sectors. It was realized a couple of years ago that an appropriate mathematical framework to study Feynman integrals in dimensional regularization is twisted cohomology~\cite{Mastrolia:2018uzb,aomoto_theory_2011}. We now discuss how we can incorporate symmetry transformations into this framework.

\subsubsection{Review of twisted cohomology}\label{introtwisted}
Loosely speaking, to define a twisted cohomology group, we split the integrand into a multivalued part, the so-called \textit{twist} $\Psi$, and a single-valued differential $n$-form $\varphi$ integrated over some cycle $\gamma$
\begin{align}\label{eq:twisted_int}
    \int_\gamma \Psi \varphi\,.
\end{align}
The twist takes the form
\beq\label{eq:twist}
\Psi = P_1(\bs{z})^{\mu_1}\ldots  P_r(\bs{z})^{\mu_r}\,,
\eeq
where the $P_k(\bs{z})$ are polynomials in the integration variables $\bs{z}\in\mathbb{C}^n$ and the $\mu_k$ are non-integer.
For Feynman integrals the twist is for example related to the Baikov or Lee-Pomeransky polynomials, with the multivaluedness arising from the presence of the dimensional regulator $\eps$ in the exponents in eqs.~\eqref{LeePomeransky} and~\eqref{eq_Baikov2}. 
The zeroes of the twist $\Psi$ are called \textit{regulated boundaries}, and we define the \textit{twisted variety}:
\begin{align}\label{twistedvariety}
\Sigma:=\{\bz\in \mathbb{C}^n: \Psi(\bz)=0\} \,.
\end{align}
The single-valued $n$-form $\varphi$ is typically a rational differential form with poles along the twisted variety. If $\varphi$ has poles at loci not contained in the twisted variety, we refer to them as \emph{unregulated boundaries}. They are contained in the \emph{on-shell variety}, defined as the locus
\begin{align}\label{onshellvariety}
    D_-=\{\bz\in \mathbb{C}^n: \bz \text{ pole of }\, \Psi \varphi\}\,.
\end{align}
The twisted (co)-homology groups are then defined on a space
\begin{align} 
X=\mathbb{C}^n\backslash(\Sigma\cup D_-)\,,
\end{align}
in terms of the covariant derivative
\begin{align}
\nabla_\omega=\rd+\omega\,\wedge\,,\quad \omega=\rd\!\log\Psi\,.
\end{align}
Explicitly, we define the \textit{twisted cohomology group} $H_\text{dR}^n(X,\nabla_\omega)$ as the group of equivalence classes of differential forms that are closed with respect to $\nabla_\omega$ modulo forms that are exact.\footnote{Note that $\Psi \nabla_\omega\varphi=\rd (\Psi \varphi)$. In particular, the condition that $\Psi\varphi$ is closed or exact can be expressed solely in terms of the differential form $\varphi$ via $\nabla_\omega$.}  Typically, for Feynman integrals in the Baikov representation, the case $D_-=\emptyset$ corresponds to a maximal cut of a Feynman integral (because we have taken residues at all propagator poles, and so there are no unregulated boundaries anymore). In general, for example in the Lee-Pomeransky representation, there will be an additional set of boundaries, defining a \textit{relative twisted cohomology group}. To keep the discussion simple, we will only consider the non-relative case here, and we refer to section~\ref{reltwistedrew} for more details on the more general relative case.

Since the twist is multivalued, the integration cycle in eq.~\eqref{eq:twisted_int} is a so-called \textit{loaded cycle} on $X$, i.e., a cycle on $X$ together with a choice of branch of $\Psi$.
This multivaluedness is captured by a \textit{local system} $\check{\mathcal{L}}_\omega$.\footnote{The cohomology group $H^n(X,\mathcal{L}_\omega)$ is isomorphic to $H^n_\text{dR}(X,\nabla_\omega)$ introduced above.}
The (relative) twisted homology group $H_n(X,\check{\mathcal{L}}_\omega)$ is then the group generated by closed $n$-cycles modulo boundaries. A basis of this group is given by \textit{regularized contours} $\gamma$, which in the one-dimensional case can for example be Pochhammer contours encircling two branch points or a closed loop around a pole. The details are not essential for this paper, and can for example be found in refs.~\cite{yoshida_hypergeometric_1997,kita_intersection_1994-2,aomoto_theory_2011, Bhardwaj:2023vvm, Duhr:2023bku}.

In the case of Feynman integrals defined in the Baikov representation, the on-shell variety is typically a union of hyperplanes defined by the vanishing of the (inverse) propagators. 
It is also possible to define a twist $\Psi$ and twisted on-shell varieties $\Sigma$ and $D_-$  for a Feynman integral directly from the loop-momentum representation~\cite{Caron-Huot:2021xqj}.
Note that $\Psi$, $\Sigma$ and $D_-$ are different for each of these representations (for example, not even the number $n$ of integration variables needs to be the same).
Hence, we see that there is no unique twisted cohomology theory that one can associate with a given Feynman integral.

For twisted (co)homology groups, there naturally exist {dual groups} of twisted forms (and twisted cycles). There are associated pairings between the differential forms and their duals, called \textit{intersection pairings}.
In the simplest case, for example for the Baikov representation on the maximal cut (where $D_-=\emptyset$), the group and its dual can both be defined on the same space $X=\mathbb{C}^n\backslash\Sigma$, where the dual group is associated with the inverse twist $\Psi^{-1}$. Explicitly it is defined as $H_{\text{dR},c}^n(X,\check{\nabla}_\omega)$ with $\check{\nabla}_\omega=\rd-\omega\,\wedge$\,, and the dual homology group is $H_n^\text{lf}(X,\mathcal{L}_\omega)$, where $\mathcal{L}_\omega$ is the local system dual to $\check{\mathcal{L}}_{\omega}$.  Here the subscript `$c$' refers to \emph{cohomology with compact support}, while the  superscript `lf' indicates that the chains are \textit{locally finite}.
If $D_-\neq\emptyset$, the dual group is a relative twisted cohomology group, which we will discuss in more detail in section~\ref{reltwistedrew}.

\paragraph{Pairings.} We now have four different groups, namely the twisted cohomology and homology groups, and their duals. There exist non-degenerate pairings between these groups, and all these groups have the same dimension $N$. Let us fix bases 
\beq\bsp
\bs{\varphi} &\,= (\varphi_1,\ldots,\varphi_N)\,,\\
\bs{\gamma} &\,= (\gamma_1,\ldots,\gamma_N)\,,\\
\bs{\check{\varphi}} &\,= (\check{\varphi}_1,\ldots,\check{\varphi}_N)\,,\\
\bs{\check{\gamma}} &\,= (\check{\gamma}_1,\ldots,\check{\gamma}_N)\,.
\esp\eeq
We now define the pairings via these bases. 

We may pair the loaded cycles $\gamma_j$ with the twisted cocycles $\varphi_i$ via the \textit{period pairing}
\begin{align} \label{eq:integration_pairing}  P_{ij}=\langle\varphi_i|\gamma_j]=\int_{\gamma_j} \Psi \varphi_i\,.
\end{align}
The matrix $\bP$ is called the \emph{period matrix}. Similarly, we may define the \emph{dual period pairing} and the \emph{dual period matrix},
\begin{align}
    {\check{P}}_{ij} = [\check{\gamma}_j|\check{\varphi}_i \rangle = \int_{\check{\gamma}_j}\Psi^{-1}\check{\varphi}_i\, .
\end{align}
Dual periods are not as well understood in the context of Feynman integrals, cf.,~e.g.,~refs. \cite{Caron-Huot:2021xqj,Caron-Huot:2021iev,Giroux_2023}. If we focus on maximal cuts in the Baikov representation of Feynman integrals in dimensional regularization, where the on-shell variety is empty, there is a choice of dual bases such that the periods coincide with the dual periods, 
up to a change in the sign of the dimensional regulator~\cite{Duhr:2024xsy}
\begin{align}\label{eq:self-duality}
\bs{\check{P}}(\eps)=\bP(-\eps)\,.
\end{align}
We refer to such a basis choice as a \textit{self-dual} basis.

We may also pair the twisted (co-)homology groups with their duals.
The bilinear pairings between the basis elements $\varphi_i$ and the elements $\check{\varphi}_j$ of a dual basis is given by the cohomology intersection matrix 
\begin{align}\label{eq:cohompairing}
    {C_{ij}} =  \frac{1}{(2\pi i)^n} \langle \varphi_i |\check{\varphi}_j\rangle = \frac{1}{(2\pi i)^n}\int_X \varphi_i\wedge \check\varphi_{j}\, .
\end{align}
Methods for its computation are discussed in detail in the literature, see refs.~\cite{yoshida_hypergeometric_1997,Mizera:2017rqa,Mizera:2019gea, aomoto_theory_2011}.
Similarly the homology intersection matrix $\bH$ is defined by the topological intersection numbers between basis cycles $\gamma_i$ of the homology group and their duals $\check{\gamma}_j$, weighted by the value of the twist at the intersection points,
\beq
H_{ij} = [\check{\gamma}_{j}|\gamma_i]\,.
\eeq
Details can be found in refs.~\cite{yoshida_hypergeometric_1997,Duhr:2023bku,kita_intersection_1994-2}. 

We note that the matrices $\bP$, $\bs{\check{P}}$, $\bC$ and $\bH$ are not independent. This can be seen as a consequence of the completeness relation for the bases of (dual) twisted cocycles (a similar relation exists for the bases of (dual) loaded cycles):
\beq\sum_{i,j}\frac{1}{(2\pi i)^n}|\check{\varphi}_i\rangle(\bC^{-1})_{ij}\langle\varphi_j| = \mathds{1}\,.
\eeq
The completeness relation gives rise to the \textit{twisted Riemann bilinear relations} (TRBRs)~\cite{Cho_Matsumoto_1995}
\begin{align}
\label{eq:TRBRs}
\bH = \frac{1}{(2\pi i)^n}\,\bP^T\big(\bC^{-1}\big)^T\bs{\check{P}}\,.
\end{align}

\paragraph{Differential equations.} We now assume that the integral in eq.~\eqref{eq:twisted_int} (and thus the twist $\Psi$ and the differential forms $\bs\varphi$) depends on some external parameters $\bs{s}$. In the context of Feynman integrals, $\bs{s}$ is the vector collecting the kinematic invariants on which the integral depends.
The functional dependence of the matrices $\bP$, $\bs{\check{P}}$, $\bC$ and $\bH$ is governed by some first-order linear differential equations. For example, the period matrix $\bP$ satisfies the equation (cf.~eq.~\eqref{eq:DEQ_prototype})
\begin{align}
\label{deqperiod}
    \rd \bs{P} = \bs{\Omega} \bs{P}\, ,
\end{align}
where $\bs{\Omega}$ is a matrix of rational functions in $\eps$ and of rational one-forms in $\bs{s}$, and $\rd = \sum_i\rd s_i\,\partial_{s_i}$ is the total differential with respect to the external parameters $\bs{s}$. Similarly, the dual period matrix satisfies
\begin{align}
\label{deqperiod_dual}
    \rd \bs{\check{P}} = \bs{\check{\Omega}} \bs{\check{P}}\, .
\end{align}
In the context of Feynman integrals, if we focus on maximal cuts and we work with a self-dual basis, then eq.~\eqref{eq:self-duality} implies that $\bs{\check{\Omega}}(\eps)=\bOmega(-\eps)$.
The homology intersection matrix is independent of $\bs{s}$, $\rd\bH=0$. As a consequence, the TRBRs and the differential equations~\eqref{deqperiod} and~\eqref{deqperiod_dual} for the period matrix and its dual combine to a differential equation for the cohomology intersection matrix
\begin{align}\label{DEQC}
\rd \bC=\bs{\Omega} \bC+\bC\bs{\check{\Omega}}^T\,.
\end{align} 
Up to normalization, the intersection matrix $\bC$ can be characterized as the unique rational solution to this equation~\cite{cohomIntMatrixUniqueness}.

\subsubsection{Twisted symmetry transformations}
\label{sec:summary_twisted_sym}
We now argue that the concept of symmetry transformations for (families of) Feynman integrals has a very natural interpretation in the context of twisted cohomology.

\paragraph{Definition.} Consider a twisted cohomology theory as defined in the previous section (which may or may not be associated to a family of Feynman integrals). There is a natural way to define a concept of `sectors' on such a twisted cohomology theory. The on-shell variety can be decomposed into different components along which the single-valued $n$-forms develop poles not regulated by the twist:
\beq\label{eq:on-shell_decomposition}
D_- = \bigcup_{i=1}^P D_-^{(i)}\,.
\eeq
To every cocycle $\varphi$ we may then associate a list $\vartheta^-(\varphi) = (r_1^-,\ldots,r_P^-)$  with
\beq\label{eq_cocycle}
r_i^- = \left\{\begin{array}{ll}
1\,, & \textrm{ if $\varphi$ has a pole along $D_-^{(i)}$},\\
0\,,& \textrm{ otherwise.}
\end{array}\right.
\eeq
A \emph{sector} is then defined as the subspace of the twisted cohomology group generated by those $\varphi$ with a $\vartheta^-(\varphi) \preceq \Theta$, for some fixed $\Theta = (r_1^-,\ldots,r_P^- )\in \{0,1\}^P$. Note that this definition naturally mirrors the definition of a sector encountered in the context of Feynman integrals (cf.~section~\ref{sec:summary_Feynman_integrals}). Indeed, for Feynman integrals the on-shell variety $D_-$ is the locus where a subset of propagators is singular. It may be decomposed as in eq.~\eqref{eq:on-shell_decomposition}, with $D_-^{(i)}$ the locus in loop-momentum space where the $i^{\textrm{th}}$ inverse propagator vanishes, $D_i=0$. Similarly to the case of Feynman integrals, we define $\bd^-_{\Theta}\subseteq \{1,\ldots,P\}$ to be the set of indices such that there are differential forms from this sector that have a pole along $D_-^{(i)}$, $i\in\bd^-_{\Theta}$. We also define
\beq \label{eqDmT}
D_{-,\Theta} = \bigcup_{i\in \bd^-_{\Theta}}D_-^{(i)}\,,\qquad X_{\Theta} = \mathbb{C}^n\setminus (\Sigma\cup D_{-,\Theta})\,.
\eeq

We can now define the analogue of the symmetry transformations from section~\ref{sec:summary_symmetries_def}.
A \emph{twisted symmetry transformation} from $\Theta_1$ to $\Theta_2$ of a twisted cohomology theory is an affine and bijective map \beq\label{eq_twistedsym}
f:X_{\Theta_2}\to X_{\Theta_1}, \qquad \bx_1 = \bA\bx_2+\bs{b}\,,
\eeq
such that
\begin{enumerate}
    \item The Jacobian is trivial, $\det(f) := \det\bA=\pm1$.
    \item The twist is invariant, $\Psi(f(\bx),\bs{s}) = \Psi(\bx,\bs{s})$. This implies in particular that $f$ maps the twisted variety $\Sigma$ to itself.
    \item $f$ bijectively maps $D_{-,\Theta_2}$ to $D_{-,\Theta_1}$, i.e., there is a bijection $\alpha_-: \bd^-_{\Theta_2}\to \bd^-_{\Theta_1}$ such that $f\Big(D_-^{(i)}\Big) = D_-^{(\alpha_-(i))}$, for all $i\in \bd^-_{\Theta_2}$.
\end{enumerate}
We denote the set of symmetry transformations between $\Theta_1$ and $\Theta_2$ by $\TSym(\Theta_1,\Theta_2,\bs{s})$.
For $\Theta_1=\Theta_2$, the twisted symmetry transformations from a sector to itself form a group, which we denote by $\operatorname{TAut}(\Theta,\bs{s})$. If it is clear from the context what the value of $\bs{s}$ is, we will typically simply write $\TSym(\Theta_1,\Theta_2)$ and $\operatorname{TAut}(\Theta)$.

\paragraph{Invariance of the pairings.}

Consider a twisted symmetry transformation $f:X_{\Theta_2}\to X_{\Theta_1}$ between two sectors with $f(\bx_2) = \bA\bx_2+\bs{b}$. It induces linear maps between the corresponding twisted cohomology and homology groups. To see how that works in detail, consider a differential form from the sector $\Theta_1$,
\beq
\varphi = \rd^nx_1\,R(\bs{x}_1)\,,
\eeq
where $R$ is a rational function with possible poles along $\Sigma\cup D_{-,\Theta_1}$, and holomorphic everywhere else on $X_{\Theta_1}$. 
Then we associate to it the following differential form: 
\beq \label{eqpullback}
f^*\varphi = \rd^nx_2 \,R(f(\bx_2))\,\det(f)=\rd^nx_2 \,R(\bA\bx_2+\bs{b})\,\det(f)\,.
\eeq
It  is easy to see that, since $f$ is a twisted symmetry transformation, $f^*\varphi$ has poles at most along $\Sigma\cup D_{-,\Theta_2}$ and it is holomorphic everywhere else on $X_{\Theta_2}$. Hence, $f^*\varphi$ defines a differential form in the sector $\Theta_2$. Similarly, given a loaded cycle from sector $\Theta_2$, we can define the loaded cycle
$f_*\gamma = f(\gamma)$ in sector $\Theta_1$, and we obtain in this way a linear map between the twisted homology groups. The linear maps $f^*$ and $f_*$ are called \emph{pullback} and \emph{pushforward}, respectively. Note that they go in opposite directions: $f^*$ takes a differential form from sector $\Theta_1$ and associates to it a differential form from sector $\Theta_2$. Instead, $f_*$ starts from a cycle from sector $\Theta_2$ and returns a cycle from sector $\Theta_1$.

Let us now consider the period pairing between twisted cycles and differental forms. Changing variables according to $\bx_1 = f(\bx_2)$, we find
\beq\bsp \label{eq:invarianceP}
\langle \varphi|f_*\gamma] &\,= \int_{f(\gamma)}\rd^nx_1\,\Psi(\bx_1,\bs{s})\,R(\bx_1)\\
&\,= \int_{\gamma}\rd^nx_2\,\Psi(f(\bx_2),\bs{s})\,R(f(\bx_2))\,\det(f)\\
&\,= \langle f^*\varphi|\gamma]\,,
\esp\eeq
where in the last step we used the fact that the twist is invariant. In other words, we see that the period pairing is invariant under twisted symmetries.
In section~\ref{subsecpairing} we will show that also the dual period pairing and the interaction pairings are invariant:
\beq\bsp\label{prop1}
[f_*\check{\gamma}|\check{\varphi}\rangle &\,= [\check{\gamma}|f^*\check{\varphi}\rangle\,,\\
\langle f^*\varphi_1|f^*\check{\varphi}_2\rangle &\,= \langle\varphi_1|\check{\varphi}_2\rangle\,,\\
[f_*\check{\gamma}_1|f_*\gamma_2]&\,= [\check{\gamma}_1|\gamma_2]\,.
\esp\eeq

\paragraph{The symmetry group of a sector.} Let us now focus on the case $\Theta_1=\Theta_2=\Theta$, i.e., we consider the group $\operatorname{TAut}(\Theta)$ of twisted symmetry transformations of $\Theta$. We assume $\operatorname{TAut}(\Theta)$ to be a finite group (which will be the case relevant for Feynman integrals). We then obtain a representation of $\operatorname{TAut}(\Theta)$ on the twisted cohomology group $\cH=H_\text{dR}^n(X,\nabla_{\omega})$. Similarly, we obtain linear representations on the twisted homology group, and on the dual groups. We can then apply tools from the representation theory of finite groups to study the twisted symmetry transformations of the sector $\Theta$ (see appendix~\ref{recapgrouptheory} for a concise review of group theory).

An important question in the representation theory of groups is if a given representation is reducible. In section~\ref{sec:single_sector} we study the decomposition of the action of $\operatorname{TAut}(\Theta)$ on $\cH$ into irreducible representations. 
We may pick bases of the twisted (co-)homology groups such that the elements $\varphi_j$ and $\gamma_i$ transform in irreducible representations of $\operatorname{TAut}(\Theta)$. The invariance of the period pairing, together with standard arguments from group theory, then imply that $\langle \varphi_j|\gamma_i]=0$ unless $\varphi_j$ and $\gamma_i$ transform in some equivalent irreducible representations. In particular, we can order the basis elements such that the period matrix $\bP$ is block-diagonal, and the blocks correspond to irreducible representations of $\operatorname{TAut}(\Theta)$. A similar argument applies to the other pairings, and so we conclude that, if we pick bases of all (co-)homology groups and their duals aligned with the decomposition into irreducible representations, then the period matrices $\bP$ and $\bs{\check{P}}$ and the intersection matrices $\bC$ and $\bH$ are block-diagonal, and the blocks reflect the decomposition into irreducible representations.
 This implies that the matrices $\bOmega$ and $\bs{\check{\Omega}}$ describing the differential equations~\eqref{deqperiod} and~\eqref{deqperiod_dual} will also be block diagonal.

It is well known from the theory of finite groups that the decomposition into irreducible representations is controlled by the characters of the representation, defined as the trace of the representation matrices $\bD$,
\beq
\chi_{\bD}(f) = \Tr(\bD_f)\,,\qquad f\in \operatorname{TAut}(
\Theta)\,.
\eeq
Details about characters and how they can be used to decompose a representation into irreducible representations are reviewed in appendix~\ref{recapgrouptheory}. Here it suffices to recall that characters are constant on conjugacy classes, and the multiplicity $m_k$ with which an irreducible representation $\bD^{(k)}$ appears in the decomposition of $\bD$ is given by projecting the character $\chi_{\bD}$ of $\bD$ onto the character $\chi_k := \chi_{\bD^{(k)}}$. We then see that, in order to understand the decomposition of $\bD$ into irreducible representation, we need to study its characters $\chi_{\bD}$. In section~\ref{sec:OrbitSpaceEuler} we prove one of the main mathematical results of our paper: Under certain technical assumptions on the twisted cohomology group, the character $\chi_{\bD}(f)$ can be related to a purely topological invariant of $X$, namely the absolute value of  the Euler characteristic of the manifold of fixed points of $f$ in $X$:
\beq
\label{eq:summary_character_to_Euler}
\chi_{\bD}(f) = (-1)^n\chi(X_f)\,,
\eeq
where $X_f = \big\{\bs{x}\in X: f(\bs{x})=\bs{x}\big\}$ is the fixed-point set of $f$. The detailed proof of this result will be presented in section~\ref{sec:OrbitSpaceEuler} and heavily relies on tools from algebraic topology, in particular Lefschetz numbers. We find it remarkable that such a simple relation between  group-theoretical aspects of $\operatorname{TAut}(\Theta)$ and topological properties of the space $X$ exists. Equation~\eqref{eq:summary_character_to_Euler} will be the key to understand how we can compute the number of master integrals in the presence of symmetries. 

\paragraph{Symmetries of families of integrals.} So far, we have defined twisted symmetry transformations as affine maps on the space $X$ that preserve the data that define the twisted cohomology theory. As a consequence, we obtain  natural actions of twisted symmetry transformations on the (co-)homology groups and their duals, and these actions preserve the pairings between these groups. In other words, twisted symmetry transformations act on \emph{integrands} and \emph{integration contours}. The symmetry transformations defined in section~\ref{sec:summary_symmetries_def}, however, act on families of \emph{integrals}. We now briefly comment on the relationship between these two notions of symmetry transformations. A more detailed discussion will be provided in section~\ref{sec:symmetries_of_families}. 

Let us fix an integration cycle $\gamma$ from the relevant twisted homology group. We assume that $\gamma\subseteq \mathbb{R}^n$ is a real cycle (this assumption is always satisfied for Feynman integrals). Let $\cV_{\gamma}$ denote the $\cF$-vector space generated by all twisted periods of the form $\langle\varphi|\gamma]$, where $\varphi$ is a twisted differential form. Then $\cV_{\gamma}$ defines a family of integrals, and we may define sectors on $\cV_{\gamma}$ by using the sectors on the twisted cohomology group. We now pick a differential form $\varphi$ from a sector $\Theta_1$, and a twisted symmetry transformation $f\in\TSym(\Theta_1,\Theta_2)$. Then via the change of variables $\bs{x}_1=f(\bs{x}_2)$, we obtain the differential form $f^*\varphi$ from the sector $\Theta_2$. However, this change of variables will not automatically leave the integration cycle $\gamma$ invariant. We only obtain an integral $\langle f^*\varphi|\gamma] = \langle\varphi|f_*\gamma]$ from the family $\cV_\gamma$ if $f_*\gamma=f(\gamma)$ leaves $\gamma$ invariant setwise, possibly up to changing its orientation. The change of orientation is captured by the determinant of $f$, and so we need to demand that $f_*\gamma=\det(f)\,\gamma$. Based on these considerations, we define the set of \emph{symmetry transformations between $\Theta_1$ and $\Theta_2$ of the family $\cV_{\gamma}$} by
\beq
\Sym_{\gamma}(\Theta_1,\Theta_2,\bs{s}) = \big\{f\in\TSym(\Theta_1,\Theta_2,\bs{s}): f_*\gamma=\det(f)\,\gamma\big\}\,.
\eeq
We also use the notation $\Aut_{\gamma}(\Theta,\bs{s}) := \Sym_{\gamma}(\Theta,\Theta,\bs{s})$, and we will often omit the dependence on the kinematic point $\bs{s}$. Then, the set of symmetry transformations of the family $\cV_\gamma$ is a subset of all twisted symmetries. Let us make some comments about this definition. First, in the context of Feynman integrals defined in loop-momentum space discussed in section~\ref{sec:summary_symmetries_def}, we did not need to distinguish between the symmetry transformations of the integrand and those of the full integral. This is due to the fact that the integration contour in loop-momentum space is $\mathbb{R}^{DL}$, and this space is invariant under all invertible affine transformations. In more general situations, however, one can easily construct affine transformations that leave the twist invariant, but not the  integration cycle $\gamma$ defining the family. Second, in the case $\Theta_1=\Theta_2=\Theta$, it is easy to see that the determinant defines a one-dimensional (and thus irreducible) representation of $\Aut_{\gamma}(\Theta)\subseteq \operatorname{TAut}(\Theta)$. In other words, the integration cycle $\gamma$ transforms in this irreducible representation. Since the integration pairing is zero unless $\gamma$ and $\varphi$ transform in equivalent irreducible representations of $\Aut_{\gamma}(\Theta)$, we conclude that the differential forms $\varphi\in\cH$ that provide a non-zero element $\langle\varphi|\gamma]\in\cV_{\gamma}$ are those that transform in the determinant representation of $\Aut_{\gamma}(\Theta)$. We denote the irreducible subspace of $\cH$ transforming in the determinant representation by
\beq\label{eq:summary_HG}
\cH_{G} = \big\{\varphi\in\cH: f^*\varphi = \det(f)\,\varphi\big\}\,.
\eeq
Let us briefly compare this result to the result obtained in ref.~\cite{Gasparotto:2023roh}, where it was argued that in the presence of symmetries the relevant differential forms transform in the trivial representation of the symmetry group (rather than the determinant representation considered here). There is no contradiction. Indeed, ref.~\cite{Gasparotto:2023roh} only considered symmetry transformations with determinant $+1$, and in that case the determinant and trivial representations agree. In more general situations, where we allow orientation changes, we need to consider the determinant representation. This distinction is particularly crucial when studying the decomposition into irreducible representations, which will be the key to understand the number of master integrals in the presence of a non-trivial symmetry group.

\subsection{Symmetries and Feynman integrals}\label{sec:summary_twistedFeynman}

In section~\ref{sec:summary_twisted_sym} we have introduced symmetry transformations for twisted cohomology theories, and we have discussed some of their properties, like the invariance of the pairings and the connection between the decomposition into irreducible representations. We now connect this back to Feynman integrals.

\subsubsection{Twisted symmetry transformations and Feynman integrals}
We have already seen that there is no unique twisted cohomology theory that we can associate with a family of Feynman integrals. Hence, we would naively expect that also the set of symmetry transformations of a family may depend on the choice of twisted cohomology theory. This, however, sounds unnatural, as we expect the properties of a family of Feynman integrals to be independent of the integral representation used to define the family. 

In section~\ref{sec:twistedsymparam} we show that there is indeed a universal set of symmetry transformations that one can attach to a family of Feynman integrals, irrespective of whether we work with the twisted cohomology theory obtained from the loop-momentum, Baikov, Lee-Pomeransky or Feynman parameter representations.
More precisely, we will see in section~\ref{sec:twistedsymparam} that for each of these integral representations, the set of symmetry transformations between the sectors $\Theta_1$ and $\Theta_2$ is in bijection with $\mathbb{S}(\cG_1,\cG_2)\times \mathbb{X}(\Theta_1,\Theta_2)$, where $\cG_i$ is the Lee-Pomeransky polynomial of the sector $\Theta_i$. Here $\mathbb{X}(\Theta_1,\Theta_2)$ is a set of symmetry transformations that depend on the specific twisted cohomology theory used to define the family, but the elements of $\mathbb{X}(\Theta_1,\Theta_2)$ leave the integrals trivially invariant. The first factor $\mathbb{S}(\cG_1,\cG_2)$ is universal, and acts non-trivially on the integrals. Remarkably, this universal factor can be described explicitly as the set of permutations of the Feynman parameters that map the Lee-Pomeransky polynomial from one sector to the other. 

In eq.~\eqref{eq:S_T1_T2}, we have seen that the set of symmetry transformations in loop-momentum space can be described by $\mathbb{S}(\cG_1,\cG_2)$ and $\mathbb{Z}_2$ factors that flip the signs of a subset of edge momenta and trivially leave the integrals invariant. We have also seen that the permutations from $\mathbb{S}(\cG_1,\cG_2)$ can be lifted to symmetry transformations in loop-momentum space via purely graph-theoretical methods. In section~\ref{sec:twistedsymparam} we will show that this is a more general phenomenon: also in the Baikov, Lee-Pomeransky and Feynman parameter representations can all twisted symmetry transformations of the family that act non-trivially be constructed from a permutation from $\mathbb{S}(\cG_1,\cG_2)$.  Since $\mathbb{S}(\cG_1,\cG_2)$ is universal and does not depend on the integral representation used to define the family, we will refer to it as \emph{the} set of symmetry transformations of the family associated to the Feynman graph $G$, and we denote it by $\Sym_G(\Theta_1,\Theta_2,\bs{s})$ (or $\Aut_G(\Theta,\bs{s})$ in the case $\Theta_1=\Theta_2=\Theta$). We will again often suppress the dependence on the kinematic point $\bs{s}$.

\subsubsection{Symmetries and the canonical intersection matrix}\label{sec_canonicalinv}
Let us now focus on a fixed sector $\Theta$.
In section~\ref{sec:summary_twisted_sym} we have seen that we may pick a basis of the twisted homology and cohomology groups and their duals such that the period and intersection matrices are block-diagonal with respect to irreducible representations of the symmetry group $G_{\bs{s}}:=\Aut_G(\Theta,\bs{s})$ of that sector. We now argue that if we work with a canonical basis of master integrals, the result is even stronger. 

Consider the vector $\bI(\bs{s},\eps)$ of maximal cuts of the master integrals of the sector $\Theta$, and assume that we rotate to a canonical basis as in eq.~\eqref{conventiontrafo}. In ref.~\cite{Duhr:2024xsy} it was shown that after rotation to the canonical basis, the cohomology intersection matrix is constant in $\bs{s}$,\footnote{We assume that we work in a self-dual basis of maximal cuts, where the vector of dual maximal cut integrals is $\bs{\check{I}}(\bs{s},\eps) = \bI(\bs{s},-\eps)$~\cite{Duhr:2024rxe}.}
\beq\label{eq_deltacanonical}
\bU(\bs{s},\eps)\bC(\bs{s},\eps)\bU(\bs{s},-\eps)^T = f(\eps)\,\bDelta\,,
\eeq
where $f$ is a rational function, and $\bDelta$ is a constant matrix with rational numbers as entries.

Consider now the symmetry group $G_{\bs{s}}$ of this sector. We know that the symmetry group depends on the kinematic point $\bs{s}$. Assume that there is a point $\bs{s}_0$ where the symmetry is enlarged. For example, assume that for generic values of $\bs{s}$ the group $G_{\bs{s}}$ is trivial, but $G_{\bs{s}_0}$ is not. 
In section~\ref{eq:canonical_basis} we will show that we may then pick a canonical basis $\bJ(\bs{s},\eps)$ such that the entries of $\bJ(\bs{s}_0,\eps)$ transform in irreducible representations of $G_{\bs{s}_0}$. Note that $\bJ(\bs{s}_0,\eps)$ is the original vector of master integrals evaluated at $\bs{s}=\bs{s}_0$, which may be larger than the vector obtained by solving IBPs directly for $\bs{s}=\bs{s}_0$ (including symmetries). From the previous argument, we know that the intersection matrix $\bDelta$ in such a basis is block-diagonal and constant, and thus independent of $\bs{s}$. Hence, $\bDelta$ is the same at the point $\bs{s}_0$ with enlarged symmetry and at a generic kinematic point $\bs{s}$. In other words, we find that, if the canonical master integrals for general $\bs{s}$ are chosen according to the irreducible representations of $\Aut(\Theta,\bs{s}_0)$, then remarkably the canonical intersection matrix is block-diagonal even for general $\bs{s}$, with the blocks on the diagonal corresponding to the irreducible representations of the enlarged symmetry group.

\subsection{The number of master integrals and Euler characteristics}\label{sec:summary_number_MIs}

One of the simplest numbers one may attach to a given sector $\Theta$ of a family of Feynman integrals is the number $N_{\Theta}$ of master integrals in that sector. It is an interesting question if it is possible to determine the number of master integrals of the sector $\Theta$ without having to solve the IBP relations explicitly. This issue was first addressed in ref.~\cite{Lee:2013hzt}, where it was shown that, under certain assumptions, the number of master integrals for a given sector can be obtained by counting the number of critical points associated with the Lee-Pomeransky polynomial, i.e., the number of solutions to the equations
\beq
\partial_{x_e}\log\cG(\bx,\bs{s}) = 0\,,\qquad 1\le e\le P\,.
\eeq
In ref.~\cite{Bitoun:2017nre} it was shown that (using genericity assumptions for the external kinematics and the propagator exponents), the number of master integrals can be given as the Euler characteristic of a certain complex, and in ref.~\cite{aomoto_theory_2011} it was shown that this Euler characteristic is naturally associated with the twisted cohomology group describing the sector.

However, it was soon realized that in the presence of symmetries, the Euler characteristic overcounts the number of master integrals. In section~\ref{sec:number_MIs}, we will show that the number $N_{\Theta}$ of master integrals can also be obtained from an Euler characteristic computation in the presence of symmetries:
\beq\label{eq:Euler_orbit}
N_{\Theta} = \frac{1}{|G_{\bs{s}}|}\sum_{\sigma\in G_{\bs{s}}}\big|\chi(X_{\sigma})| = \frac{1}{|G_{\bs{s}}|}\sum_{[\sigma]\in \mathcal{C}_{G_{\bs{s}}}}n_{[\sigma]}\big|\chi(X_{\sigma})|\,,
\eeq
where $\mathcal{C}_{G_{\bs{s}}}$ is the set of conjugacy classes of $G_{\bs{s}}$ and $n_{[\sigma]}$ is the number of distinct elements conjugate to $\sigma$. Here $\chi(X_{\sigma})$ is the Euler characteristic of the submanifold $X_{\sigma}$ of $\mathbb{C}^P\setminus\Sigma$ that is left invariant by  $\sigma$ (where $\sigma$ acts on $\mathbb{C}^P$ in the natural way by permutation of the coordinates), and $\Sigma$ is the twisted variety.  We will rigorously prove eq.~\eqref{eq:Euler_orbit} in section~\ref{sec:number_MIs}. In a nutshell, the proof proceeds as follows: we know that the differential forms relevant to define our family of integrals transform in the determinant representation of $G_{\bs{s}}$ (cf.~eq.~\eqref{eq:summary_HG}). Under the right assumptions on the the twisted cohomology theory (which are very similar to those that underlie the original relation between the number of master integrals and the Euler characteristic in refs.~\cite{Lee:2013hzt,Bitoun:2017nre,Mastrolia:2018uzb}), the number of master integrals can be related to the dimension of $\cH_G$. This dimension can in turn be computed using standard techniques from group theory using the character of the representation, and from eq.~\eqref{eq:summary_character_to_Euler} we know that the characters are related to the Euler characteristics of the fixed-point sets. Note that, in order to prove eq.~\eqref{eq:Euler_orbit}, we need to combine various results introduced in previous subsections.

Let us conclude by commenting on the practical applicability of eq.~\eqref{eq:Euler_orbit}. One may wonder how easy or complicated it is to compute the different ingredients entering the right-hand side of eq.~\eqref{eq:Euler_orbit} in practice. It turns out that in the context of the Lee-Pomeransky polynomial, all ingredients can be efficiently computed. First, the group $G_{\bs{s}}$ is easy to determine, e.g., using the algorithm from ref.~\cite{Pak:2011xt}. Second, the invariant submanifolds $X_{\sigma}$ are also easy to describe: It is well known that every permutation $\sigma\in S_P$ admits a decomposition into a product of \emph{cycles}, $\sigma = c_1\cdots c_r$, and this decomposition is unique up to the ordering of the cycles. Here the cycle $c_k$ is just a cyclic permutation,
\beq
c_k : i_{k,1}\to i_{k,2}\to\ldots \to i_{k,|c_k|} \to i_{k,1}\,.
\eeq
The invariant submanifold $X_{\sigma}$ is then simply the set of points $\bx\in\mathbb{C}^P$ such that
\beq
x_{i_{k,1}} = \ldots= x_{i_{k,|c_k|}}\,,\qquad \textrm{for all } 1\le k\le r\,.
\eeq
Finally, the Euler characteristic $\chi(X_{\sigma})$ can then be obtained by computing the number of critical points of the Lee-Pomeransky polynomial restricted to $X_{\sigma}$, i.e., $\chi(X_{\sigma})$ is obtained as the number of solutions to the equations
\beq
\partial_{x_e}\log\cG(\bs{x},\bs{s})|_{X_{\sigma}}=0\,.
\eeq
The number of critical points can easily be computed using public computer codes, e.g.,~ref.~\cite{ChestnovCrisanti2025}, and a public package implemeting eq.~\eqref{eq:Euler_orbit} is currently being developed~\cite{packagemasters}.

\subsection{Examples}
In this section we will illustrate our formula to compute the number of master integrals on banana integrals with up to four loops. 
The Symanzik polynomials of the $L$-loop banana graph are
\beq\bsp\cU_L(x_1,\ldots,x_{L+1}) &\,= x_1\cdots x_{L+1}\sum_{k=1}^{L+1}\frac{1}{x_k}\,,\\
\cF_{L}(x_1,\ldots,x_{L+1},m_1^2,\ldots,m_{L+1}^2) &\,= (-s)x_1\cdots x_{L+1} + \cU_L(x_1,\ldots,x_{L+1})\sum_{k=1}^{L+1}m_k^2x_k\,, 
\esp\eeq
with $s=p^2$.
The Lee-Pomeransky polynomial is
\beq
\cG_L(\bs{x},\bs{s}) = \cF_L(\bs{x},\bs{s}) + \cU_L(\bs{x})\,,
\eeq
with $\bs{x}=(x_1,\ldots,x_{L+1})$ and $\bs{s}=(s,m_1^2,\ldots,m_{L+1}^2)$.
In the following, we assume that all masses are non zero.

In the bulk, where all the masses are distinct, the symmetry group of the top sector\footnote{By top sector we mean here the sector where at most the $(L+1)$ propagators defining the banana integral are active.} $\Theta_{\text{top}}$ is trivial, and the number of master integrals in the top sector (modulo lower sectors) can be computed from the Euler characteristic 
\beq\label{eq:bananq_Euler_bulk}
N_{\Theta_{\text{top}}} = \big|\chi(\mathbb{C}^{L+1}\setminus \Sigma)\big|\,,
\eeq
where $\Sigma$ is the locus in $\mathbb{C}^{L+1}$ where the Lee-Pomeransky polynomial vanishes. The precise value of $N_{\Theta_{\text{top}}}$ depends on $L$.

In the equal-mass case, $m^2:=m_1^2=\ldots=m_{L+1}^2$, the top-sector of the family has a $S_{L+1}$ symmetry, corresponding to a permutation of the $L+1$ propagators, or equivalently, of the $L+1$ Feynman parameters. There are always precisely $L$ master integrals in the top sector for equal masses (cf.,~e.g.,~refs.~\cite{Klemm:2019dbm,Bonisch:2020qmm}). Except for $L=1$, this number is considerably lower than the absolute value of the Euler characteristic in eq.~\eqref{eq:bananq_Euler_bulk}.

In the remainder of this section, we present the number of master integrals for all possible configurations where some of the masses are equal (and non-zero) up to four loops. All of these configurations have symmetry groups that are subgroups of $S_{L+1}$. We have checked that we obtain the correct number of master integrals for each case by comparing to KIRA~\cite{Maierhofer:2017gsa,Klappert:2020nbg,Lange:2025ofh}.

\begin{example}[The two-loop sunrise integral]
\label{exsunrise}

Let us start by discussing the case $L=2$, which corresponds to the two-loop sunrise integral. In the bulk, where all three masses are distinct, there are four master integrals in the top sector. 

The group $S_3$ is generated by the two permutations $(12)$ and $(123)$. They act on $\mathbb{C}^3$ via
\beq\bsp
\bP_{\!(12)}\bs{x}&=(x_2,x_1,x_3)^T\,,\quad \text{ with }\quad \bP_{\!(12)}=\left(
\begin{array}{ccc}
 0 & 1 & 0 \\
 1 & 0 & 0 \\
 0 & 0 & 1 \\
\end{array}
\right)\,,\\
\bP_{\!(123)}\bs{x}&=(x_3,x_1,x_2)^T\,,\quad \text{ with }\quad \bP_{\!(123)}=\left(
\begin{array}{ccc}
 0 & 0 & 1 \\
 1 & 0 & 0 \\
 0 & 1 & 0 \\
\end{array}
\right)\,.
\esp\eeq
There are three distinct conjugacy classes, which we may choose as (see appendix~\ref{recapgrouptheory} for a review of the representation theory of the symmetric group):
\beq
[\mathds{1}]\,,\qquad [(12)]\,,\qquad [(123)]\,,
\eeq
and
\beq
n_{[\mathds{1}]} = 1\,,\qquad n_{[(12)]} = 3\,,\qquad n_{[(123)]} = 2\,.
\eeq

We proceed by parametrizing the invariant subspace of each group element. For example, the fixed-point locus for the generator $(12)$ is given by 
\begin{align}
(x_1,x_2,x_3)^T=(e_1+e_2)z_1+e_3 z_2=(z_1,z_1,z_2)^T\,,
\end{align}
where the $e_i$ are standard unit vectors in $\mathbb{C}^3$.
The Lee-Pomeransky polynomial localized on the fixed-point set is
\begin{align}
\mathcal{G}_{(12)}=z_1(z_1+2 z_2)+z_1(z_1+2 z_2)(z_2+2 z_1)m^2-sz_1^2 z_2\,.
\end{align}
Let us define the set of critical points for $\cG_{(12)}$ by
\beq
\operatorname{Crit}(\cG_{(12)}) = \big\{(z_1,z_2)\in\mathbb{C}^{2}: \partial_{z_1}\log\mathcal{G}_{(12)}=\partial_{z_2}\log\mathcal{G}_{(12)}=0\big\}\,.
\eeq
We find 
\begin{align}
|\chi(X_{[(12)]})| = \# \operatorname{Crit}(\cG_{(12)})=2\,.
\end{align}
Repeating this reasoning for all conjugacy classes, we find: 
\begin{align}
\label{eq:sunrise_X}
|\chi(X_{[\mathds{1}]})|=4\,, \quad |\chi(X_{[(12)]})|=2\,,\quad |\chi(X_{[(123)]})|=1\,,
\end{align}
such that
\begin{align}
N_{\Theta_{\text{top}}}(m_1^2=m_2^2=m_3^2)=\frac{1}{6}(4+3\times 2+2\times 1)=2\,,
\end{align}
in agreement with the known result (cf., e.g., refs.~\cite{Caffo:1998du,Laporta:2004rb}).

We can repeat exactly the same reasoning for the situation where only two of the three masses are equal, for example $m_1^2=m_2^2\neq m_3^2$. In that case the symmetry group of the top sector is $S_2 = \{\mathds{1},(12)\big\} \subset S_3$. The Euler characteristics of the fixed-point sets $X_{[\mathds{1}]}$ and $X_{[(12)]}$ are still given by eq.~\eqref{eq:sunrise_X}. However, now every conjugacy class only contains a single element. We then find
\begin{align}
N_{\Theta_{\text{top}}}(m_1^2=m_2^2\neq m_3^2)=\frac{1}{2}(4+2)=3\,,
\end{align}
in agreement with the known result (cf.,~e.g., refs.~\cite{Remiddi:2013joa,Campert:2020yur}).
\end{example}

\begin{example}[The three-loop banana integral]
\label{ex3loopbanana}
Let us now discuss the case $L=3$. Three-loop banana integrals depending on four distinct masses have recently been considered in refs.~\cite{Duhr:2025kkq,Pogel:2025bca}, and configurations where at least two masses are distinct have been studied in refs.~\cite{duhr2025modularformsthreeloopbanana,Duhr:2025ppd,Maggio:2025jel,Duhr:2025ouy}. The equal-mass case has been analyzed in refs.~\cite{Bloch:2016izu,Bloch:2014qca,Broedel:2019kmn,Broedel:2021zij,Pogel:2022yat}.

If all four masses are distinct, there are 11 master integrals in the top sector.
The symmetry group $S_4$ of the equal-mass case is generated by the permutations $(14)$ and $(1234)$. The elements of $S_4$ can be grouped into 5 conjugacy classes,
\beq
[\mathds{1}]\,,\quad [(12)]\,,\quad [(123)]\,,\quad [(12)(34)]\,,\quad [(1234)]\,.
\eeq
We can compute the Euler characteristics of the fixed-point sets, for example by counting numbers of critical points (here the values of the Euler characteristics are the same in the equal- and unequal-mass cases). We find:
\begin{align}\label{s4conjeuler}
|\chi(X_{[\mathds{1}]})|&=11\,, \quad |\chi(X_{[(12))]})|=5\,,\quad\,|\chi(X_{[(123))]})|=2 \,,\\
|\chi(X_{[(12)(34))]})|&=3\,,\quad |\chi(X_{[(1234))]})|=1\notag\,.\quad 
\end{align}

Let us now compute the number of master integrals for the different mass configurations. Each mass configuration is characterized by a symmetry group $G\subseteq S_4$, and the number of master integrals is the sum of the Euler characteristics of the relevant conjugacy classes, weighted by the number of elements in that conjugacy class. The results are summarized in table~\ref{tab:3-loop-banana}.

\begin{table}[ht]
\centering
\begin{tabular}{llll}
\hline
\textbf{Mass configuration} & \textbf{\# of  masses} & \textbf{Symmetry group} & \textbf{\# of master integrals} \\
\hline
$m_1 = m_2 = m_3 =m_4$             & 1 & $S_4$                 & 3  \\
$m_1 = m_2 = m_3$             & 2 & $S_3$                 & 5  \\
$m_1 = m_2,\;\; m_3 = m_4$    & 2 & $S_2 \times S_2$     & 6  \\
$m_1 = m_2$                   & 3 & $S_2$                 & 8 \\
all distinct                   & 4 & $S_1$                 & 11 \\
\hline
\end{tabular}
\caption{Number of master integrals for different mass configurations of the three-loop banana integral family.}
\label{tab:3-loop-banana}
\end{table}
\end{example}

\begin{example}[The four-loop banana integral]
\label{ex4loopbanana}
Finally, let us discuss the case $L=4$. The equal-mass case was studied in ref.~\cite{Pogel:2022ken}, and some non-equal masses cases were considered in ref.~\cite{Maggio:2025jel}.

We proceed in  the same way as in the previous cases. The Euler characteristics of the fixed point sets for the individual conjugacy classes are given by
\begin{align}
|\chi(X_{[\mathds{1}]})|&=26\,, \quad |\chi(X_{[(12)]})|=12\,,\quad |\chi(X_{[(123)]})|=5\,,\\
|\chi(X_{[(12)(34)]})|&=6\,,\quad |\chi(X_{[(1234)]})|=2\,,\quad|\chi(X_{[(123)(45)]})|=3\,,\quad |\chi(X_{[(12345)]})|=1\,. \notag
\end{align}
We use these results to compute the number of master integrals for the different mass configurations. The results are shown in table~\ref{tab:banana_masters}.

\begin{table}[ht]
\centering
\begin{tabular}{llll}
\hline
\textbf{Mass configuration} & \textbf{\# of  masses} & \textbf{Symmetry group} & \textbf{\# of master integrals} \\
\hline
$m_1 = m_2 = m_3 =m_4= m_5$             & 1 & $S_5$                 & 4  \\
$m_1 = m_2 = m_3 = m_5$             & 2 & $S_4$                 & 7  \\
$m_1 = m_2,\;\; m_3 = m_4 = m_5$    & 2 & $S_2 \times S_3$     & 9  \\
$m_1 = m_2 = m_3$                   & 3 & $S_3$                 & 12 \\
$m_1 = m_2,\;\; m_3 = m_4$          & 3 & $S_2 \times S_2$     & 14 \\
$m_1 = m_2$                         & 4 & $S_2$                 & 19 \\
distinct masses & 5 & $S_1$ & 26\\
\hline
\end{tabular}
\caption{Number of master integrals for different mass configurations of the four-loop banana integral family.}
\label{tab:banana_masters}
\end{table}
\end{example}

\section{Symmetries of Feynman integrals from graph theory}\label{sec:symfeyn}

In this section we will give a detailed account of symmetry transformations for Feynman integrals. First, we will introduce a graph-theoretical interpretation of the edge-momentum parametrizations in loop-momentum space. Then we will explicitly show that the symmetry transformations between two sectors are captured by the permutations of Feynman parameters that map the corresponding Lee-Pomeransky polynomials into each other. The proof is constructive, in the sense that it allows us to explicitly construct the transformation in loop-momentum space corresponding to a particular permutation of the Feynman parameters.

\subsection{A brief review of graph theory}
\label{sec:GraphTheory}

Let us start by reviewing some basic facts from graph theory. The material is entirely standard and can be found in many textbooks, see, e.g., ref.~\cite{oxleyBook}.

An \emph{(oriented) graph} $G$ is a triple $(E_G,V_G,\phi)$, where $E_G$ is the set of edges, $V_G$ is the set of vertices and 
\begin{equation}
    \phi:E_G\rightarrow V_G\times V_G \,,
\end{equation}
is the \emph{incidence relation}, which maps every edge of $G$ to its start and end point. In the following it will be convenient to lift the sets of edges and vertices to the free $\mathbb{Z}$-modules generated by these sets. In an abuse of notation, we will again denote these modules by $E_G$ and $V_G$. Sometimes we will also need the lift them to $\mathbb{Q}$-vector spaces, which we indicate by a subscript, e.g., $E_{G,\mathbb{Q}} = E_G\otimes \mathbb{Q}$. We can then reformulate the incidence relation via the \emph{boundary operator}
\begin{equation}
    \partial:E_G\rightarrow V_G, \qquad \partial e=v_2-v_1, \text{ for } \phi(e)=(v_1,v_2) \,.
\end{equation}

For Feynman graphs it is useful to split the edges and vertices into \emph{internal} and \emph{external},
\begin{equation}
    E_G=E_G^{\rint}\oplus E_G^{\rext}, \qquad V_G=V_G^{\rint}\oplus V_G^{\rext} \,,
\end{equation}
where a vertex is external if it is of valency one. The unique edge connected to this vertex is then called an external edge. 
For each graph $G$ we define the number of (internal) vertices $V=\dim V_G^{\rint}$, the number of internal edges (or propagators) $P=\dim E_G^{\rint}$, the number of external vertices $V_{\rext}=\dim V_G^{\rext}=\dim E_G^{\rext}$ and the number of loops $L=\dim H_1(G,\mathbb{Z})$, where
\begin{equation}
    H_1(G,\mathbb{Z})=\Big\langle x\in E_G^{\rint}\,| \, \partial x=0 \Big\rangle_{\mathbb{Z}} \,,
\end{equation}
is the \emph{first graph homology group} of $G$. We denote the dimension (or {rank}) of a free $\mathbb{Z}$-module $M$, i.e., the number of its generators, by $\dim M$.

A \emph{graph isomorphism} between two graphs $G,\Gt$ is a tuple of bijections $(\phi_E:E_G\rightarrow E_{\Gt},\,\phi_V: V_G\rightarrow V_{\Gt})$ compatible with the graph structure,
\begin{equation}
\label{eq:graph_structure}
    \tilde{\partial}\circ\phi_E=\phi_V\circ \partial \,,
\end{equation}
where $\partial,\tilde{\partial}$ are the boundary operators of $G$ and $\Gt$, respectively. If $\Gt=G$, we talk about a graph automorphism. A graph isomorphism preserves the incidence relations. This implies that the valency of a vertex is preserved, and hence a graph isomorphism separately identifies the external and internal edges and vertices.

The advantage of thinking about the sets of edges and vertices as free modules comes from the fact that we can perform linear algebra operations on them. We will encounter various matrices whose rows and columns are labeled by the edges and/or vertices of the graph. These matrices can be understood as linear operators on the free modules $E_G$ and $V_G$.
The first instance of such a matrix is the \emph{incidence matrix} $\bB\in\mathbb{Z}^{V\times P}$, defined by
\begin{equation}
\label{eq:incidenceMatrix}
    \bB_{ve}=\left\{ \begin{array}{rl} 
    +1, & \text{if } v \text{ end-point of } e\,, \\
    -1, & \text{if } v \text{ initial point of } e\,, \\
    0, & \text{else} \,,
    \end{array}\right.
\end{equation}
for every $v\in V_G^{\rint},\, e\in E_G^{\rint}$. Note that this is precisely the matrix corresponding to the $\mathbb{Z}$-module homomorphism $\partial:E^{\rint}_G\rightarrow V^{\rint}_G$.

\subsection{Edge-momentum parametrizations}
\label{sec:edgeMomParams}
Let us first discuss the edge-momentum parametrization in eq.~\eqref{eq:q_k_p} in more detail. We will introduce two concrete choices of parametrization that will be convenient for us in the following. One of these will have a clear physical interpretation, while the second one will be of technical importance in section \ref{sec:proofSymLemma}. 

Consider a Feynman graph $G$ that we associate to a sector $\Theta$ of some Feynman integral family. We can interpret the matrix $\bC$ appearing in an edge-momentum parametrization (cf. eq.~\eqref{eq:q_k_p}) as a \emph{cycle basis matrix} of the graph $G$. This means that there is a basis of $H_1(G,\mathbb{Z})$ such that 
\begin{equation}
\label{eq:CDef}
    \bC_{\ell e}=\left\{ \begin{array}{rl} 
    +1, & \text{if } e \text{ is in } \ell \text{ in the same direction}\,, \\
    -1, & \text{if } e \text{ is in } \ell \text{ in the opposite direction} \,, \\
    0, & \text{else} \,,
    \end{array}\right.
\end{equation}
for $e\in E_G^{\rint}$ and $\ell$ a basis element of $H_1(G,\mathbb{Z})$ (i.e., $\ell$ is a single loop). A change of basis corresponds to a rotation by an element of $\mathrm{GL}(L,\mathbb{Z})$ from the left. Note that $\bC^T$ can be viewed as the matrix corresponding to the inclusion map $H_1(G,\mathbb{Z})\hookrightarrow E_G^{\mathrm{int}}$ (where we interpret $E_G^{\mathrm{int}}$ as the free $\mathbb{Z}$-module generated by the internal edges of $G$). In particular, we have $\mathrm{rank} \, \bC=L$. 

Besides the cycle basis matrix $\bC$, the parametrization of the edge momenta in eq.~\eqref{eq:q_k_p} also requires the knowledge of a matrix $\bE$ that tracks the flow of the external momenta through the graph. The choice of this matrix is not unique (not even for a fixed choice of $\bC$). We now present two different ways to construct such a matrix $\bE$ from graph-theoretical input. 
The first parametrization  is the \emph{pure loop-momentum parametrization}, defined by the fact that exactly $L$ edge momenta have no contribution from external momenta. To define this parametrization we will need to choose a spanning tree\footnote{We will view $T$ as a subgraph of $G$ with the same vertex set, but with only a subset of edges, i.e., we view it as a triple $(E_T,V_G,\partial|_{E_T})$.} $T$ of $G$ as well as an external vertex $v_0$. Then, the parametrization takes the form 
\begin{equation}
\label{eq:qParamhat}
    \bq(\bk,\bp)=\bC^T\bk+\hat{\bD}{}^T\bp \,,
\end{equation}
with $\hat{\bD}\in\mathbb{Z}^{(V_{\mathrm{ext}}-1)\times P}$ the \emph{edge-flow matrix}, defined by
\begin{equation}\label{edgeflow}
    \hat{\bD}_{v e}=\left\{ \begin{array}{rl} 
    +1, & \text{if } e \text{ lies on } \gamma_v \text{ with the same orientation}\,, \\
    -1, & \text{if } e \text{ lies on } \gamma_v \text{ with the opposite orientation} \,, \\
    0, & \text{else} \,,
    \end{array}\right.
\end{equation}
where $v\in V_G^{\rext}\setminus\{v_0\}, \, e\in E_G^{\rint}$ and $\gamma_v$ is the unique path in $T$ from $v$ to $v_0$. We note that for momentum-grouped graphs we have $\mathrm{rank}\,\hat{\bD}=V_{\rext}-1$. Furthermore, observe that we can view $\hat{\bD}{}^T$ as the matrix corresponding to the map $V_G^{\rext}\setminus\{v_0\}\rightarrow E_T^{\mathrm{int}}$ that maps $v$ to (the internal edges along) $\gamma_v$. 
Since the columns of $\hat{\bD}$ corresponding to the complement of the spanning tree $T$ are zero by definition, these edges will be pure loop momenta by construction. The vertex $v_0$ corresponds to the external momentum that we eliminate using overall momentum conservation. 

There is another choice of parametrization which will be  convenient in the following. We will refer to it as an \emph{orthogonal parametrization}, and we write it as
\begin{equation}
\label{eq:qParam}
    \bq(\bk,\bp)=\bC^T\bk+\bD^T\bp \,,
\end{equation}
for a matrix $\bD\in\mathbb{Q}^{(V_{\mathrm{ext}}-1)\times P}$, which is distinguished by the fact that it is orthogonal to $\bC$ in the sense that $\bD\bC^T=0$. We can reach an orthogonal parametrization starting from an arbitrary parametrization, defined by matrices $\bC$ and $\bE$ as in eq.~\eqref{eq:q_k_p}, where $\bE$ does not necessarily admit an interpretation as an edge-flow matrix. To obtain the orthogonal parametrization we then shift the loop momenta as 
\begin{equation}
    \bk\rightarrow \bk-(\bC\bC^T)^{-1}\bC\bE^T\bp \,,
\end{equation}
and hence obtain
\begin{equation}
\label{eq:D_ort}
    \bD^T=(\mathds{1}-\bPi_C)\bE^T \,,
\end{equation}
where we project out the `loop directions' using the projection matrix $\bPi_C$ from eq.~\eqref{eq:projectorDef}.\footnote{Note that $\bPi_C$ really projects onto the $\mathbb{Q}$-vector space $H_1(G,\mathbb{Q})=\mathbb{Q}\otimes H_1(G,\mathbb{Z})$, since the matrix $\bC\bC^T$ does not necessarily have an inverse over the integers.} Observe that in this parametrization, in general every edge momentum can have contributions from the external momenta.

\begin{example}[The crossed box]
\begin{figure}[ht]
    \centering
    \includegraphics[width=0.5\textwidth]{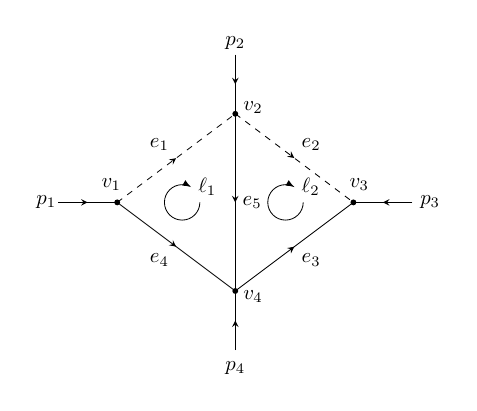}
    \caption{The oriented graph corresponding to the crossed box integral. The internal edges are labeled by $e_1,\dots ,e_5$ and the external momenta are given by $p_1,p_2,p_3,p_4$ with $p_4=-p_1-p_2-p_3$. The edges that are denoted by dashed lines are the ones that are deleted to obtain the spanning tree $T_{\mathrm{cb}}$. The loops denoted by $\ell_1,\ell_2$ form a basis of the homology group of this graph.}
    \label{fig:crossedBox}
\end{figure}
Let us illustrate the above parametrizations on the example of the crossed box graph in figure \ref{fig:crossedBox}. We first construct the pure loop-momentum parametrization described above. The cycle basis matrix associated with the basis $\ell_1,\ell_2$ of the first graph homology group shown in figure \ref{fig:crossedBox} reads:
\begin{equation}
    \bC_{\mathrm{cb}}=\begin{pmatrix}
        1 & 0 & 0 & -1 & 1 \\
        0 & 1 & -1 & 0 & -1 
    \end{pmatrix} \,.
\end{equation}
Next, let us construct the edge-flow matrix $\hat{\bD}_{\mathrm{cb}}$. We choose a spanning tree $T_{\mathrm{cb}}$ obtained from the graph by deleting the edges $e_1,e_2$ (denoted by dashed lines in figure \ref{fig:crossedBox}). We fix $v_0$ to be the vertex where the momentum $p_4$ is inflowing. We obtain the matrix
\begin{equation}
    \hat{\bD}_{\mathrm{cb}}=\begin{pmatrix}
        0 & 0 & 0 & 1 & 0 \\
        0 & 0 & 0 & 0 & 1 \\
        0 & 0 & -1 & 0 & 0
    \end{pmatrix} \,.
\end{equation}
The resulting parametrization then takes the form
\begin{equation}
    \bq_{\mathrm{cb,plm}}(\bk,\bp)=\bC_{\mathrm{cb}}^T\begin{pmatrix}
        k_1 \\ k_2
    \end{pmatrix} +  \hat{\bD}{}_{\mathrm{cb}}^T \begin{pmatrix}
        p_1 \\ p_2 \\ p_3
    \end{pmatrix}=\begin{pmatrix}
        k_1 \\ k_2 \\-k_2-p_3 \\ -k_1+p_1 \\k_1-k_2+p_2
    \end{pmatrix} \,.
\end{equation}
Notice that the first two edge momenta are pure loop momenta, because we chose to delete the edges $e_1,e_2$ to obtain the spanning tree. 

Let us now switch to an orthogonal parametrization by shifting the loop momenta. As discussed above, this yields the matrix
\begin{equation}
    \bD_{\mathrm{cb}}=\hat{\bD}_{\mathrm{cb}}(\mathds{1}-\bPi_{C_{\mathrm{cb}}})=\frac{1}{8}\begin{pmatrix}
        3 & 1 & -1 & 5 & 2 \\
        -2 & 2 & -2 & 2 & 4 \\
        -1 & -3 & -5 & 1 & 2
    \end{pmatrix} \,,
\end{equation}
which leads to the edge-momentum functions
\begin{equation}
    \bq_{\mathrm{cb,o}}(\bk,\bp)=\bC_{\mathrm{cb}}^T\begin{pmatrix}
        k_1 \\ k_2
    \end{pmatrix} +  \bD_{\mathrm{cb}}^T \begin{pmatrix}
        p_1 \\ p_2 \\ p_3
    \end{pmatrix}=\begin{pmatrix}
        k_1+\frac{1}{8}(3p_1-2p_2-p_3) \\
        k_2+\frac{1}{8}(p_1+2p_2-3p_3) \\
        -k_2-\frac{1}{8}(p_1+2p_2+5p_3) \\
        -k_1+\frac{1}{8}(5p_1+2p_2+p_3) \\
        k_1-k_2+\frac{1}{8}(2p_1+4p_2+2p_3)
    \end{pmatrix} \,.
\end{equation}
Although this does not seem like a very natural edge-momentum parametrization, it is distinguished by the fact that the matrices $\bC_{\mathrm{cb}}$ and $\bD_{\mathrm{cb}}$ determining this parametrization satisfy $\bD_{\mathrm{cb}}\bC_{\mathrm{cb}}^T=0$.
\end{example}

\subsection{Symmetry transformations between momentum-grouped sectors}
\label{eq:symmMomGrouped}

Let us now prove that the symmetry transformations between two momentum-grouped sectors as defined in section~\ref{sec:summary_symmetries_def} are in bijection with the permutations mapping the corresponding Lee-Pomeransky polynomials into each other (up to trivial redundancies), i.e., let us prove eq.~\eqref{eq:S_T1_T2}. To this end, we consider two momentum-grouped sectors $\Theta_1$ and $\Theta_2$ of some Feynman integral family, with associated Feynman graphs $G_1,G_2$. We will denote the corresponding Lee-Pomeransky polynomials in the two sectors by $\cG_i=\cU_i+\cF_i$, for $i=1,2$. Let us furthermore denote the number of propagators in these two sectors by $P$ (we have already seen in section~\ref{sec:summary_symmetries_def} that if there is a non-trivial symmetry transformation between sectors, then they neccessarily have the same number of propagators). First, it will be useful to give an equivalent characterization of $\mathbb{S}(\cG_1,\cG_2)$. 
\begin{lemma}
\label{symmTransEquivDef}
    Let $\sigma\in \mathbb{S}(\cG_1,\cG_2)$. Then,
    \begin{enumerate}[label=(\alph*)]
        \item the bijection maps the $\cU$ and $\cF$ polynomials of the two sectors into each other: $\sigma\in \mathbb{S}(\cU_1,\cU_2)$ and $\sigma\in \mathbb{S}(\cF_1,\cF_2)$.
        \item the mass assignment of the Feynman graphs is compatible with the bijection, i.e., $m_e^2=m_{\sigma(e)}^2$ for all $e\in E_{G_2}^{\rint}$. 
    \end{enumerate}
    The converse direction also holds, i.e., any bijection satisfying (a) and (b) has to lie in $\mathbb{S}(\cG_1,\cG_2)$.
\end{lemma}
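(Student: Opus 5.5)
The plan is to exploit homogeneity. $\cU$ is homogeneous of degree $L$ in $\bx$ and $\cF$ is homogeneous of degree $L+1$. This follows from eq.~\eqref{eq:UFDef}: a spanning tree has $V-1$ edges, so its complement has $P-V+1=L$ edges, and a spanning 2-forest has one edge fewer, so its complement has $L+1$ edges. A permutation $\sigma$ of the variables preserves the degree of every monomial, so it maps the degree-$d$ homogeneous component of a polynomial to the degree-$d$ homogeneous component of its image.

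For the forward direction, suppose $\cG_1(\sigma(\bx))=\cG_2(\bx)$. Both sectors have $P$ propagators. Both graphs must also have the same number of loops: comparing the lowest nonvanishing homogeneous degrees on both sides gives $L_1=L_2$, since $\cU_i\neq 0$ always (there is a spanning tree) and $\cF_i$ sits one degree higher. Taking the degree-$L$ parts of both sides gives $\cU_1(\sigma(\bx))=\cU_2(\bx)$, and the degree-$(L+1)$ parts give $\cF_1(\sigma(\bx))=\cF_2(\bx)$. This proves (a).

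For (b), write $\cF_i=(\sum_e m_e^2x_e)\,\cU_i+\cF_i^0$, where $\cF_i^0$ is the 2-forest sum. The task is to separate the mass term. The key observation I would use is that every monomial of $\cF_i^0$ is squarefree, since it is a product of distinct $x_e$. By contrast, the mass term contains monomials with a square: for each edge $e$ and each spanning tree $T$ containing $e$, we get $x_e^2\prod_{e'\notin T}x_{e'}$. Every edge lies in some spanning tree, assuming no self-loops; self-loop edges, which appear in all complements, need a small separate remark.

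Now collect the terms of $\cF_i$ in which $x_e$ appears squared. This gives exactly $m_e^2\,x_e\,(\cU_i$ restricted to trees containing $e)$, and the bracketed factor is a nonzero polynomial with coefficients in $\{0,1\}$. Since $\sigma$ maps $x_e^2$-monomials of $\cF_2$ to $x_{\sigma(e)}^2$-monomials of $\cF_1$ with the same coefficients, comparing a single such monomial yields $m_e^2=m_{\sigma(e)}^2$. This step is the main obstacle. It requires care with indexing conventions, namely that $(\sigma(\bx))_i=x_{\sigma^{-1}(i)}$ and the direction of $\sigma(e)$. It also requires care with massless edges, where both sides vanish and the claim reduces to checking that the other side vanishes too, which the same coefficient comparison gives. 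Finally, one must use that the masses enter only through these square terms, with $\bs{s}$ generic so that no cancellation with $\cF^0$ can occur; here the squarefreeness of $\cF^0$ makes this automatic.

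For the converse, assume (a) and (b). Then $\cG_1(\sigma(\bx))=\cU_1(\sigma(\bx))+\cF_1(\sigma(\bx))=\cU_2(\bx)+\cF_2(\bx)=\cG_2(\bx)$. This uses only (a); condition (b) is then automatically consistent. So the converse is immediate, and the real content of the lemma lies in extracting (b) from the squared monomials.
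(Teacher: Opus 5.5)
Your route is the same as the paper's: split $\cG_i$ into its homogeneous components of degrees $L$ and $L+1$ to get (a), isolate the mass term through monomials with a squared variable to get (b), and add the two identities for the converse. You are right that the converse uses only (a), and your check $L_1=L_2$ fills a detail the paper leaves implicit. However, the key step of (b) is stated incorrectly.

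For a spanning tree $T$ with $e\in T$, the mass contribution $m_e^2x_e\prod_{f\notin T}x_f$ is squarefree. The monomial you write, $x_e^2\prod_{e'\notin T}x_{e'}$ with $e\in T$, has degree $L+2$ and does not occur in $\cF_i$ at all. The squares come from trees with $e\notin T$, so the $x_e^2$-part of $\cF_i$ is $m_e^2x_e\sum_{T:\,e\notin T}\prod_{f\notin T}x_f=m_e^2x_e^2\,\partial_{x_e}\cU_i$.

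Consequently the problematic edges are not self-loops. A self-loop lies in every tree complement, so it always produces squares. The problematic edges are bridges. If $e$ lies in every spanning tree, then $\partial_{x_e}\cU_i=0$ and there is no squared monomial. In fact $\cF_i$ is then linear in $x_e$ with coefficient $(m_e^2-p_e^2)\,\cU_i$, where $p_e$ is the momentum through the bridge, because the 2-forests obtained by cutting $e$ merge with the mass term. Coefficient comparison therefore only yields $m_e^2-p_e^2=m_{\sigma(e)}^2-p_{\sigma(e)}^2$, not $m_e^2=m_{\sigma(e)}^2$.

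With this correction the argument closes for every non-bridge edge. By (a) one has $(\partial_{x_{\sigma(e)}}\cU_1)(\sigma(\bx))=\partial_{x_e}\cU_2(\bx)$, so $e$ is a bridge of $G_2$ if and only if $\sigma(e)$ is a bridge of $G_1$. For non-bridges, comparing the $x_e^2$-parts of $\cF_1(\sigma(\bx))=\cF_2(\bx)$ gives $m_{\sigma(e)}^2\,\partial_{x_e}\cU_2=m_e^2\,\partial_{x_e}\cU_2$ with $\partial_{x_e}\cU_2\neq0$. Hence $m_{\sigma(e)}^2=m_e^2$, including the massless case.

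The paper's proof silently makes the same restriction to bridgeless sectors. This is harmless physically, since a bridge propagator is just a kinematic constant. But the hypothesis you should state is that there are no bridges, not that there are no self-loops.
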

\begin{proof}
    The converse direction is trivial, so let us focus on showing that (a) and (b) hold for any $\sigma\in\mathbb{S}(\cG_1,\cG_2)$.
    First, recall that the Symanzik polynomials $\cU_i$ and $\cF_i$ are homogeneous polynomials in the Feynman parameters $x_e$ of degrees $L$ and $L+1$, respectively \cite{Bogner:2010kv}. The degree $L$ part of the Lee-Pomeransky polynomial $\cG_i$ is precisely given by $\cU_i$. Since $\sigma$ simply permutes the variables, it cannot mix different homogeneity degrees, and hence it follows that $\sigma\in \mathbb{S}(\cU_1,\cU_2)$. Since $\sigma\in \mathbb{S}(\cG_1,\cG_2)$, (a) follows.

    To show that the two terms in the $\cF_i$ (see eq.~\eqref{eq:UFDef}) are already individually mapped to each other, consider terms in $\cF_i$ which are of degree $2$ in a single variable $x_e$. From the definitions it is clear that such terms can only arise from the mass term. As above, $\sigma$ cannot mix such terms with terms of degree at most one in all variables, and it hence follows that $\sigma$ maps the mass terms of the two $\cF$ polynomials into each other, which proves part (b).

\end{proof}
\begin{figure}[ht]
    \centering
    \includegraphics[width=0.35\textwidth]{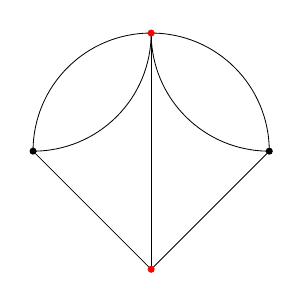}
    \includegraphics[width=0.35\textwidth]{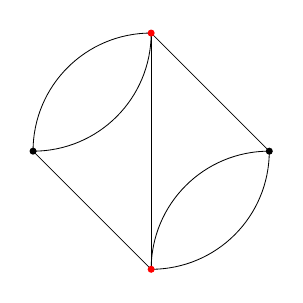}
    \caption{Example of two graphs related by a Whitney twist, which in this case is implemented by disconnecting the graphs at the top and bottom vertices, colored in red.}
\label{fig:whitneyTwist}
\end{figure}

\paragraph{Symmetry transformations as matroid isomorphisms.}
The fact that the bijection maps $\cU_2$ to $\cU_1$ implies that it maps spanning trees of $G_2$ to spanning trees of $G_1$, when viewed as a bijection on the edges of the (undirected) Feynman graphs. This implies that the map $\sigma$ is actually an isomorphism of the \emph{cycle matroids} associated to the graphs $G_1$ and $G_2$ (see appendix \ref{app:matroidIntro} for a brief introduction to matroid theory, and also ref.~\cite{Bogner:2010kv} for an introduction in the context of Feynman integrals). Such isomorphisms have been fully classified by Whitney and (for connected graphs) can only take the form of graph isomorphisms or \emph{Whitney twists} \cite{whitneyTheorem}. A Whitney twist is a procedure on the graph where we disconnect the graph at two vertices into two disconnected pieces and then reidentify the vertices in the opposite way to create another connected graph, not necessarily isomorphic to the original one; see figure \ref{fig:whitneyTwist} for an illustration. Hence, we can view $\mathbb{S}(\cG_1,\cG_2)$ as a subset of the set of matroid isomorphisms between the cycle matroids of $G_1$ and $G_2$. It will generally be a proper subset, because the additional requirement of $\sigma\in \mathbb{S}(\cF_1,\cF_2)$ (and compatibility with the mass assignment) imposes further restrictions. 

As an example for the constraints coming from the requirement $\sigma\in \mathbb{S}(\cF_1,\cF_2)$, let us show that for a Whitney twist to be a symmetry, the kinematics has to be strongly constrained.
\begin{lemma}
\label{lemmaWhitneyKin}
    Consider two Feynman graphs with Lee-Pomeransky polynomials $\cG_1,\cG_2$, associated with momentum-grouped sectors of a Feynman integral family, and assume that there is a bijection $\sigma\in\mathbb{S}(\cG_1,\cG_2)$, implemented on the graph by a Whitney twist.
    Furthermore, assume that the kinematics of the family is generic, in the sense that the only linear relation involving Mandelstam invariants of the form $p_i\cdot p_j$ for $i\neq j$ stems from momentum conservation. Then, all of the external momenta have to lie in one of the two components separated by the Whitney twist.
\end{lemma}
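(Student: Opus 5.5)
We argue by contradiction, comparing the second Symanzik polynomials term by term. By Lemma~\ref{symmTransEquivDef}, $\sigma\in\mathbb{S}(\cU_1,\cU_2)\cap\mathbb{S}(\cF_1,\cF_2)$ and $\sigma$ respects masses. Since the mass part of $\cF_i$ is $(\sum_e m_e^2x_e)\cU_i$, it is mapped into itself. Hence $\sigma$ must also map the momentum-dependent parts $\cF_i^{(0)}=\sum_{(T_1,T_2)\in F_2}(-s_{T_1,T_2})\prod_{e\notin T_1\cup T_2}x_e$ into each other.

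Let the Whitney twist split $G_1$ at vertices $u,w$ into pieces $H$ and $K$. Suppose, for contradiction, that both $H$ and $K$ carry external momenta. Because the sectors are momentum-grouped, each external momentum is attached to a single vertex, and it belongs either to $H\setminus\{u,w\}$ or to $K\setminus\{u,w\}$. External edges at $u$ or $w$ can be assigned to either side, and we will handle this case separately.

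The core step is to compare coefficients of a monomial $\prod_{e\notin F}x_e$, where $F=T_1\cup T_2$ is a spanning 2-forest. The twist preserves the cycle matroid, so it maps spanning trees to spanning trees and spanning 2-forests to spanning 2-forests, since these are exactly the independent sets of rank $V-2$. The coefficient is $-(\sum_{v\in T_1}p_v)^2$, which by momentum conservation is a function of the partition induced on the external vertices. The aim is to exhibit a 2-forest $F$ of $G_1$ whose vertex bipartition separates external momenta differently from the image forest $\sigma^{-1}(F)$ in $G_2$.

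A natural choice is a 2-forest in which $u$ and $w$ lie in different components, with all external vertices of $H$ joined to $u$ and those of $K$ joined to $w$ on one side. After twisting, the identifications of $u,w$ on the $K$ side are swapped. In $G_2$ the same edge set therefore groups the $H$-momenta attached to $u$ with the $K$-momenta attached to what was $w$, or the reverse. This gives $s_{T_1,T_2}=(P_H^{(u)}+P_K^{(w)})^2$ on one side and $(P_H^{(u)}+P_K^{(u)})^2$ on the other, where $P_H^{(u)}$ denotes the total momentum of the external vertices of $H$ in the $u$-component.

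Equating these two coefficients gives a linear relation among the $p_i\cdot p_j$ with $i\neq j$. Since the relation compares different subsets of external momenta, it does not follow from momentum conservation, which contradicts genericity.

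The main obstacle is choosing $F$ so that the resulting relation is nontrivial modulo momentum conservation. For this it suffices to find two momenta $p_a\in H$ and $p_b\in K$ such that the cross term $p_a\cdot p_b$ appears with different coefficients on the two sides. I would first vary which side of $u$ or $w$ each external vertex lies on within $H$, comparing the $u$-side and $w$-side assignments. This isolates a single $p_a\cdot p_b$ or $p_a\cdot(\text{sum})$ term, which should yield a relation that momentum conservation cannot produce.
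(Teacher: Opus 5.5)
Your strategy (strip the mass part of $\cF$ via Lemma~\ref{symmTransEquivDef}, then equate $s_{T_1,T_2}$ for 2-forests exchanged by the twist) is the paper's. Your ``natural'' forest is essentially the one used there. After the swap, one cut vertex, say $w$, is alone in its component, so the identity you obtain is $(P_H+p_u)^2=p_w^2$, i.e.\ $P_K^2+2P_K\cdot p_w=0$, with $P_K$ the total momentum entering the twisted piece. The mirror forest isolating $u$ gives $P_K^2+2P_K\cdot p_u=0$.

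The gap is in the step you flag as the main obstacle, and it is not a matter of choosing $F$ cleverly. First, you set aside the momenta $p_u,p_w$ at the cut vertices (``handle separately'') and never return to them. They are what the argument hinges on: the paper's proof is a case distinction on whether $p_u$ and $p_w$ vanish, and it needs both relations above. Second, your claim that equating coefficients always yields a relation among the $p_i\cdot p_j$, $i\neq j$, not implied by momentum conservation is false in general. With $p_u=p_w=0$ the identity is just $P_K^2=0$. If each piece carries a single momentum, this is the on-shell condition $p^2=0$, which the genericity hypothesis does not exclude. In that configuration every $s_{T_1,T_2}$ vanishes, so the twist does lie in $\mathbb{S}(\cG_1,\cG_2)$ although momenta enter both pieces. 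No choice of forest can produce a contradiction there.

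What is missing is the treatment of these degenerate configurations. The paper shows that whenever the relations are compatible with genericity, either one piece carries no external momentum, or no Lorentz invariants survive and the graphs are equivalent to vacuum graphs; that is the sense in which the lemma holds. Your proposed refinement, moving external vertices of $H$ between the $u$- and $w$-components to isolate a single cross term $p_a\cdot p_b$, cannot supply this. It needs 2-forests of $H$ realizing prescribed vertex splits, which need not exist. And when the pieces carry few momenta there is no independent cross term at all (e.g.\ $p_b=-p_a$ when $p_u=p_w=0$).
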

\begin{figure}[ht]
    \centering
    \includegraphics[width=0.7\textwidth]{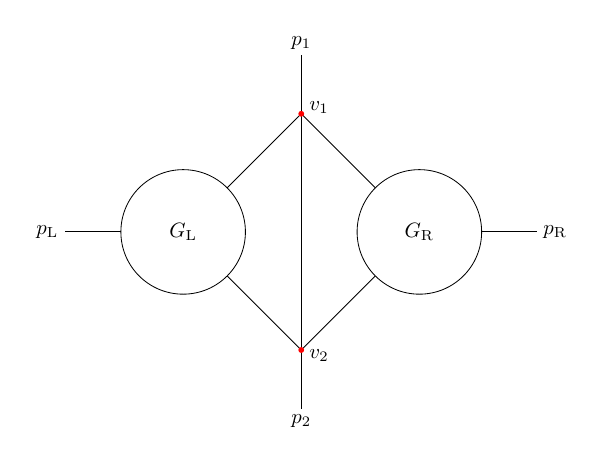}
    \caption{General setup for studying Whitney twists, which is implemented by disconnecting the graph at the vertices $v_1$ and $v_2$ (colored in red) into two components containing the subgraphs $G_{\mathrm{L}}$ and $G_{\mathrm{R}}$, and then reidentifying them in the opposite way. Note that this is a schematic figure representing multiple possible momentum-grouped graphs. In particular, the internal edges represent any number of edges in the underlying graph. Similarly, the momenta $p_{\mathrm{L}}$ and $p_{\mathrm{R}}$ denote the total momentum flowing into the left or right components and might be the sum of multiple external momenta. The momenta $p_1$ and $p_2$, however, each represent a single external momentum, because we assume the underlying graph to be momentum grouped.}
    \label{fig:whitneyTwistSetup}
\end{figure}
\begin{figure}[ht]
    \centering
    \includegraphics[width=0.48\textwidth]{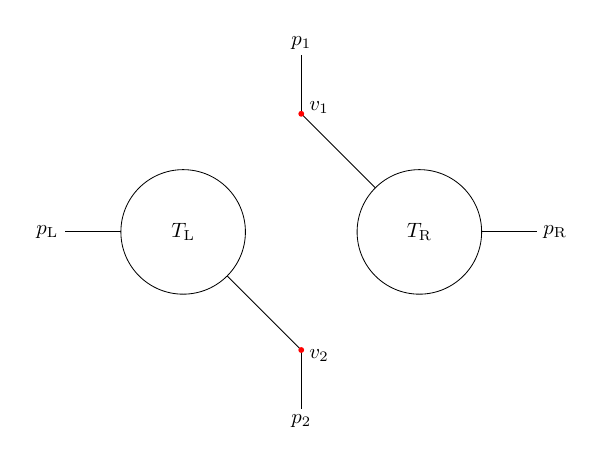}
    \includegraphics[width=0.48\textwidth]{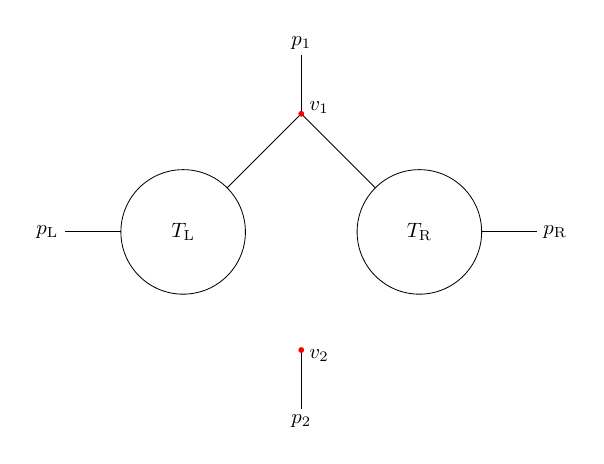}
    \caption{Two $2$-forests of a graph schematically represented by figure \ref{fig:whitneyTwistSetup}, which are mapped into each other by the Whitney twist under consideration. Here all drawn edges are single edges in the graph and $T_{\mathrm{L}}$ and $T_{\mathrm{R}}$ are spanning trees of the subgraphs $G_{\mathrm{L}}$ and $G_{\mathrm{R}}$, respectively. }
    \label{fig:whitneyTwist2Forests}
\end{figure}
\begin{proof}
    The two graphs have the same schematic form shown in figure \ref{fig:whitneyTwistSetup}. They differ by a Whitney twist acting on the component containing $G_{\mathrm{L}}$, disconnecting it from the right at the vertices $v_1,v_2$ and reconnecting it in the opposite way. Our choice of the left component being affected by the Whitney twist is arbitrary and only for definiteness, since the other choice simply differs by a graph isomorphism. The requirement that $\sigma\in\mathbb{S}(\cG_1,\cG_2)$ implies that the external invariants associated to two $2$-forests connected by $\sigma$, have to agree (cf.,~eq.~\eqref{eq:UFDef}). Choosing the $2$-forests as in figure \ref{fig:whitneyTwist2Forests} (and an analogous setup reflected along the horizontal) implies the constraints
    \begin{equation}
    \label{eq:2forestRelWhitney}
        p_{\mathrm{L}}^2+2p_{\mathrm{L}}\cdot p_1=0 \,, \qquad p_{\mathrm{L}}^2+2p_{\mathrm{L}}\cdot p_2=0 \,.
    \end{equation}
    Let us separate the proof into different cases.
    \begin{itemize}
        \item Assume that $p_1,p_2\neq 0$. Then eq.~\eqref{eq:2forestRelWhitney} and our assumption of generic kinematics imply that there are no momenta entering into $G_{\mathrm{L}}$.
        \item Next, assume that $p_1\neq 0$, but $p_2=0$. The constraint in eq.~\eqref{eq:2forestRelWhitney} now implies that $p_{\mathrm{L}}^2=0$ and $p_{\mathrm{L}}\cdot p_1=0$. The only way that the condition $p_{\mathrm{L}}^2=0$ does not violate our genericity assumption is if there is only one momentum entering into $G_{\mathrm{L}}$, i.e., if  $p_{\mathrm{L}}$ is an individual (massless) momentum. The condition $p_{\mathrm{L}}\cdot p_1=0$ then further needs $p_1$ to be the only other external momentum to ensure generic kinematics. However, then there are no Lorentz invariants left over, and our graphs are equivalent to two vacuum graphs by Lorentz invariance. The same argument applies to the situation $p_1= 0,\,p_2\neq0$.       
        \item  Finally, consider the case $p_1=p_2=0$. Then, the constraint in eq.~\eqref{eq:2forestRelWhitney} implies that $p_{\mathrm{L}}^2=0$, and by momentum conservation also $p_{\mathrm{R}}^2=0$. The only way that this does not violate the genericity of the external momenta is if either one side has no external legs attached, or if there is a single momentum entering on both sides. In the latter case, however, there are again no Lorentz invariants and we can put $p_{\mathrm{L}}=p_{\mathrm{R}}=0$, and we end up with two vacuum graphs again. 
    \end{itemize}
   Hence, we see that in all cases we end up either with the two graphs having no external momenta at all or having all external momenta enter into one of the two components separated by the Whitney twist, as claimed.
\end{proof}

\paragraph{From symmetry transformations to Lee-Pomeransky symmetries.}
Let us now show that the set $\mathbb{S}(\cG_1,\cG_2)$ precisely captures the symmetry transformations in loop-momentum space (up to trivial redundancies), i.e., eq.~\eqref{eq:S_T1_T2}. To this end, we will show both inclusions in turn.

Let us first consider a symmetry transformation, i.e., an element of $\mathrm{Sym}(\Theta_1,\Theta_2)$, in the notation of section~\ref{sec:summary_symmetries_def}. This element can be represented by matrices $\bL_{\sigma},\bM_{\sigma},\bN_{\sigma}$, for some signed bijection $\sigma$. By forgetting about signs, this gives rise to an ordinary bijection $\alpha\in \mathbb{S}(\cG_1,\cG_2)$, which corresponds to the mapping of propagators under the symmetry transformation. In particular, by assumption, the mass assignment is compatible with $\alpha$. We are therefore left to show that this bijection maps the $\cU$ and $\cF$ polynomials of the two sectors to each other. This is the content of the following lemma.
\begin{lemma}
    The bijection $\alpha$ maps the Symanzik polynomials of the sectors $\Theta_1$ and $\Theta_2$ into each other: $\alpha\in\mathbb{S}(\cU_1,\cU_2)$ and $\alpha\in\mathbb{S}(\cF_1,\cF_2)$.
\end{lemma}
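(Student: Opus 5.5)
The plan is to express both Symanzik polynomials through the data of the edge-momentum parametrization, rather than through spanning forests. Recall the matrix form $\cU=\det\bA$, $\cF=\det\bA\,\big(\bs{J}-\bs{B}^T\bA^{-1}\bs{B}\big)$ (up to the conventional sign), which one gets from completing the square in the Schwinger-parametrized exponent. Here
\beq
\sum_e x_e\big(q_e^2-m_e^2\big)=\bk^T\bA\bk-2\bs{B}^T\bk+\bs{J}\,,
\eeq
with $\bA=\bC\bX\bC^T$, $\bX=\mathrm{diag}(x_e)$, and $\bs{B},\bs{J}$ built from $\bE$, $\bp$ and the masses. This representation is valid for any edge-momentum parametrization $(\bC,\bE)$ of $G_{\Theta_l}$. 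It reproduces eq.~\eqref{eq:UFDef} because $\bC$ is a cycle basis matrix, by the matrix-tree theorem. The quantity $\bk^T\bA\bk-\ldots$ depends only on the quadratic form in the $k$'s, and $\cU$ changes by at most $(\det\bL)^2=1$ under $\bk\to\bL\bk+\ldots$. The expression for $\cF$ is invariant under affine shifts of $\bk$, since completing the square is shift-invariant.

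Step 1: take the symmetry transformation $\bk_1=\bL\bk_2+\bM\bp_2$, $\bp_1=\bN\bp_2$ with bijection $\alpha$. The defining property gives $q_{1,\alpha(i)}^2=q_{2,i}^2$ and $m_{\alpha(i)}=m_i$. Hence
\beq
\sum_{e\in\bd_{\Theta_1}} x_{e}\big(q_{1,e}^2-m_e^2\big)\Big|_{x_{\alpha(i)}=y_i}=\sum_{i\in\bd_{\Theta_2}} y_i\big(q_{2,i}^2-m_i^2\big)
\eeq
holds identically in $\bk_2,\bp_2$ after substituting $\bk_1,\bp_1$. Step 2: compare the two sides as quadratic polynomials in $\bk_2$. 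The left side, after the substitution, has quadratic matrix $\bL^T\bA_1(\alpha\text{-relabelled }\bx)\bL$, so $\det\bA_2(\bx)=(\det\bL)^2\det\bA_1(\alpha(\bx))$. With $\det\bL=\pm1$ this gives $\cU_2(\bx)=\cU_1(\alpha(\bx))$, i.e. $\alpha\in\mathbb{S}(\cU_1,\cU_2)$.

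Step 3: for $\cF$, note that $\cF/\cU$ is minus the minimum (stationary value) over $\bk$ of the form. This stationary value is invariant under the invertible affine change $\bk_1\mapsto\bk_2$, and it is expressed in terms of the scalar products $p_{1,i}\cdot p_{1,j}$. Condition 3 of the definition gives $p_{1,i}\cdot p_{1,j}=p_{2,i}\cdot p_{2,j}$, equivalently eq.~\eqref{graminv}. So $\cF_1(\alpha(\bx))/\cU_1(\alpha(\bx))$, written in the kinematics of sector 1, equals $\cF_2(\bx)/\cU_2(\bx)$ once the invariants are identified. Combined with Step 2, this yields $\alpha\in\mathbb{S}(\cF_1,\cF_2)$ as polynomials at the fixed kinematic point $\bs{s}$.

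The main obstacle is the bookkeeping in Step 3. The substitution $\bp_1=\bN\bp_2$ mixes external momenta, so I must check that $\cF$ depends on $\bp$ only through Gram-type invariants, i.e. through the form $\bp^T(\cdot)\bp$ contracted with Lorentz products. Then eq.~\eqref{graminv} makes $\bN$ disappear. I would handle this by writing the stationary value as $\bp^T\bs{K}(\bx)\bp$, meaning $\sum_{ij}K_{ij}(\bx)\,p_i\cdot p_j$, plus the mass term, and using $\bN^T$-covariance together with $\bN\bG\bN^T=\bG$. If sector integrals depend only on grouped momenta, the same argument applies verbatim with the restricted invariants, as in the footnote to condition 3.
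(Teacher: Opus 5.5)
Your proposal is correct and follows essentially the paper's own proof. The paper likewise writes $\cU_i=\det\bV_i$ and $\cF_i=\det\bV_i\,(\bp^T\bW_i^T\bV_i^{-1}\bW_i\bp+J_i)$ from the quadratic form $\sum_k x_kD_{i,k}$, uses $\sum_k x_kD_{1,k}(\bk,\bp)=\sum_k x_{\alpha(k)}D_{2,k}(\bk',\bp')$, and observes that the transformed data $\bL^T\bV_1\bL$, etc., yield the same Symanzik polynomials. Your version of the $\cF$ step, via affine invariance of the stationary value with the Gram condition $\bN\bG(\bp)\bN^T=\bG(\bp)$ stated explicitly, spells out what the paper leaves implicit in that observation.
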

\begin{proof}
We will use the definitions of the $\cU$ and $\cF$ polynomials from the loop-momentum representation, i.e., we write \cite{Bogner:2010kv}
\begin{align}
\label{eq:UFPolyDef}
    \cU_i(\bx)&=\det\bV_i(\bx) \,, \nonumber \\
    \cF_i(\bx,\bs{s})&=\det\bV_i(\bx)\left(\bp^T\bW_i^T(\bx)\bV_i^{-1}(\bx)\bW_i(\bx)\bp+J_i(\bx) \right) \,,
\end{align}
where the quantities $\bV_i,\bW_i,J_i$ for $i=1,2$ can be computed via
\begin{equation}
\label{eq:VRJDef}
    \sum_{k=1}^{P_1} x_kD_{i,k}(\bk,\bp)=\bk^T\bV_i(\bx)\bk-2\bk^T\bW_i(\bx)\bp-J_i(\bx) \,,
\end{equation}
where $D_{i,k}$, with $k=1,\dots ,P_1$, are the propagators in the sector $\Theta_i$. Since $\alpha$ maps the propagators of the two sectors into each other, we have
\begin{equation}
\label{eq:paramRelFeynman}
    \sum_{k=1}^{P_1} x_k D_{1,k}(\bk,\bp)=\sum_{k=1}^{P_1} x_k D_{2,\alpha^{-1}(k)}(\bk',\bp')=\sum_{k=1}^{P_1} x_{\alpha(k)} D_{2,k}(\bk',\bp') \,.
\end{equation}
Using the symmetry transformation associated with $\sigma$, cf.~eqs.~\eqref{eq:symmTrans}, and~\eqref{eq:VRJDef}, we furthermore find
\begin{equation}
 \sum_{k=1}^{P_1} x_k D_{1,k}(\bk,\bp)=\bk'^T\bV'_1(\bx)\bk'
    -2\bk'^T\bW'_1(\bx)\bp'-J'_1(\bx) \,,
\end{equation}
with
\begin{align}
    \bV'_1(\bx)&=\bL_{\sigma}^T\bV_1(\bx)\bL_{\sigma} \,, \nonumber \\
    \bW'_1(\bx)&=\bL_{\sigma}^T\left(\bW_1(\bx)\bN_{\sigma}-\bV_1(\bx)\bM_{\sigma}\right) \,, \\
    J'_1(\bx)&=J_1(\bx)-\bp'^T\bM_{\sigma}^T\left(\bV_1(\bx)\bM_{\sigma}-2\bW_1(\bx)\bN_{\sigma}\right)\bp' \nonumber \,.
\end{align}
Now observe that, using eq.~\eqref{eq:UFPolyDef}, the quantities $\bV'_1(\bx),\bW'_1(\bx),J'_1(\bx)$ yield precisely the same Symanzik polynomials as the original $\bV_1(\bx),\bW_1(\bx),J_1(\bx)$. 
Consider the expansion 
\begin{equation}
    \sum_{k=1}^{P_1} x_{\alpha(k)} D_{2,k}(\bk',\bp')=\bk'^T\bV_2(\alpha^{-1}(\bx))\bk'
    -2\bk'^T\bW_2(\alpha^{-1}(\bx))\bp'-J_2(\alpha^{-1}(\bx)) \,.
\end{equation}
The equality in eq.~\eqref{eq:paramRelFeynman} implies that we should get the same Symanzik polynomials using the quantities $\bV_2(\alpha^{-1}(\bx)),\bW_2(\alpha^{-1}(\bx)),J_2(\alpha^{-1}(\bx))$ or $\bV'_1(\bx),\bW'_1(\bx),J'_1(\bx)$. Together with the previous results we can conclude that 
\begin{equation}
    \cU_1(\alpha(\bx))=\cU_2(\bx) \,,\qquad \cF_1(\alpha(\bx),\bs{s})=\cF_2(\bx,\bs{s}) \,,
\end{equation}
which proves the claim.
\end{proof}
This lemma, together with Lemma \ref{symmTransEquivDef}, establishes the fact that there is a map from $\mathrm{Sym}(\Theta_1,\Theta_2)$ to $\mathbb{S}(\cG_1,\cG_2)$. This map is not quite injective, because there are non-trivial symmetry transformations in loop-momentum space which only change the signs of edge momenta, and hence correspond to the trivial bijection in $\mathbb{S}(\cG_1,\cG_2)$. Such transformations that only change signs of edge momenta are strongly restricted by momentum conservation. In particular, if there is no subset of edge momenta at a vertex that already satisfy momentum conservation on their own, then the only possibility is to flip all internal and external momenta. The only way that such a subset of edge momenta already satisfies momentum conservation is if there is a vacuum subgraph that connects to the rest of the graph through a single joint vertex. In conclusion, while the map $\mathrm{Sym}(\Theta_1,\Theta_2)\rightarrow \mathbb{S}(\cG_1,\cG_2)$ is not injective, there is an injective map
\begin{equation}\label{eq:symtheta1theta2}
    \mathrm{Sym}(\Theta_1,\Theta_2)\rightarrow \mathbb{Z}_2^{c}\times \mathbb{S}(\cG_1,\cG_2) \,,
\end{equation}
where the $\mathbb{Z}^c_2$ factor precisely takes into account the ambiguities discussed above. The value of $c$ is the number of vacuum subgraphs that can be detached by deleting a single vertex, plus 1 (where the additional $\mathbb{Z}_2$ comes from changing the signs of all internal and external momenta simultaneously). 

\paragraph{From Lee-Pomeransky symmetries to symmetry transformations.}
Let us now consider the converse direction, i.e., let us consider a bijection $\alpha\in\mathbb{S}(\cG_1,\cG_2)$ and construct an associated symmetry transformation in loop-momentum space. To this end we need to lift the bijection $\alpha$ to a \emph{signed} bijection
\begin{equation}
\label{eq:signedPermFromPerm}
    \sigma=(\bkappa,\alpha)\in \mathbb{Z}_2^P\times \mathbb{S}(\cG_1,\cG_2) \,.
\end{equation}
Here $\alpha$ is the `genuine' bijection and the signs $\bkappa$ take into account orientation reversal of edges. This lift can be done in an almost unique way, where the non-uniqueness comes from precisely the same sign ambiguities as described before. In particular, to every element $\alpha\in\mathbb{S}(\cG_1,\cG_2)$ we can associate a $\sigma=(\bkappa,\alpha)\in\mathbb{Z}_2^P\times \mathbb{S}(\cG_1,\cG_2)$ where $\bkappa$ is only determined up to sign changes from $\mathbb{Z}_2^c$. Hence there is a unique signed bijection that we can associate to an element in $\mathbb{Z}_2^{c}\times \mathbb{S}(\cG_1,\cG_2)$. The following lemma then establishes a map to the set of symmetry transformations in loop-momentum space.
\begin{lemma}
\label{lemmaLPToSymm}
    For every signed bijection $\sigma\in \mathbb{Z}_2^{c}\times \mathbb{S}(\cG_1,\cG_2)$ there is a symmetry transformation from the sector $\Theta_1$ to $\Theta_2$.
\end{lemma}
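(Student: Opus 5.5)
Given the signed bijection $\sigma=(\bkappa,\alpha)$, let $\bP_{\!\!\sigma}$ be the associated signed permutation matrix. The goal is to find $\bL,\bM,\bN$ with $\det\bL=\pm1$, $\bN\bG(\bp)\bN^T=\bG(\bp)$, and
\begin{equation*}
\bC_1^T(\bL\bk_2+\bM\bp_2)+\bE_1^T\bN\bp_2=\bP_{\!\!\sigma}\big(\bC_2^T\bk_2+\bE_2^T\bp_2\big)
\end{equation*}
identically in $\bk_2,\bp_2$. Together with $m_e^2=m_{\alpha(e)}^2$ from Lemma~\ref{symmTransEquivDef}(b), this gives conditions 1--3 of the definition. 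I would first pass to orthogonal parametrizations $\bE_l\to\bD_l=\bE_l\bPi_{C_l}^\perp$ as in eq.~\eqref{eq:D_ort}, since these differ only by shifts of loop momenta. The identity then splits into a loop part $\bC_1^T\bL=\bP_{\!\!\sigma}\bC_2^T$ and an external part $\bC_1^T\bM+\bD_1^T\bN=\bP_{\!\!\sigma}\bD_2^T$.

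\textbf{Loop part.} The crucial claim is that $\bP_{\!\!\sigma}$ maps the cycle space $H_1(G_2,\mathbb{Q})=\mathrm{im}\,\bC_2^T$ onto $H_1(G_1,\mathbb{Q})$, and in fact $H_1(G_2,\mathbb{Z})$ onto $H_1(G_1,\mathbb{Z})$. This follows because $\alpha\in\mathbb{S}(\cU_1,\cU_2)$ (Lemma~\ref{symmTransEquivDef}(a)) is a cycle-matroid isomorphism. By Whitney's theorem it is a composition of graph isomorphisms and Whitney twists, and each of these maps oriented cycles to oriented cycles once the signs $\bkappa$ are chosen suitably; choosing these signs is exactly the lifting $\alpha\mapsto\sigma$. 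Then $\bL=(\bC_1\bC_1^T)^{-1}\bC_1\bP_{\!\!\sigma}\bC_2^T$ solves the loop equation. Because it maps a $\mathbb{Z}$-basis of cycles to a $\mathbb{Z}$-basis, $\bL\in\mathrm{GL}(L,\mathbb{Z})$, so $\det\bL=\pm1$.

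\textbf{External part.} Since the momentum-grouped graphs have no edge-momentum redundancy, $\mathbb{Q}^P=\mathrm{im}\,\bC_l^T\oplus\mathrm{im}\,\bD_l^T$ for the cycle space and the space of external flows. I would need $\bPi_{C_1}^\perp\bP_{\!\!\sigma}\bD_2^T$ to lie in $\mathrm{im}\,\bD_1^T$. Then $\bN=(\bD_1\bD_1^T)^{-1}\bD_1\bP_{\!\!\sigma}\bD_2^T$ and $\bM=(\bC_1\bC_1^T)^{-1}\bC_1\bP_{\!\!\sigma}\bD_2^T$ solve the equation; rewritten in the original $\bE_l$ these are eq.~\eqref{eq:symmMatricesFromAut_sum}. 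Since $\bPi_{C_1}^\perp$ projects onto $\mathrm{im}\,\bD_1^T$ in a dimension count ($P-L=E$), the inclusion is automatic. Invertibility of $\bN$ follows from $\bP_{\!\!\sigma}$ being invertible and preserving the cycle spaces.

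\textbf{Gram invariance.} This is the main obstacle. By the lemma above, the induced quadratic form must reproduce $\cF$. I would use $\alpha\in\mathbb{S}(\cF_1,\cF_2)$ as follows. After the transformation, the Symanzik formula eq.~\eqref{eq:UFPolyDef} built from $\bV,\bW,J$ gives $\cF_1(\alpha(\bx))$ with the invariants $\bN^T\bG\bN$ in place of $\bG$. Equality with $\cF_2=\cF_1\circ\alpha$ gives $\sum_{(T_1,T_2)}s_{T_1,T_2}[\bN^T\bG\bN]=\sum s_{T_1,T_2}[\bG]$ coefficientwise in $\bx$. I would then argue that the $2$-forest invariants $s_{T_1,T_2}$, i.e. the squares of sums of external momenta over cuts, span all entries of the Gram matrix. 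For graph isomorphisms this is clear because the permuted external vertices give $\bN$ as a permutation that preserves $\bG$ when kinematics permit. For Whitney twists, Lemma~\ref{lemmaWhitneyKin} shows that one side carries no external momenta, so the twist acts trivially on external flows and $\bN=\mathds{1}$ up to signs. Combining these via Whitney's decomposition gives $\bN\bG\bN^T=\bG$, completing the construction.
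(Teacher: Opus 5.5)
Your loop part is fine, but the external part has a genuine gap. The identity $\mathbb{Q}^P=\mathrm{im}\,\bC_l^T\oplus\mathrm{im}\,\bD_l^T$, i.e.\ $P-L=E$, is false in general. For a connected graph $P-L=V-1$, whereas $E=V_{\mathrm{ext}}-1$, and momentum-grouped only means \emph{at most} one external leg per internal vertex. For the planar double box, for instance, $P-L=5$ but $E=3$. What is true (Lemma~\ref{lemma:C_E}) is that the sum is direct and equals the \emph{proper} subspace $Q_G$ of edge vectors that conserve momentum at every internal vertex without an external leg. Consequently $\bPi_{C_1}^\perp$ projects onto the full cut space of dimension $V_1-1$, which strictly contains $\mathrm{im}\,\bD_1^T$ as soon as some vertex is legless. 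So the inclusion $\bPi_{C_1}^\perp\bP_{\!\sigma}\bD_2^T\subseteq\mathrm{im}\,\bD_1^T$ is not automatic: it is equivalent to $\bP_{\!\sigma}Q_{G_2}\subseteq Q_{G_1}$. If it fails, your $\bM,\bN$ are still defined, but $\bC_1^T\bM+\bD_1^T\bN$ is only the orthogonal projection of $\bP_{\!\sigma}\bD_2^T$ onto $Q_{G_1}$. The propagators are then not mapped into each other, and $\bN$ need not be invertible.

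This inclusion is where $\cF$-invariance and the kinematic assumptions already enter, before any Gram-matrix considerations. Matroid isomorphisms preserve cocircuits, so $\bB_1\bP_{\!\sigma}=\bs{K}\bB_2$ for some invertible $\bs{K}$. With $\bSigma_2=\bB_2\bq_2$ the external inflows, this gives $\bB_1\bP_{\!\sigma}\bq_2=\bs{K}\bSigma_2$, and you must show $(\bs{K}\bSigma_2)_v=0$ at every legless vertex $v$ of $G_1$. For a graph isomorphism respecting the external legs, $\bs{K}$ is a vertex permutation and this is immediate. For a Whitney twist it is not: $\bs{K}$ adds the rows of the interior vertices of the twisted component to the rows of the two cut vertices. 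At a legless cut vertex, $(\bs{K}\bSigma_2)_v$ is therefore the total external momentum entering the twisted component. What forces this to vanish is Lemma~\ref{lemmaWhitneyKin}, derived from $\alpha\in\mathbb{S}(\cF_1,\cF_2)$ together with generic kinematics. This is the content of Lemma~\ref{lemmaIsomorphOnQ}(b) in the paper, and it is the step your proposal skips. Once it is in place, the rest of your construction goes through. Your Gram-invariance argument, comparing $2$-forest invariants via the $\bV,\bW,J$ form of $\cF$, is then a reasonable and slightly cleaner phrasing of the paper's argument.
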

We will defer the somewhat technical proof of Lemma \ref{lemmaLPToSymm} to the next subsection. For now let us just note that this implies the existence of an inverse to the map constructed in the previous paragraph, and hence proves the bijection between the set $\mathrm{Sym}(\Theta_1,\Theta_2)$ of symmetry transformations between the sectors $\Theta_1,\Theta_2$ and the set $\mathbb{\mathbb{S}}(\cG_1,\cG_2)$ of bijections that map the corresponding Lee-Pomeransky polynomials into each other, up to the described redundancies captured by the additional $\mathbb{Z}_2^c$ factor. We have hence established eq.~\eqref{eq:S_T1_T2}. Let us note that we can specialize this statement to symmetry transformations of a single momentum-grouped sector in which the set of symmetry transformations has a group structure. The symmetry group of a momentum-grouped sector with Lee-Pomeransky polynomial $\cG$ is hence given by
\begin{equation}\label{eqleepomloopiso}
    \mathrm{Aut}(\Theta)=\mathbb{Z}_2^c\rtimes \mathbb{G}(\cG) \,.
\end{equation}

\subsection{Proof of Lemma \ref{lemmaLPToSymm}}
\label{sec:proofSymLemma}
Let us now prove Lemma \ref{lemmaLPToSymm}, which was needed to establish eq.~\eqref{eq:S_T1_T2}. First, let us associate a \emph{signed permutation matrix}
\begin{equation}\label{signedpermutation}
    (\bP_{\!\sigma})_{ij}=\kappa_j\delta_{i\alpha(j)} \,,
\end{equation}
to the signed bijection $\sigma=(\bkappa,\alpha)\in \mathbb{Z}_2^P\times \mathbb{S}(\cG_1,\cG_2)$.
We will use this matrix to explicitly construct a symmetry transformation that implements this (signed) bijection in loop-momentum space. Note that the result was already given in eq.~\eqref{eq:symmMatricesFromAut_sum}. 
There we allowed for general parametrizations of the edge momenta, cf.~eq.~\eqref{eq:q12ParamGeneral}. For the proof it will be convenient to change variables to the orthogonal parametrization described in section \ref{sec:edgeMomParams}. Explicitly, let us shift the loop momenta as
\begin{equation}
    \bk= \tilde{\bk}-\bGamma_i\bp \,,
\end{equation}
with
\begin{equation}
    \bGamma_i=(\bC_i\bC_i^T)^{-1}\bC_i\bE_i^T \,,
\end{equation}
such that
\begin{equation}
    \bq_i(\bk(\tilde{\bk},\bp),\bp)=\bC_i^T\tilde{\bk}+\bD_i^T\bp \,,
\end{equation}
with
\begin{equation}
\label{eq:DiDef}
    \bD_i=\bE_i\bPi_{C_i}^{\perp} \,.
\end{equation}
for $i=1,2$, with $\bPi_{C_i}^{\perp}$ defined as in eq.~\eqref{eq:projectorDef}. In this parametrization we now have $\bD_i\bC_i^T=0$. We define the matrices 
\begin{align}
\label{eq:LMNfromIsomorph}
    \widetilde{\bL}_{\sigma}&=(\bC_1\bC_1^T)^{-1}\bC_1\bP_{\!\sigma}\bC_2^T \,, \nonumber \\
    \widetilde{\bM}_{\sigma}&=(\bC_1 \bC_1^T)^{-1}\bC_1\bP_{\!\sigma}\bD_2^T \,, \\
    \widetilde{\bN}_{\sigma}&=(\bD_1 \bD_1^T)^{-1}\bD_1\bP_{\!\sigma}\bD_2^T \,. \nonumber
\end{align}
It will be useful to characterize the images of the matrices $\bC_i^T$ and $\bD_i^T$.\footnote{By the image of a matrix we mean the image of the linear map that is defined by the matrix.} To this end, let us define the $\mathbb{Q}$-vector space
\begin{equation}
\label{eq:QDef}
    Q_{G_i}=\mathrm{Im}\,\bC_i^T + \mathrm{Im}\, \bE_i^T \,,
\end{equation}
in an arbitrary parametrization as in eq.~\eqref{eq:q12ParamGeneral}. 
This is precisely the subspace of $E_{G_i,\mathbb{Q}}^{\rint}=\mathbb{Q}\otimes E_{G_i}^{\rint}$ spanned by the edge-momenta in the graph $G_i$. Let us further characterize $Q_{G_i}$. 
\begin{lemma} Let $G$ be an arbitrary Feynman graph associated with a momentum-grouped sector $\Theta$. Let $\bC,\,\bE$ be the matrices in the edge-momentum functions of an arbitrary parametrization in this sector and define $Q_G$ as in eq.~\eqref{eq:QDef}. Then, the following statements hold.
    \begin{enumerate}[label=(\alph*)]
        \item The basis elements of $\mathrm{Im}\,\bC^T$ and $\mathrm{Im}\,\bE^T$ are linearly independent over $\mathbb{Q}$. In particular the sum in eq.~\eqref{eq:QDef} is direct.
        \item The space $Q_G$ is independent of the chosen parametrization.
        \item We can characterize $Q_G\subset E_{G,\mathbb{Q}}^{\rint}$ as the subspace that satisfies momentum conservation at all internal vertices not connected to an external momentum. More precisely, $Q_G$ is the subspace spanned by vectors $\bq\in E_{G,\mathbb{Q}}^{\rint}$ satisfying (cf.~eq.~\eqref{eq:incidenceMatrix})
        \begin{equation}
        \label{eq:QCharCond}
            (\bB\bq)_v=0 \,,
        \end{equation}
        for $v\in V_{G}^{\rint}$ not connected to an external vertex.
    \end{enumerate}
\label{lemma:C_E}    
\end{lemma}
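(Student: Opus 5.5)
The plan is to prove (c) first for the canonical pure loop-momentum parametrization. Statements (a) and (b) then follow from a dimension count and from the fact that every parametrization spans the same space. Throughout I work in $E^{\rint}_{G,\mathbb{Q}}$ and use the incidence matrix $\bB$. Let $W\subset E^{\rint}_{G,\mathbb{Q}}$ denote the space defined by eq.~\eqref{eq:QCharCond}, i.e.\ the $\bq$ with $(\bB\bq)_v=0$ for every internal vertex $v$ carrying no external leg.

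\textbf{Step 1: $\dim W$.} Since the sector is momentum-grouped, the $V_{\rext}$ external edges attach to $V_{\rext}$ distinct internal vertices. The equations defining $W$ are the rows of $\bB$ for the remaining $V-V_{\rext}$ internal vertices. For a connected graph, the rows of $\bB$ span a space of dimension $V-1$, with the single relation that all rows sum to zero. Hence any proper subset of rows is linearly independent. If $V_{\rext}\ge1$, the rows in question form a proper subset, so
\begin{equation*}
\dim W=P-(V-V_{\rext})=(P-V+1)+(V_{\rext}-1)=L+V_{\rext}-1 .
\end{equation*}
The vacuum case $V_{\rext}=0$ gives $\dim W=L$ directly, since then $W=\ker\bB=H_1(G,\mathbb{Q})$. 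Disconnected components, if relevant, are handled componentwise.

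\textbf{Step 2: every parametrization lands in $W$, with the right rank.} For any parametrization, the edge momenta $\bq(\bk,\bp)$ satisfy momentum conservation at every internal vertex. At vertices without an external leg this reads $\bB\bq=0$ componentwise, for all $\bk,\bp$. Therefore every column of $\bC^T$ and of $\bE^T$ lies in $W$, and $Q_G\subseteq W$.

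For the pure loop-momentum parametrization I will show $\dim Q_G=L+V_{\rext}-1$. The rows of $\bC$ are independent, with rank $L$, and they lie in $\ker\bB$. Take any combination $\bq=\hat{\bD}^T\bs{c}$ of the columns of $\hat{\bD}^T$. Applying $\bB$ gives, up to sign, $\sum_v c_v(\delta_{v}-\delta_{v_0})$ restricted to internal vertices, because $\gamma_v$ is a path from $v$ to $v_0$. Here $\delta_v$ denotes the internal vertex adjacent to external vertex $v$, and these vertices are distinct by momentum grouping. This is nonzero unless $\bs c=0$. Consequently no nonzero combination of the rows of $\hat{\bD}$ lies in $\ker\bB\supseteq\mathrm{Im}\,\bC^T$, which gives both the rank $V_{\rext}-1$ of $\hat{\bD}$ and the directness of the sum. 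Hence $Q_G=W$ for this parametrization.

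\textbf{Step 3: transfer to an arbitrary parametrization.} Any two parametrizations differ by an invertible affine change, $\bk\to\bL\bk+\bM\bp$ with $\bL$ invertible, possibly combined with a change of basis of the external momenta. Under such a change, $\bC^T\to\bC^T\bL$ and $\bE^T\to\bE^T+\bC^T\bM$ (up to right multiplication by an invertible matrix). The span of the columns is therefore unchanged, which gives (b) and hence (c) for every parametrization. Directness in (a) follows from the count: the columns number $L+(V_{\rext}-1)$, the ranks are $L$ and $V_{\rext}-1$, and they span a space of exactly that dimension, so they are independent.

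The main obstacle I expect is Step 3. I need to justify that an ``arbitrary parametrization'' really uses $E=V_{\rext}-1$ independent external momenta and is related to the pure one by such an affine change. I will argue this from the fact that both parametrizations express the same edge momenta in terms of independent loop and external momenta, so they must be related by an invertible change of those variables. A secondary subtlety is the row-independence argument for $\bB$ in Step 1, which must be handled carefully in the disconnected and vacuum cases.
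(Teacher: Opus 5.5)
Your proposal is correct and follows essentially the paper's argument. Both compare $\dim\{\bq:(\bB\bq)_v=0\}=L+V_{\rext}-1$ with $\dim Q_G$ computed in the pure loop-momentum parametrization, and both use that reparametrizations ($\bC^T\to\bC^T\bL$, $\bE^T\to\bE^T+\bC^T\bM$, basis changes) leave the column span unchanged.

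The one difference is the ordering. The paper proves (a) directly for an arbitrary parametrization: a vanishing combination is a $(\bk,\bp)$ with all edge momenta zero, conservation at the external-leg vertices forces $\bp=0$, and $\mathrm{rank}\,\bC=L$ then forces $\bk=0$. This is exactly your Step~2 computation applied to a general $\bE$ instead of $\hat{\bD}$. Done that way, it gives $Q_G=W$ for every parametrization at once and sidesteps the classification of reparametrizations you flag as the obstacle in Step~3.

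Your explicit justifications of the row-independence of $\bB$, the rank of $\hat{\bD}$, and the vacuum case fill in points the paper only asserts.
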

\begin{proof}
Let us address the various points in turn.
\begin{enumerate}[label=(\alph*)]
    \item Note that a general linear combination of the basis elements of $\mathrm{Im}\,\bC^T$ and $\mathrm{Im}\,\bE^T$ precisely takes the form of the edge-momentum function $\bq(\bk,\bp)$ corresponding to the chosen parametrization. Let us now assume that these basis elements are not linearly independent. Then there exists a non-trivial choice of $\bk,\bp$ such that all edge-momenta vanish. However, momentum conservation at the internal vertices connected to external ones immediately implies $\bp=0$, because we are considering a momentum-grouped sector. Then the requirement reduces to $\bC^T\bk=0$, which implies $\bk=0$, because $\bC$ is of rank $L$. Hence the basis elements are linearly independent.
    \item Let us consider the different possible changes in the parametrization. First we could change our basis of loop momenta or change our choice of independent external momenta. This simply corresponds to a multiplication of $\bC$ or $\bE$ from the left by an element of $\mathrm{GL}(L,\mathbb{Q})$ or $\mathrm{GL}(E,\mathbb{Q})$, respectively. This, however, does not change the image of $\bC^T$ or $\bE^T$, as it simply corresponds to a change of basis in the domain. Finally we could shift the loop momenta by a linear combination of the external momenta, which leads to a shift 
    \begin{equation}
        \bE^T\rightarrow\bE^T+\bC^T\bS \,,
    \end{equation}
    for some $\bS\in\mathbb{Q}^{L\times E}$. This clearly leaves $Q_G$ invariant, because the image of $\bC^T\bS$ is a subspace of $\mathrm{Im}\,\bC^T$.
    \item It is clear that our momentum parametrization has to satisfy momentum conservation at all vertices. Hence, we only need to show that the conditions in eq.~\eqref{eq:QCharCond} fully characterize the subspace $Q_G\subset E_{G,\mathbb{Q}}^{\rint}$. We do this by comparing dimensions. Imposing the conditions in eq.~\eqref{eq:QCharCond} leads to a subspace of dimension 
    \begin{equation}\label{dimQ}
        P-(V-V_{\rext})=L+V_{\rext}-1 \,,
    \end{equation}
    using the relation
    \begin{equation}
        L=P-V+1 \,,
    \end{equation}
    for connected graphs. This precisely matches the dimension of $Q_G$, since 
    \begin{equation}
        \dim Q_G=\mathrm{rank}\,\bC+\mathrm{rank}\,\hat{\bD} =L+V_{\rext}-1 \,,
    \end{equation}
    making use of (b) to evaluate the dimension in a pure loop-momentum representation.
\end{enumerate}
\end{proof}
Using these basic properties we can now understand how the bijection $\bP_{\!\sigma}$ acts on $Q_{G_1}$ and $Q_{G_2}$.
\begin{lemma}
\label{lemmaIsomorphOnQ}
    The bijection defined by the map $\bP_{\!\sigma}$ (cf.~eq.~\eqref{signedpermutation}) satisfies the following properties:
    \begin{enumerate}[label=(\alph*)]
    \item It maps $H_1(G_2,\mathbb{Z})$ to $H_1(G_1,\mathbb{Z})$.
    \item It maps $Q_{G_2}$ to $Q_{G_1}$.
    \end{enumerate}
\end{lemma}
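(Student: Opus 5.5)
Part (a) will come from the matroid structure of $\sigma$, and part (b) will then follow from (a), the characterization of $Q_G$ in Lemma~\ref{lemma:C_E}(c), and the compatibility of $\sigma$ with $\cF$. The sign vector $\bkappa$ in the lift $\sigma=(\bkappa,\alpha)$ has to be chosen so that orientations match. I will therefore first fix what the lift means: $\bkappa$ is chosen such that $\bP_{\!\sigma}$ sends oriented cycles to oriented cycles and vertex cuts to vertex cuts.

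\textbf{Part (a).} By Lemma~\ref{symmTransEquivDef}, $\alpha\in\mathbb{S}(\cU_1,\cU_2)$. Hence $\alpha$ maps spanning trees of $G_2$ to spanning trees of $G_1$, so it is an isomorphism of cycle matroids. In particular it maps circuits (edge sets of simple cycles) of $G_2$ to circuits of $G_1$. By Whitney's theorem the map is a composition of graph isomorphisms and Whitney twists. For each of these one checks directly that simple oriented cycles go to simple cycles, with a consistent choice of edge orientations. A graph isomorphism gives this immediately. For a twist, the edges of the twisted component are traversed in reversed direction, and this reversal is exactly absorbed by $\bkappa$. With this choice of $\bkappa$, $\bP_{\!\sigma}$ sends the signed incidence vector of every cycle of $G_2$ to that of a cycle of $G_1$. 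Cycle vectors generate $H_1(G_i,\mathbb{Z})$, so $\bP_{\!\sigma}H_1(G_2,\mathbb{Z})\subseteq H_1(G_1,\mathbb{Z})$. Equality follows by applying the same argument to $\sigma^{-1}$, or by comparing ranks, since both groups have rank $L$.

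\textbf{Part (b).} By Lemma~\ref{lemma:C_E}(a), $Q_{G_i}=H_1(G_i,\mathbb{Q})\oplus \mathrm{Im}\,\bE_i^T$, and by Lemma~\ref{lemma:C_E}(c) it is cut out by momentum conservation at the internal vertices without external legs. Since $\dim Q_{G_1}=\dim Q_{G_2}$, it suffices to show $\bP_{\!\sigma}Q_{G_2}\subseteq Q_{G_1}$. By duality, I will show that $\bP_{\!\sigma}^{-T}$ maps the annihilator of $Q_{G_2}$ into the annihilator of $Q_{G_1}$. For $G_i$ this annihilator is spanned by the rows $\bB_v$ with $v$ internal and carrying no external momentum, i.e.\ by the cut vectors of vertices not connected to external edges.

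The key input is the $\cF$-symmetry $\alpha\in\mathbb{S}(\cF_1,\cF_2)$. I will use it together with Lemma~\ref{lemmaWhitneyKin}, which says that a twist only moves a component carrying no external momenta. Two cases arise.
\begin{itemize}
\item If $\alpha$ is induced by a graph isomorphism, the isomorphism preserves valency and the external legs. Hence it sends such "internal" vertices to internal vertices, and $\bP_{\!\sigma}^{T}$ permutes the corresponding rows $\pm\bB_v$.
\item If a Whitney twist is involved, the vertex stars in the twisted component are sent to vertex stars up to the orientation flip encoded in $\bkappa$. Only the two hinge vertices $v_1,v_2$ change their incidence. Since the twisted component has zero external momentum, a cut separating it is again a momentum-conserving constraint.
\end{itemize}

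The main obstacle will be the twist case. There one must verify carefully that the combined row space of the relevant $\bB_v$ is preserved even though the individual stars at $v_1,v_2$ are not. My first attempt is to write the two hinge stars as a sum of the cut of the left component and a star, and to use $p_{\mathrm{L}}=0$ from Lemma~\ref{lemmaWhitneyKin}. If this becomes messy, I will instead use the fact that the momentum flow through $2$-forest cuts, read off from $\cF$, is preserved. This identifies the images of external-momentum flows directly inside $Q_{G_1}$.
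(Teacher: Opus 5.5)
Your proposal is correct and follows essentially the same route as the paper: Whitney's classification of cycle-matroid isomorphisms, a split into graph isomorphisms and Whitney twists, and Lemma~\ref{lemmaWhitneyKin} to guarantee that the twisted component carries no external legs. The differences are only in phrasing. You use circuit preservation for (a) and the annihilator of $Q_{G_i}$ (spanned by the stars of leg-free vertices) for (b), which is the transpose of the paper's identity $\bB_1\bP_{\!\sigma}=\bM\bB_2$ together with its check $(\bM\bSigma_2)_v=0$; your explicit hinge-star decomposition also spells out the twist case that the paper treats in a single line.
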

\begin{proof}
     First note that one can characterize the rows of the incidence matrix $\bB$ as the (signed) edges making up cocircuits of the graph, which are minimal sets of edges that need to be cut to disconnect the graph \cite{tutteNotes}. As reviewed in appendix \ref{app:matroidIntro}, matroid isomorphisms map cocircuits to cocircuits, and hence preserve the row space of the incidence matrix. We hence know that there is some matrix $\bM\in\mathrm{GL}(V,\mathbb{Z})$ implementing this map (and preserving orientations) such that
        \begin{equation}
        \label{eq:incMatrixTransf}
            \bB_1\bP_{\!\sigma}=\bM\bB_2 \,.
        \end{equation}
    \begin{enumerate}[label=(\alph*)]
        \item From eq.~\eqref{eq:incMatrixTransf} it follows that
        \begin{equation}
            \bB_1\bP_{\!\sigma}\bC_2^T=\bM\bB_2\bC_2^T=0 \,,
        \end{equation}
        i.e., the basis elements of $H_1(G_2,\mathbb{Z})$ (the rows of $\bC_2$) are mapped to the kernel of $\bB_1$, which is $H_1(G_1,\mathbb{Z})$. The other direction follows similarly. This just reproduces the elementary fact that matroid isomorphisms map cycles to cycles.
        \item The edge momenta in $G_1$ satisfy
        \begin{equation}
            \bB_1\bq_1=\bSigma_1 \,,
        \end{equation}
        where $\bSigma_1$ is the vector of inflowing external momenta on all internal vertices of $G_1$. Restricting this condition to rows corresponding to vertices not connected to an external momentum, this condition reduces to eq.~\eqref{eq:QCharCond}. From eq.~\eqref{eq:incMatrixTransf} it follows that
        \begin{equation}
            \bB_1\bP_{\!\sigma}\bq_2=\bM\bSigma_2 \,,
        \end{equation}
        since $\bq_2\in Q_{G_2}$. Thus, $\bP_{\!\sigma}\bq_2 \in Q_{G_1}$ if and only if 
        \begin{equation}
        \label{eq:QtoQProofCond}
            (\bM\bSigma_2)_v=0 \,,
        \end{equation}
        for all $v\in V_{G_1}^{\rint}$ not connected to an external vertex. Recall that matroid isomorphisms are either graph isomorphisms or Whitney twists. For graph isomorphisms, the matrix $\bM$ is a permutation matrix corresponding to the vertex bijection. Hence eq.~\eqref{eq:QtoQProofCond} is trivially satisfied since vertices not connected to an external vertex are mapped to such vertices. For Whitney twists this furthermore follows from Lemma \ref{lemmaWhitneyKin}, since there can be no total momentum flowing into the component of the graph affected by the twist.
    \end{enumerate}
\end{proof}
Using the previous two lemmas it is now simple to derive some important properties of the matrices defined in eq.~\eqref{eq:LMNfromIsomorph}.
\begin{lemma} 
The matrices $\widetilde{\bL}_{\sigma},\widetilde{\bM}_{\sigma},\widetilde{\bN}_{\sigma}$ in eq.~\eqref{eq:LMNfromIsomorph} satisfy the following properties.
    \begin{enumerate}[label=(\alph*)]
        \item $\bC_1^T\widetilde{\bL}_{\sigma}=\bP_{\!\sigma}\bC_2^T$. 
        \item $\bC_1^T\widetilde{\bM}_{\sigma}+\bD_1^T\widetilde{\bN}_{\sigma}=\bP_{\!\sigma}\bD_2^T$.
    \end{enumerate} 
\end{lemma}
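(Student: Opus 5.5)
The plan is to view both identities as statements about orthogonal projections inside $E_{G_1,\mathbb{Q}}^{\rint}$, and then use Lemma~\ref{lemmaIsomorphOnQ} to see that the relevant vectors already lie in the image of the projection. The first step is to record that, in the orthogonal parametrization, $\bC_1^T(\bC_1\bC_1^T)^{-1}\bC_1=\bPi_{C_1}$ is the orthogonal projector onto $\mathrm{Im}\,\bC_1^T=H_1(G_1,\mathbb{Q})$. Likewise $\bD_1^T(\bD_1\bD_1^T)^{-1}\bD_1=:\bPi_{D_1}$ is the orthogonal projector onto $\mathrm{Im}\,\bD_1^T$. Here $\bD_1\bD_1^T$ is invertible because $\bD_1$ has full row rank: by Lemma~\ref{lemma:C_E}(a) the rows of $\bE_1$ are independent modulo $\mathrm{Im}\,\bC_1^T$, so projecting out the loop directions via eq.~\eqref{eq:DiDef} keeps them independent. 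Since $\bD_1\bC_1^T=0$, the two images are orthogonal, and by Lemma~\ref{lemma:C_E} their direct sum is $Q_{G_1}$. Hence $\bPi_{C_1}+\bPi_{D_1}$ is the orthogonal projector onto $Q_{G_1}$.

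For (a), substituting the definition gives $\bC_1^T\widetilde{\bL}_{\sigma}=\bPi_{C_1}\bP_{\!\sigma}\bC_2^T$. By Lemma~\ref{lemmaIsomorphOnQ}(a), every column of $\bP_{\!\sigma}\bC_2^T$ lies in $H_1(G_1,\mathbb{Q})=\mathrm{Im}\,\bC_1^T$. The projector $\bPi_{C_1}$ acts as the identity there, which gives (a) directly.

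For (b), substituting the definitions gives
\begin{equation}
\bC_1^T\widetilde{\bM}_{\sigma}+\bD_1^T\widetilde{\bN}_{\sigma}=(\bPi_{C_1}+\bPi_{D_1})\bP_{\!\sigma}\bD_2^T .
\end{equation}
The columns of $\bD_2^T$ lie in $Q_{G_2}$, since $\mathrm{Im}\,\bD_2^T\subseteq \mathrm{Im}\,\bE_2^T+\mathrm{Im}\,\bC_2^T$. By Lemma~\ref{lemmaIsomorphOnQ}(b), $\bP_{\!\sigma}$ maps $Q_{G_2}$ into $Q_{G_1}$. The projector onto $Q_{G_1}$ fixes these columns, which proves (b).

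The only genuinely nontrivial input is Lemma~\ref{lemmaIsomorphOnQ}(b), which is where the Whitney-twist kinematic restriction (Lemma~\ref{lemmaWhitneyKin}) enters; since it is already established, the main care needed here is bookkeeping. Two points need checking. First, invertibility of $\bD_1\bD_1^T$, i.e.\ that $\bD_1$ has rank $V_{\rext}-1$, as argued above. Second, that $\mathrm{Im}\,\bC_1^T\oplus\mathrm{Im}\,\bD_1^T$ really equals $Q_{G_1}$ and not merely a subspace of it. The second point follows from a dimension count using eq.~\eqref{dimQ}, together with the invariance of $Q_{G_1}$ under reparametrization from Lemma~\ref{lemma:C_E}(b).
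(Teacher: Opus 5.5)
Your proof is correct and follows the paper's argument: both parts are rewritten as $\bPi_{C_1}\bP_{\!\sigma}\bC_2^T=\bP_{\!\sigma}\bC_2^T$ and $(\bPi_{C_1}+\bPi_{D_1})\bP_{\!\sigma}\bD_2^T=\bP_{\!\sigma}\bD_2^T$, and each follows from Lemma~\ref{lemmaIsomorphOnQ} because the projector acts as the identity on $H_1(G_1,\mathbb{Q})$, respectively on $Q_{G_1}$. Your additional checks, that $\bD_1$ has full row rank so $\bD_1\bD_1^T$ is invertible and that $\mathrm{Im}\,\bC_1^T\oplus\mathrm{Im}\,\bD_1^T$ is all of $Q_{G_1}$, are correct and fill in details the paper leaves implicit.
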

\begin{proof}
We will prove the two statements in turn.
\begin{enumerate}[label=(\alph*)]
    \item The statement we want to prove is equivalent to
    \begin{equation}
        \bPi_{C_1}\bP_{\!\sigma}\bC_2^T=\bP_{\!\sigma} \bC_2^T \,.
    \end{equation}
    We have seen above that $\bP_{\!\sigma}$ maps $H_1(G_2,\mathbb{Z})$ to $H_1(G_1,\mathbb{Z})$, and hence the projector $\bPi_{C_1}$ simply acts as the identity, proving the claim.
    \item The statement we want to prove is equivalent to
    \begin{equation}
        (\bPi_{C_1}+\bPi_{D_1})\bP_{\!\sigma}\bD_2^T=\bP_{\!\sigma}\bD_2^T \,,
    \end{equation}
    where $\bPi_{D_1}$ is the projector onto  $\mathrm{Im}\,\bD_1^T$, defined analogously to eq.~\eqref{eq:projectorDef}. Hence, the combination $(\bPi_{C_1}+\bPi_{D_1})$ simply projects onto $Q_{G_1}$, and since $\bP_{\!\sigma}$ maps $\mathrm{Im}\,\bD_2^T\subset Q_{G_2}$  into $Q_{G_1}$, the claim follows.
\end{enumerate}
\end{proof}
\noindent We can now use the matrices in eq.~\eqref{eq:LMNfromIsomorph} to define a transformation in the first sector 
\begin{equation}
\label{eq:symmTransfProof}
    \tilde{\bk}=\widetilde{\bL}_{\sigma}\tilde{\bk}'+\widetilde{\bM}_{\sigma}\bp', \qquad \bp=\widetilde{\bN}_{\sigma} \bp' \,.
\end{equation}
It is then easy to see that
\begin{equation}
    \tilde{\bq}_1(\tilde{\bk},\bp)=\bP_{\!\sigma} \tilde{\bq}_2(\tilde{\bk'},\bp') \,,
\end{equation}
or written in components
\begin{equation}
    \tilde{q}_{1,\alpha(i)}(\tilde{\bk},\bp)=\kappa_i\tilde{q}_{2,i}(\tilde{\bk}',\bp') \,,
\end{equation}
where 
\begin{equation}
    \tilde{\bq}_i(\tilde{\bk},\bp)=\bq_i(\bk(\tilde{\bk},\bp),\bp) \,, \quad i=1,2 \,.
\end{equation}
We can now go back to the original parametrizations in eq.~\eqref{eq:q12ParamGeneral} by changing variables in the two sectors as
\begin{equation}
    \tilde{\bk}'= \bk'+\bGamma_i\bp' \,,
\end{equation}
which satisfies
\begin{equation}
    \tilde{\bq}_i(\tilde{\bk}',\bp')=\bq_i(\bk',\bp') \,.
\end{equation}
Putting the transformations together, we end up with the transformation in sector $\Theta_1$
\begin{equation}
\label{eq:fullTransfLemmaProof}
    \bk=\widetilde{\bL}_{\sigma}\bk'+(\widetilde{\bM}_{\sigma}+\widetilde{\bL}_{\sigma}\bGamma_2-\bGamma_1\widetilde{\bN}_{\sigma})\bp' , \qquad \bp=\widetilde{\bN}_{\sigma}\bp' \,.
\end{equation}
By construction, it satisfies
\begin{equation}
    \bq_1(\bk,\bp)=\bP_{\!\sigma}\bq_2(\bk',\bp') \,.
\end{equation}
We see that the transformation in eq.~\eqref{eq:fullTransfLemmaProof} indeed takes the form in eq.~\eqref{eq:symmTrans} with the matrices
\begin{equation}
\label{eq:fromOrthToGenParam}
    \bL_{\sigma}=\widetilde{\bL}_{\sigma}, \qquad \bM_{\sigma}=\widetilde{\bM}_{\sigma}+\widetilde{\bL}_\sigma\bGamma_2-\bGamma_1\widetilde{\bN}_{\sigma} ,\qquad \bN_{\sigma}=\widetilde{\bN}_{\sigma} \,,
\end{equation}
which yields precisely eq.~\eqref{eq:symmMatricesFromAut_sum} after inserting the expressions in eq.~\eqref{eq:LMNfromIsomorph}. To show that this transformation is indeed a symmetry transformation, we need some further properties of the matrices $\bL_{\sigma}$ and $\bN_{\sigma}$. To this end let us first establish a different interpretation of the matrices $\widetilde{\bL}_{\sigma},\widetilde{\bM}_{\sigma},\widetilde{\bN}_{\sigma}$.
To achieve this, observe that from Lemma \ref{lemmaIsomorphOnQ} it follows that there are matrices $\bZ\in\mathrm{GL}(L,\mathbb{Z}), \, \bS\in {\mathrm{GL}(E,\mathbb{Q})}$ and $\bW\in\mathbb{Q}^{L\times E}$ such that
\begin{equation}
    \bP_{\!\sigma}\bC_2^T=\bC_1^T\bZ, \qquad \bP_{\!\sigma}\bD_2^T=\bD_1^T\bS+\bC_1^T\bW \,.
\end{equation}
Inserting into eq.~\eqref{eq:LMNfromIsomorph}, we see that
\begin{equation}
    \widetilde{\bL}_{\sigma}=\bZ, \qquad \widetilde{\bM}_{\sigma}=\bW, \qquad \widetilde{\bN}_{\sigma}=\bS \,.
\end{equation}
In particular it immediately follows that 
\begin{equation}
    \det\bL_{\sigma}=\det\widetilde{\bL}_{\sigma}=\pm 1\,,
\end{equation}
since $\widetilde{\bL}_{\sigma}=\bZ\in\mathrm{GL}(L,\mathbb{Z})$. Furthermore we can show that $\bN_{\sigma}$ implements the `obvious' map on the external momenta.
\begin{lemma} 
The matrix $\bN_{\sigma}$ implements the map on the external momenta that is dictated by the operation of $\sigma$ on the graph.
\end{lemma}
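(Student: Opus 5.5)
The plan is to make precise what ``the map dictated by $\sigma$'' means and then identify it with $\bN_\sigma=\widetilde{\bN}_\sigma=\bS$. On the graph side, $\sigma$ is either a graph isomorphism or a Whitney twist. By Lemma~\ref{lemmaWhitneyKin}, all external momenta sit in the component that is not twisted, so in both cases $\sigma$ induces a bijection of the external vertices, and hence of the internal vertices carrying external legs. This bijection is read off from the vertex map of the isomorphism, or from the untwisted component. I will encode it through the matrix $\bM$ of eq.~\eqref{eq:incMatrixTransf}, restricted to the rows of vertices that carry external momenta, and call the result $\bR_\sigma$. The claim then becomes $\bp=\bR_\sigma\bp'$, possibly up to the global sign from the $\mathbb{Z}_2^c$ factor.

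First, I apply the incidence matrix to the identity $\bq_1(\bk,\bp)=\bP_{\!\sigma}\bq_2(\bk',\bp')$. Using $\bB_1\bq_1=\bSigma_1(\bp)$ and $\bB_1\bP_{\!\sigma}=\bM\bB_2$ gives
\begin{equation*}
\bSigma_1(\bp)=\bM\,\bSigma_2(\bp')\,.
\end{equation*}
Since the sectors are momentum-grouped, each entry of $\bSigma_i$ at a vertex with an external leg is a single external momentum, or, at $v_0$, minus their sum. The entries are zero elsewhere.

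Second, I read off the rows of $\bM$ at those vertices. For an isomorphism, $\bM$ is the vertex permutation matrix, so this equation says exactly that $p_{1,\phi_V(v)}=p_{2,v}$, which is the dictated map. For a Whitney twist, $\bM$ acts as a permutation on the untwisted component, which contains all external vertices. The same conclusion follows, and the rows of the twisted component only produce the trivial condition $0=0$.

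Third, I must ensure this really determines $\bN_\sigma$. The entries of $\bSigma_i$ away from $v_0$ are exactly the independent momenta $\bp$, so the equation reads $\bp=\bR_\sigma\bp'$ for all $\bp'$. Combined with $\bp=\bN_\sigma\bp'$, where both sides come from the same transformation, this gives $\bN_\sigma=\bR_\sigma$. Alternatively, I can argue directly: $\bP_{\!\sigma}\bD_2^T=\bD_1^T\bS+\bC_1^T\bW$, and multiplying by $\bB_1$ kills $\bC_1^T$, which yields $\bB_1\bD_1^T\bS=\bM\bB_2\bD_2^T$, with $\bB_i\bD_i^T$ injective on the momentum rows by Lemma~\ref{lemma:C_E}(a).

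The main obstacle is the Whitney twist case. There I must check that $\bM$, which is only known to preserve the row space of $\bB$, can be chosen to be an honest vertex permutation on the component carrying the external legs. I also need to handle the eliminated vertex $v_0$, which may be mapped to a vertex different from $v_0$. I would treat the latter by working with all $V_{\rext}$ external momenta subject to momentum conservation, which gives a permutation there, and then re-expressing it in the independent basis; this produces a signed or shifted unimodular matrix, which is still the dictated map. As a consistency check, this identification immediately yields $\det\bN_\sigma=\pm1$ and Gram invariance $\bN_\sigma\bG\bN_\sigma^T=\bG$ for permutations of momenta that preserve the invariants, which is presumably what the lemma is used for next.
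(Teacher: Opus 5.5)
Your proposal is correct, but it identifies $\bN_\sigma$ by a different mechanism than the paper.

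\textbf{The paper's route.} The paper passes to a pure loop-momentum parametrization whose spanning trees and reference vertices are adapted to $\sigma$. For a Whitney twist the tree paths are confined to the untwisted side; for an isomorphism, $v_0^{(1)}$ and $v_0^{(2)}$ are matched by the vertex bijection. It then argues that $\bP_{\!\sigma}$ sends each tree path $\gamma_v$ to a tree path $\gamma_{v'}$ modulo cycles. Hence $\hat{\bS}$ in $\bP_{\!\sigma}\hat{\bD}_2^T=\hat{\bD}_1^T\hat{\bS}+\bC_1^T\hat{\bW}$ equals $\bP_{\!\beta}$ or $\mathds{1}$. Finally it transports this to the orthogonal parametrization through the basis changes $\bs{Y}_i$, discarding the terms that land in $H_1(G_1,\mathbb{Z})$.

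\textbf{Your route.} You apply $\bB_1$ to the already established identity $\bq_1=\bP_{\!\sigma}\bq_2$ (equivalently, to $\bP_{\!\sigma}\bD_2^T=\bD_1^T\bS+\bC_1^T\bW$) and use eq.~\eqref{eq:incMatrixTransf}. Since $\bB_i\bC_i^T=0$, all loop-momentum and projector terms drop out at once. What remains is $\bB_1\bE_1^T\bN_\sigma=\bM\bB_2\bE_2^T$, valid in an arbitrary parametrization. Momentum grouping makes $\bB_1\bE_1^T$ injective, which pins down $\bN_\sigma$.

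\textbf{What your route buys.}
\begin{itemize}
\item It is parametrization-independent and needs no adapted spanning trees.
\item The mismatch of the eliminated vertex is absorbed by momentum conservation rather than excluded by a choice.
\item It exposes the overall sign coming from the $\mathbb{Z}_2^c$ freedom in lifting $\alpha$ to $\sigma$, which the paper's path argument fixes tacitly.
\item It is insensitive to the non-uniqueness of $\bM$. Eq.~\eqref{eq:incMatrixTransf} fixes $\bM$ only on vertex vectors with vanishing total sum, and $\bSigma_2=\bB_2\bq_2$ is such a vector.
\end{itemize}

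\textbf{The Whitney-twist obstacle you flag.} It resolves, but not in the form you state: $\bM$ is in general not a permutation matrix. Write $\delta_u$ for the unit vector at vertex $u$, and take the lift that reverses the edges of the twisted part $G_{\mathrm{L}}$ (the other admissible lift differs by a global sign). Then a valid choice is
\begin{itemize}
\item $\bM\,\delta_u=\delta_u$ for every vertex $u$ of the untwisted part, including $v_1$ and $v_2$;
\item $\bM\,\delta_u=-\delta_u+\delta_{v_1}+\delta_{v_2}$ for every interior vertex $u$ of $G_{\mathrm{L}}$.
\end{itemize}
You only need $\bM$ to be the identity on vertex vectors supported on the untwisted part, and it is. By Lemma~\ref{lemmaWhitneyKin}, $\bSigma_2$ is such a vector. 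Hence $\bSigma_1(\bp)=\bSigma_2(\bp')$ and $\bN_\sigma=\mathds{1}$, up to the global sign. You can also bypass $\bM$ altogether by computing $\bB_1\bP_{\!\sigma}\bq_2$ directly, using that the net flux of $G_{\mathrm{L}}$ through $v_1$ and $v_2$ cancels.
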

\begin{proof}
    The idea to show the claim is to first construct a convenient edge-momentum parametrization where this is easy to see, and then show that the statement carries over to the orthogonal parametrization that we are considering here.
    
    Let us start with a pure loop momentum parametrization and use the statement of Lemma \ref{lemmaIsomorphOnQ} to write
    \begin{equation}
    \label{eq:lemma5,b}
    \bP_{\!\sigma}\hat{\bD}_2^T=\hat{\bD}{}_1^T\hat{\bS}+\bC_1^T\hat{\bW} \,,
    \end{equation}
    for some $\hat{\bS}\in {\mathrm{GL}(E,\mathbb{Q})},\,\hat{\bW}\in\mathbb{Q}^{L\times E}$, as above and $\hat{\bD}_i$ as in eq.~\eqref{edgeflow}. In the chosen parametrization, the basis elements of $\mathrm{Im}\,\hat{\bD}{}_i^T$  for $i=1,2$ have a clear interpretation as the paths $\gamma^{(i)}_v$ in the chosen spanning tree $T_i$ from an external vertex $v$ to the fixed vertex $v_0^{(i)}$. Let us now distinguish between $\sigma$ corresponding to a graph isomorphism or a Whitney twist. 

    If $\sigma$ is a Whitney twist, we choose the pure loop momentum parametrizations in such a way that the paths $\gamma_v^{(i)}$ are all in the untwisted components of the graph, which is possible due to Lemma \ref{lemmaWhitneyKin}. They are hence unaffected by the twist, and we necessarily have $\hat{\bS}=\mathds{1}$.

    Let us now assume that $\sigma$ is a graph isomorphism.
    Then we choose the pure loop-momentum parametrizations in such a way that the internal vertices connected to the fixed vertices $v_0^{(1)},v_0^{(2)}$ are mapped to each other via the vertex bijection of the graph isomorphism. 
    Then the edge bijection maps a path $\gamma_v$ in $T_2$ to a path $\gamma_{v'}$ in $T_1$ (up to loops), where $v,v'$ are external vertices such that the internal vertices connected to them are mapped to each other via the vertex bijection. Thus, we can identify $\hat{\bS}=\bP_{\!\beta}$ as a permutation matrix associated with a permutation $\beta\in S_E$ acting on the external vertices, i.e., the external momenta.\footnote{A priori, it acts on the internal vertices connected to external ones, but these are in one-to-one correspondence to the external vertices.}
    
    We have now established that in a particular parametrization we have $\hat{\bS}=\bP_{\!\beta}$  for some $\beta\in S_E$. Let us now translate this into the orthogonal parametrization that we need. We can achieve this by shifting the loop momenta by an appropriate combination of the external momenta and possibly rotating the loop and external momenta amongst each other. We can hence relate the matrices $\bD_i$ as defined in eq.~\eqref{eq:DiDef} to the matrices $\hat{\bD}_i$ appearing in the auxiliary pure loop momentum parametrization chosen above, via
    \begin{equation}
        \bD_i^T=(\hat{\bD}{}_i^T-\bPi_{C_i}\hat{\bD}_i^T)\bs{Y}_i^T \,.
    \end{equation}
    Here, the $\bs{Y}_i\in \mathrm{GL}(E,\mathbb{Z})$ are the matrices that implement the change of basis in the external momenta. We then find
    \begin{equation}
        \bP_{\!\sigma}\bD_2^T=\bD^T_1\bs{Y}_1^{-1T}\bP_{\!\beta}\bs{Y}_2^T+\bPi_{C_1}\hat{\bD}{}_1^T\bP_{\!\beta}\bs{Y}_2^T+\bC_1^T\hat{\bW}\bs{Y}_2^T-\bP_{\!\sigma}\bPi_{C_2}\hat{\bD}{}_2^T\bs{Y}_2^T \,.
    \end{equation}
    We can now see that every term on the right hand side but the first maps into $H_1(G_1,\mathbb{Z})$. Hence we can identify $\bS=\bs{Y}_1^{-1T}\bP_{\!\beta}\bs{Y}_2^T$. Thus 
    \begin{equation}
        \bN_{\sigma}=\widetilde{\bN}_{\sigma}=\bs{Y}_1^{-1T}\bP_{\!\beta}\bs{Y}_2^T \,,
    \end{equation}
    is precisely the matrix representing the action of the graph isomorphism (for a Whitney twist this is trivial) on the external edges expressed in the chosen parametrizations. Hence, $\bN_{\sigma}$ implements the `obvious' action on the external momenta, as claimed.
\end{proof}

Note that we can further conclude that the scalar products between the external momenta are invariant under the map implemented by the graph operation, i.e., by $\bN_{\sigma}$, which can be seen as follows. For a Whitney twist this is trivial since by Lemma \ref{lemmaWhitneyKin} it does not act on the external momenta. For a graph isomorphism we can argue as follows. If a 2-forest $(T_i,T_j)$ is mapped to some other 2-forest $(T_k,T_l)$ by the graph isomorphism, the squared momenta $s_{T_i,T_j}$ and $s_{T_k,T_l}$ associated with these 2-forests have to agree. This readily follows from the assumption that the $\cF$ polynomials of the two sectors are mapped into each other by (the ordinary permutation corresponding to) $\sigma$. However at the same time we have seen that the map on the external momenta, i.e., $\bN_\sigma$, is consistent with the mapping of the graph, so the equality $s_{T_i,T_j}=s_{T_k,T_l}$ implies the equality of the corresponding Mandelstam variable before and after applying the map on the external momenta. In other words, this combination of scalar products is invariant. Since the only kinematical dependence of the integrals comes from the $\cF$ polynomials, this shows that all scalar products that the integrals depend on are invariant.

Putting everything together it is now clear that we have indeed defined a symmetry transformation, and hence (the signed bijection associated to) every element of $\mathbb{Z}_2^c\times \mathbb{S}(\cG_1,\cG_2)$ gives rise to a symmetry transformation between the associated sectors, as claimed.

\subsection{Symmetry transformations between non momentum-grouped sectors}
\label{sec:nonMomGroupedSymms}
We now show how we can lift the results from momentum-grouped sectors to non momentum-grouped sectors, and in particular we will derive eq.~\eqref{eq:symmTransfNonMomGrouped}. To this end, consider the setup introduced in section~\ref{eq:symmTransfDescr}. In particular, we assume that we have chosen (auxiliary) momentum-grouped sectors with momenta (cf.~eq.~\eqref{eq:p_to_ptilde})
\begin{equation}
    \widetilde{\bp}_l=\bS_l\bp_l \,, \quad l=1,2  \,.
\end{equation}
Furthermore we assume that we have a symmetry transformation as in eq.~\eqref{eq:symmTransfMomGrouped} between the momentum-grouped sectors defined by the matrices $\widetilde{\bL}_{\sigma},\widetilde{\bM}_{\sigma},\widetilde{\bN}_{\sigma}$.\footnote{Note that that momentum-grouped sectors do not necessarily live in the same `auxiliary' family (with less external momenta), since $\widetilde{\bp}_1$ and $\widetilde{\bp}_2$ might live in distinct subspaces of the full space of external momenta spanned by $\bp_1$ or $\bp_2$. It is easy to see that the proof from the previous section carries over to this setting nonetheless.} We will now construct matrices $\bL_{\sigma},\bM_{\sigma},\bN_{\sigma}$ defining a symmetry transformation between the non momentum-grouped sectors, as in eq.~\eqref{eq:symmTrans}. By simply comparing the two, we can immediately read off the relations:
\begin{equation}\label{eq:nonmongrouptrafo}
    \bL_{\sigma}=\widetilde{\bL}_{\sigma}, \qquad \bM_{\sigma}=\widetilde{\bM}_{\sigma}\bS_2 \,,
\end{equation}
where $\bS_2$ was defined in eq.~\eqref{eq:ptk_to_pk}. We furthermore find the consistency condition
\begin{equation}
\label{eq:NLiftCond}
    \bS_1\bN_{\sigma}=\widetilde{\bN}_{\sigma}\bS_2 \,.
\end{equation}
To solve this condition, it is convenient to define the rotation matrices
\begin{equation}
    \bR_l(\bs{s})=\begin{pmatrix}
        \bS_l \\ \bS'_l(\bs{s}) 
    \end{pmatrix} \,,
\end{equation}
for $l=1,2$, with $\bS'_l(\bs{s})$ defined in eq.~\eqref{eq:SpDef}. Note that this means that the $E-\Et$ rows of $\bS'_l(\bs{s})$ form a basis of the kernel of the rank $\Et$ matrix $\bS_l\bG(\bp_l)$. Furthermore, note that the matrices $\bR_l(\bs{s})$ satisfy
\begin{equation}
\label{eq:RInvRel}
    \bS_l\bR_l(\bs{s})^{-1}=\begin{pmatrix}
        \mathds{1}_{\Et} & 0_{\Et\times (E-\Et)}
    \end{pmatrix} \,.
\end{equation}
The choice of rotation $\bR_l(\bs{s})$ leads to a block structure in the Gram matrices
\begin{equation}
    \bR_l(\bs{s})\bG(\bp_l)\bR_l(\bs{s})^T=\begin{pmatrix}
        \bG(\tilde{\bp}_l) & 0 \\ 0 & \overline{\bG}_l^{\perp}
    \end{pmatrix} \,,
\end{equation}
for $l=1,2$ with $\overline{\bG}_l^{\perp}\in \cF^{(E-\Et)\times (E-\Et)}$. After rotating the conditions in eq.~\eqref{eq:NLiftCond} and the consistency condition $\bN_{\sigma}\bG(\bp_2)\bN_{\sigma}^T=\bG(\bp_1)$ (cf. eq.~\eqref{graminv}) for the matrix $\bN_{\sigma}$ into the new bases defined by $\bR_l$ and using eq.~\eqref{eq:RInvRel}, we can solve them, and we find that
\begin{equation}\label{NNtilde}
    \bN_{\sigma}=\bR_1(\bs{s})^{-1}\begin{pmatrix}
        \widetilde{\bN}_{\sigma} & 0 \\ 0 & \overline{\bs{O}}(\bs{s})
    \end{pmatrix}\bR_2(\bs{s}) \,,
\end{equation}
where the matrix $\overline{\bs{O}}(\bs{s})$ satisfies
\begin{equation}
\label{eq:OCond2Sectors}
    \overline{\bO}\overline{\bG}^{\perp}_2\overline{\bO}^T= \overline{\bG}^{\perp}_1 \,.
\end{equation}
Hence, we see that the symmetry transformations between non momentum-grouped sectors are given by precisely the symmetries of the associated momentum-grouped sectors, together with a choice of matrix $\overline{\bs{O}}$. In particular, for non momentum-grouped sectors the redundancies in the loop-momentum representation are further enhanced, such that we find 
\begin{equation}
\label{eq:symmSetGeneral}
    \mathrm{Sym}(\Theta_1,\Theta_2)= \mathbb{Z}_2^c\times \mathbb{O}(\overline{\bG}^{\perp}_1,\overline{\bG}^{\perp}_2) \times \mathbb{S}(\cG_1,\cG_2) \,,
\end{equation}
where $\mathbb{O}(\overline{\bG}^{\perp}_1,\overline{\bG}^{\perp}_2)$ is the set of matrices $\overline{\bs{O}}$ satisfying eq.~\eqref{eq:OCond2Sectors}. 
If we specialize to symmetry transformations in a single non momentum-grouped sector, we have $\overline{\bG}^{\perp}_1=\overline{\bG}^{\perp}_2$ and hence the matrix $\overline{\bs{O}}$ has to live in the orthogonal group 
\begin{equation}
    \overline{\bs{O}}\in\mathrm{O}\big(\overline{\bG}^{\perp}\big)=\{\bM\in \cF^{(E-\Et)\times (E-\Et)} \, | \, \bM\overline{\bG}^{\perp}\bM^T=\overline{\bG}^{\perp} \} \,.
\end{equation}
In contrast to the momentum-grouped case, the transformation of the external momenta will in general depend on the external kinematical variables. The existence of such kinematics-dependent symmetry transformations has already been noted in ref.~\cite{Wu:2024paw}. Here, our derivation makes the origin of this dependence completely transparent: it comes from the additional rotations $\bR_l$, which contain the matrices $\bS'_l$ that form bases of the kernel of $\bS\bG(\bp_l)$.

\begin{example}[A four-point triangle integral]
\label{fourpointtriangle}
    \begin{figure}[ht]
    \centering
    \includegraphics[width=0.4\textwidth]{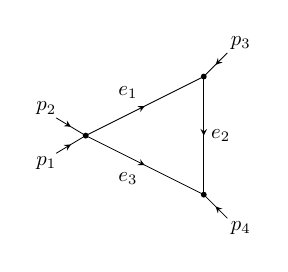}
    \caption{The oriented graph corresponding to the four-point triangle integral. The internal edges are labeled by $e_1,e_2,e_3$ and the external momenta are given by $p_1,\dots ,p_4$, with $p_4=-p_1-p_2-p_3$.}
    \label{fig:fourPtTriangle}
\end{figure}
As an illustration of the above discussion, let us consider the four-point triangle integral shown in figure \ref{fig:fourPtTriangle} with the external momenta on-shell, $p_i^2=0$. 
This example is inspired by ref.~\cite{Wu:2024paw}, where it was observed that this sector\footnote{The authors of ref.~\cite{Wu:2024paw} considered a symmetry between two sectors which take the same form as our triangle with one edge replaced by a bubble. For the transformation of the external momenta, which is our main interest here, this makes no difference, which is why we consider this simpler setup.} admits a kinematics-dependent symmetry. We will see that we can simply recover this symmetry transformation with the dependence on the kinematics naturally arising from the rotations $\bR_l$.

Following the discussion above we first associate a momentum-grouped sector to this graph, which corresponds to the ordinary triangle integral with external momenta\footnote{Note that we have $\bS_{\mathrm{tri},1}=\bS_{\mathrm{tri},2}\equiv \bS_{\mathrm{tri}}$ here since we are focusing on a fixed sector, and similarly for all other relevant matrices.}
\begin{equation}
    \tilde{\bp}_{\mathrm{tri}}=\bS_{\mathrm{tri}}\,\bp_{\mathrm{tri}} , \qquad \bS_{\mathrm{tri}}=\begin{pmatrix}
            1 & 1 & 0 \\ 0 & 0 & 1 
        \end{pmatrix} \,,
\end{equation}
with $\bp_{\mathrm{tri}}=(p_1,p_2,p_3)^T$ and $\tilde{\bp}_{\mathrm{tri}}=(\tilde{p}_1,\tilde{p}_2)^T$. We choose the edge momenta in the auxiliary sector to be parametrized as 
\begin{equation}
    \tilde{\bq}_{\mathrm{tri}}=\begin{pmatrix}
        k+p_1+p_2 \\ k+p_1+p_2+p_3 \\ -k
    \end{pmatrix} \,.
\end{equation}
The auxiliary sector then admits a Feynman graph automorphism corresponding to the signed bijection $\sigma_{\mathrm{tri}}=\big( \bkappa_{\mathrm{tri}},\alpha_{\mathrm{tri}}\big)$ with
\begin{equation}
    \bkappa_{\mathrm{tri}}=(1,-1,1)\,, \qquad \alpha_{\mathrm{tri}}:(x_1,x_2,x_3)\mapsto (x_3,x_2,x_1) \,,
\end{equation}
swapping edges $e_1$ and $e_3$ and reversing the orientation of $e_2$. The resulting symmetry transformation is given by the matrices
\begin{equation}
    \tilde{\bL}_{\sigma_{\mathrm{tri}}}=-1, \qquad 
    \widetilde{\bM}_{\sigma_{\mathrm{tri}}}=\begin{pmatrix}
        -1 & 0
    \end{pmatrix}, \qquad
    \widetilde{\bN}_{\sigma_{\mathrm{tri}}}=\begin{pmatrix}
        1 & 0 \\ -1 & -1 
    \end{pmatrix} \,,
\end{equation}
and takes the form
\begin{equation}
    k=-k'-\tilde{p}_1',\qquad \tilde{p}_1=\tilde{p}_1', \quad \tilde{p}_2=-\tilde{p}'_1-\tilde{p}'_2 \,.
\end{equation}
Note that this transformation preserves the Gram matrix, $\bG(\tilde{\bp})=\bG(\tilde{\bp}')$. We can now lift this symmetry transformation to a symmetry transformation of the non momentum-grouped sector, as discussed above. We immediately have
\begin{equation}
    \bL_{\sigma_{\mathrm{tri}}}=-1, \qquad \bM_{\sigma_{\mathrm{tri}}}=\widetilde{\bM}_{\sigma_{\mathrm{tri}}}\bS_{\mathrm{tri}}=\begin{pmatrix}
        -1 & -1 & 0
    \end{pmatrix} \,.
\end{equation}
To lift the transformation on the external momenta, we need to find the rotation $\bR_{\mathrm{tri}}$. To this end, we compute the kernel of the matrix $\bS_{\mathrm{tri}}\bG(\bp_{\mathrm{tri}})$, which is spanned by the vector $(t,s+t,s+2t)$. We can hence choose
\begin{equation}
    \bR_{\mathrm{tri}}=\begin{pmatrix}
        1 & 1 & 0 \\ 0 & 0 & 1 \\ t & s+t & s+2t
    \end{pmatrix} \,.
\end{equation}
Here we defined the Mandelstam invariants $s=(p_1+p_2)^2$ and $t=(p_2+p_3)^2$. The rotation $\bR_{\mathrm{tri}}$ by construction brings the Gram matrix into the block-diagonal form
\begin{equation}
    \overline{\bG}_{\mathrm{tri}}=\bR_{\mathrm{tri}}\bG(\bp_{\mathrm{tri}})\bR_{\mathrm{tri}}^T=\begin{pmatrix}
        s & -\frac{s}{2} & 0 \\ -\frac{s}{2} & 0 & 0 \\ 0 & 0 & \overline{\bG}^{\perp}_{\mathrm{tri}}
    \end{pmatrix} , \qquad \overline{\bG}^{\perp}_{\mathrm{tri}}=st(s+t) \,.
\end{equation}
We can now construct a symmetry transformation of the non momentum-grouped sector by making a choice of element of $\textrm{O}\big(\overline{\bG}^{\perp}_{\mathrm{tri}}\big)$, which here simply corresponds to a choice of sign
\begin{equation}
    o\in \textrm{O}\big(\overline{\bG}^{\perp}_{\mathrm{tri}}\big)=\{\pm1\} \,.
\end{equation}
For general $o$ we find the transformation
\begin{align}
    p_1&=-\frac{(1+o)t}{s}p_1'-\frac{os+(1+o)t}{s}p_2'-\frac{(1+o)(s+2t)}{s}p_3' \,, \nonumber \\
    p_2&=\frac{s+(1+o)t}{s}p_1'+\frac{(1+o)(s+t)}{s}p_2'+\frac{(1+o)(s+2t)}{s}p_3' \,, \\
    p_3&=-p_1'-p_2'-p_3' \,. \nonumber
\end{align}
We can see that for the choice $o=1$, we precisely recover the form of the symmetry transformation presented in ref.~\cite{Wu:2024paw}. Note, however, that in this example we could also make the choice $o=-1$, which leads to a completely equivalent, but kinematics-independent, symmetry transformation. This implies that in this example it is also possible to implement the symmetry transformation from ref.~\cite{Wu:2024paw} in a kinematics-independent way. In general, however, it is not always possible to find a transformation that is independent of the kinematics.
\end{example}

\subsection{The symmetry groupoid of a family of Feynman integrals}
\label{sec:loopMomRep}

In the previous subsections we have discussed how we can describe, and explicitly construct, the set of symmetry transformations $\Sym(\Theta_1,\Theta_2)$ between two sectors $\Theta_1$ and $\Theta_2$ of a given family of Feynman integrals. It is well known that symmetries of an object form a group, and we have already mentioned that the set of symmetries $\Aut(\Theta) = \Sym(\Theta,\Theta)$ is indeed a group. The set $\Sym(\Theta_1,\Theta_2)$, however, does in general not form a group, because we cannot compose elements unless $\Theta_1=\Theta_2$. It is then natural to ask if there is a more general mathematical structure that captures symmetry transformations between different sectors. 

In this section we argue that the relevant mathematical structure is given by \emph{groupoids}. A groupoid can be defined as follows:\footnote{On a more formal level, a groupoid is a category in which every morphism is invertible.} Consider a set of objects $\{O_1,O_2,O_3,\ldots,\}$. To each pair of objects $(O_i,O_j)$ we assign a set $\textrm{Hom}(O_i,O_j)$ of maps from $O_i$ to $O_j$. We denote a generic element of $\textrm{Hom}(O_i,O_j)$ by $g_{ij}\in \textrm{Hom}(O_i,O_j)$, and we require that the following properties hold:
\begin{enumerate}
    \item $g_{ij}g_{jk}\in \textrm{Hom}(O_i,O_k)$,
    \item $(g_{ij}g_{jk})g_{kl} = g_{ij}(g_{jk}g_{kl})$,
    \item $\textrm{Hom}(O_i,O_i)$ contains the identity map $\textrm{id}$,
    \item for every $g_{ij}\in \textrm{Hom}(O_i,O_j)$ there is $g_{ji}\in \textrm{Hom}(O_j,O_i)$ such that $g_{ij}g_{ji}=\textrm{id}$.
\end{enumerate}
Note that these properties imply that $\textrm{Hom}(O_i,O_i)$ is always a group.

It is now relatively easy to see that this structure is precisely the situation described in the previous sections, where the objects $O_i$ can be identified with the sectors $\Theta_i$ of a given family of Feynman integrals, and the maps between the objects are precisely the symmetry transformations between these sectors, i.e., $\textrm{Hom}(\Theta_i,\Theta_j) = \Sym(\Theta_i,\Theta_j)$. In other words, we see that we obtain in a very natural way a \emph{symmetry groupoid} attached to a family of Feynman integrals.

Groupoids are a mathematical generalization of groups, and they share some of their properties. In particular, one may define matrix representations of groupoids, in complete analogy with matrix representations of groups. In particular, the matrices constructed in the previous sections can be understood as matrix representations of the symmetry groupoid attached to a family of Feynman integrals. We now discuss this in more detail for the case of the symmetry group $\Aut(\Theta)$. The extension to the full groupoid is immediate. For simplicity, we focus on momentum-grouped sectors, but the extension to general sectors is again immediate.

From sections \ref{sec:proofSymLemma} and \ref{sec:nonMomGroupedSymms} we know that a symmetry transformation of a sector $\Theta$ can be associated with a signed permutation $\sigma\in \Aut(\Theta)$.
From $\sigma$ we can construct the matrices $\bs{\widetilde{L}}_{\sigma}$, $\bs{\widetilde{M}}_{\sigma}$ and $\bs{\widetilde{N}}_{\sigma}$ in eq.~\eqref{eq:LMNfromIsomorph}. These matrices have the following property:
\begin{lemma}\label{lemma:group_structure_LMN}
    Let $\sigma,\sigma'$ be signed permutations corresponding to symmetries $\mathrm{Aut}(\Theta)$ of some momentum-grouped sector $\Theta$. Then the following relations hold.
    \begin{enumerate}[label=(\alph*)]
        \item $\bs{\widetilde{L}}_{\sigma}\bs{\widetilde{L}}_{\sigma'}=\bs{\widetilde{L}}_{\sigma\sigma'}$.
        \item $\bs{\widetilde{N}}_{\sigma}\bs{\widetilde{N}}_{\sigma'}=\bs{\widetilde{N}}_{\sigma\sigma'}$.
        \item $\bs{\widetilde{L}}_{\sigma}\bs{\widetilde{M}}_{\sigma'}+\bs{\widetilde{M}}_{\sigma}\bs{\widetilde{N}}_{\sigma'}=\bs{\widetilde{M}}_{\sigma\sigma'}$.
    \end{enumerate}
\end{lemma}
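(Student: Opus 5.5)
The plan is to work entirely in the orthogonal parametrization of section~\ref{sec:proofSymLemma}, with a single sector $\Theta$, so $\bC_1=\bC_2=\bC$ and $\bD_1=\bD_2=\bD$ with $\bD\bC^T=0$. Everything should follow from the characterization of the matrices in eq.~\eqref{eq:LMNfromIsomorph} derived there: for every admissible signed permutation $\sigma$ we have
\begin{equation}
\bP_{\!\sigma}\bC^T=\bC^T\widetilde{\bL}_{\sigma}\,,\qquad \bP_{\!\sigma}\bD^T=\bD^T\widetilde{\bN}_{\sigma}+\bC^T\widetilde{\bM}_{\sigma}\,,
\end{equation}
which is the content of parts (a) and (b) of the lemma following Lemma~\ref{lemmaIsomorphOnQ}. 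Moreover, the columns of $\bC^T$ and $\bD^T$ together form a basis of $Q_G$, by Lemma~\ref{lemma:C_E}(a). Hence these coefficient matrices are \emph{uniquely} determined by the action of $\bP_{\!\sigma}$ on $Q_G$.

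The first step is to check that signed permutation matrices compose correctly: $\bP_{\!\sigma}\bP_{\!\sigma'}=\bP_{\!\sigma\sigma'}$. This holds for the natural product on $\mathbb{Z}_2^P\rtimes S_P$ encoded by eq.~\eqref{signedpermutation}, and I would take it as the definition of $\sigma\sigma'$. One should also note that $\sigma\sigma'$ is again an admissible symmetry, so that $\widetilde{\bL}_{\sigma\sigma'}$ and the other matrices are defined.

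Next, apply the two relations twice. For the loop part,
\begin{equation}
\bP_{\!\sigma\sigma'}\bC^T=\bP_{\!\sigma}\bC^T\widetilde{\bL}_{\sigma'}=\bC^T\widetilde{\bL}_{\sigma}\widetilde{\bL}_{\sigma'}\,.
\end{equation}
For the external part,
\begin{equation}
\bP_{\!\sigma\sigma'}\bD^T=\bP_{\!\sigma}\big(\bD^T\widetilde{\bN}_{\sigma'}+\bC^T\widetilde{\bM}_{\sigma'}\big)=\bD^T\widetilde{\bN}_{\sigma}\widetilde{\bN}_{\sigma'}+\bC^T\big(\widetilde{\bM}_{\sigma}\widetilde{\bN}_{\sigma'}+\widetilde{\bL}_{\sigma}\widetilde{\bM}_{\sigma'}\big)\,.
\end{equation}
Comparing with the defining relations for $\sigma\sigma'$ and using uniqueness gives (a), (b) and (c). Concretely, uniqueness is implemented by multiplying on the left by $(\bC\bC^T)^{-1}\bC$ or $(\bD\bD^T)^{-1}\bD$ and using $\bC\bD^T=0$. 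This is exactly how eq.~\eqref{eq:LMNfromIsomorph} extracts the coefficients, so one could equally verify the identities directly from those formulas.

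The main point needing care is consistency of conventions rather than any computation. First, the composition $\sigma\sigma'$ must correspond to the order of matrix multiplication: the transformation $\bk=\widetilde{\bL}_{\sigma}\bk'+\dots$ followed by $\bk'=\widetilde{\bL}_{\sigma'}\bk''+\dots$ should give the matrix $\bT_{\sigma}\bT_{\sigma'}$ of eq.~\eqref{eq:defTMatrix}, whose blocks are precisely the left-hand sides of (a)--(c). Second, the sign lifts $\bkappa$ must compose; any $\mathbb{Z}_2^c$ ambiguity is harmless because the identities are stated for the given signed permutations. I would state explicitly that the product of signed bijections is defined via $\bP_{\!\sigma}\bP_{\!\sigma'}$, after which the argument above is complete.
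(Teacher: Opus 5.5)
Your proof is correct and is essentially the paper's argument. Both rest on Lemma~\ref{lemmaIsomorphOnQ} (that $\bP_{\!\sigma}$ preserves $H_1(G,\mathbb{Q})$ and $Q_G$) and on $\bD\bC^T=0$: the paper inserts the projectors $\bPi_C$, $\bPi_D$ and $\bPi_C+\bPi_D$ between $\bP_{\!\sigma}$ and $\bP_{\!\sigma'}$ and shows that they act as the identity, while you write the same facts as $\bP_{\!\sigma}(\bC^T\,|\,\bD^T)=(\bC^T\,|\,\bD^T)\,\widetilde{\bT}_{\sigma}$, with $\widetilde{\bT}_{\sigma}$ the block matrix of $\widetilde{\bL}_{\sigma},\widetilde{\bM}_{\sigma},\widetilde{\bN}_{\sigma}$, and obtain (a)--(c) together from $\bP_{\!\sigma}\bP_{\!\sigma'}=\bP_{\!\sigma\sigma'}$ and uniqueness of coefficients.
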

\begin{proof}
All of these statements follow from Lemma \ref{lemmaIsomorphOnQ}. Let us consider them in turn.
    \begin{enumerate}[label=(\alph*)]
        \item The product is given by
        \begin{equation}
            \bs{\widetilde{L}}_{\sigma}\bs{\widetilde{L}}_{\sigma'}=(\bC\bC^T)^{-1}\bC\bP_{\!\sigma}\bPi_C\bP_{\!\sigma'}\bC^T \,.
        \end{equation}
    
        The image of $\bC^T$ lies in $H_1(G,\mathbb{Q})$, which is preserved by $\bP_{\!\sigma'}$. The projector $\bPi_C$ hence acts as the identity, and we have
        \begin{equation}
            \bs{\widetilde{L}}_{\sigma}\bs{\widetilde{L}}_{\sigma'}=
            (\bC\bC^T)^{-1}\bC\bP_{\!\sigma}\bP_{\!\sigma'}\bC^T
            =(\bC\bC^T)^{-1}\bC\bP_{\!\sigma\sigma'}\bC^T
            =\bs{\widetilde{L}}_{\sigma\sigma'} \,.
        \end{equation}
        \item The product is given by
        \begin{equation}
        \label{eq:NNProduct}
            \bs{\widetilde{N}}_{\sigma}\bs{\widetilde{N}}_{\sigma'}=(\bD\bD^T)^{-1}\bD\bP_{\!\sigma}\bPi_D\bP_{\!\sigma'}\bD^T \,.
        \end{equation}
        The image of $\bD^T$ is mapped via $\bP_{\!\sigma'}$ to a part in $\mathrm{Im}\, \bD^T$ and to a part in $\mathrm{Im}\, \bC^T$. The projector $\bPi_D$ acts as the identity on the first part, and it annihilates the second. We hence need to argue that the second part does not contribute to the full product. To this end, we imagine dropping the projector. An element of $\mathrm{Im}\, \bC^T$ is then preserved by $\bP_{\!\sigma}$, but annihilated by $\bD$, due to the orthogonality property. Hence, dropping the projector $\bPi_D$ does not change the map defined by the matrix in eq.~\eqref{eq:NNProduct}, and hence it does not change the matrix itself. We thus find
        \begin{equation}
            \bs{\widetilde{N}}_{\sigma}\bs{\widetilde{N}}_{\sigma'}=\bs{\widetilde{N}}_{\sigma\sigma'} \,,
        \end{equation}
        as above.
        \item The product on the left-hand side is given by
        \begin{equation}
            \bs{\widetilde{L}}_{\sigma}\bs{\widetilde{M}}_{\sigma'}+\bs{\widetilde{M}}_{\sigma}\bs{\widetilde{N}}_{\sigma'}=(\bC\bC^T)^{-1}\bC \bP_{\sigma}(\bPi_C+\bPi_D)\bP_{\sigma'}\bD^T \,.
        \end{equation}
        Arguing as before, the image of $\bD^T$ is mapped to the sum of $\mathrm{Im}\,\bD^T$ and $\mathrm{Im}\,\bC^T$, on which $\bPi_C+\bPi_D$ acts as the identity. Hence we find
        \begin{equation}
            \bs{\widetilde{L}}_{\sigma}\bs{\widetilde{M}}_{\sigma'}+\bs{\widetilde{M}}_{\sigma}\bs{\widetilde{N}}_{\sigma'}=\bs{\widetilde{M}}_{\sigma\sigma'} \,,
        \end{equation}
        as claimed.
    \end{enumerate}
\end{proof}

 We can now obtain the same relations for the matrices $\bL_{\sigma}, \,\bM_{\sigma}, \, \bN_{\sigma}$ defined in eq.~\eqref{eq:symmMatricesFromAut_sum} (restricted to a single sector) using the relations in eq.~\eqref{eq:fromOrthToGenParam} with $\bGamma=\bGamma_1=\bGamma_2$. For the first two relations this is trivial. For the last one, we compute
\begin{align}
    &\bL_{\sigma}\bM_{\sigma'}+\bM_{\sigma}\bN_{\sigma'} \nonumber\\
    &=\widetilde{\bL}_{\sigma}(\widetilde{\bM}_{\sigma'}+\widetilde{\bL}_{\sigma'}\bGamma-\bGamma\widetilde{\bN}_{\sigma'})
    +(\widetilde{\bM}_{\sigma}+\widetilde{\bL}_{\sigma}\bGamma-\bGamma\widetilde{\bN}_{\sigma})\widetilde{\bN}_{\sigma'} \nonumber\\
    &=\widetilde{\bL}_{\sigma}\widetilde{\bM}_{\sigma'}+\widetilde{\bM}_{\sigma}\widetilde{\bN}_{\sigma'}+\widetilde{\bL}_{\sigma}\widetilde{\bL}_{\sigma'}\bGamma-\bGamma\widetilde{\bN}_{\sigma}\widetilde{\bN}_{\sigma'} \\
    &=\widetilde{\bM}_{\sigma\sigma'}+\widetilde{\bL}_{\sigma\sigma'}\bGamma-\bGamma\widetilde{\bN}_{\sigma\sigma'}\nonumber \\
    &=\bM_{\sigma\sigma'} \,. \nonumber
\end{align}
Let us now define
\beq
\bs{T}_{\sigma} = \begin{pmatrix}\bL_{\sigma}&\bM_{\sigma}\\
 \bs{0} & \bN_{\sigma}
\end{pmatrix}\,.
\eeq
It is then easy to see that Lemma~\ref{lemma:group_structure_LMN} implies
\begin{equation}
    \bT_{\textrm{id}} = \mathds{1}\textrm{~~~and~~~}\bT_{\sigma}\bT_{\sigma'}=\bT_{\sigma\sigma'} \,.
\end{equation}
In other words, $\bT_{\sigma}$ defines a representation of $\mathrm{Aut}(\Theta)$. This can similarly be shown for non-momentum-grouped sectors, i.e., also the matrices in eq.~\eqref{eq:symmTransfNonMomGrouped} form a representation of the symmetry group.

\subsection{Example: the one-loop on-shell box integral}
\begin{figure}[ht]
\centering
\scalebox{1.5}{
\begin{tikzpicture}[scale=1.1, >=latex]

\tikzstyle{topology}=[inner sep=0pt]
\tikzset{
  midarrow/.style={
    postaction={
      decorate,
      decoration={
        markings,
        mark=at position 0.5 with {\arrow{>}}
      }
    }
  },
  midarrowrev/.style={
    postaction={
      decorate,
      decoration={
        markings,
        mark=at position 0.5 with {\arrow{<}}
      }
    }
  }
}

\node[topology] (box) at (6,-0.5) {
\begin{tikzpicture}[scale=0.8]

\draw[midarrow]    (0,0) -- (0,2);   
\draw[midarrow]    (0,2) -- (2,2);   
\draw[midarrow]    (2,2) -- (2,0);   
\draw[midarrow]    (2,0) -- (0,0);   

\node at (-0.25,1) {\scriptsize$e_1$};   
\node at (1,2.25)  {\scriptsize$e_2$};
\node at (2.25,1)  {\scriptsize$e_3$};
\node at (1,-0.25) {\scriptsize$e_4$};

\draw[midarrow] (-1,3) -- (0,2)
  node[pos=0.62, sloped, above, yshift=1pt]{\scriptsize$p_1$};

\draw[midarrow] (3,3) -- (2,2)
  node[pos=0.62, sloped, above, yshift=1pt]{\scriptsize$p_2$};

\draw[midarrow] (2.8,-1) -- (2,0)
  node[pos=0.62, sloped, above, yshift=1pt]{\scriptsize$p_3$};

\draw[midarrow] (-1,-1) -- (0,0)
  node[pos=0.62, sloped, above, yshift=1pt]{\scriptsize$p_4$};

\draw[<-, thin] (1.35,1) arc[start angle=0, end angle=330, radius=0.35];
\node at (1,1) {\scriptsize$k$};
\end{tikzpicture}
};

\end{tikzpicture}}
\caption{The Feynman graph representing the family of one-loop box integrals  with on-shell external momenta.}
\label{fig:box}
\end{figure}
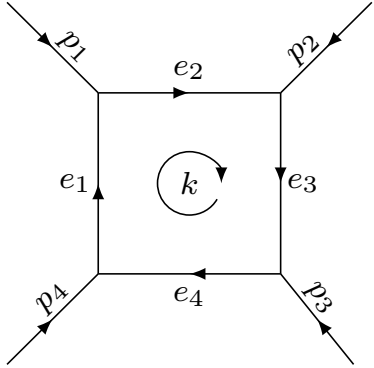
Let us illustrate the concepts and results from the previous subsections on a simple example. We consider the family of one-loop box integrals defined by the Feynman graph depicted in figure~\ref{fig:box}, with all internal masses equal, $m_i^2=m^2$, and all external momenta on shell, $p_i^2=0$, such that 
\begin{align}
    \bG_\text{Box}(\bp)=\left(
\begin{array}{ccc}
 0 & \frac{s}{2} & \frac{t}{2} \\
 \frac{s}{2} & 0 & -\frac{s+t}{2} \\
 \frac{t}{2} & -\frac{s+t}{2} & 0 \\
\end{array}
\right)\,,
\end{align}
where we defined the Mandelstam invariants $s=2p_1\cdot p_2$ and $t=2p_1\cdot p_3$. 
The family defined by the box Feynman graph with this restriction on the external kinematics has $15$ non-zero sectors. We will see that, by determining the symmetry transformations between these sectors, we can reduce the number of sectors to $7$. We will also describe the symmetry transformations within these sectors. We note that this one-loop box integral is well understood, and the results presented here are not new. However, this particular family of integrals is simple enough to illustrate all the concepts introduced in the preceding subsections, in particular how we can lift permutation symmetries from Lee-Pomeranky polynomials to symmetry transformations in the loop-momentum representation. 

\paragraph{Symmetry transformations in the Lee-Pomeransky representation.}
From eq. \eqref{eq:S_T1_T2}, we know that the symmetry group $\mathrm{Aut}(\Theta)$ of a sector is essentially given by $\mathbb{G}(\mathcal{G})$, the group of all permutations leaving the Lee-Pomeransky polynomial $\mathcal{G}$ of the sector $\Theta$ invariant. The Lee-Pomeransky polynomial $\cG_{1111} = \cG_{\Theta_{\text{top}}}$ of the top sector $\Theta_{\text{top}}=(1,1,1,1)$ of the on-shell box is given by
\begin{align}\label{Gpolyboxtop}
\mathcal{G}_{1111}(\bx)=(x_1+x_2+x_3+x_4)+m^2 (x_1+x_2+x_3+x_4)^2+x_2 x_4 (s+t)-s x_1 x_3\,.
\end{align}
The symmetry group of the top sector is 
\begin{align}
\label{eq:symmGroupBox}
\mathbb{G}(\mathcal{G}_{1111})=\{e,(13),(24),(13)(24)\}\simeq\mathbb{Z}_2\times \mathbb{Z}_2\,,
\end{align}
where $e$ denotes the identity permutation and $(ij)$ is the transposition that exchanges the Feynman parameters $x_i$ and $x_j$ (and leaving all other Feynman parameters fixed).
The Lee-Pomeransky polynomials for the subsectors can be obtained from eq.~\eqref{Gpolyboxtop} by setting the $x_i$ corresponding to propagators in this sector to zero, and the symmetry transformations of the subtopologies follow from their respective Lee-Pomeransky polynomials, e.g., 
\begin{align}
\mathbb{G}(\mathcal{G}_{{1110}})&=\{e,(13)\}\,,\quad    \mathbb{G}(\mathcal{G}_{{1101}})=\{e,(24)\}\,,\\
\quad  \mathbb{G}(\mathcal{G}_{{1010}})&=\{e,(13)\}\,,\quad  \mathbb{G}(\mathcal{G}_{{1001}})=\{e,(14)\}\,.\notag
\end{align}
We see that the triangle and the bubble sectors each have a $\mathbb{Z}_2$ symmetry.

Additionally, there are  symmetry transformations between sectors. For example, the symmetry transformations from $(1,0,1,1)$ to $(1,1,1,0)$ or from $(0,1,1,1)$ to $(1,1,0,1)$ are explicitly given as 
\begin{align}\label{morphbox}
\mathbb{S}( \cG_{1011},\cG_{1110})&=\{(x_1,x_2,x_3)\mapsto (x_1,x_4,x_3), (x_1,x_2,x_3)\mapsto (x_3,x_4,x_1)\}\,,\quad\\
\mathbb{S}(\cG_{0111},\cG_{1101})&=\{(x_1,x_2,x_4)\mapsto (x_3,x_2,x_4),(x_1,x_2,x_4)\mapsto (x_3,x_4,x_2)\}\,,\notag
\end{align}
As discussed in subsection~\ref{sec:loopMomRep}, we obtain a groupoid where the individual sectors are the objects with permutations as morphisms between the sectors (or automorphisms within one sector). A graphical representation of the groupoid is given in figure~\ref{fig:grupoid}. 
Due to the symmetries between different sectors, there are only two independent triangle master integrals, which we may take for example as $I_{0111}$ and $I_{1011}$. In the bubble sectors, the two sectors $(1,0,1,0)$ and $(0,1,0,1)$ and one of the remaining four (which can all be connected by the maps in eq.~\eqref{morphbox}) are independent.\footnote{The bubble integral with a massless external leg can obviously be reduced to a tadpole integral. However, this is not detectable from symmetry transformations alone, but follows, e.g., from explicitly solving the IBP relations in these sectors. Since our goal is to illustrate the structure of the symmetry groupoid of the family, we keep these reducible sectors in our discussion.} Similarly, only one tadpole integral can be independent.

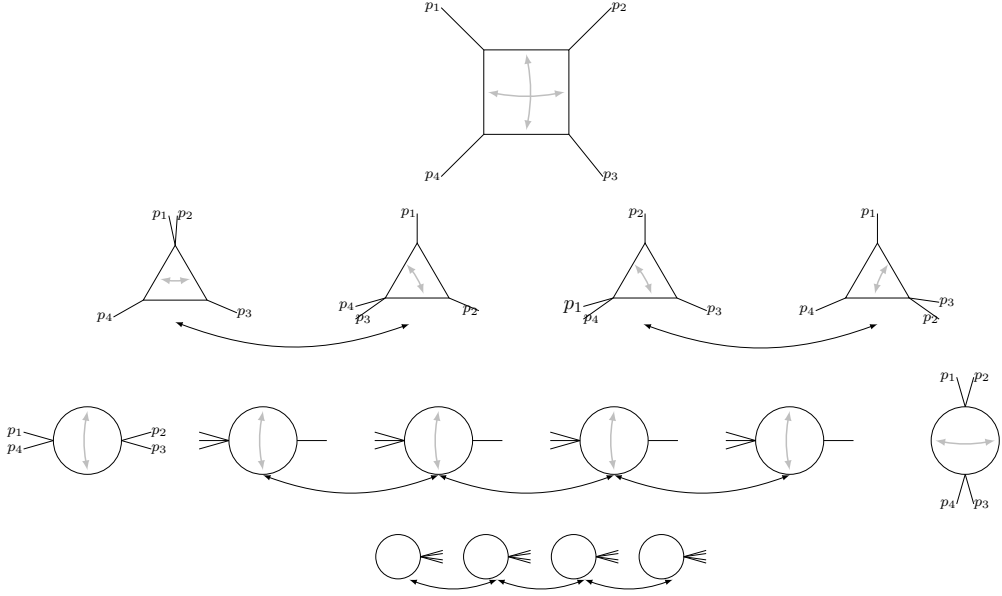
\begin{figure}[ht]
\centering
\scalebox{0.7}{
\begin{tikzpicture}[scale=1.1, >=latex]

\tikzstyle{topology}=[inner sep=0pt]
\tikzstyle{arrow}=[->, thick]

\node[topology] (box) at (6,-0.5) {
\begin{tikzpicture}[scale=0.8]
\draw (0,0) rectangle +(2,2);
\draw (0,2) -- (-1,3) node[left]{\scriptsize$p_1$} {};
\draw (2,0) -- (2.8,-1)node[right]{\scriptsize$p_3$};
\draw (2,2) -- (3,3)node[right]{\scriptsize$p_2$};
\draw (0,0) -- (-1,-1)node[left]{\scriptsize$p_4$};
\draw[thick,<->,gray!50](0.1,1) to[bend right=10] (1.9,1);
\draw[thick,<->,gray!50](1,0.1) to[bend right=10] (1,1.9);
\end{tikzpicture}
};

\node[topology] (tri1) at (0,-3.5) {
\begin{tikzpicture}[scale=0.8]
\draw (0,0) -- (1.5,0) -- (0.75,1.3) --cycle;
\draw (0,0) -- (-0.7,-0.4)node[left]{\scriptsize$p_4$} ;
\draw (1.5,0) -- (2.2,-0.3)node[right]{\scriptsize$p_3$} ;
\draw (0.75,1.3) -- (0.6,2.0)node[left]{\scriptsize$p_1$} ;
\draw (0.75,1.3) -- (0.8,2.0)node[right]{\scriptsize$p_2$} ;
\draw[thick,<->,gray!50](0.4,0.5) to[bend right=10] (1.1,0.5);
\end{tikzpicture}
};

\node[topology] (tri2) at (12,-3.5) {
\begin{tikzpicture}[scale=0.8]
\draw (0,0) -- (1.5,0) -- (0.75,1.3) --cycle;
\draw (0,0) -- (-0.7,-0.3)node[left]{\scriptsize$p_4$} ;

\draw (1.5,0) -- (2.2,-0.1)node[right]{\scriptsize$p_3$} ;
\draw (1.5,0) -- (2.2,-0.5)node[left]{\scriptsize$p_2$} ;

\draw (0.75,1.3) -- (0.75,2.0)node[left]{\scriptsize$p_1$} ;

\draw[thick,<->,gray!50] (0.69,0.1) to[bend left=10] (1.,0.8);
\end{tikzpicture}
};

\node[topology] (tri3) at (8,-3.5) {
\begin{tikzpicture}[scale=0.8]
\draw (0,0) -- (1.5,0) -- (0.75,1.3) --cycle;
\draw (0,0) -- (-0.7,-0.5)node[right]{\scriptsize$p_4$} ;
\draw (0,0) -- (-0.7,-0.2)node[left]{$p_1$} ;

\draw (1.5,0) -- (2.2,-0.3)node[right]{\scriptsize$p_3$} ;
\draw (0.75,1.3) -- (0.75,2.0)node[left]{\scriptsize$p_2$} ;
\draw[thick,<->,gray!50] (0.9,0.1) to[bend right=10] (0.5,0.8);
\end{tikzpicture}
};

\node[topology] (tri4) at (4,-3.5) {
\begin{tikzpicture}[scale=0.8]
\draw (0,0) -- (1.5,0) -- (0.75,1.3) --cycle;
\draw (0,0) -- (-0.7,-0.5)node[right]{\scriptsize$p_3$} ;
\draw (0,0) -- (-0.7,-0.2)node[left]{\scriptsize$p_4$} ;

\draw (1.5,0) -- (2.2,-0.3)node[left]{\scriptsize$p_2$} ;
\draw (0.75,1.3) -- (0.75,2.0)node[left]{\scriptsize$p_1$} ;
\draw[thick,<->,gray!50] (0.9,0.1) to[bend right=10] (0.5,0.8);
\end{tikzpicture}
};

\node[topology] (bub1) at (-1.5,-6.5) {
\begin{tikzpicture}[scale=0.8]
\draw (0,0) circle (0.8);
\draw (-0.8,0) -- (-1.5,0.2)node[left]{\scriptsize$p_1$};
\draw (-0.8,0) -- (-1.5,-0.2)node[left]{\scriptsize$p_4$}; 

\draw (0.8,0) -- (1.5,0.2)node[right]{\scriptsize$p_2$};
\draw (0.8,0) -- (1.5,-0.2)node[right]{\scriptsize$p_3$};
\draw[thick,<->,gray!50] (0,0.7)  to[bend right=10](0,-0.7);
\end{tikzpicture}
};

\node[topology] (bub2) at (13.5,-6.5) {
\begin{tikzpicture}[scale=0.8]
\draw (0,0) circle (0.8);
\draw (0,0.8) -- (0.2,1.5)node[right]{\scriptsize$p_2$};
\draw (0,0.8) -- (-0.2,1.5)node[left]{\scriptsize$p_1$};
\draw (0,-0.8) -- (0.2,-1.5)node[right]{\scriptsize$p_3$};
\draw (0,-0.8) -- (-0.2,-1.5)node[left]{\scriptsize$p_4$};
\draw[thick,<->,gray!50] (-0.7,0) to[bend right=10](0.7,0);

\end{tikzpicture}
};

\node[topology] (bub3) at (1.5,-6.5) {
\begin{tikzpicture}[scale=0.8]
\draw (0,0) circle (0.8);
\draw (-0.8,0) -- (-1.5,0.2);
\draw (-0.8,0) -- (-1.5,0);
\draw (-0.8,0) -- (-1.5,-0.2); 
\draw (0.8,0) -- (1.5,0);
\draw[thick,<->,gray!50] (0,0.7) to[bend right=10](0,-0.7);

\end{tikzpicture}
};
\node[topology] (bub4) at (4.5,-6.5) {
\begin{tikzpicture}[scale=0.8]
\draw (0,0) circle (0.8);
\draw (-0.8,0) -- (-1.5,0.2);
\draw (-0.8,0) -- (-1.5,0);
\draw (-0.8,0) -- (-1.5,-0.2); 
\draw (0.8,0) -- (1.5,0);
\draw[thick,<->,gray!50] (0,0.7) to[bend right=10](0,-0.7);
\end{tikzpicture}
};
\node[topology] (bub5) at (7.5,-6.5) {
\begin{tikzpicture}[scale=0.8]
\draw (0,0) circle (0.8);
\draw (-0.8,0) -- (-1.5,0.2);
\draw (-0.8,0) -- (-1.5,0);
\draw (-0.8,0) -- (-1.5,-0.2); 
\draw (0.8,0) -- (1.5,0);
\draw[thick,<->,gray!50] (0,0.7) to[bend right=10](0,-0.7);
\end{tikzpicture}
};
\node[topology] (bub6) at (10.5,-6.5) {
\begin{tikzpicture}[scale=0.8]
\draw (0,0) circle (0.8);
\draw (-0.8,0) -- (-1.5,0.2);
\draw (-0.8,0) -- (-1.5,0);
\draw (-0.8,0) -- (-1.5,-0.2); 
\draw (0.8,0) -- (1.5,0);
\draw[thick,<->,gray!50] (0,0.7) to[bend right=10](0,-0.7);
\end{tikzpicture}
};

\node[topology] (tad1) at (4,-8.5) {
\begin{tikzpicture}[scale=0.6]
\draw (0,0) circle (0.7);
\draw (0.7,0) -- (1.4,-0.2);
\draw (0.7,0) -- (1.4,-0.1);
\draw (0.7,0) -- (1.4,0.1);
\draw (0.7,0) -- (1.4,0.2);
\end{tikzpicture}
};

\node[topology] (tad2) at (5.5,-8.5) {
\begin{tikzpicture}[scale=0.6]
\draw (0,0) circle (0.7);
\draw (0.7,0) -- (1.4,-0.2);
\draw (0.7,0) -- (1.4,-0.1);
\draw (0.7,0) -- (1.4,0.1);
\draw (0.7,0) -- (1.4,0.2);
\end{tikzpicture}
};

\node[topology] (tad3) at (7,-8.5) {
\begin{tikzpicture}[scale=0.6]
\draw (0,0) circle (0.7);
\draw (0.7,0) -- (1.4,-0.2);
\draw (0.7,0) -- (1.4,-0.1);
\draw (0.7,0) -- (1.4,0.1);
\draw (0.7,0) -- (1.4,0.2);
\end{tikzpicture}
};

\node[topology] (tad4) at (8.5,-8.5) {
\begin{tikzpicture}[scale=0.6]
\draw (0,0) circle (0.7);
\draw (0.7,0) -- (1.4,-0.2);
\draw (0.7,0) -- (1.4,-0.1);
\draw (0.7,0) -- (1.4,0.1);
\draw (0.7,0) -- (1.4,0.2);
\end{tikzpicture}
};
\draw[<->, bend right=20] (tad1.south) to (tad2.south);
\draw[<->, bend right=20] (tad2.south) to (tad3.south);
\draw[<->, bend right=20] (tad3.south) to (tad4.south);

\draw[<->, bend right=20] (bub5.south) to (bub6.south);
\draw[<->, bend right=20] (bub4.south) to (bub5.south);
\draw[<->, bend right=20] (bub3.south) to (bub4.south);

\draw[<->, bend right=20] (tri1.south) to (tri4.south);
\draw[<->, bend left=20] (tri2.south) to (tri3.south);
\end{tikzpicture}}
\caption{Groupoid structure of the $15$ non-zero sectors of the on-shell box. Internal arrows indicate internal symmetries within the sector. External arrows are the morphisms between the sectors.}
\label{fig:grupoid}
\end{figure}

\paragraph{Symmetry transformations within sectors.} Let us derive the symmetry transformations within a given sector.
 We can parametrize the edge momenta of the box integral in terms of the matrices (cf.~eq.~\eqref{eq:q_k_p})
\begin{align}
\bC_\text{Box}=\begin{pmatrix}
        1 & 1 & 1 &1
    \end{pmatrix} , \qquad 
    \bE_\text{Box}= \left(
\begin{array}{cccc}
 0 & 1 & 1 & 1 \\
 0 & 0 & 1 & 1 \\
 0 & 0 & 0 & 1 \\
\end{array}
\right)\,.
\end{align}
This leads to
\begin{align}
\label{eq:q_box}
    \bq(\bk,\bp)=\bC^T_\text{Box}\,\bk+\bE^T_\text{Box}\,\bp=\left(\begin{array}{c}k\\
    k+p_1\\k+p_1+p_2\\k+p_1+p_2+p_3
\end{array}\right)\,.
\end{align}
The group of symmetry transformations 
of the top-sector in eq.~\eqref{eq:symmGroupBox} is generated by two (signed) transpositons, which can be represented by the signed permutation matrices:
\begin{align}
\bP_{(13)}=\left(
\begin{array}{cccc}
 0 & 0 & -1 & 0 \\
 0 & -1 & 0 & 0 \\
 -1 & 0 & 0 & 0 \\
 0 & 0 & 0 & -1 \\
\end{array}
\right)\,, \quad \bP_{(24)}=\left(
\begin{array}{cccc}
 -1 & 0 & 0 & 0 \\
 0 & 0 & 0 & -1 \\
 0 & 0 & -1 & 0 \\
 0 & -1 & 0 & 0 \\
\end{array}
\right)\,.
\end{align}
Note that the signs arise from the fact that the map on the edges reverses the orientation (see fig.~\ref{fig:box}).
The corresponding symmetry transformations in loop-momentum space can be obtained from eq.~\eqref{eq:symmMatricesFromAut_sum}, and they are given by the matrices
\begin{equation}
\begin{aligned}
    \bL_{(13)}&=(-1)\,, \qquad & 
    \bM_{(13)}&=\left(
\begin{array}{ccc}
 -1 &
 -1&
 0 
\end{array}
\right)\,, \qquad &  \bN_{(13)}&=\left(
\begin{array}{ccc}
 0 & 1 & 0 \\
 1 & 0 & 0 \\
 -1 & -1 & -1 \\
\end{array}
\right)\,, \\
    \bL_{(24)}&=(-1)\,, \qquad & \bM_{(24)}&=\left(
\begin{array}{ccc}
 0 &
 0 &
 0 
\end{array}
\right)\,, \qquad& \bN_{(24)}&=\left(
\begin{array}{ccc}
 -1 & -1 & -1 \\
 0 & 0 & 1 \\
 0 & 1 & 0 \\
\end{array}
\right)\,,
\end{aligned}
\end{equation}
where we can easily check that $\bN_\sigma\bG_\text{Box}(\bp)\bN^T_\sigma=\bG_\text{Box}(\bp)$ with $\sigma\in\mathbb{G}(\cG_{1111})$, is satisfied for the chosen kinematics.
Then the symmetry transformation for the $(24)$ permutation is  
\beq
k= -k'\,,\quad  p_{1}= -p_{1}'-p_{2}'-p_{3}'\,,\quad p_{2}=p_{3}'\,, \quad p_{3}=p_{2}'\,,
\eeq
such that, for a choice of external momenta $\bp'$ satisfying $\bG(\bp')=\bG_\text{Box}$ there are two parametrizations $\bq(\bk,\bp)$ and $\bq(\bk',\bp')$ of the box, related via the permutation
\begin{align}
\bq(\bk,\bp)=\bP_{(24)}\bq(\bk',\bp')\,.
\end{align}
For the $(13)$ permutation we find 
\beq
k= -k'-p_{1}'-p_{2}'\,,\quad p_{1}=p_{2}'\,,\quad p_{2}=p_{1}'\,,
\quad p_{3}= -p_{1}'-p_{2}'-p_{3}'\,,
\eeq
such that 
\begin{align}
\bq(\bk,\bp)=\bP_{(13)}\bq(\bk',\bp')\,.
\end{align}
Let us now consider the triangle subsectors. As before, they are not momentum-grouped, and so we first determine the symmetry transformations for the associated momentum-grouped sectors, and then lift them to the non momentum-grouped versions. 
The discussion is analogous to example~\ref{fourpointtriangle}.  For the non momentum-grouped sectors,  we choose the following parametrization  for the $(24)$ permutation of the $(0,1,1,1)$ sector 
\begin{align}
\label{eq:q_tri}
    \bq_\text{Tri,1}(\bk,\bp)=\left(\begin{array}{c}k\\
    k+p_2\\k+p_2+p_3
\end{array}\right)\,.
\end{align}
Then we find 
\beq
k= -k'-p_{2}'-p_{3}'\,,\quad p_{1}= p_{1}'-\frac{(p_{2}'-p_{3}')(s-t)}{s+t}\,, \quad p_{2}=p_{3}'\,, \quad p_{3}=p_{2}'\,,
\eeq 
such that $\bq_\text{Tri,1}(\bk,\bp)=\bP_{\text{Tri},1}\,\bq_\text{Tri,1}(\bk',\bp')$
with 
\begin{align}
\bP_{\text{Tri,1}}=    \left(
\begin{array}{ccc}
 0 & 0 & -1 \\
 0 & -1 & 0 \\
 -1 & 0 & 0 \\
\end{array}
\right)\,.
\end{align}
For the $(1,0,1,1)$ sector with 
\begin{align}
\label{eq:q_tri2}
    \bq_\text{Tri,2}(\bk,\bp)=\left(\begin{array}{c}k\\
    k+p_1+p_2\\k+p_1+p_2+p_3
\end{array}\right)\,,
\end{align}
we find the transformation \beq
k=-k'-p_{1}'-p_{2}'\,,\quad p_{1}= p_{2}'\,,\quad p_{2}= p_{1}'\,,\quad p_{3} =-p_{1}'-p_{2}'-p_{3}'\,. 
\eeq
Then the two parametrizations $\bq_\text{Tri,2}(\bk,\bp)=\bP_{\text{Tri,2}}\,\bq_\text{Tri,2}(\bk',\bp')$ are related via the permutation
\begin{align}
 \bP_{\text{Tri,2}}   =\left(
\begin{array}{ccc}
 0 & -1 & 0 \\
 -1 & 0 & 0 \\
 0 & 0 & -1 \\
\end{array}
\right)\,.
\end{align}

There are two non momentum-grouped bubble sectors. For the $(0,1,0,1)$ sector, with parametrization $\bq_\text{Bubb,1}(\bk,\bp)=(k,k+p_2+p_3)^T$, the $(24)$ permutation is
\begin{equation}
    k= -k'-p_{2}'-p_{3}'\,, \qquad \bp=\bp'\,,
\end{equation}
and for the $(1,0,1,0)$ sector with parametrization $\bq_\text{Bubb,2}(\bk,\bp)=(k,k+p_1+p_2)^T$, the $(24)$ permutation is achieved by 
\begin{equation}
    k= -k'-p_{1}'-p_{2}'\,, \qquad \bp=\bp' \,,
\end{equation}
such that for both sectors we have
\begin{equation}
    \bq_\text{Bubb,i}(\bk,\bp)=\bP_\text{Bubb}\,  \bq_\text{Bubb,i}(\bk',\bp') \,, \quad \text{with} \quad \bP_\text{Bubb}=\left(
\begin{array}{cc}
 0 & -1 \\
 -1 & 0 \\
\end{array}
\right)\,.
\end{equation}

 \paragraph{Symmetry transformations in loop-momentum space between sectors.}
Let us now proceed to derive the symmetry transformations between different sectors. The only non-trivial transformations are those between the non momentum-grouped triangle sectors. We first parametrize the auxiliary momentum-grouped sectors and derive the symmetry transformations there. We then follow subsection~\ref{sec:nonMomGroupedSymms} to lift them to the non momentum-grouped sectors using eqs.~\eqref{eq:nonmongrouptrafo} and~\eqref{NNtilde}. 

As an example, let us derive a symmetry transformation between the sectors $\Theta_1=(1,0,1,1)$ and $\Theta_2=(1,1,1,0)$, with graphs $G_{1}$ and $G_{2}$ and associated Lee-Pomeransky polynomials $\cG_{1}$ and $\cG_{2}$.
The momentum-grouped external momenta  are $\bs{\tilde{p}}_{1}=(p_1+p_2,p_3)=\bS_{1}\bp$ and  $\bs{\tilde{p}}_{2}=(p_1,p_2)=\bS_{2}\bp$,
with
\begin{equation}
\begin{aligned}
\bS _{1}&=  \left(
\begin{array}{ccc}
 1 & 1 & 0 \\
 0 & 0 & 1 \\
\end{array}
\right) \,,\qquad\bS_{2}= \left(
\begin{array}{ccc}
 1 & 0 & 0 \\
 0 & 1 & 0 \\
\end{array}
\right)\,.
\end{aligned}
\end{equation}
The Lee-Pomeransky polynomials of the two subsectors are 
\beq\bsp
\mathcal{G}_{1}&=(x_1+x_3+x_4)+m^2 (x_1+x_3+x_4)^2-s x_1 x_3\,,\\
\mathcal{G}_{2}&=(x_1+x_2+x_3)+m^2 (x_1+x_2+x_3)^2-s x_1 x_3\,.
\esp\eeq
We want to construct the loop momentum transformation induced by the permutation $\alpha \in  \mathbb{S}( \cG_1,\cG_2)$ with  $\alpha(x_1,x_2,x_3)=(x_1,x_4,x_3)$.
Therefore, we fix a parametrization for the two auxiliary sectors according to
\beq\bsp
    \widetilde{\bq}_{i}(\bk_{i},\widetilde{\bp}_i)&=\widetilde{\bC}_{\text{Tri}}^T \,\bk_i+\widetilde{\bE}^T_{\text{Tri}}\,\widetilde{\bp}_i\,, \quad i\in \{1,2\}\,,
\esp\eeq
with
\begin{align}
\widetilde{\bC}_{\text{Tri}}&=\begin{pmatrix}
        1 & 1 & 1 
    \end{pmatrix} , \qquad 
    \widetilde{\bE}_{\text{Tri}}= \left(
\begin{array}{cccc}
 0 & 1 & 1 \\
 0 & 0 & 1 
\end{array}
\right)\,.
\end{align}
We label the edge momenta such that $\widetilde{q}_{i,j}$ is the momentum associated to the edge with Feynman parameter $x_j$ in the graph $G_i$.
Explicitly, $\widetilde{\bq}_{1}=(\widetilde{q}_{1,1},\widetilde{q}_{1,3},\widetilde{q}_{1,4})^T$ and $\widetilde{\bq}_{2}=(\widetilde{q}_{2,1},\widetilde{q}_{2,2},\widetilde{q}_{2,3})^T$. We emphasize that this is a choice, and different choices of parametrizations will give rise to different symmetry transformations.

 We aim to find a symmetry transformation of the form
\begin{align}\bk_1=\bL_{1,2}\,\bk_2+\bM_{1,2}\bp_{2},\qquad \bp_{1}=\bN_{1,2}\,\bp_{2} \,,\end{align} 
 such that (cf.~eq.~\eqref{eq:symmTrans}), 
 \begin{align}
  \left(
\begin{array}{c}\label{eqparabox}
  \widetilde{q}_{1,1}(\bk_{1},\bp_{1}) \\
  \widetilde{q}_{1,3}(\bk_{1},\bp_{1}) \\
 \widetilde{q}_{1,4}(\bk_{1},\bp_{1})\\
\end{array}
\right)=  \left(
\begin{array}{c}
  \widetilde{q}_{1,\alpha(1)}(\bk_{1},\bp_{1}) \\
  \widetilde{q}_{1,\alpha(3)}(\bk_{1},\bp_{1}) \\
 \widetilde{q}_{1,\alpha(2)}(\bk_{1},\bp_{1})\\
\end{array}
\right)
 = -\left(
\begin{array}{c}
  \widetilde{q}_{2,1}(\bk_{2},\bp_{2}) \\
  \widetilde{q}_{2,3}(\bk_{2},\bp_{2}) \\
 \widetilde{q}_{2,2}(\bk_{2},\bp_{2})\\
\end{array}
\right)=\bP_{1,2}\left(
\begin{array}{c}
  \widetilde{q}_{2,1}(\bk_{2},\bp_{2}) \\
  \widetilde{q}_{2,2}(\bk_{2},\bp_{2}) \\
 \widetilde{q}_{2,3}(\bk_{2},\bp_{2})\\
\end{array}
\right)\,,
 \end{align}
with the signed permutation matrix, 
\begin{align}
\bP_{1,2}   = \left(
\begin{array}{ccc}
 -1 & 0 & 0 \\
 0 & 0 & -1 \\
 0 & -1 & 0 \\
\end{array}
\right)\,.
\end{align}
Then, according to eq.~\eqref{eq:symmMatricesFromAut_sum}, the symmetry transformations of the momentum-grouped sectors are given by 
\begin{align}
    \widetilde{\bL}_{1,2}=(-1)\,,\quad \widetilde{\bN}_{1,2}=\left(
\begin{array}{cc}
 -1 & -1 \\
 0 & 1 \\
\end{array}
\right)\,,\quad \widetilde{\bM}_{1,2}=\left(
\begin{array}{cc}
 0  &
 0  
\end{array}
\right)\,.
\end{align}
Let us now lift these symmetry transformations to the non momentum-grouped sectors. To this end, we compute
\begin{equation}
\begin{aligned}
\bR _{1}=\left(
\begin{array}{ccc}
 1 & 1 & 0 \\
 0 & 0 & 1 \\
 \frac{s+t}{s+2 t} & \frac{t}{s+2 t} & 1 \\
\end{array}
\right)\,, \qquad \bR _{2}=    \left(
\begin{array}{ccc}
 1 & 0 & 0 \\
 0 & 1 & 0 \\
    \frac{s+t}{s} & -\frac{t}{s} & 1 \\
\end{array}
\right)
\,.
\end{aligned}
\end{equation}
We rotate the Gram matrices into a block-diagonal form as in eq.~\eqref{eq:RInvRel}:
\begin{equation}
\begin{aligned}
\bR _{1}(\bs{s})\bG(\bp _{1})\bR _{1}(\bs{s})^T&=\left(
\begin{array}{ccc}
 s & -\frac{s}{2} & 0 \\
 -\frac{s}{2} & 0 & 0 \\
 0 & 0 & \overline{\bG}_{1}^{\perp}  \\
\end{array}
\right)\,, \qquad & \overline{\bG}_{1}^{\perp}&=\frac{s t (s+t)}{(s+2 t)^2}\,,\\
\bR_{2}(\bs{s})\bG(\bp_{2})\bR _{2}(\bs{s})^T&=\left(
\begin{array}{ccc}
 0 & \frac{s}{2} & 0 \\
 \frac{s}{2} & 0 & 0 \\
 0 & 0 & \overline{\bG}_{2}^{\perp} \\
\end{array}
\right)
\,, \qquad &  \overline{\bG}_{2}^{\perp}&=\frac{t (s+t)}{s}  \,.
\end{aligned}
\end{equation}
Then $\overline{\bO}$ is determined by $\overline{\bO}\overline{\bG}_{2}^{\perp}\overline{\bO}^T=\overline{\bG}_{1}^{\perp}$. Explicitly, we can choose
\begin{equation}
    \overline{\bO}=\frac{s}{s+2t} \,.
\end{equation}
Finally, the symmetry transformation for the non momentum-grouped sector is given by 
\begin{align}
\bL_{1,2}=-1\,,\quad \bN_{1,2}=\left(
\begin{array}{ccc}
 \frac{2 t}{s}+1 & -\frac{2 t}{s}-1 & 1 \\
 -\frac{2 (s+t)}{s} & \frac{2 t}{s} & -1 \\
 0 & 1 & 0 \\
\end{array}
\right)\,,\quad \bM_{1,2}=\left(
\begin{array}{ccc}
 0&
 0 &
 0 
\end{array}
\right)\,.
\end{align}
Then the parametrizations
 \begin{align}
    \bq_\text{1}(\bk,\bp)=\left(\begin{array}{c}k\\
    k+p_1+p_2\\k+p_1+p_2+p_3
\end{array}\right)\, \quad \text{and} \quad   \bq_\text{2}(\bk,\bp)=\left(\begin{array}{c}k\\
    k+p_1\\k+p_1+p_2
\end{array}\right)\,,
\end{align}
of the two sectors are mapped into each other as $ \bq_\text{1}(\bk_1,\bp_1)= \bP_{1,2}\,\bq_\text{2}(\bk_2,\bp_2)$.

We proceed analogously to derive a symmetry transformation between the sectors $\Theta_3=(0,1,1,1)$ and $\Theta_4=(1,1,0,1)$. The momentum-grouped external momenta are $\bs{\widetilde{p}}_{3}=(p_2,p_3)$ and $\bs{\widetilde{p}}_{4}=(p_1,p_2+p_3)$. With $\overline{\bO}=1$ we find $\bk_3=\bL_{3,4}\bk_4+\bM_{3,4}\bp_4$ and $\bp_3=\bN_{3,4}\bp_4$ with 
\begin{align}
\bL_{3,4}=-1\,,\quad \bN_{3,4}=\left(
\begin{array}{ccc}
 \frac{2 (s-t)}{s+t} & \frac{s-t}{s+t} & \frac{2 s}{s+t} \\
 1 & 0 & 0 \\
 -1 & -1 & -1 \\
\end{array}
\right)\,, \quad \bM_{3,4}=\left(
\begin{array}{ccc}
 -1 &
 0 &
 0 
\end{array}
\right)\,,
\end{align}
such that for  
 \begin{align}
    \bq_\text{3}(\bk,\bp)=\left(\begin{array}{c}k\\
    k+p_2\\k+p_2+p_3
\end{array}\right)\, \quad \text{and} \quad   \bq_\text{4}(\bk,\bp)=\left(\begin{array}{c}k\\
    k+p_1\\k+p_1+p_2+p_3
\end{array}\right)\,,
\end{align}
we have $ \bq_\text{3}(\bk_3,\bp_3)= \bP_{3,4}\,\bq_\text{4}(\bk_4,\bp_4)$, where 
\begin{align}
   \bP_{3,4}=\left(
\begin{array}{ccc}
 0 & -1 & 0 \\
 -1 & 0 & 0 \\
 0 & 0 & -1 \\
\end{array}
\right)\,. 
\end{align}

The four bubble sectors $(0,0,1,1)$, $(1,1,0,0)$, $(0,1,1,0)$ and $(1,0,0,1)$ can all be trivially related by $p_1\leftrightarrow p_2\leftrightarrow p_3\leftrightarrow p_4$.

\section{Symmetries and twisted cohomology}\label{sec:symandtwist}
(Relative) twisted cohomology provides a framework to describe Feynman integrals in dimensional regularization~\cite{Mastrolia:2018uzb}. 
It is hence natural to formulate a notion of symmetry transformations directly in the language of twisted cohomology, which is the goal of this section. In particular, we will define a notion of \emph{twisted symmetry transformations} acting on the twisted (co)homology groups and their duals, which leave the various pairings between these groups invariant. While these symmetry transformations naturally act on integrands and integration contours, we will also comment on how to apply this notion to families of integrals. We will keep the discussion in this section very general and make the connection back to Feynman integrals only in the next section.

\subsection{Review of relative twisted cohomology groups}\label{reltwistedrew}
Before we discuss how we can incorporate a notion of symmetry transformations that extends the ideas introduced in the previous section to general twisted cohomology theories, we give a brief review of the relevant mathematical background. We have already reviewed the non-relative case in section~\ref{introtwisted}, and so we primarily focus on relative twisted cohomology groups~\cite{matsumoto_relative_2019-1}. 

In subsection~\ref{introtwisted}, we introduced twisted cohomology groups associated to a twist $\Psi$, and we defined the twisted variety $\Sigma$ in eq.~\eqref{twistedvariety} and the on-shell variety $D_-$ in eq.~\eqref{onshellvariety}. Generally, the twist may depend on some variables $\bs{s}$, and we denote by $\mathcal{F}=\mathbb{Q}(\bs{s},\mu_i)$  the field of rational functions in $\bs{s}$ and $\mu_i$ with rational coefficients (where the $\mu_i$ are the non-integer exponents of the polynomials that define the twist $\Psi$, cf. eq.~\eqref{eq:twist}). 

Loosely speaking, relative (co-)homology theories generalize the non-relative case to situations where the cycles are allowed to have boundaries in some prescribed locus $D_+$.
We define
\begin{align}
X = \mathbb{C}^n\backslash\Sigma\,,\qquad X_-=\mathbb{C}^n\backslash(\Sigma\cup D_-)\,,\qquad X_+=\mathbb{C}^n\backslash(\Sigma\cup D_+)\,.
\end{align}
The relative twisted homology group $H_n(X_-,D_+,\check{\mathcal{L}}_{\omega})$ is defined as the group generated by all twisted cycles with boundaries in $D_+$ modulo closed cycles (with boundaries in $D_+$).

Similarly, one can consider the \textit{(relative) twisted cohomology group} $H^n_{\textrm{dR}}(X_-,D_+,\nabla_\omega)$, as the group generated by equivalence classes of (relative) differential forms that are closed with respect to $\nabla_\omega$, modulo (relative) forms that are exact. 
A construction of relative differential forms can be found in refs.~\cite{matsumoto_relative_2019-1,Caron-Huot:2021xqj, Caron-Huot:2021iev}. The details are not essential for this paper. Here it suffices to take a pragmatic approach, and we present an explicit set of generators sufficient for most applications. Using the Leray exact sequence, we may represent classes in $H^n_{\textrm{dR}}(X_-,D_+,\nabla_\omega)$ as differential forms defined on $X_-$ or supported on $D_+$. More precisely, let us write $D_+$ as a union of irreducible components,
\beq\label{eq:D+_decomposition}
D_+ = \bigcup_{i=1}^Q D_+^{(i)}\,.
\eeq
Then for each boundary component $D_+^{(i)}$, we introduce a differential form with support only on a very small (infinitesimal) neighborhood around $D_+^{(i)}$,
\beq
\delta_i := \rd\theta_i\,,\quad\textrm{with} \quad
\theta_i(x) = \theta\Big(\epsilon-d\big(x,D_+^{(i)}\big)\Big)\,,
\eeq
where $\epsilon$ is infinitesimal and $d(x,D) = \inf_{y\in D}d(x,y)$ is the distance from $x\in X$ to $D$, and $\theta(x)$ is the Heaviside step  function from eq.~\eqref{eq:heaviside}.
We refer to such a differential form as a \emph{$\delta$-form}.
Details on the construction and meaning of $\rd \theta$ can be found in refs.~\cite{Caron-Huot:2021iev,Caron-Huot:2021xqj,Giroux:2022wav}. Here it suffices to recall that the derivative of the step function is the Dirac $\delta$-function, so that $\rd\theta_i$ has support only on an infinitesimal neighborhood around $D_+^{(i)}$. Hence, integrals over $\delta_i$ localize on the boundary component $D_+^{(i)}$. A generic relative differential form can then be cast in the form
\begin{align}\label{formrel}
    \varphi=R(\bx)\,
    \left(\bigwedge_{i\in I}\delta_{i}\right)\wedge\left(\bigwedge_{j\in \bar{I}}\rd x_j\right)\,,
\end{align}
where $R$ is a function with poles at most along $\Sigma\cup D_-$, $I\subseteq \{1,\ldots,Q\}$ with $|I|\le n$ and $\bar{I}$ is its complement in $\{1,\ldots,n\}$.
For more details, see refs.~\cite{matsumoto_relative_2019-1,Caron-Huot:2021iev,Caron-Huot:2021xqj,Giroux:2022wav}. 

There are isomorphisms relating different (twisted) cohomology groups. First of all, the (singular) twisted cohomology group is the dual of the (singular) twisted homology group, $H^n(X_-,D_+,\mathcal{L})\simeq H_n(X_-,D_+,\check{\mathcal{L}})^\vee$. Since the period pairing is non-degenerate, we have an isomorphism between the de Rham and singular cohomology groups, after enlarging the coefficients to complex numbers,
\beq
H^n_{\text{dR}}(X_-,D_+,\nabla_{\omega})\otimes\mathbb{C} \simeq 
H^n(X_-,D_+,\mathcal{L})\otimes\mathbb{C}\,.
\eeq
For this reason, we will often drop the subscript `dR' when talking about the de Rham cohomology group.
There is a second isomorphism, relating the dual of relative cohomology groups to the group with the roles of $D_-$ and $D_+$ interchanged~\cite{matsumoto_relative_2019-1,Caron-Huot:2021xqj,Caron-Huot:2021iev}
\beq
H^n(X_-, D_+,\nabla_{\omega})^\vee\simeq H^n(X_+, D_-,\check{\nabla}_{\omega})\,.
\eeq

\subsection{Twisted symmetry transformations}\label{subsecpairing}
After this short review of relative twisted cohomology groups, we want to define a framework that allows us to extend the concept of symmetry transformations between sectors from families of Feynman integrals to twisted cohomology theories. 

We start by generalizing the notion of sectors introduced in subsection~\ref{sec:summary_twisted_sym} to the relative case. 
In eq.~\eqref{eq_cocycle} we associated a list $\vartheta^-(\varphi)$ to every cocycle. For 
 a cocycle $\varphi\in H^n_{\mathrm{dR}}(X_-,D_+,\nabla_\omega)$, let us define an additional list $\vartheta^+(\varphi) = (r^+_1,\ldots,r^+_{Q})$ with 
\beq\label{eq_relcocycle}
r^+_i = \left\{\begin{array}{ll}
0\,, & \textrm{ if $\varphi$ contains $\delta_{i}$}\,,\\
1\,,& \textrm{ otherwise}\,.
\end{array}\right.
\eeq
Then, additionally to $\Theta^- = (r_1
^-,\ldots,r_P^- )\in \{0,1\}^P$, we define $\Theta^+ = (r^+_1,\ldots,r^+_{Q} )\in \{0,1\}^{Q}$, and there are  two filtrations 
\beq\bsp
   \text{Sec}_{\Theta^{\pm}} &:=\{\varphi\in H^n_{\text{dR}}(X_-, D_+,\nabla_\omega):\vartheta^{\pm}(\varphi)\preceq\Theta^{\pm}\}\,.
\esp\eeq
A \emph{sector of a relative twisted cohomology group} is determined by the combined filtration 
\begin{align}
 \text{Sec}_{\Theta^-,\Theta^{+}}:=  \text{Sec}_{\Theta^-}\cap \text{Sec}_{\Theta^{+}}  \,.
\end{align}
We will often use the notation ${\Theta} = (\Theta^-,\Theta^+)$.
In the case where $D_+=\emptyset$ we recover the notion of a sector introduced in section~\ref{sec:summary_twisted_sym}. We also define $\bd_{\Theta}^{\pm}$ as the sets of indices $i$ with $r_i^{\pm}=1$.
A cocycle from the sector $\Theta$ can then be cast in the form (cf.~eq.~\eqref{formrel})
\begin{align}\varphi=R(\bx)\,
\delta_{{\Theta'}}\wedge\left(\bigwedge_{j\in {\bd}^+_{\Theta'}}\rd x_j\right)\,,
\label{eq:general_phi}
\end{align}
where $\Theta'\preceq\Theta$. We introduce the shorthands 
\beq
\delta_{\Theta'}=\bigwedge_{i\in\bd^+_{{\Theta'}}}\delta_i\,,\qquad D_{+}^{{\Theta'}} =\bigcap_{i\in \bd_{\Theta'}^+}D_+^{(i)}\,,\qquad D_{+,\Theta'} = \bigcup_{i\in \bd^+_{\Theta'}}D_+^{(i)}\,.
\eeq
We also define 
$X_{\pm,\Theta}=\mathbb{C}^n\setminus (\Sigma\cup D_{\pm,\Theta})$.

\begin{definition}\label{def:twisted_symmetry}
    A twisted symmetry transformation from the sector ${\Theta}_1$ to the sector ${\Theta}_2$ is an affine and bijective map
    \begin{align}
     \label{eq_twistedsymgen}
f:X_{-,\Theta_2}\to X_{-,\Theta_1}\,, \quad \bx_2\mapsto\bx_1 = \bA\bx_2+\bs{b}\,,
 \end{align}
 such that
 \begin{enumerate}
    \item the Jacobian is trivial, $\det(f):=\det\bA=\pm1$,
    \item the twist restricted to the support of $\delta_{{\Theta}_1}$ is mapped to the twist on the support of $\delta_{{\Theta}_2}$, i.e., $(\Psi|_{D_{+}^{\Theta_1}})\circ f = \Psi|_{D_{+}^{\Theta_2}}$. 
    \item $f$ bijectively maps $D_{\pm,{\Theta}_2}$ to $D_{\pm,{\Theta}_1}$.
\end{enumerate}
\end{definition}
Since $f$ is affine and bijective, it is continuous (even smooth), and so it maps irreducible components of $D_{\pm,{\Theta}_2}$ to irreducible components of $D_{\pm,{\Theta}_1}$. From this we deduce the existence of two permutations
\beq
\alpha_{\pm}:  \bd^{\pm}_{\Theta_2}\to \bd^{\pm}_{{\Theta}_{1}}\,,
\eeq
such that
\beq\bsp
f\Big(D_{\pm}^{(i)}\Big) = D_{\pm}^{\alpha_{\pm}(i)}\,,\qquad &\textrm{for all }i\in \bd^{\pm}_{\Theta_2}
\,.
\esp\eeq
Note that for $D_+=\emptyset$,  Definition~\ref{def:twisted_symmetry} reduces to the one given in section~\ref{sec:summary} for non-relative cohomology groups.
We will denote the set of all twisted symmetry transformations from $\Theta_1$ to $\Theta_2$ by $\TSym(\Theta_1,\Theta_2,\bs{s})$. We also use the notation $\operatorname{TAut}(\Theta,\bs{s}) := \TSym(\Theta,\Theta,\bs{s})$, and we often drop the dependence on $\bs{s}$. By a very similar argument as in section~\ref{sec:loopMomRep}, we see that the twisted symmetry transformations form a groupoid whose objects are the sectors.

In section~\ref{sec:summary}, we have already discussed that a twisted symmetry transformation induces an action on the twisted homology and cohomology groups by pushforwards and pullbacks, respectively. This statement is essentially unchanged in the case of relative (co-)homology groups. The additional conditions in Definition~\ref{def:twisted_symmetry} ensure that the pushforwards and pullbacks behave correctly in the presence of the $\delta$-forms. In particular, Condition 2 ensures that the twist localizes correctly onto the supports of the $\delta$-forms. Morevoer, Condition 3 implies that for every twisted symmetry transformation $f\in \TSym(\Theta_1,\Theta_2)$, $f^*\delta_{\Theta}$ defines a valid products of $\delta$-forms attached to the sector $\Theta_2$, for all $\Theta\preceq\Theta_1$.

Let us conclude this discussion by briefly commenting on how twisted symmetries act on compactly-supported cohomology groups and locally-finite homology groups (see also ref.~\cite{PhamSingularitiesOfIntegrals} for a discussion in a mathematical context).
Let $\varphi$ be a cocycle from the sector $\Theta_1$.
The support of $\varphi$ is defined as 
\begin{align}\label{eq:supp}
\mathrm{supp}(\varphi)&=\overline{\{\bx_1\in X_{-,\Theta_1}: \varphi(\bx_1)\neq 0\}}\,\notag.
\end{align} 
By definition, the differential forms used to define the compactly-supported cohomology groups all have compact support. 
The support of the pullback $f^*\varphi$
is the preimage of the support of  $\varphi$:
\begin{align}
\mathrm{supp}(f^*\varphi)
=f^{-1}\big(\mathrm{supp}(\varphi)\big)\,.
\end{align}
As twisted symmetry transformations
are required to be affine and bijective transformations, compact sets are mapped to compact sets, and the preimage of a compact set is compact.\footnote{In particular $f$ is continuous and so is its inverse (it is an instance of a homeomorphism). Continuous functions from a manifold to itself preserve compactness. It follows that $f$ is a so-called \emph{proper map}.} It then follows from eq.~\eqref{eq:supp} that the property of being compactly supported is preserved under the action of $f$.
Similarly, (homology classes of) locally finite chains are (possibly infinite) linear combinations $\gamma=\sum_{i}n_i\,\gamma_i$ of chain elements\footnote{Recall that the chain elements are differentiable maps $\gamma_i:\Delta^n\to X$, where $\Delta^n$ is an $n$-dimensional simplex.} $\gamma_i$ with coefficients $n_i$ from the local system, such that only a finite number of terms intersect any given compact set in $X_{-,\Theta_2}$. It is then easy to see that, if $f$ sends compact sets to compact sets, $f_*\gamma$ is locally finite whenever $\gamma$ is.

\subsection{Invariance of the pairings}
An important feature of (co-)homology theories is the existence of perfect pairings between the different cohomology and homology groups. We now show that these pairings are invariant under twisted symmetry transformations. 
The invariance of the period pairing was already shown in eq.~\eqref{eq:invarianceP}, and the invariance of the dual period pairing follows analogously. We therefore focus on the invariance of the intersection pairings. We present the argument in detail in the non-relative case, though we expect the same reasoning to apply to relative cases.

Let us consider the cohomology intersection pairing defined in eq.~\eqref{eq:cohompairing}.
Consider a differential form $\varphi\in H^n_{\text{dR}}(X,\nabla_{\omega})$ and a compactly-supported differential form $\check{\varphi}\in  H_{\mathrm{dR},c}^n(X,\check{\nabla}_\omega)$. Note that $\check{\varphi}$ is generally not holomorphic, because the compact support requires the introduction of anti-holomorphic contributions.\footnote{Explicit constructions of compactly-supported forms introduce for example a Heaviside step function vanishing in a small neighborhood around the boundaries, cf., e.g., ref.~\cite{Matsumoto:1998ojf}.}
Let us write the most general compactly-supported form as $\check{\varphi}=\rd^n \bar{\bx}_1\,\check{R}_{0}(\bx_1)+\sum_{i=1}^n\check{R}_{i}(\bx_1)\wedge \rd x_{1,i}$, where $\check{R}_i(\bx_1)$ is a compactly-supported $(n-1)$-form and $\check R_0(\bx_1)$ is a compactly supported zero-form.
The intersection pairing is then given by 
\begin{align}
\langle\varphi|\check{\varphi}\rangle&=
\int_{\mathrm{supp} (\check{\varphi})}\rd^n\bx_1\wedge\rd^n\bar{\bx}_1\,R(\bx_1)\check{R}_0(\bx_1)\,.
\end{align}
Since $f$ is a bijection, we have
\beq
\mathrm{supp} (\check{\varphi})=f(f^{-1}(\mathrm{supp} (\check{\varphi}))) = f(\mathrm{supp} (f^*\check{\varphi}))\,.
\eeq
Hence:
\beq\bsp
\langle\varphi|\check{\varphi}\rangle&=
\int_{f(\mathrm{supp} (f^*\check{\varphi}))}\rd^n\bx_1\wedge\rd^n\bar{\bx}_1\,R(\bx_1)\check{R}_0(\bx_1)\\
&=\int_{\mathrm{supp} (f^*\check{\varphi})}\rd^n \bx_2 \wedge \,\rd^n\bar{\bx}_2\,R(f(\bx_2))\check{R}_0(f(\bx_2))\\
&=\int_{\mathrm{supp} (f^*\check{\varphi})}(f^*\varphi)\wedge (f^*\check{\varphi})\\
&=\langle f^*\varphi|f^*\check{\varphi}\rangle\,.
\esp\eeq
We thus see that the intersection pairing in cohomology is invariant.
The invariance of the homology intersection pairing follows by inserting the completeness relation
\begin{align}
\frac{1}{(2\pi i)^n}\sum_{i,j}|f^*\check{\varphi}_i\rangle C^{-1}_{ij}\langle f^*\varphi_j|=\frac{1}{(2\pi i)^n}\sum_{i,j}|\check{\varphi}_i\rangle C^{-1}_{ij}\langle\varphi_j|=\mathds{1}\,,
    \end{align}
into the twisted Riemann bilinear relations in eq.~\eqref{eq:TRBRs} and using the invariance of the (dual) period pairing 
\begin{align}
[\check{\gamma}|\gamma]&
=\frac{1}{(2\pi i)^n}\sum_{i,j}[\check{\gamma}|f^*\check{\varphi}_i\rangle C^{-1}_{ij}\langle f^*\varphi_j|\gamma]\notag \\
&=\frac{1}{(2\pi i)^n}\sum_{i,j}[f_*\check{\gamma}|\check{\varphi}_i\rangle C^{-1}_{ij}\langle \varphi_j|f_*\gamma]=[f_*\check{\gamma}|f_*\gamma]\,.
\end{align}
Note that for simplicity we restricted the above discussion to the case of twisted symmetry transformations. The arguments, however, easily generalize to any homeomorphism with trivial Jacobian that leaves the twist and the on-shell variety invariant. 

\subsection{Symmetry transformations for families of integrals}
\label{sec:symmetries_of_families}
Twisted symmetry transformations are defined as a set of affine transformations that act on the corresponding (co-)homology groups and that leave the natural pairings between these groups invariant. In particular, they act on \emph{integrands} (and also on integration cycles). The symmetry transformations from section~\ref{sec:symfeyn}, instead, act on a family of \emph{integrals}, obtained by integrating a set of differential forms over a {fixed} contour. It will turn out to be important to distinguish these two notions in a clean way.

Consider a fixed cycle $\gamma\in H_n(X\setminus D_-,D_+,\check{\mathcal{L}_\omega})$. We define the \emph{family of integrals} associated with the cycle $\gamma$ as the vector space generated by all twisted periods obtained by integrating a twisted cocycle over $\gamma$,
\beq\label{eq:V_gamma_Def}
\cV_{\gamma} := \Big\langle \langle\varphi|\gamma]: \varphi\in H^n(X\setminus D_-,D_+,\nabla_{\omega})\Big\rangle_{\cF}\,.
\eeq
As usual, the integrals may depend on some external variables $\bs{s}$ (the external kinematic data in the context of Feynman integrals), and we assume that the scalars are taken from the field of rational functions $\cF$ introduced earlier. In section~\ref{reltwistedrew} we have defined the sector filtration on the twisted cohomology group. We can use it to define a sector filtration for $\cV_{\gamma}$:
\beq\label{sectorfiltration}
S_{{\Theta}}\cV_{\gamma} = \Big\langle \langle \varphi|\gamma] : \varphi\in \operatorname{Sec}_{\Theta}\Big\rangle_{\cF}\,.
\eeq

In the (modern) literature on Feynman integrals, the space $\cV_\gamma$ and the cohomology group $H^n(X\setminus D_-,D_+,\nabla_{\omega})$ are often identified. While this identification is justified for many applications, we prefer to keep the two spaces separate.
In particular, the vector space $\cV_{\gamma}$ may not even have the same dimension as the twisted cohomology group. Instead, the dimension is bounded by the dimension of $H^n(X\setminus D_-,D_+,\nabla_{\omega})$.\footnote{See also refs.~\cite{Gasparotto:2023roh,e-collaboration:2025frv} for a discussion.}
In analogy to families of Feynman integrals, we refer to a choice of basis for $\cV_{\gamma}$ as a set of \emph{master integrals} for $\cV_{\gamma}$.

We now discuss symmetry transformations that map integrals from a sector ${\Theta}_1=(\Theta_{1,-},\Theta_{1,+})$ to a sector ${\Theta}_2=(\Theta_{2,-},\Theta_{2,+})$. Before we present the precise definition, let us give some motivation. Let $\varphi = \rd^nx_1\,R(\bs{x}_1)\in \operatorname{Sec}_{{\Theta}_1}$ be a cocycle from the sector ${\Theta}_1$, and let $f\in \TSym({\Theta}_1,{\Theta}_2)$ be a twisted symmetry from ${\Theta}_1$ to ${\Theta}_2$. If we write $\bs{x}_1 = f(\bx_2)=\bA\bs{x}_2+\bs{b}$ for some $\bA \in \operatorname{GL}(n,\cF)$ with $\det(f)=\det\bA=\pm1$ and $\bs{b}\in\cF^n$, then $f^*\varphi = \det\bA\,\rd^nx_2\,R(\bA\bs{x}_2+\bs{b})$ is a cocycle from sector ${\Theta}_2$. Performing this change of variables in the integral representation for $\langle \varphi|\gamma]$, we find
\beq
\langle \varphi|\gamma] = \int_\gamma\rd^nx_1\,R(\bs{x}_1)=\int_{f^{-1}(\gamma)}\rd^nx_2\,R(\bA\bs{x}_2+\bs{b})\,\det\bA = \langle f^*\varphi|f_*^{-1}\gamma]\,.
\eeq
Since $f^*\varphi\in \operatorname{Sec}_{{\Theta_2}}$, we see that $\langle f^*\varphi|f_*^{-1}\gamma]\in S_{{\Theta}_2}\cV_{\gamma}$, provided that $f^{-1}(\gamma) = f_*^{-1}\gamma$ is proportional to $\gamma$ itself. This motivates the following definition:

\begin{definition}
    The set of symmetry transformations from  a sector ${\Theta}_1$ to a sector ${\Theta}_2$ of the family of integrals $\cV_{\gamma}$ is
    \beq
    \Sym_{\gamma}({\Theta}_1,{\Theta}_2,\bs{s}) =\{f\in\TSym({\Theta}_1,{\Theta}_2,\bs{s}): f_*\gamma=\lambda_f\,\gamma, \mathrm{~for~some~} \lambda_f\in \cF^\times\}\,.
\eeq
\end{definition}
We also define $\Aut_{\gamma}(\Theta,\bs{s}) = \Sym_{\gamma}(\Theta,\Theta,\bs{s})$, and we often drop the dependence on $\bs{s}$.
Note that, just like in the case of Feynman integrals, these symmetry transformations form a groupoid whose objects are the sectors (see section~\ref{sec:loopMomRep}). We see that there is a close relationship between the symmetry groupoid $\Sym_{\gamma}$ of a family of integrals and the groupoid $\TSym$ of twisted symmetries. However, it is useful to keep the two notions distinct, because they admit different interpretations: $\TSym$ acts on the twisted (co-)homology groups, and transforms a twisted cocycle into another cocyle. Instead, the elements of $\Sym_{\gamma}$ are changes of variables that map integrals from a family to integrals from the same family. We say that two sectors ${\Theta}_1$ and ${\Theta}_2$ of $\cV_{\gamma}$ are \emph{equivalent}, ${\Theta}_1\sim {\Theta}_2$, if there is a symmetry transformation between them, i.e., if $\Sym_{\gamma}({\Theta}_1,{\Theta}_2)\neq \emptyset$. It is easy to check that this defines a genuine equivalence relation on the set of all sectors, and we have
$S_{{\Theta}_1}\cV_{\gamma} = S_{{\Theta}_2}\cV_{\gamma}$ precisely if ${\Theta}_1\sim {\Theta}_2$. We write
\beq
\cV_{\gamma}\simeq \bigoplus_{{\Theta}\textrm{ ineq.}}\operatorname{Gr}^S_{{\Theta}}\cV_{\gamma}\,,
\eeq
where the direct sum runs over all inequivalent sectors and the different summands are the associated graded spaces
\beq\label{eq:grV}
\operatorname{Gr}^S_{{\Theta}}\cV_{\gamma} = \frac{S_{{\Theta}}\cV_{\gamma}}{\sum\limits_{{\Theta}'\prec {\Theta}}S_{{\Theta}'}\cV_{\gamma}}\,.
\eeq
We define the \emph{number of master integrals in the sector $\Theta$} as
\beq\label{eq:N_Theta}
N_{{\Theta}} := \dim_{\cF} \operatorname{Gr}^S_{{\Theta}}\cV_{\gamma}\,.
\eeq

In the rest of the paper, we focus on a special case, which is the one relevant to understanding symmetry transformations for families of Feynman integrals. We assume from now on that the cycle $\gamma$ is real, i.e., $\gamma\subseteq \mathbb{R}^n$. Furthermore, $\cF$ consists of rational functions in $\bs{s}$ and $\eps$ with rational coefficients, and so $f$ is a \emph{real} affine map whenever $\bs{s}$ and $\eps$ take real values. Note that the condition $f_*\gamma=\lambda_f\gamma$ requires $f$ to fix $\gamma$ pointwise, and $f$ can at most change the orientation of $\gamma$. The possible change in orientation induced by a real affine map on a real cycle is captured by the sign of the determinant, and so in this setup we have
\beq
\lambda_{f} = \det(f)\,.
\eeq
We assume from here on that $\lambda_f$ takes this particular value.

To conclude, the groupoid $\Sym_{\gamma}$ of twisted symmetry transformations of  a family of  integrals defined by the real cycle $\gamma\subseteq\mathbb{R}^n$ is the subgroupoid of $\TSym$ such that $\gamma$ is fixed pointwise, possibly with a change of orientation.


\section{Feynman integrals and twisted symmetry transformations}\label{sec:twistedsymparam}

The construction of the symmetry transformations in section~\ref{sec:symfeyn} was based on the loop-momentum representation of Feynman integrals.
In section~\ref{sec:symfeyn} we have also identified the set of symmetry transformations in loop-momentum space with the set of permutations that relate the Lee-Pomeransky polynomials of the two sectors (up to factors that act trivially). 
We would like to connect the symmetry groupoid defined in section~\ref{sec:symfeyn} to the notion of twisted symmetry transformations introduced in section~\ref{sec:symandtwist}. While it is possible to define a twisted cohomology theory for Feynman integrals directly from loop-momentum space~\cite{Caron-Huot:2021xqj}, here we prefer to do that via the Lee-Pomeransky and Baikov representations.
However, we already mentioned that there is no unique twisted cohomology theory that one can attach to a family of Feynman integrals, and so there is also no unique way to assign a symmetry groupoid to a family of Feynman integrals. On the other hand, we expect the resulting family of integrals and their properties to be independent of the chosen integral representation. For example, we expect the number of master integrals in a sector to be an invariant of the family, irrespective of the integral representation. In the remainder of this section we explicitly show that the symmetry groupoid of the family is essentially the same for the loop-momentum, Baikov, Lee-Pomeransky and Feynman parameter representations.

\subsection{Symmetry transformations from the Lee-Pomeransky representation}
\label{sec:LP_sym}

Let us discuss the symmetry groupoid for a family of Feynman integrals defined by the Lee-Pomeransky representation in eq.~\eqref{LeePomeransky}. We start by briefly reviewing the relevant twisted cohomology theory (see, e.g., ref.~\cite{Lu:2024dsb}).
\begin{itemize}
    \item The twisted variety $\Sigma$ is defined by the vanishing of the Lee-Pomeransky polynomial $\cG$ of the top
    sector,
    \beq
    \Sigma = \big\{\bx \in \mathbb{C}^P : \cG(\bx,\bs{s}) = 0\big\}\,,
    \eeq
    where we recall that $P$ is the number of propagators in the top sector.
    \item The on-shell variety is empty, $D_-=\emptyset$.
    \item The irreducible components of  the relative boundary $D_+$ are given by the coordinate hyperplanes, 
    \beq
    D_+^{(i)} = \big\{\bx\in\mathbb{C}^P : x_i=0 \big\}\,,\qquad 1\le i\le P\,.
   \eeq 
\end{itemize}
Hence, we need to consider the relative twisted cohomology group $H^P(\mathbb{C}^P\setminus\Sigma,D_+,\nabla_{\omega})$.
Integrals as in eq.~\eqref{LeePomeransky} are only well defined if $\nu_i> 0$ for all $i$. Such an integral lies in the top sector of the family. Integrals from subsectors correspond to twisted cocycles with $\delta$-forms inserted~\cite{Lu:2024dsb}. Since $D_-=\emptyset$, the filtration $\operatorname{Sec}_{\Theta^-}$ is trivial, and so the sector filtration agrees with the filtration $\operatorname{Sec}_{\Theta^+}$ on the relative twisted cohomology group.

The integration cycle in eq.~\eqref{LeePomeransky} is the positive orthant
\beq
\gamma_F = \mathbb{R}_+^P\,.
\eeq
This is a relative cycle whose boundary lies in $D_+$,
and it may change in a non-trival fashion under an arbitrary  twisted symmetry transformation.

\begin{proposition}\label{prop:LP_symmetries}
    The set $\Sym_{\gamma_F}(\Theta_1,\Theta_2)$ of symmetry transformations in the Lee-Pomeransky representation from a sector $\Theta_1$ to a sector $\Theta_2$ is given by
    \beq\label{eq:Sym_LP}
    \Sym_{\gamma_F}(\Theta_1,\Theta_2) \simeq \mathbb{S}(\cG_1,\cG_2) \times \big(D(P-P_2,\mathbb{R}_+)\rtimes S_{P-P_2}\big)\,,
    \eeq
    where $\cG_k$ are the Lee-Pomeransky polynomials of the two sectors, $P_2$ is the number of active propagators in the sector $\Theta_2$ and $D(p,\mathbb{R}_+)$ is the group of $p\times p$ diagonal matrices with positive real entries and determinant 1.
\end{proposition}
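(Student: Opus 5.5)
We need to characterize affine maps $f(\bx)=\bA\bx+\bs b$ on $\mathbb{C}^P$ that lie in $\TSym(\Theta_1,\Theta_2)$ and satisfy $f_*\gamma_F=\det(f)\,\gamma_F$. The plan is to first use the cycle condition to force $f$ to be a positively scaled permutation, then use the twist condition to split off the part acting on active variables (which must lie in $\mathbb{S}(\cG_1,\cG_2)$) from the part acting on inactive variables (which is unconstrained up to scalings and permutations). Throughout, sector $\Theta_k$ corresponds to the $\delta$-forms $\delta_{\Theta_k}$ localizing onto $D_+^{\Theta_k}=\{x_i=0,\ i\notin\bd_{\Theta_k}\}$, on which $\Psi$ restricts to $\cG_k^{-D/2}$ in the $P_k$ active variables.

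\textbf{Step 1: the cycle condition.} Since $\gamma_F=\mathbb{R}_+^P$ is real and $f$ is real affine, the condition $f_*\gamma_F=\det(f)\gamma_F$ means $f$ maps the closed positive orthant onto itself. An affine bijection preserving a pointed polyhedral cone maps its unique vertex to itself, so $\bs b=0$. It also maps extreme rays to extreme rays, i.e.\ coordinate axes to coordinate axes with positive orientation. Hence $\bA=\bD\bP_\pi$, with $\bD$ positive diagonal and $\bP_\pi$ a permutation matrix. Such $f$ automatically permutes the hyperplanes $D_+^{(i)}$, so condition 3 of Definition~\ref{def:twisted_symmetry} reduces to $\pi$ mapping the active set $\bd^+_{\Theta_2}$ bijectively onto $\bd^+_{\Theta_1}$. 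This requires $P_1=P_2$, and $\pi$ then also maps inactive indices to inactive indices. The Jacobian condition reads $\det\bD=1$, because $\det\bP_\pi=\pm1$ and the orientation sign is absorbed by $\det(f)$.

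\textbf{Step 2: the twist condition on active variables.} Write $\pi=(\alpha,\beta)$, with $\alpha$ acting on the active indices and $\beta$ on the inactive ones, and split $\bD=\bD_a\oplus\bD_i$ accordingly. Condition 2 then becomes $\cG_1(\bD_a\bP_\alpha\bx)=\cG_2(\bx)$ as polynomials. Taking this equation inside the top exponent is legitimate because the exponent is generic. I would argue that $\bD_a=\mathds 1$ by homogeneity, mimicking Lemma~\ref{symmTransEquivDef}. The degree-$L$ parts give $\cU_1(\bD_a\bP_\alpha\bx)=\cU_2(\bx)$. Both $\cU_k$ have all monomial coefficients equal to $1$, so every scaling factor product $\prod_{e\notin T}d_e$ equals $1$ for every spanning tree $T$. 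The mass term $m_e^2 x_e^2\cdots$ in $\cF$ gives $d_e^{2}\prod=1$ type relations. In the massless case, one instead compares monomials differing by a single exchanged edge, using basis exchange in the cycle matroid, to get $d_e=d_{e'}$; the relation $\prod=1$ then forces all $d_e=1$. With $\bD_a=\mathds{1}$, the condition is exactly $\alpha\in\mathbb{S}(\cG_1,\cG_2)$.

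\textbf{Step 3: the inactive variables.} The twist restricted to $D_+^{\Theta}$ does not depend on the inactive coordinates, and the $\delta$-forms are insensitive to positive rescalings of them. Therefore $\bD_i$ (diagonal, positive, with $\det\bD_i=1$ once $\det\bD_a=1$) and $\beta\in S_{P-P_2}$ are unconstrained. Pairs $(\bD_i,\beta)$ compose as $D(P-P_2,\mathbb{R}_+)\rtimes S_{P-P_2}$, with $S_{P-P_2}$ acting on diagonals by conjugation. Combining the three steps gives the bijection in eq.~\eqref{eq:Sym_LP}.

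\textbf{Main obstacle.} I expect Step 2, ruling out nontrivial scalings of active variables, to be the delicate part. Degenerate graphs such as a single edge with a loop may need separate care, possibly exploiting the constant-plus-linear structure of $\cG$ when $L=0$ or $1$. I would first try to settle it via the matroid basis-exchange argument.
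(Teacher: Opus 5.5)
Your Steps 1 and 3 follow the paper's proof, and your basis-exchange route in Step 2 is also the paper's argument. Your vertex/extreme-ray argument in Step 1 is a more careful version of the paper's claim that the map must be a rescaled permutation. The gap is in Step 2. Exchanging $e$ for $e'$ in a spanning tree, which is what gives $d_e=d_{e'}$, requires $e$ and $e'$ to lie on a common cycle. Concluding that \emph{all} active rescalings coincide therefore needs $G_2$ to be 2-vertex connected. The paper's proof imposes exactly this hypothesis and handles the other case by factorization. Without it the step fails, and the hypothesis is genuinely needed. If $G_2$ has a cut vertex, $\cU_2=\prod_B\cU_B$ is invariant under independent block-wise rescalings with $\prod_B d_B^{L_B}=1$, and bridges never occur in $\cU$ at all. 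For a concrete case, take two tadpoles joined at a vertex (the subsector of the two-loop sunrise family obtained by setting $x_3=0$). There $\cG=x_1x_2+(m_1^2x_1+m_2^2x_2)\,x_1x_2$ is invariant under $x_1\mapsto (m_2^2/m_1^2)\,x_2$, $x_2\mapsto (m_1^2/m_2^2)\,x_1$. This map preserves the positive orthant and has determinant $-1$, but for $m_1\neq m_2$ it is not in $\mathbb{S}(\cG,\cG)$.

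Separately, drop the massive/massless case split. The coefficient of an $x_e^2$ monomial, combined with $\prod_{e'\notin T}d_{e'}=1$, only tells you that $d_e$ equals the ratio of the squared masses of the two edges matched by $\alpha$, not that $d_e=1$. Mass compatibility of $\alpha$ was derived in Lemma~\ref{symmTransEquivDef} for pure permutations only. It cannot be presupposed once rescalings are allowed, and the tadpole example above violates it. The basis-exchange argument never uses the masses, so for 2-vertex-connected $G_2$ it handles every case at once: all $d_e$ equal a common $d>0$, and $d^L=1$ forces $d=1$. The paper instead uses $\lambda^L=\lambda^{L+1}=1$, which does not require positivity.
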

Before we present the proof, let us connect this proposition to the results of section~\ref{sec:symfeyn}. In eq.~\eqref{eq:symmSetGeneral} we have shown that the twisted symmetry transformations between two sectors in loop-momentum space can be identified (up to factors that act trivially on the integrals) with the set of permutations of the Feynman parameters relating the Lee-Pomeransky polynomials. Proposition~\ref{prop:LP_symmetries} states that this set of permutations coincides with the set of symmetry transformations of the family obtained from the Lee-Pomeransky representation. This proves the claim that the set of symmetry transformations of the family obtained from the Lee-Pomeransky and loop-momentum representations are identical (as usual, up to factors that act trivially). 

\begin{proof}
Let $\cG$, $\cG_1$ and $\cG_2$ denote the Lee-Pomeransky polynomials of the top sectors and the sectors $\Theta_1$ and $\Theta_2$ respectively. 
These three polynomials are related by 
\begin{align}
\mathcal{G}|_{D_{+}^{\Theta_1}}=\mathcal{G}_1 \qquad \text{ and } \qquad \mathcal{G}|_{D_{+}^{\Theta_2}}=\mathcal{G}_2\,.
\end{align}
Since $D_-=\emptyset$, every permutation  $\sigma\in \mathbb{S}(\mathcal{G}_{1},\mathcal{G}_{2})$ defines a twisted symmetry transformation, and the remaining factors leave the integrals trivially invariant.
Our goal is now to show that also every element from $\Sym_{\gamma_F}(\Theta_1,\Theta_2)$ defines an element on the right-hand side of eq.~\eqref{eq:Sym_LP}.

Let us start by describing the set of twisted symmetry transformations that act on $\gamma_F$ as multiplication by the determinant. Any affine map that sends $\gamma_F$ to a multiple of itself must be linear and map $D_+$ to itself, i.e., it must be a linear map that permutes the coordinate hyperplanes. 
Such a map is a composition of a permutation and a rescaling of the Feynman parameters.

From eq.~\eqref{eq:UFDef} we know how the Lee-Pomeransky polynomial is related to the two Symanzik polynomials. Let $G_k$ , $\mathcal{U}_k$ and $\mathcal{F}_k$ denote the graphs and the Symanzik polynomials of the two sectors, respectively. We focus on the situation where $G_2$ is 2-vertex connected (otherwise the associated loop integrals factorize, and we analyze each factor separately). The two Symanzik polynomials are of different degrees in the Feynman parameters, hence they need to be separately invariant. Let us consider a permutation that sends an edge $e$ of $G_2$ to an edge $e'$ of $G_1$ and that rescales the corresponding Feynman parameter by $\lambda$. In total, such a transformation is $x_e\mapsto \lambda_ex_{e'}$. The different monomials in the first Symanzik polynomial correspond to the (complements of) spanning trees of the graph, and they all have coefficient 1. Since we map $\mathcal{U}_2$ to $\mathcal{U}_1$, this implies $\prod_{e\notin T}\lambda_e=1$ for all the spanning trees $T$ of $G_2$. Since $G_2$ is 2-vertex connected, an arbitrary edge $e_0\in E_{G_2}^{\rint}$ is an element of some cycle $C_{e_0}\in H_1(G_2,\mathbb{Z})$. Consider any spanning tree $T$ with $e_0\notin T$, but containing all other edges of $C_{e_0}$ (such a spanning tree always exists). By adding $e_0$ to $T$ we close the loop $C_{e_0}$ and obtain some other spanning tree $T^\prime$ by removing some edge $e_0^\prime \in C_{e_0}$. Then 
\begin{align}
1=\frac{\prod_{e\notin T}\lambda_e}{\prod_{e\notin T^\prime }\lambda_e}=\frac{\lambda_{e_0}}{\lambda_{e_0^\prime}}\,.
\end{align}
It follows that, if there exists some $C\in H_1(G_2,\mathbb{Z})$ that connects the two edges $e$ and $e^\prime$, then $\lambda_e=\lambda_{e^\prime}$. In particular, for a 2-vertex connected graph such a cycle always exists for any pair of edges. This implies $\lambda_e=\lambda$ for all $e\in E_{G_2}^\text{int}$.
Under a rescaling of all Feynman parameters by $\lambda$, the Symanzik polynomials transform as $\mathcal{U}_2\rightarrow \lambda^L \mathcal{U}_1$ and $\mathcal{F}_2\rightarrow \lambda^{L+1} \mathcal{F}_1$. Hence all rescalings are trivial, $\lambda=1$, and we only need to consider permutations of the Feynman parameters that send $\cG_2$ to $\cG_1$. These considerations explain the factor $\mathbb{S}(\cG_1,\cG_2)$ in eq.~\eqref{eq:Sym_LP}.

It remains to explain the remaining factors in eq.~\eqref{eq:Sym_LP}. The $P-P_2$ Feynman parameters not associated to $\Theta_1$ and $\Theta_2$ only enter the twisted cocycle through the $\delta$-forms. Hence, they can be permuted and rescaled among themselves. Here the rescaling factors are not bound to be unity, but their product still needs to be $1$ (because they are positive and the resulting symmetry transformation should have determinant $\pm1$). This explains the remaining factors in eq.~\eqref{eq:Sym_LP}. Note that this factor acts trivially on the integrals. 
\end{proof}

\subsection{Symmetry transformations from Baikov representations}\label{subsec:symBaikov}

Let us now discuss the symmetry transformations of a family defined from the Baikov representation. 
We focus here on the {democratic} Baikov representation in eq.~\eqref{eq_Baikov}, and we will comment on the loop-by-loop representation~\cite{Frellesvig:2017aai,Frellesvig:2024ymq} below.
The relevant twisted cohomology group is $H^n(\mathbb{C}^n\setminus(\Sigma\cup D_-),\nabla_{\omega})$, with the following data (cf.~ref.~\cite{Mastrolia:2018uzb}):
\begin{itemize}
    \item The twisted variety $\Sigma$ is defined by the vanishing of the Baikov polynomial, 
    \beq
    \Sigma = \big\{\bz\in\mathbb{C}^n: \mathcal{B}(\bz)=0\big\}\,,
    \eeq
    where $\mathcal{B}(\bz)$ denotes the Baikov polynomial associated with the top-sector of the family of Feynman integrals.
    \item The on-shell variety $D_-$ is a union of hyperplanes $D_{-}^{(i)}$ corresponding to the vanishing of the (inverse) propagators. Indeed, since the Baikov variables are quadratic in the momenta, the inverse propagators define linear forms $h_i$ in the Baikov variables, $D_i = h_i(\bz)$, and so each $D_-^{(i)}$ is a hyperplane:
    \beq\label{hyperplanesbaikov}
    D_{-}^{(i)} = \big\{\bz\in\mathbb{C}^n: h_i(\bz)=0\big\}\,,\quad i=1,\ldots,P\,.
    \eeq
    We can choose these hyperplanes to be coordinate hyperplanes, which is the choice made in eq.~\eqref{eq_Baikov2}.
    \item The relative boundary is empty,
    \beq\label{eq:D+_Baikov}
    D_+ = \emptyset.
    \eeq
\end{itemize}

Let us now study symmetry transformations obtained from the Baikov representation.
From eq.~\eqref{eq:D+_Baikov} it follows that the filtration $\operatorname{Sec}_{\Theta^+}$ is trivial, and only the filtration $\operatorname{Sec}_{\Theta^-}$ is relevant. This filtration can easily be identified with the sector filtration on the family of Feynman integrals.
The integration cycle $\mathcal{C}$ (common to all members of the family) is given in eq.~\eqref{eq:Baikov_contour}. 

From the previous discussion, it follows that an element of $\Sym_{\mathcal{C}}(\Theta_1,\Theta_2)$ can be described as an affine map $f$ with $\det(f)=\pm1$ such that 
\begin{enumerate}
    \item the Baikov polynomial is invariant, $\mathcal{B}(f(\bz)) = \mathcal{B}(\bz)$,
\item $f$ exchanges the linear forms corresponding to the active propagators of the two sectors, i.e., there is a bijection $\sigma:\bd_{\Theta_1}^-\to \bd_{\Theta_2}^-$ such that
\beq
h_i(f(\bz)) = h_{\sigma(i)}(\bz)\,,\qquad \textrm{for all } i\in \bd_{\Theta_1}^-\,.
\eeq
\end{enumerate}
At this point, however, there is no reason to assume that the set of twisted symmetry transformations $\Sym_{\mathcal{C}}(\Theta_1,\Theta_2)$ obtained from the Baikov representation agrees with the set $\Sym(\Theta_1,\Theta_2)$ obtained from the loop-momentum representation or the set $\Sym_{\gamma_F}(\Theta_1,\Theta_2)$ from the Lee-Pomeransky representation.  Nevertheless, all these sets are related, as we will now argue. We will only explicitly work out the relationship between the symmetry groupoids in the Baikov and loop-momentum representations, because we already know that the latter is related to the one in the Lee-Pomeransky representation.

\begin{proposition} \label{lemmaequivalence}
Assuming $\det\bG(\bp)\neq 0$, there is a map from the symmetry transformations in the loop-momentum representation to those in the Baikov representation.
 \end{proposition}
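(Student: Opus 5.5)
Given a symmetry transformation $\bk_1=\bL\bk_2+\bM\bp_2$, $\bp_1=\bN\bp_2$ from $\Sym(\Theta_1,\Theta_2)$, I will construct an affine map $f$ on the Baikov variables $\bz$ and then check the conditions of $\Sym_{\mathcal{C}}(\Theta_1,\Theta_2)$. The Baikov variables are the independent scalar products $k_i\cdot k_j$ and $k_i\cdot p_l$, i.e.\ the entries of the upper-left and off-diagonal blocks of the Gram matrix $\bG(\bk,\bp)$. Write $\bT=\begin{pmatrix}\bL&\bM\\ \bs{0}&\bN\end{pmatrix}$ as in eq.~\eqref{eq:defTMatrix}. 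Then the full Gram matrix transforms as $\bG(\bk_1,\bp_1)=\bT\,\bG(\bk_2,\bp_2)\,\bT^T$.

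The first step is to read off that the loop--loop and loop--external blocks of $\bT\bG\bT^T$ are linear in those of $\bG(\bk_2,\bp_2)$. The inhomogeneous part is built only from $\bM\bG(\bp_2)\bM^T$, $\bM\bG(\bp_2)\bN^T$ and $\bG(\bp_2)$, which are kinematic constants. Since the external block satisfies $\bN\bG(\bp_2)\bN^T=\bG(\bp_2)$ by eq.~\eqref{graminv}, this defines an affine map $\bz_2\mapsto\bz_1=f(\bz_2)=\bA\bz_2+\bs{b}$ with entries in $\cF$.

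Next I verify the defining properties in turn. The Baikov polynomial is invariant because $\mathcal{B}(f(\bz))=\det(\bT\bG\bT^T)=(\det\bT)^2\det\bG=\mathcal{B}(\bz)$, using $\det\bT=\det\bL\det\bN=\pm1$. The propagators are permuted: each inverse propagator is a linear form $h_i(\bz)$ in the scalar products, and the property $q^2_{1,\alpha(i)}=q^2_{2,i}$ with equal masses gives $h_{\alpha(i)}(f(\bz))=h_i(\bz)$. This yields the required bijection of hyperplanes $D_-^{(i)}$, with auxiliary irreducible numerators mapped to some linear forms (which is harmless since they are not poles). For the contour, $\mathcal{C}$ is defined by $\mathcal{B}/G(\bp)>0$ over $\mathbb{R}^n$, so real invariance of $\mathcal{B}$ and of $G(\bp)$ gives $f(\mathcal{C})=\mathcal{C}$ setwise, and the orientation factor is $\det\bA$. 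Finally, the Jacobian condition, $\det\bA=\pm1$, is needed both for Definition~\ref{def:twisted_symmetry} and so that $f_*\mathcal{C}=\det(f)\,\mathcal{C}$.

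The main obstacle is computing $\det\bA$. The linear part acts on symmetric $L\times L$ matrices by $S\mapsto \bL S\bL^T$, with determinant $(\det\bL)^{L+1}$, and on $L\times E$ matrices by $X\mapsto \bL X\bN^T$, with determinant $(\det\bL)^E(\det\bN)^L$. The cross term involving $\bM$ makes $\bA$ block triangular, so it does not affect the determinant. Because $\det\bL,\det\bN\in\{\pm1\}$, the total is $\pm1$.

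The assumption $\det\bG(\bp)\neq0$ is what makes the Baikov change of variables, and hence the identification of $\bz$ with independent scalar products, well defined. It also forces $\det\bN=\pm1$, via eq.~\eqref{eq:detN}. Compatibility with composition, $f_{\bT\bT'}=f_{\bT}\circ f_{\bT'}$, follows from $\bG\mapsto\bT\bG\bT^T$ being an action, so the map respects the groupoid structure.
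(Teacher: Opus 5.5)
Your proposal is correct and follows the paper's proof essentially step by step: the Gram matrix transforms as $\bG\mapsto\bT\bG\bT^T$, $\mathcal{B}$ is invariant because $\det\bT=\pm1$, and the linear part is block-triangular with determinants $(\det\bL)^{L+1}$ on the loop--loop block and $(\det\bL)^E(\det\bN)^L$ on the loop--external block. The differences are that the paper derives the $(\det\bL)^{L+1}$ formula by triangularizing $\bL$ rather than quoting it, and that you additionally check $f_*\mathcal{C}=\det(f)\,\mathcal{C}$ and compatibility with composition, which the paper leaves implicit.
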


\begin{proof}
It is convenient to write the Gram matrix in eq.~\eqref{eq_Baikvpolynomial} 
as 
\begin{align}\label{BaikovDef}
	\bG(\bk_p)=	\left(
	\begin{array}{cc}
		\bG(\bk) & \bs{Q} \\
		\bs{Q}^T & \bG(\bp)
	\end{array}
	\right)\,, \text{~~~where~~~} \bs{Q}=(k_i\cdot p_j)_{1\leq i\leq L, 1 \leq j \leq E}\,,
\end{align}
and the Gram matrix and the vectors  $\bk$ and $\bp$ have been defined in eqs.~\eqref{eq:Gram_def} and~\eqref{eq:momentum_vectors}, respectively. Furthermore we defined $\bk_p=(k_1,\dots ,k_L,p_1,\dots ,p_E)^T$.

Let us denote the set of symmetric $(L+E)\times (L+E)$ matrices with one fixed block $\Gp$ as
\begin{align}
\text{sym}^{\Gp}_{L+E}:=\left\{\left(\begin{array}{cc}
	\bB_{11}&\bB_{12}\\
    \bB_{12}^T&\Gp
\end{array}\right)\in \text{sym}_{L+E}\right\}\,,
\end{align} 
where $\text{sym}_k:=\{\bB\in \mathcal{F}(\bz)^{k\times k}\,|\, \bB^T=\bB\}$, and $\mathcal{F}(\bz)$ is the field of rational functions in the Baikov variables $\bz$ with coefficients in $\mathcal{F}$.
There is a bijection between  $\text{sym}^{\Gp}_{L+E}$ and the space of $n$-dimensional vectors with $n$ as in eq.~\eqref{baikovnumervars}
\begin{align}\label{bijB}
\operatorname{v}: \text{sym}^{\Gp}_{L+E}&\rightarrow \mathcal{F}(\bz)^{n},\\\left(\begin{array}{cc}
	\bB_{11}&\bB_{12}\\
    \bB_{12}^T&\Gp
\end{array}\right)&\mapsto (\text{vech}(\bB_{11}),\text{vec}(\bB_{12}))^T\,,\notag
\end{align}
with the half-vectorization $\mathrm{vech}$ and vectorization $\mathrm{vec}$ as defined as follows.
The \textit{vectorization} is an isomorphism between the vector space of matrices $\bQ\in \mathcal{F}^{L\times E}$ and $\mathcal{F}^{LE}$ obtained by joining the column vectors
\begin{align}\label{vecdef}
\text{vec}:\bQ\mapsto (\bQ^T_1,\dots ,\bQ^T_k)\,.
\end{align}
The \textit{half-vectorization} $\text{vech}(\bG(\bk))$ of a symmetric matrix is the vectorization of the lower-triangular part of the matrix.
The Baikov variables can then be chosen as
\begin{align}\label{zdef}
\bs{z}=\operatorname{v}(\bG(\bk_p))=\big(\mathrm{vech}(\bG(\bk)),\mathrm{vec}(\bs{Q})\big)^T\,.
\end{align}
The Baikov polynomial is $\mathcal{B}(\bz) = \det \bG(\bz)$, where by abuse of notation we write $\bG(\bz)$ for the Gram matrix $\bG(\bk_p)$ expressed in our choice of Baikov variables.

Our goal is to describe twisted symmetry transformations from $\Theta_1$ to $\Theta_2$ for our choice of Baikov variables. They take the form
\begin{align}\label{trafobaikovvars}
\bz_1=\bs{\mathcal{M}}\bz_2+\bs{b}\,,\qquad\bs{\mathcal{M}}\in \mathrm{GL}\left(n,\mathbb{\mathcal{F}}\right)\,,\quad \bs{b}\in \mathcal{F}^n,
 \end{align}
such that 
\begin{itemize}
\item $\det \bs{\mathcal{M}}=\pm 1$,
\item the Baikov polynomial is invariant: $\mathcal{B}(\bz_1)=\mathcal{B}(\bz_2)$, 
\item the transformation acts as a bijective map between the active propagators, i.e., $h_{\alpha_-(i)}(\bz_1)=h_{i}(\bz_2)$, with $i \in \bd^-_{\Theta_2}$ and $\alpha_{-}$ a bijection from $\bd^-_{\Theta_2}$ to $\bd^-_{\Theta_1}$ (with $\alpha_-=\sigma^{-1}$ compared to before). 
\end{itemize}
The hyperplanes $h_i$ are defined as in eq.~\eqref{hyperplanesbaikov}.
Note that our special choice of Baikov variables in eq.~\eqref{zdef} is not a restriction. Indeed, every other choice of Baikov variables $\bz'$ is related to our choice by an invertible affine transformation, $\bz ' = \bs{\mathcal{J}}\bz + \bz_0$, with $\bs{\mathcal{J}}\in \mathrm{GL}\left(n,\mathbb{\mathcal{F}}\right)$ and $ \bz_0\in \mathcal{F}^n$. It is then easy to work out how a twisted symmetry transformation acts in the Baikov variables $\bz'$, and all conclusions remain unchanged.

Our goal is to show that every symmetry transformation in loop-momentum space gives rise to a twisted symmetry transformation in the Baikov representation. The former have been completely described in section~\ref{sec:symfeyn}, where we have shown that every symmetry transformation in loop-momentum space can be cast in the form 
\beq
\bk_p \mapsto \bT\bk_p\,,\qquad \bT = \left(	\begin{array}{cc}
		\bs{L} & \bs{M}  \\
	0& \bs{\bN}
	\end{array}\right) \,,
    \eeq
    and the matrices $\bs{L}$, $\bs{M}$ and $\bs{N}$ can be constructed using graph-theoretical input.
It is easy to see that the matrix $\bT$ acts on the Gram matrix $\bG(\bk_p)$ via the map, 
\begin{align}\label{BaikovTraf}
\bG(\bk_p)\mapsto T_G(\bG(\bk_p)) := \bs{T} \bG(\bk_p) \bs{T}^T\,,
\end{align}
The map in eq.~\eqref{BaikovTraf} is explicitly given by
\begin{align}\label{mapexpl}
    \bG(\bk)&\mapsto \bL\bG(\bk)\bL^T+\bL\bQ\bM^T+\bM\bQ^T\bL^T+\bM\bG(\bp)\bM^T\,,\notag\\
    \bQ&\mapsto \bL\bQ \bN^T+\bM\bG(\bp)\bN^T\,,\\
    \bG(\bp)&\mapsto \bN\bG(\bp)\bN^T\,.\notag
\end{align}
Note that eq.~\eqref{BaikovTraf} is insensitive to the sign of $\bT$, i.e., $\pm\bT$ induce the same symmetry transformation for the Baikov parametrization. We know from section~\ref{sec:symfeyn} that $\det \bT=\pm 1$. Hence, the determinant of the Gram matrix, and thus the Baikov polynomial, are invariant. Thus, every symmetry transformation $\bT$ in loop-momentum space induces a transformation $T_G$ on the Gram matrix which leaves the Baikov polynomial invariant. 

This by itself is not yet sufficient to conclude that the transformation $T_G$ induced by $\bT$ defines a twisted symmetry transformation in the Baikov variables. First, note that we obtain an affine transformation on the Baikov variables. More precisely, the action of $T_G$ on the Baikov variables is $\operatorname{v} \circ T_G\circ \operatorname{v}^{-1}$, and it is easy to check that this transformation is affine. We write
\begin{align}
(\operatorname{v} \circ T_G\circ \operatorname{v}^{-1})(\bz)= \bs{\mathcal{M}}\bz+\bs{b}\,,
\end{align}
for some $\bs{\mathcal{M}}\in \mathcal{F}^{n\times n}$ and a shift $\bs{b}\in \mathcal{F}^n$.
 Second, the condition that we obtain a bijection between $\bd^-_{\Theta_1}$ and $\bd^-_{\Theta_2}$ is obvious, because this condition was already satisfied in loop-momentum space.  
 
It then finally remains to show  that $\det \bs{\mathcal{M}}=\pm1$.
From eq.~\eqref{mapexpl} it becomes clear that the $\mathrm{vec}(\bs{Q})$ part of the vector in eq.~\eqref{zdef} does not receive contributions from $\mathrm{vech}(\bG(\bk))$, such that $\bs{\mathcal{M}}$ is upper block-triangular,
\begin{align}
    \label{eq:baikovM}
\bs{\mathcal{M}}=	\left(\begin{array}{cc}
	\bs{\mathcal{M}}^{(1)} & * \\
0 & 	\bs{\mathcal{M}}^{(2)}
\end{array}\right)\,,
\end{align}
with  $\bs{\mathcal{M}}^{(1)}\in\text{GL}\big(\tfrac{L(L+1)}{2},\mathcal{F}\big)$ and $\bs{\mathcal{M}}^{(2)}\in \text{GL}(EL,\mathcal{F})$.
We will not be concerned with the upper right entry, because $\det \bs{\mathcal{M}}=\det\bs{\mathcal{M}}^{(1)} \det \bs{\mathcal{M}}^{(2)}$. Equation~\eqref{mapexpl} then shows that 
\beq\bsp
    \bs{\mathcal{M}}^{(1)}\,\mathrm{vech}(\bG(\bk))&=\mathrm{vech}(\bL\bG(\bk)\bL^T)\,\\
     \bs{\mathcal{M}}^{(2)}\,\mathrm{vec}(\bQ)&=\mathrm{vec}(\bL\bQ\bN^T)\,.
    \esp\eeq
We now show how to compute the determinants of $\bs{\mathcal{M}}^{(1)}$ and $\bs{\mathcal{M}}^{(2)}$.

 As a starting point, consider an $\mathcal{F}$-vector space $V$ of (finite) dimension $d$ and $T_V:V\rightarrow V$ a linear operator acting on it. The determinant of a linear operator is the determinant of the operator represented in some basis of the vector space. The determinant is invariant under similarity transformations, and hence basis-independent.  Furthermore, for two isomorphic spaces $V\simeq W$, any linear map on $V$ is similar to an induced linear map on $W$.
Let us consider the linear operator $T:\mathcal{F}^{E\times L}\rightarrow \mathcal{F}^{E\times L}$ given by 
\begin{align}\label{matrixtrafo}
T(\bQ) = \bN \bQ \bL^T\,, \quad \text{ with } \quad \bN\in \text{GL}(E,\mathcal{F}),\,\bL\in \text{GL}(L,\mathcal{F})\,.
\end{align}
On the corresponding $EL$-dimensional vector it acts like $\text{vec}\circ T\circ \text{vec}^{-1}$, where the image of $\textrm{vec}(\bQ)$ is explicitly given by
\begin{align}
	\text{vec}(T(\bQ))=\text{vec}(\bL \bQ \bN^T)=	(\bN\otimes \bL) \text{vec}(\bQ)=  \bs{\mathcal{M}}^{(2)}\text{vec}(\bQ)\,.
\end{align}
Clearly,
$\det(\bs{\mathcal{M}}^{(2)})=\det(\bN\otimes \bL)=\det(\bN)^L \det(\bL)^E$, and we have seen in section~\ref{sec:symfeyn} that the determinants of $\bL$ and $\bN$ are $\pm 1$. Hence $\det(\bs{\mathcal{M}}^{(2)})=\pm1$.

Similarly, we consider a transformation on the half-vectorization $\text{vech}(\bG(\bk))$ for $\bG(\bk)\in \mathcal{F}^{L\times L}$ as
\begin{align}
\text{vech}(\bG(\bk))\mapsto (\bL\otimes \bL)^+\text{vech}(\bG(\bk))=  \bs{\mathcal{M}}^{(1)}\text{vech}(\bG(\bk))\,,
\end{align}
where $(\bL\otimes \bL)^+$ is the action of the tensor product on the symmetric subspace.

To compute $\det(\bL\otimes \bL)^+$, we first triangularize $\bL$. Over a field $\mathcal{F}$ with $\mathbb{C}\subseteq \mathcal{F}$, every matrix is triangularizable, so we write
\begin{align}
\bL = \bU \bs{D} \bU^{-1}, \qquad 
\bs{D} =
\begin{pmatrix}
\lambda_1 & *        & \cdots & * \\
0         & \lambda_2 & \cdots & * \\
\vdots    & \ddots   & \ddots & \vdots \\
0         & \cdots   & 0      & \lambda_L
\end{pmatrix}\,, \qquad \det\bU=1.
\end{align}
The matrix $\bs{D}$ acts on a vector space $V_L$ and we denote its standard basis by $\{\bs{e}_1,\dots \bs{e}_L\}$.
The operator $D$ then acts on a basis vector $\bs{e}_j$ as
\begin{align}
\bD\,(\bs{e}_j) = \sum_{a\le j} d_{aj}\,\bs{e}_a\,\quad d_{jj}=\lambda_j, \quad \forall j\,.
\end{align}
It is a standard fact that $\bD\otimes \bD$ is also upper triangular. Then from acting on the tensor product basis $\{\bs{e}_i\otimes \bs{e}_j\}$ (in its standard ordering) of $V_L\otimes V_L$, which yields
\begin{align}\label{tensorprod}
(\bD\otimes \bD)(\bs{e}_i\otimes \bs{e}_j)
= \sum_{a\le i,\; b\le j} d_{ai}d_{bj}\,\bs{e}_a\otimes \bs{e}_b \,,
\end{align}
we see that the diagonal entries are given by $d_{ii}d_{jj}=\lambda_{i}\lambda_{j}$

Next, let us decompose the basis into a symmetric and antisymmetric part. We perform a change of basis such that 
\begin{align}\label{decompsym}
		V_L\otimes V_L=\text{Span}\Big(& \left\{\frac{1}{2}(\bs{e}_i\otimes \bs{e}_j+\bs{e}_j\otimes \bs{e}_i); i
        \leq j \right\} \cup\left\{\frac{1}{2}(\bs{e}_i\otimes \bs{e}_j-\bs{e}_j\otimes \bs{e}_i); i <j \right\}\Big)\,.
		\end{align}
We order the symmetric basis vectors $f_{ij}=\frac{1}{2}(\bs{e}_i\otimes \bs{e}_j+\bs{e}_j\otimes \bs{e}_i)$, $i\le j$,  lexicographically by the pairs $(i,j)$, i.e. $(i,j)<(i^\prime,j^\prime)$ if $i<i^\prime$ or $i=i^\prime \land j<j^\prime$.
As we know the action of $\bD$ on the pairs $\bs e_i\otimes \bs e_j$ we find
\begin{align}
( \bD\otimes  \bD)(\bs e_i\otimes \bs e_j\pm \bs e_j\otimes \bs e_i)
=
\sum_{a\le i,\ b\le j} d_{ai}d_{bj}\,\bs (\bs e_a\otimes \bs e_b
\pm
\bs e_b\otimes \bs e_a)\,.
\end{align}
First notice from this expression that the corresponding matrix in this basis is block diagonal  
\begin{align}
 \bs{D} \otimes \bs{D}=	\left(\begin{array}{cc}
 		(\bD\otimes \bD)^+ & 0\\
 		0 & (\bD\otimes \bD)^-
 	\end{array}\right)\,.
 	\end{align}
Then notice that 
 \begin{align}
( \bD\otimes  \bD)f_{ij}
=
\sum_{a\le i,\ b\le j} d_{ai}d_{bj}\,\bs (\bs e_a\otimes \bs e_b
\pm
\bs e_b\otimes \bs e_a)=\sum_{a\le i,\ b\le j} d_{ai}d_{bj}\,f_{\min(a,b),\max(a,b)}\,.
\end{align}
 One can show that
\begin{align}
a\leq i\land b\leq j\land i\leq j\implies (\min(a,b),\max(a,b))\leq (i,j)\,.
\end{align}
Hence, on the RHS, we only sum over basis elements that appear before (or are equal to) $f_{ij}$ in the ordering and
the matrix $(\bs{D} \otimes \bs{D})^+$  is upper triangular with diagonal entries $\lambda_i\lambda_j$.
 Now we can easily compute the determinant
    \begin{align}\det (\bs{D} \otimes \bs{D})^+=\prod_{i=1}^L\lambda^2_i \prod_{j=1}^L\lambda_j  \prod_{l<j}\lambda_{l}=\prod_{i=1}^L \lambda^2_i \prod_{j=1}^L\lambda_j^{L-1}=\prod_{i=1}^L\lambda_i^{L+1}=\det(\bs{D})^{L+1}\,.
    \end{align}
Hence,
  \begin{align}
 	\det   (\bs{\mathcal{M}}^{(1)})
&=\det((\bD\otimes \bD)^+)=\det(\bs{D})^{L+1}=\det(\bs{L})^{L+1}=\pm1\notag\,,
 		\end{align}  
        and so we see that every  symmetry transformation from $\Theta_1$ to $\Theta_2$ in loop-momentum space induces a twisted symmetry transformation in Baikov variables.

\end{proof}

Let us denote by $\rho$ the map from $\Sym(\Theta_1,\Theta_2)$ to $\Sym_{\mathcal{C}}(\Theta_1,\Theta_2)$ obtained from Proposition~\ref{lemmaequivalence}. 
Let us now discuss what happens in the converse direction. We start from a symmetry transformation $f\in\Sym_{\mathcal{C}}(\Theta_1,\Theta_2)$. We know that $f$ determines a bijection $\sigma:\bd_{\Theta_1}^-\to \bd_{\Theta_2}^-$, and from the construction in section~\ref{sec:symfeyn} we obtain a symmetry transformation $\bT_{\!\sigma}\in\Sym(\Theta_1,\Theta_2)$ in loop-momentum space that has the equivalent effect on the propagators. Using the map $\rho$, $\bT_{\!\sigma}$ determines in turn an element $f_{\sigma}:=\rho(\bT_{\!\sigma})\in\Sym_{\mathcal{C}}(\Theta_1,\Theta_2)$. We may then form the quantity $g:=f_{\sigma}^{-1}\circ f\in \Aut_{\mathcal{C}}(\Theta_2)$, which is a symmetry transformation from the sector $\Theta_2$ to itself, and by construction it leaves both the Baikov polynomial and all the linear forms $h_i$ with $i\in \bd_{\Theta_2}^-$ invariant. However, $g$ is not forced to be trivial, and symmetry transformations that leave both the twist and the linear forms invariant form a subgroup $\mathbb{A}_{\mathcal{C}}(\Theta_2) $ of $\Aut_{\mathcal{C}}(\Theta_2)$.
In other words, we see that every element of $\Sym_{\mathcal{C}}(\Theta_1,\Theta_2)$ can be written in the form $f=f_{\sigma}\circ g$, for some $\sigma\in\mathbb{S}(\Theta_1,\Theta_2)$ and $g\in \mathbb{A}_{\mathcal{C}}(\Theta_2)$. Said differently, we have
\beq
\Sym_{\mathcal{C}}(\Theta_1,\Theta_2) \simeq \mathbb{S}(\Theta_1,\Theta_2)\times \mathbb{A}_{\mathcal{C}}(\Theta_2)\,.
\eeq

It is currently an open question what precisely the structure of $\mathbb{A}_{\mathcal{C}}(\Theta_2)$ is, or if it can be described in a universal fashion. However, its action on our family of integrals can be described. To understand this, we start by noting that we can always pick a basis of master integrals without numerators. The existence of such a basis follows directly from the the Lee-Pomeransky representation and the associated relative twisted cohomology group. We know that the relative twisted cohomology group is generated by $\delta$-forms and non-relative cocycles. The $\delta$-forms arise from propagators raised to the power $\nu_i=0$ in eq.~\eqref{LeePomeransky}, because
\beq
x_i^{\nu_i-1} = \frac{1}{\nu_i}\delta(x_i) + \mathcal{O}(\nu_i^0)\,,
\eeq
and the pole cancels against the divergent $\Gamma$-function in the prefactor in eq.~\eqref{LeePomeransky}. Numerator factors correspond to propagators raised to powers $\nu_i<0$ and give rise to derivatives of $\delta$-functions (cf., e.g.,~ref.~\cite{Chen:2023eqx}). Since we can find a basis of the relative twisted cohomology group that only involves $\delta$-functions, but not their derivatives, we conclude that there is a basis of master integrals where no propagators are raised to powers $\nu_i<0$, i.e., a basis of master integrals without numerator factors.

If we now pick such a basis of master integrals and write all basis integrals in the Baikov representation, we can easily see that the integrand of a master integral from the sector $\Theta$ only involves the Baikov polynomial and linear forms $h_i$ with $i\in\bd_{\Theta}^-$ raised to negative powers, but no additional polynomials in the numerator. The Baikov polynomial and the linear forms are (by definition) invariant under $\mathbb{A}_{\mathcal{C}}(\Theta)$, and hence all master integrals from the sector $\Theta$ are invariant under $\mathbb{A}_{\mathcal{C}}(\Theta)$. We thus see that $\mathbb{A}_{\mathcal{C}}(\Theta)$ acts in a trivial way on the integrals from this sector, and this trivial action is manifest in a basis of master integrals without numerators.

\paragraph{The loop-by-loop Baikov representation.}
So far we have only discussed the democratic Baikov representation. Let us conclude by making some comments about what happens in the loop-by-loop Baikov representation~\cite{Frellesvig:2017aai,Frellesvig:2024ymq}.

The loop-by-loop Baikov representation can be obtained from the democratic Baikov representation by integrating out some of the {irreducible scalar products (ISPs)} \cite{Jiang:2023qnl}. As a consequence, some (but not all) of the symmetries of the integrals may already become manifest at the level of the twist. Accordingly, it can happen that the dimension of the twisted cohomology group from a loop-by-loop Baikov representation is less than the dimension of the twisted cohomology groups obtained from the democratic one. 

At the same time, it can always happen that the twist includes a factor given by an ISP, which regulates possible poles in this variable. In such a case the twisted cohomology group includes differentials where this ISP appears in the denominator, i.e., acts as a genuine propagator. Such differentials can be associated with super-sectors, see also the discussion in ref.~\cite{Pogel:2025bca}.   In particular, even if the democratic Baikov representation is free of super sectors, they might appear in the loop-by-loop representation. This leads to an increase of the dimension of the twisted cohomology group from a loop-by-loop representation compared to the democratic one. 

Each element of the twisted cohomology group in the democratic Baikov representation induces an element of the twisted cohomology group in the loop-by-loop representation (see e.g., refs.~\cite{Frellesvig:2019kgj,Frellesvig:2024ymq} for a discussion of how this works for differentials involving the ISPs that are integrated out). In that sense, the twisted symmetry transformations acting on the cohomology group from the democratic representation induce an action on the one from the loop-by-loop representation. It would be interesting to see if this induced action is always associated to a twisted symmetry transformation in the loop-by-loop approach. We checked some loop-by-loop representations of the three-loop equal-mass banana family, where this was indeed the case.

\subsection{Summary}
\label{sec:sec5_summary}

Let us briefly summarize the results of this section and what we have learned about the different representations of Feynman integrals. We have considered the set of symmetry transformations between two sectors $\Theta_1$ and $\Theta_2$ in the loop-momentum (LM), Baikov (B) and Lee-Pomeransky (LP) representations. They are:\footnote{For the loop-momentum representation, we could also include the Lorentz transformations.} 
\beq\begin{split}
    \Sym^{\textrm{LM}}(\Theta_1,\Theta_2) &\,\simeq \mathbb{S}(\cG_1,\cG_2) \times \mathbb{Z}_2^c\times \mathbb{O}\big(\overline{\bG_1}^{\perp},\overline{\bG_2}^{\perp}\big)\,,\\
        \Sym_{\mathcal{C}}^{\textrm{B}}(\Theta_1,\Theta_2) &\,\simeq \mathbb{S}(\cG_1,\cG_2)\times\mathbb{A}_{\mathcal{C}}(\Theta)\,,\\ 
        \Sym_{\gamma_F}^{\textrm{LP}}(\Theta_1,\Theta_2) &\,\simeq \mathbb{S}(\cG_1,\cG_2) \times\big(S_{P-P_2}\rtimes D(P-P_2,\mathbb{R}_+)\big)\,,
        \end{split}\eeq
        where $\Sym^{\textrm{LM}}(\Theta_1,\Theta_2)=\Sym(\Theta_1,\Theta_2)$ is the set of symmetry transformations in loop-momentum space introduced in section~\ref{sec:symfeyn}.
        We see that in all cases the groupoid of symmetry transformations takes the form $\mathbb{S}(\cG_1,\cG_2) \times \mathbb{X}$, where $\mathbb{X}$ is a set of transformations specific to the representation leaving the integrands trivially invariant, at least when working with an appropriately chosen set of basis integrals. Since in the subsequent section, we will be concerned with how the symmetry transformations act on a basis of integrals or cocycles, we may assume that they have been chosen in such a way that the action of $\mathbb{X}$ is trivial.  The part that acts non-trivially instead is universal across all representations, and it is given by the set $\mathbb{S}(\cG_1,\cG_2)$ of permutations that relate the Lee-Pomeransky polynomials of the two sectors. It is this set of permutations that canonically encodes the symmetry groupoid of a family of Feynman integrals, irrespective of the representation used to define the integrals. We will therefore from now on refer to $\mathbb{S}(\cG_1,\cG_2)$ as the \emph{set of symmetry transformations from $\Theta_1$ to $\Theta_2$}. If $G$ is the Feynman graph that defines the top sector of the family, we denote this universal part of the groupoid of symmetry transformations by 
        \beq
        \Sym_G(\Theta_1,\Theta_2) \simeq \mathbb{S}(\cG_1,\cG_2)\,.
        \eeq
        In the case of symmetry transformations from a sector $\Theta$ to itself, we use the notation
        \beq
        \Aut_G(\Theta) = \Sym_G(\Theta,\Theta)\,.
        \eeq

        Let us conclude by making a comment. We can easily extend our analysis to include also the Feynman parameter representation. Indeed, it follows from the proof in section~\ref{sec:LP_sym} that any element of $\Sym_G(\Theta_1,\Theta_2)$ will map the Symanzik polynomials of the sectors into each other. Conversely, every permutation that maps the Symanzik polynomials into each other will lie in $\Sym_G(\Theta_1,\Theta_2)$. Hence, we see that $\Sym_G(\Theta_1,\Theta_2)$ is also the canonical set of symmetry transformations between the sectors in the Feynman parameter representation, in agreement with Lemma~\ref{symmTransEquivDef}.

\subsection{Example: 3-loop banana integrals}\label{sec3loopbanana}
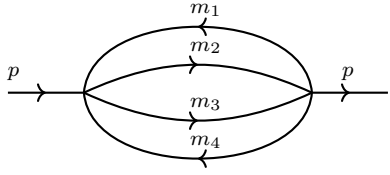
\begin{figure}[!th]
\centering
\begin{tikzpicture}
\coordinate (llinks) at (-2.5,0);
\coordinate (rrechts) at (2.5,0);
\coordinate (links) at (-1.5,0);
\coordinate (rechts) at (1.5,0);
\begin{scope}[very thick,decoration={
    markings,
    mark=at position 0.5 with {\arrow{>}}}
    ] 
\draw [-, thick,postaction={decorate}] (links) to [bend right=25]  (rechts);
\draw [-, thick,postaction={decorate}] (links) to [bend right=-25]  (rechts);

\draw [-, thick,postaction={decorate}] (rechts) to [bend right=85]  (links);
\draw [-, thick,postaction={decorate}] (llinks) to [bend right=0]  (links);
\draw [-, thick,postaction={decorate}] (rechts) to [bend right=0]  (rrechts);
\end{scope}
\begin{scope}[very thick,decoration={
    markings,
    mark=at position 0.5 with {\arrow{>}}}
    ]
\draw [-,  thick,postaction={decorate}] (rechts) to  [bend left=85] (links);
\end{scope}
\node (d1) at (0,1.1) [font=\scriptsize, text width=.2 cm]{$m_1$};
\node (d2) at (0,0.6) [font=\scriptsize, text width=.2 cm]{$m_2$};
\node (d3) at (1.4,-0.15) [font=\scriptsize, text width=3 cm]{$m_3$};
\node (d4) at (1.4,-.65) [font=\scriptsize, text width=3 cm]{$m_4$};
\node (p1) at (-2.0,.25) [font=\scriptsize, text width=1 cm]{$p$};
\node (p2) at (2.4,.25) [font=\scriptsize, text width=1 cm]{$p$};
\end{tikzpicture}
\caption{The three-loop banana graph $G_{\text{ban}}$.}
\label{fig:banana}
\end{figure}
As an illustration of the discussion above, we consider (the maximal cut of) the three-loop banana integral (see figure~\ref{fig:banana}) in $D=2-2\eps$ dimensions, defined by

\begin{equation}
    \label{eq:ban_int_def}
    I_{\nu_1,\ldots,\nu_9}=e^{3\gamma_E\eps}\int\left(\prod_{a=1}^{3}\frac{d^Dk_a}{i\pi^{\frac{D}{2}}}\right) \frac{1}{D_1^{\nu_1}\,D_2^{\nu_2}\,D_3^{\nu_3}\,D_4^{\nu_4}\,D_5^{\nu_5}\,D_6^{\nu_6}\,D_7^{\nu_7}\,D_8^{\nu_8}\,D_9^{\nu_9}}\,,
\end{equation}
with the propagators
\begin{align}\label{3looppropchoice}
    &D_1=k_1^2-m_1^2\,,\qquad D_2=k_2^2-m_2^2\,,\qquad D_3=(k_1-k_3)^2-m_3^2\,,\nonumber\\
    & D_4=(k_2-k_3-p)^2-m_4^2\,,\qquad D_5=k_3^2\,,\qquad D_6=k_3\cdot p\,,\\
    & D_7=k_1\cdot p\,,\qquad D_8=k_2\cdot p\,,\qquad D_9=k_1\cdot k_2\,.\nonumber
\end{align}
We require $\nu_i\le 0$ for $i\ge 5$, and we consider the sector  $\Theta_{\text{ban}}=(1,1,1,1,1,0,0,0,0)$, corresponding to the graph in figure~\ref{fig:banana}.
This integral was recently studied in detail in ref.~\cite{Duhr:2025kkq,Pogel:2025bca}, and we follow closely the conventions and notations of ref.~\cite{Duhr:2025kkq}. 
After solving the IBP identities using \texttt{LiteRed}~\cite{Lee:2012cn}, we identify 15 master integrals, 11 of which are in the sector $\Theta_{\text{ban}}$ but not in lower sectors. For simplicity, we will in the following focus on the maximal cut. As a basis of master integrals ${\bs I} = (I_1,\ldots,I_{11})^T$ we choose
\begin{equation}
\begin{aligned}
\label{bananbasis}
    I_1 & =     I_{1,1,1,1,0,0,0,0,0}\,, \qquad &
    I_2 &=     I_{2,1,1,1,0,0,0,0,0}\,, \qquad &
    I_3 & =     I_{1,2,1,1,0,0,0,0,0}\,,\\
    I_4 &=     I_{1,1,2,1,0,0,0,0,0}\,,\qquad& 
    I_5  &=     I_{1,1,1,2,0,0,0,0,0}\,, \qquad &
    I_{6} & =     I_{1,1,1,1,-1,0,0,0,0}\,, \\
    I_{7} & =     I_{1,1,1,1,0,-1,0,0,0}\,,\qquad &
    I_{8}  &=     I_{1,1,1,1,0,0,-1,0,0}\,,\qquad &
    I_{9} & =     I_{1,1,1,1,0,0,0,-1,0}\,, \\
    I_{10}  &=     I_{1,1,1,1,0,0,0,0,-1}\,,\qquad&
    I_{11} & =     I_{3,1,1,1,0,0,0,0,0} \,. \qquad & &
\end{aligned} 
\end{equation}

For generic values of the masses $m_i$, the symmetry group of the sector $\Theta_{\text{ban}}$ is trivial.
Therefore,  we focus from here on on the equal-mass case, where $m_i^2=m^2$, $1\le i\le 4$. 
It is easy to see that the group of symmetry transformations of the top sector is
\beq
\Aut_{G_{\text{ban}}}(\Theta_{\text{ban}})\simeq S_4\,,\quad \textrm{ if } m_i^2=m^2, 1\le i\le 4\,.
\eeq
This symmetry is easy to identify at the level of the Feynman graph in figure~\ref{fig:banana}. Previously we have argued that $\Aut_{G_{\text{ban}}}$ is the part of the symmetry group of the sector $\Theta_{\text{ban}}$ that is universal across all representations. Let us explicitly illustrate that we can identify the $S_4$ permutation symmetry from every representation.
As a starting point, we note that $S_4$ is generated by the transposition $\tau=(14)$ and the cyclic permutation $\sigma=(1234)$, subject to the relations,
\beq\label{eq:S4_presentation}
\tau^2 = \sigma^4 = (\tau\sigma)^3 = \id\,.
\eeq
This is a presentation of the group $S_4$, i.e., every group generated by two elements modulo the relations in eq.~\eqref{eq:S4_presentation} is necessarily isomorphic to $S_4$.
Hence, it is enough to identify these two special symmetry transformations and their relations for every representation.

Let us start by discussing the Lee-Pomeransky and Feynman parameter representations. The Symanzik polynomials for the equal-mass banana family are 
\beq\bsp
\mathcal{U}_{\text{ban,em}} &\,=x_1x_2x_3+x_1x_2x_4+x_1x_3x_4+x_2x_3x_4\,,\\
\mathcal{F}_{\text{ban,em}}&\,=(-p^2)\,x_1x_2x_3x_4+m^2\,(x_1+x_2+x_3+x_4)\,\mathcal{U}_{\text{ban,em}}\,.
\esp\eeq
It is easy to see that both Symanzik polynomials, and thus also the Lee-Pomeransky polynomial $\cG_{\text{ban,em}} = \mathcal{F}_{\text{ban,em}}+\mathcal{U}_{\text{ban,em}}$, are invariant under any permutation of the four Feynman parameters. Hence
\beq
\mathbb{G}(\cG_{\text{ban,em}})\simeq S_4\,.
\eeq

Let us now turn to the loop-momentum representation. Using the strategy introduced in section~\ref{sec:symfeyn}, we can lift the two permutations $\tau$ and $\sigma$ to symmetry transformations in loop-momentum space. If we use the notation $\bk_{\bp} = (k_1,k_2,k_3,p)^T$, we find, 
\beq\bsp
\bk_{\bp}\mapsto \bT_{\tau}\bk_{\bp}\textrm{~~~and~~~}
\bk_{\bp} \mapsto \bT_{\sigma}\bk_{\bp}\,,
\esp\eeq
with
\beq
\bT_{\tau} = \left(
\begin{array}{cccc}
 0 & 1 & -1 & -1 \\
 0 & 1 & 0 & 0 \\
 -1 & 1 & 0 & -1 \\
 0 & 0 & 0 & 1 \\
\end{array}
\right) \textrm{~~~and~~~} 
\bT_{\sigma} = \left(
\begin{array}{cccc}
 0 & 1 & -1 & -1 \\
 -1 & 0 & 0 & 0 \\
 0 & 0 & -1 & -1 \\
 0 & 0 & 0 & 1 \\
\end{array}
\right)\,.
\eeq
It is easy to check that these matrices satisfy
\beq
\bT_{\tau}^2=\bT_{\sigma}^4 = (\bT_{\tau}\bT_{\sigma})^3 = \mathds{1}\,.
\eeq
Hence, the group of matrices generated by $\bT_{\tau}$ and $\bT_{\sigma}$ is isomorphic to $S_4$.

Finally, let us discuss the Baikov representation. If we choose the Baikov variables to be the nine propagators defined in eq.~\eqref{3looppropchoice}, then the Baikov polynomial for the maximal cut of the equal-mass banana integral is (we put $p^2=1$ without loss of generality)
\begin{align}\label{3LoopBaikov}
\mathcal{B}^{\text{m.c.}}_{\text{ban},\mathrm{em}}&=4 m^4 \left(z_5-z_6^2\right)-m^2 \left[2 z_5^2+z_5 \left(4 \left(-z_6 z_7+z_6+z_7^2\right)+2\right)+(2 z_6+1)^2\right]\notag\\
&-4 m^2 z_8^2 (z_5+2 z_6+1)+4 m^2 (z_6+1) z_8 (z_5+2 z_6+1)+4 z_9^2 \left(z_6^2-z_5\right)\\
&+2 z_9 [4 z_7 z_8 (z_5+z_6)+(z_5+2 z_6+1) (z_5-2 z_6 z_7)\notag\\
&-2 z_5 (z_6+1) z_8]+[z_7 (z_5+2 z_6-2 z_8+1)-z_5 z_8]^2\,.\notag
\end{align}
Since we are working on the maximal cut, the Baikov polynomial only depends on the 5 variables $\bs{z}=(z_5,\ldots,z_9)$. It is straightforward to check that the Baikov polynomial is invariant under the following transformations
\beq\bsp
\mathcal{B}^{\text{m.c.}}_{\text{ban},\mathrm{em}}(\bs{\mathcal{M}}_{\tau}\bz+\bs{b}_{\tau})&\,=\mathcal{B}^{\text{m.c.}}_{\text{ban},\mathrm{em}}(\bz)\,,\\
\mathcal{B}^{\text{m.c.}}_{\text{ban},\mathrm{em}}(\bs{\mathcal{M}}_{\sigma}\bz+\bs{b}_{\sigma})&\,=\mathcal{B}^{\text{m.c.}}_{\text{ban},\mathrm{em}}(\bz)\,,
\esp\eeq
with 
\begin{equation}
\begin{aligned}
  \bs{\mathcal{M}}_{\tau}=&\left(
\begin{array}{ccccc}
 0 & 0 & 2 & -2 & -2 \\
 0 & 0 & -1 & 1 & 0 \\
 0 & -1 & 0 & 1 & 0 \\
 0 & 0 & 0 & 1 & 0 \\
 -\frac{1}{2} & -1 & 0 & 0 & 0 \\
\end{array}
\right)\,,\qquad &  \bs{b}_\tau&=\left(
\begin{array}{ccccccccc}2 m^2+1\\-1\\-1\\0\\\frac{1}{2} (2 m^2-1)\end{array}\right)\,,\\
\bs{\mathcal{M}}_{\sigma}&=\left(
\begin{array}{ccccc}
 1 & 2 & 0 & 0 & 0 \\
 0 & -1 & 0 & 0 & 0 \\
 0 & 0 & 0 & -1 & 0 \\
 0 & -1 & 1 & 0 & 0 \\
 \frac{1}{2} & 1 & 0 & -1 & -1 \\
\end{array}
\right)\,, \qquad & \bs{b}_\sigma&=\left(
\begin{array}{ccccccccc}1\\-1\\0\\0\\\frac{1}{2} \end{array}\right)\,.
\end{aligned}
\end{equation}
If we define 
\beq
f_{\tau}(\bs{z}) = \bs{\mathcal{M}}_{\tau}\bs{z}+\bs{b}_{\tau}\,,\qquad 
f_{\sigma}(\bs{z}) = \bs{\mathcal{M}}_{\sigma}\bs{z}+\bs{b}_{\sigma}\,,
\eeq
then we see that
\beq
f_{\tau}^2 = f_{\sigma}^4 = (f_{\tau}f_{\sigma})^3=\id\,,
\eeq
and so $f_{\tau}$ and $f_{\sigma}$ again generate a group isomorphic to $S_4$.

To conclude, we see that, irrespective of the integral representation used to identify the symmetry transformations of the top sector of the equal-mass banana family, we find that
\beq\label{eq:ban_em_Aut}
\Aut_{G_{\text{ban}}}(\Theta_{\text{ban}})\simeq S_4\,.
\eeq

\section{Symmetry transformations within a given sector}
\label{sec:single_sector}

In the previous sections we have seen that there is a natural notion of groupoid of (twisted) symmetry transformations of a twisted cohomology group or a family of Feynman integrals. If we focus on the special case of symmetry transformations from a sector to itself, then these symmetry transformations form a group, and in our case the relevant groups are all finite. In this section we study this group using tools from the theory of finite groups.

\subsection{The symmetry group of a sector of a twisted cohomology group}
\label{subsec:symmgrouptwisted}

Let us start by discussing the twisted symmetry transformations from a sector $\Theta$ of a twisted cohomology group to itself. 
In the case relevant to Feynman integrals, they form a finite group $G=\Aut_G(\Theta,\bs{s})\simeq \mathbb{G}(\cG(\cdot,\bs{s}))$, with $\cG$ the Lee-Pomeransky polynomial of the sector $\Theta$ and $\bs{s}$ the vector of external kinematic data. 
Without loss of generality, we may assume $\Theta$ to be the top sector of the family (otherwise, simply restrict to the family defined only by the integrals from this sector).

By definition, $G$ is a subgroup of the group of twisted symmetry transformations of the associated twisted cohomology group. We then obtain an action of $G$ on $\cH := H^P(\mathbb{C}^P\setminus\Sigma,D_+,\nabla_{\omega})$. If we fix a basis $\{\varphi_k: k=1,\ldots,N\}$ of $\cH$ (with $N$ the dimension of the twisted cohomology group), then we obtain a matrix representation of $G$. More precisely, if $f\in G$, then its action on the basis is
\beq\label{eq:Dsf}
\varphi_k \mapsto f^*\varphi_k =  \big(\bD_{f}\big)_{kl}\,\varphi_l\,,
\eeq
where $\bD_{f}\in\cF^{N\times N}$ is an $N\times N$ matrix with entries in $\cF$ (note that the representation matrices may depend on the kinematic point $\bs{s}$). It satisfies
\beq\label{eq:D_rep}
\bD_{\textrm{id}} = \mathds{1} \textrm{~~~and~~~}\bD_{f_1f_2} = \bD_{f_1}\bD_{f_2}\,,
\eeq
for $f_1,f_2\in G$.
Thus, we may associate to every sector  a group of matrices that encode the twisted symmetry transformations of that sector.

Since $G$ is a finite group, we know on general grounds that we may decompose the representation $\bD$ into irreducible representations of $G$,
\beq\label{eq:irrep_dec_2}
\bD=\bigoplus_R m_R\bD^{(R)}\,,
\eeq
where the direct sum runs over all irreducible representations $R$ of $G$ and $m_{R}$ is the multiplicity of the irreducible representation $R$ in $\bD$.
Equivalently, we may write $\cH$ as a direct sum,
\beq\label{eq:irrep_dec_1}
\cH= \bigoplus_{R} \cH_R\,,\qquad \dim \cH_R = m_R\,d_R\,,
\eeq
where elements of $\cH_R$ transform in the irreducible representation $R$, and $d_R$ is the dimension of $R$.
 We can use standard group theory arguments to construct projectors onto the irreducible subspaces $\cH_R$:
\beq\label{eq:Reynolds}
\bP_{R} = \frac{d_R}{|G|} \sum_{f\in G}\chi_R(f)^*\,\bD_{f}\,,
\eeq
with $\chi_R(f)=\operatorname{Tr}(\bD^{(R)}_f)$ the character of $\bD^{(R)}$.

The group $G$ also acts on the twisted homology group, and on the dual (co-)homology groups. If we fix bases of all relevant (co-)homology groups, 
then $G$ acts on the basis of the twisted homology group via  representation matrices which we denote by $\bA_{f}\in \mathbb{C}(e^{i\pi \mu_k})^{N\times N}$ . Similarly, we obtain representation matrices $\bs{\check{D}}_f\in \cF^{N\times N}$ and $\bs{\check{A}}_f\in \mathbb{C}(e^{i\pi  \mu_k})^{N\times N}$ for the dual groups (we use the notation for the bases introduced in section~ \ref{introtwisted}):
\beq\bsp\label{eq:AAD}
\check{\varphi}_k &\,\mapsto f^*\check{\varphi}_k =  \big(\bs{\check{D}}_{f}\big)_{kl}\,\check{\varphi}_l\,,\\
\gamma_k &\,\mapsto f_*\gamma_k =  \gamma_l\,\big(\bA_{f}\big)_{lk}\,,\\
\check{\gamma}_k &\,\mapsto f_*\check{\gamma}_k =  \check{\gamma}_l\,\big(\bs{\check{A}}_{f}\big)_{lk}\,,
\esp\eeq
and the action on the twisted cohomology group is given in eq.~\eqref{eq:Dsf}.
Each of these matrices defines a representation of $G$ (cf.~eq.~\eqref{eq:D_rep}):
\beq\bsp
&\bs{\check{D}}_{\textrm{id}} = \mathds{1}\,,\qquad \bs{\check{D}}_{f_1f_2} = \bs{\check{D}}_{f_1}\bs{\check{D}}_{f_2}\,,\\
&\bA_{\textrm{id}} = \mathds{1}\,,\qquad \bA_{f_1f_2} = \bA_{f_1}\bA_{f_2}\,,\\
&\bs{\check{A}}_{\textrm{id}} = \mathds{1}\,,\qquad \bs{\check{A}}_{f_1f_2} = \bs{\check{A}}_{f_1}\bs{\check{A}}_{f_2}\,.
\esp\eeq
Without loss of generality we may pick bases aligned with the decompositions into irreducible representations, i.e., we assume that the basis vectors transform in some irreducible representation.

While these matrices define a genuine matrix representation of $G$, there is a difference in the way that $G$ acts on the homology and cohomology groups. We can see from eqs.~\eqref{eq:Dsf} and ~\eqref{eq:AAD} that $G$ acts on the cohomology group and its dual from the right, and it acts on the homology group and its dual from the left. This is a direct reflection of how pushforwards and pullbacks behave under composition:
\beq
(f_1f_2)_* = f_{1*}f_{2*}\,,\qquad (f_1f_2)^* = f_2^*f_1^*\,.
\eeq
We can pair the basis elements to form the period matrix $\bP$ and its dual $\bs{\check{P}}$, as well as the cohomology and homology intersection matrices $\bC$ and $\bH$ (see section~\ref{sec:summary}). The invariance of the pairings under twisted symmetry transformations in eqs.~\eqref{eq:invarianceP} and~\eqref{prop1} implies the relations: 
\begin{align}\label{matrixinvcond}
   \bs{{D}}_{f}\bP=\bP\bA_f\,,\quad     &\bs{\check{D}}_{f}\check{\bP}=\check{\bP}\bs{\check{A}}_f\,,\quad 
   \bC=\bs{{D}}_{f}\bC\bs{\check{D}}^T_{f}\,,\quad    \bH=\bA_f^T\bH \bs{\check{A}}_f\,.
\end{align}

We may ask how each of these representations decomposes into irreducible representations of $G$. Since the period and intersection matrices have full rank, it is easy to see that the representations on all the relevant groups are related. 
For example, the representation $\bA_f$ on the twisted homology group is equivalent to the one on the cohomology group,
\beq
\bA_f = \bP^{-1}\bD_f\bP\,.
\eeq
Instead, the representation on the dual groups are equivalent to the contragredient representation of $\bD$:
\beq\bsp
\label{similarityrep}
\bs{\check{D}}_f&\,=\bC^T\big(\bs{{D}}_f^{-1}\big)^T\big(\bC^{-1}\big)^T\,,\\
\bs{\check{A}}_f&\,=\big((\bP^{-1})^T\bH\big)^{-1}(\bD_f^{-1})^T(\bP^{-1})^T\bH\,.
\esp\eeq
 For finite groups, all irreducible representations are unitary, which implies that their contragredient representations are the same as the complex conjugate, e.g., $(\bD^{(R)-1})^T = \bD^{(R)*}$. We then see that $\bA$ admits the same decomposition into irreducible representations as $\bD$ in eq.~\eqref{eq:irrep_dec_2}, while the representations $\bs{\check{A}}$ and $\bs{\check{D}}$ on the dual groups admit the same decomposition as in eq.~\eqref{eq:irrep_dec_2}, but with $\bD^{(R)}$ replaced by $\bD^{(R)*}$. 

A standard argument from group theory implies that if we have an invariant perfect bilinear pairing between two spaces on which $G$ acts, then the pairing between any two vectors is zero, unless one of the following two conditions holds:
\begin{itemize}
    \item they transform in equivalent irreducible representations, if the group acts on the two spaces from different sides (i.e., on one space it acts from the right, and on the other from the left),
    \item they transform in irreducible representations contragredient to one another, if the  the group acts on the two spaces from the same side.
\end{itemize}
We know from section~\ref{sec:symandtwist} that the period and intersection pairings are invariant under twisted symmetry transformations, and so we immediately conclude that this result applies to our case. For example, we then have $
C_{ij} = \langle\varphi_i|\check{\varphi}_j\rangle =0$, unless $\varphi_i$ and $\check{\varphi}_j$ transform in the contragredient irreducible representation. These arguments apply in precisely the same way to the homology intersection matrix $\bH$ and the period matrices $\bP$ and $\bs{\check{P}}$. Hence, we see that in appropriately chosen bases the intersection and period matrices are block-diagonal, with the blocks corresponding to the decomposition into the irreducible representations. 

In applications, we typically deal with differential forms that depend on some external parameters $\bs{s}$ (the external kinematic data in the case of Feynman integrals), and we assume that the symmetry group $G$ is the same for a small connected subset in the parameter space (we will discuss the situation when the group changes when we change the values of the parameters in section~\ref{sec:kinematic_dependence}). In general, also the representation matrices will be functions of $\bs{s}$. However, after rotation to a basis aligned with the decompositions into irreducible representations, the representation matrices are block diagonal, and on each block we have a copy of a representation matrix $\bD^{(R)}$ corresponding to an irreducible representation $R$. The latter are defined purely group-theoretically, and are independent of $\bs{s}$. We thus conclude that we can find a basis aligned with the decomposition into irreducible representations where all representation matrices are independent of $\bs{s}$. It is easy to see from eq.~\eqref{matrixinvcond} that in such a basis the matrix $\bOmega$ appearing in the differential equation satisfied by the period matrix $\bP$ (cf.~eq.~\eqref{deqperiod}) commutes with the representation,
\beq
\bOmega\bD_f = \bD_f\bOmega\,,\qquad \textrm{for all } f\in G\,.
\eeq
Schur's lemma then immediately implies that $\bOmega$ must be block-diagonal as well (but the differential equation may still mix different subsectors).

To conclude, we see that, if the bases are chosen appropriately, all period and intersection matrices are block-diagonal, in line with the decomposition into irreducible representations. It follows that we can always study the contributions from a given irreducible representation independently from the others. For example, all linear (IBP) relations,  differential equations or twisted Riemann bilinear relations will be separately satisfied within each irreducible subspace $\cH_R$.

\subsection{The symmetry group of a sector of a family of integrals}\label{sec:decomposition}

So far we have focused on the group theory aspects related to the twisted symmetry transformations acting on the (co-)homology groups and their duals. We now focus on the group $G$ of symmetry transformations in a fixed sector of a family of integrals. 

We focus on the case of a family of Feynman integrals.
The elements of $G$ are permutations of the Feynman parameters, and $G$ is a subgroup of the group of twisted symmetry transformations of the sector. From the discussion in section~\ref{sec:twistedsymparam}, we know that it does not matter which twisted cohomology theory we study, and we focus on the Lee-Pomeransky representation for concreteness. 

Note that $G$ acts as a group of matrices. Every matrix group has the \emph{determinant representation},
\beq
D_{\det}: G\to \mathbb{C}^{\times}\,,\quad \bM \mapsto \det \bM\,.
\eeq
This representation is obviously one-dimensional, and therefore irreducible. In our case, $G$ is a group of permutations, and the determinant coincides with the signature of the permutation. We stress, however, that depending on the group, the determinant representation may coincide with the trivial representation. Indeed, if all matrices have determinant $+1$ (or, in our case, if $G$ is a subgroup of the group of permutations of positive signature), then the determinant representation is trivial.

The determinant representation plays a special role. 
By definition, there is a surjective map
\beq
\mathcal{I}_{\gamma_F}: \cH \to \cV_{\gamma_F},\qquad \varphi \mapsto \langle\varphi|\gamma_F]\,.
\eeq
It is still an open question how to precisely describe the kernel of this map, but it is known that symmetry transformations can provide elements in the kernel. To see this, recall that the integration pairing is zero unless both the cycle and cocycle transform in the same irreducible representation. By definition of $G=\Aut_{\gamma_F}(\Theta)$, $\gamma_F$ transforms in the determinant representation, and so the integration pairing is zero unless $\varphi$ transforms in the determinant representation as well, i.e., $\varphi\in\cH_G$, with $\cH_G$ the subspace of $\cH$ transforming in the determinant representation,
\beq
\cH_G:= \big\{\varphi\in \cH: f^*\varphi = \det(f)\,\varphi\,, \forall f\in G\big\}\,.
\eeq
Hence, the kernel of $\mathcal{I}_{\gamma_F}$ must (at least) contain all those cocycles that do not lie in $\cH_G$. This subspace can also be described explicitly: we know that the intersection pairing on $\cH$ is invariant, and from standard group theory arguments we know that there is a decomposition
\beq
\cH = \cH_G \oplus \cH_G^{\perp}\,,
\eeq
where $\cH_G^{\perp}$ is the orthogonal complement of $\cH_G$ for the intersection pairing.
Hence, we conclude that 
\beq
\cH_G^{\perp}\subseteq \operatorname{Ker} \mathcal{I}_{\gamma_F}\,.
\eeq
In particular, this means that the restriction of $\mathcal{I}_{\gamma_F}$ to $\cH_G$ is surjective. In other words, every element from $\cV_{\gamma_F}$ can be represented by a cocycle from $\cH_G$. 
However, while the map $\cH_G \to \cV_{\gamma_F}$ is always surjective, it is in general not injective, and so we cannot in general identify $\cH_G$ with $\cV_{\gamma_F}$ (even if we assume that the kernel is only generated by symmetry transformations). The reason is that there may still be non-trivial symmetry transformations between subsectors, so that some subsectors that are distinct in $\cH_G$ become identified in $\cV_{\gamma_F}$. Moreover, there may also be non-linear changes of variables that relate different integrals from a same family, cf.,~e.g.,~refs.~\cite{Buccioni:2023okz,Becchetti:2025rrz,Coro:2025vgn}. 

Let us discuss a scenario where we can identify the family of integrals with the subspace $\cH_G$ of a twisted cohomology group. We can focus on the top sector $\Theta_{\textrm{top}} = (1,\ldots,1)$ by focusing not on the Feynman integrals themselves, but on their maximal cuts in the top sector. This is equivalent to focusing on the associated graded space  $\operatorname{Gr}^S_{\Theta_{\textrm{top}}}\cV_{\gamma_F}$ (see eq.~\eqref{eq:grV}). Let us briefly discuss how we can associate a twisted cohomology theory to the maximal cuts in the top sector. We again work with the Lee-Pomeransky representation. In this representation, the sectors are encoded in the components of the relative boundaries $D_+$, and $\partial\gamma_F\subset D_+$. We can focus on the maximal cuts by choosing a cycle that is not related to the relative boundary $D_+$. A convenient such cycle is given by the chamber bounded by the Lee-Pomeransky polynomial in the top sector (for a discussion of how to choose cycles for cuts of Feynman integrals in the parametric representation, see refs.~\cite{Britto:2024mna,Britto:2025wzt}),
\beq
\gamma_{\cG} = \big\{\bx\in\mathbb{R}^P: \cG(\bx,\bs{s}) >0\big\}\,.
\eeq
We then have a bijection
\beq
\operatorname{Gr}^S_{\Theta_{\textrm{top}}}\cV_{\gamma_F} \simeq \cV_{\gamma_{\cG}}\,.
\eeq
Since $\partial\gamma_{\cG} \subset \Sigma$, it is not a relative cycle, and so we can describe $\cV_{\gamma_{\cG}}$ using a non-relative twisted cohomology theory,
\beq\label{eq:cH_mc}
\cH^{\textrm{m.c.}} := H^P(\mathbb{C}^P\setminus\Sigma,\nabla_{\omega})\,.
\eeq
As we will see, working with this cohomology group has several practical advantages. In particular, since $D_+=D_-=\emptyset$, the sector filtrations $\operatorname{Sec}_{\Theta^+}$ and $\operatorname{Sec}_{\Theta^-}$ are both trivial, and so there are no subsectors. Hence, there are no symmetry transformations relating different subsectors. The only relevant group is the group $G$ of symmetry transformations in the top sector $\Theta_{\textrm{top}}$. It is easy to see that we have
\beq
f_*\gamma_{\cG} = \det(f)\,\gamma_{\cG}\,,\quad\forall f\in G\,,
\eeq
because by definition elements of $G$ leave the twist in the top sector invariant.
As a consequence, the integration map $\mathcal{I}_{\gamma_{\cG}} : \cH^{\textrm{m.c.}}_G\to \cV_{\gamma_{\cG}}$ is injective, again under the assumption that the only elements in the kernel come from symmetry transformations. Combined with the surjectivity, we conclude that under the same assumptions we obtain an isomorphism, and thus
\beq\label{eq:dimV_dimHG}
\dim_{\cF} \cV_{\gamma_\cG} = \dim_{\cF} \cH_G^{\textrm{m.c.}} \,.
\eeq


\subsection{Kinematic dependence of the symmetry transformations}
\label{sec:kinematic_dependence}
So far we have only discussed symmetry transformations \emph{for a fixed value of $\bs{s}$}. In this section, we will discuss what we can say about the kinematic dependence of the symmetry transformations.

The integral family $\cV_{\gamma_F}$ and the twisted cohomology group $\cH$ depend parametrically on the external kinematic data $\bs{s}$. We should then rather think of the family of integrals and the twisted cohomology groups as vector bundles over the base defined by the external kinematic data $\bs{s}$, i.e., the Mandelstam invariants and the propagator masses. The fibers are $\cV_{\gamma_F}$ and $\cH$, respectively. Similarly, the symmetry group is a group bundle, i.e., a fiber bundle whose fiber over the point $\bs{s}$ is the symmetry group of the vector space $\cV_{\gamma_F}$ over that point. Since in this subsection we are interested in understanding how the effect of the symmetries changes with $\bs{s}$, we will indicate the dependence on the kinematic point explicitly, i.e., we will write $\cH_{\bs{s}}$, $\cV_{\gamma_F,\bs{s}}$, $G_{\bs{s}}$ and $\bD_{\bs{s}}$ for the respective fibers over the point $\bs{s}$.

It is easy to see that for most kinematic points, i.e., in the bulk of the kinematic space, the fiber will be the trivial group (because, e.g., the propagator masses must be compatible with the permutation of the propagators for a non-trivial symmetry to exist). If the group $G$ is trivial, we have $(\cH_{G})_{\bs{s}}=\cH_{\bs{s}}$ in the bulk. Note that even in that case the kernel of the integration map $\mathcal{I}_{\gamma_F}$ may be non-trivial, as the kernel may contain relations other than those generated by symmetry transformations. For simplicity, and for the sake of the discussion here, we assume for now that $\operatorname{Ker}\mathcal{I}_{\gamma_F}$ is only generated by symmetry transformations. In such a scenario, we have $\operatorname{Ker}\mathcal{I}_{\gamma_F} = 0$ in the bulk, and thus there is a bijection between the twisted cohomology group $\cH_{\bs{s}}$ and the family of integrals $\cV_{\gamma_F,\bs{s}}$.

At special points $\bs{s}=\bs{s}_0$ in the space of external kinematic data, the group fibers may be non-trivial. At such loci we may have $(\cH_G)_{\bs{s}_0}\neq \cH_{\bs{s}_0}$, and the number of master integrals will be reduced compared to the bulk. As we will discuss now, in a certain sense, the master integrals in the bulk  remember the symmetries at these special loci where the symmetry group is non-trivial. 

Consider another kinematic point $\bs{s}_1$ where the symmetry group $G_{\bs{s}_1}\subseteq G_{\bs{s}_0}$ is non-trivial. An important question is, what is the dimension of the invariant subspace $(\cH_{G})_{\bs{s}_1}$, because it encodes the number master integrals at this point (at least under our assumption that the kernel of the integration map is only generated by symmetry transformations). As we will argue now, this information is already encoded purely group-theoretically in the decomposition into irreducible representations at the point of enhanced symmetry $\bs{s}_0$ in eqs.~\eqref{eq:irrep_dec_1} and~\eqref{eq:irrep_dec_2}. 

Let us make the assumption (which is typically satisfied in applications) that the dimension of $\cH_{\bs{s}_1}$ is equal to the dimension of $\cH_{\bs{s}_0}$. We may then decompose each irreducible representation $\bD^{(R)}$ of $G_{\bs{s}_0}$ into irreducible representations of its subgroup $G_{\bs{s}_1}$,
\beq
\bD^{(R)} = \bigoplus_{R'}\mu_{RR'}\,\bs{\widetilde{D}}^{(R')}\,,
\eeq
where the sum runs over all irreducible representations $\bs{\widetilde{D}}^{(R')}$ of $G_{\bs{s}_1}$. The coefficients $\mu_{RR'}$ appearing in this decompositions are purely group-theoretical, and known as the \emph{branching coefficients} of $G_{\bs{s}_0}$ with respect to its subgroup $G_{\bs{s}_1}$. Since both $G_{\bs{s}_1}$ and $G_{\bs{s}_0}$ are finite groups, the branching coefficients are typically known in the literature, or they can easily be computed from the character tables of the groups using standard tools from group theory. It is now easy to see that we have
\beq
\dim(\cH_G)_{\bs{s}_1} =\sum_{R}m_{R,\bs{s}_0}\,\mu_{R\det}\,.
\eeq
We thus see that the number of master integrals at a point $\bs{s}_1$ of reduced symmetry is completely determined by  the decomposition into irreducible representations at a point $\bs{s}_0$ of larger symmetry, plus purely group-theoretical information coming from the relationship between the group $G_{\bs{s}_0}$ and its subgroup $G_{\bs{s}_1}$. In other words, if we understand the structure of the symmetry group at a point of enhanced symmetry $\bs{s}_0$ and how the group action on $\cH_{\bs{s}_0}$ decomposes into irreducible representations, then we can infer the dimension of $(\cH_G)_{\bs{s}_1}$, and thus in many cases also the number of master integrals, at all points with reduced symmetry.

\subsection{Symmetry transformations and canonical bases}
\label{eq:canonical_basis}

We now discuss another, and in some sense even stronger, incarnation of the fact that the master integrals in the bulk seem to remember the enlarged symmetry at a point $\bs{s}_0$. 
We know that there is (conjecturally) a distinguished basis of master integrals, namely a so-called canonical basis (see section~\ref{sec:summary_Feynman_integrals} for a review). We assume that a rotation to a canonical basis can be constructed for our family.
Throughout this section, we will follow the notations and conventions from section~\ref{sec:summary_Feynman_integrals}: $\bI(\bs{s},\eps)$ denotes a vector of master integrals for $
\cV_{\gamma_{\cG}}$ (not necessarily a canonical basis), and the rotation to a canonical basis $\bJ(\bs{s},\eps)$ is achieved via a matrix $\bU(\bs{s},\eps)$ as in eq.~\eqref{conventiontrafo}.

Assume that there is a point $\bs{s}_0$ with an enlarged symmetry group.
Our goal is to study how a canonical basis encodes the symmetries, both at the point $\bs{s}_0$ and away from it. As we have seen in the previous subsection, an important tool to achieve this is the decomposition of $\cH^{\textrm{m.c.}}$ into irreducible representations as in eq.~\eqref{eq:irrep_dec_1}. However, there is no reason to expect a priori that we may pick the canonical master integrals such that they all lie in an irreducible representation of $G_{\bs{s}_0}$.\footnote{We note that we had defined the scalars defining the vector space $\cH^{\textrm{m.c.}}$ to be the field $\cF$ of rational functions. It is well known that the rotation $\bU(\bs{s},\eps)$ to a canonical basis may involve transcendental functions. Since our conclusions do not rely on the specific choice of scalars, in the following we assume that the field of scalars has been appropriately extended to include these transcendental functions.}

\begin{lemma}\label{eq:canonical_symmetry}
    There is a basis of $\cH^{\mathrm{m.c.}}_{\bs{s}}$ that is at the same time canonical and such that at the point $\bs{s}_0$ all elements transform in irreducible representations of $G_{\bs{s}_0}$. Moreover, when working in that basis the representation matrices $\bD_{\bs{s}_0,f}$ are kinematics-independent.
\end{lemma}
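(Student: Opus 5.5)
The plan is to start from an arbitrary canonical basis $\bJ(\bs{s},\eps)=\bU(\bs{s},\eps)\bI(\bs{s},\eps)$ of $\cH^{\mathrm{m.c.}}_{\bs{s}}$, whose existence we assume, and correct it by a \emph{constant} rotation. The key observation is that the canonical form $\rd\bJ=\eps\,\bOmega_c(\bs{s})\bJ$ is preserved by any transformation $\bJ\mapsto \bs{R}\bJ$ with $\bs{R}$ independent of $\bs{s}$, since then $\bOmega_c\mapsto \bs{R}\bOmega_c\bs{R}^{-1}$. This keeps the $\eps$-factorization, the simple-pole property and the linear independence of the entries of $\bOmega_c$ as cohomology classes. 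It therefore suffices to find a constant matrix $\bs{R}$ (independent of $\bs{s}$, and also of $\eps$, or at least rational in $\eps$ only through an overall normalization) such that the entries of $\bs{R}\bJ(\bs{s}_0,\eps)$ transform in irreducible representations of $G_{\bs{s}_0}$.

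First, consider the representation $\bD_{\bs{s}_0,f}$ of $G_{\bs{s}_0}$ written in the canonical basis, i.e.\ $f^*\bJ(\bs{s}_0)=\bD^{J}_{\bs{s}_0,f}\bJ(\bs{s}_0)$. We will argue that these matrices are independent of $\eps$ (and trivially of $\bs{s}$, since we sit at $\bs{s}_0$). For this we use eq.~\eqref{eq_deltacanonical}: the intersection matrix in the canonical basis is $f(\eps)\bDelta$, with $\bDelta$ a constant rational matrix. Combined with the invariance of the intersection pairing, eq.~\eqref{matrixinvcond}, and the self-duality $\bs{\check{D}}_f(\eps)=\bD_f(-\eps)$, this gives $\bDelta=\bD^J_f(\eps)\,\bDelta\,\bD^J_f(-\eps)^T$. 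Together with the fact that $G_{\bs{s}_0}$ is finite, so that its representation matrices have finitely many possible conjugacy types and characters are algebraic numbers, this should force $\bD^J_f$ to be locally constant in $\eps$. If this route is too weak, we will instead differentiate $f^*\bJ=\bD^J_f\bJ$ along the canonical differential equation restricted to the symmetric locus. This shows that $\bD^J_f$ commutes appropriately with $\eps\bOmega_c$, and a finite-order matrix that depends rationally on $\eps$ with fixed characters can be conjugated to a constant one by a constant change of basis.

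Second, given constant representation matrices $\bD^J_{f}$, we build the projectors $\bP_R$ of eq.~\eqref{eq:Reynolds} in this basis. They are then constant matrices. We pick, inside each isotypic component $\operatorname{Im}\bP_R$, a basis adapted to a decomposition into copies of the standard irreducible matrices $\bD^{(R)}$. This yields a constant matrix $\bs{R}$ with $\bs{R}\bD^J_f\bs{R}^{-1}=\bigoplus_R m_R\bD^{(R)}_f$. Setting $\bJ'=\bs{R}\bJ$, the basis is still canonical by the remark above, it transforms in irreducible representations at $\bs{s}_0$, and the representation matrices are the purely group-theoretic $\bD^{(R)}_f$, which are kinematics-independent.

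The main obstacle is the first step: establishing that the action of $G_{\bs{s}_0}$ on a canonical basis is given by constant, i.e.\ $\eps$- and transcendental-function-free, matrices. A priori, $\bD_f$ in the basis $\bI$ is rational in $\eps$, and $\bU$ may contain transcendental functions. I would first try the rigidity argument: for a finite group, the set of representations of fixed character is a single conjugacy orbit, so a family depending continuously on $\eps$ can be trivialized. I would then check that the trivializing conjugation can be chosen constant by using the constancy of $\bDelta$ together with eq.~\eqref{matrixinvcond}.
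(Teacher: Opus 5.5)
Your second step (constant projectors $\bP_R$ built from constant representation matrices, followed by a constant rotation that preserves the canonical form) is fine and is essentially how the paper concludes. The gap is in your first step, which is the actual content of the lemma, and it is not established.

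\textbf{What has to be proved.} You dismiss the $\bs{s}$-dependence as trivial ``since we sit at $\bs{s}_0$''. But $\bs{s}_0$ is not an isolated point: $G_{\bs{s}_0}$ is constant along a positive-dimensional locus (the equal-mass locus for the banana, $y_1=y_2$ for $F_2$). ``Kinematics-independent'' means independent of the coordinates along that locus. In a general basis the matrices do depend on them; for instance $\bD^{\mathrm{em}}_\tau$ depends on $m^2$. If $\bD^J_f$ varied along the locus, the rotation $\bs{R}$ you build from $\bP_R$ would vary too, and $\bs{R}\bJ$ would no longer be canonical. So constancy along the locus is exactly what must be shown.

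\textbf{Why your arguments do not give it.}
\begin{itemize}
\item The invariance $\bDelta=\bD^J_f(\eps)\bDelta\bD^J_f(-\eps)^T$, together with finiteness of $G_{\bs{s}_0}$, cannot force constancy. Conjugating a constant solution by any $\bs{s}$-dependent $\bs{S}$ with $\bs{S}(\eps)\bDelta\bs{S}(-\eps)^T=\bDelta$ gives another solution with the same characters.
\item Rigidity of finite-group representations only produces a parameter-dependent conjugator. A constant change of basis can never turn a non-constant matrix into a constant one.
\item Differentiating the period relation $\bD^J_f\bs{\widetilde{P}}=\bs{\widetilde{P}}\bs{\widetilde{A}}_f$ along the locus gives $\rd\bD^J_f=\eps[\bOmega_c,\bD^J_f]$. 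This is a differential equation, not a commutation relation.
\end{itemize}

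\textbf{The missing ingredient} is the one the paper uses: the linear independence of the iterated integrals generated by a canonical differential equation (ref.~\cite{Duhr:2024xsy}).
\begin{itemize}
\item Start from $\widetilde{\bD}_{\bs{s}_0,f}\bs{\widetilde{P}}(\bs{s}_0,\eps)=\bs{\widetilde{P}}(\bs{s}_0,\eps)\bs{\widetilde{A}}_f(\eps)$, where $\bs{\widetilde{A}}_f$, the action on twisted cycles, is constant in $\bs{s}$.
\item Expand in $\eps$. The same independent iterated integrals appear on both sides, with function-field coefficients on the left and constant coefficients on the right.
\item This forces $\widetilde{\bD}_{\bs{s}_0,f}=\sum_k\eps^k\widetilde{\bD}^{(k)}_f$ with constant matrices $\widetilde{\bD}^{(k)}_f$.
\end{itemize}
Equivalently, in your differential-equation route, it is precisely this independence of the canonical forms modulo exact forms that lets you conclude, order by order in $\eps$, that each coefficient of $\bD^J_f$ is closed and commutes with $\bOmega_c$.

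Note also that the paper does not claim $\eps$-independence. It obtains constancy in $\bs{s}$ only and builds its constant rotation at a fixed value $\eps_0$. Your stronger claim of $\eps$-independence is not supported by any of the arguments you propose.
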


\begin{proof}
Let $\bs{\varphi}_{\bs{s}} = \big(\varphi_{\bs{s},1},\ldots,\varphi_{\bs{s},N_{\mathrm{top}}}\big)^T$ be a basis of $\cH^{\mathrm{m.c.}}_{\bs{s}}$. We can rotate it to a canonical basis $\bs{\widetilde{\psi}}_{\bs{s}}$ via
\beq
\bs{\widetilde{\psi}}_{\bs{s}} = \bU(\bs{s},\eps)\bs{\varphi}_{\bs{s}}\,.
\eeq
Let $\bs{\widetilde{J}}(\bs{s},\eps) = \langle\bs{\widetilde{\psi}}_{\bs{s}}|\gamma_{\cG}]$ be the vector of canonical master integrals in the bulk. We keep the number of elements in this vector fixed as we vary $\bs{s}$, including at the point $\bs{s}_0$ of enlarged symmetry. It satisfies a differential equation of the form
\beq
\rd \bs{\widetilde{J}}(\bs{s},\eps)=\eps\,\bs{\widetilde{\Omega}}(\bs{s})\,\bs{\widetilde{J}}(\bs{s},\eps)\,.
\eeq
Let $\bs{\widetilde{P}}(\bs{s},\eps)$ be a fundamental solution matrix of this differential equation.

We now focus on the enhanced symmetry at the point $\bs{s}_0$.
Let $\bD_{\bs{s}_0,f}$ be the matrix representation of $f\in G_{\bs{s}_0}$ in the basis $\bs{\varphi}_{\bs{s}_0}$. In our canonical basis, the group acts via the matrix representation $\bs{\widetilde{D}}_{\bs{s}_0,f} := \bU(\bs{s}_0,\eps)\bD_{\bs{s}_0,f}\bU(\bs{s}_0,\eps)^{-1}$. Since $f$ is a twisted symmetry transformation for $\cH_{\bs{s}_0}^{\mathrm{m.c.}}$, its action on the fundamental solution matrix produces another fundamental solution matrix. This implies that there are  matrices $\bs{\widetilde{A}}_{f}(\eps)$ constant in $\bs{s}$ such that for all $f\in G_{\bs{s}_0}$,
\beq\label{eq:canonicalsym}
\bs{\widetilde{D}}_{\bs{s}_0,f}\bs{\widetilde{P}}(\bs{s}_0,\eps)=\bs{\widetilde{P}}(\bs{s}_0,\eps)\bs{\widetilde{A}}_{f}(\eps)\,.
\eeq
We now argue that this last equation implies that $\bs{\widetilde{D}}_{\bs{s}_0,f}$ is independent of the kinematic point $\bs{s}_0$. First, we note that in applications the point $\bs{s}_0$ where the symmetry is enlarged is not an isolated point, but there is typically a neighborhood $U$ of $\bs{s}_0$ such that $G_{\bs{s}}\simeq G_{\bs{s}_0}$, for all $\bs{s}\in U$ (because otherwise the integral would have no kinematic dependence at all). Hence, it makes sense to talk about the functional dependence of all quantities on $\bs{s}_0$, and we may consider this point as being a generic point with this enlarged symmetry. It then follows from an argument similar to the one presented in ref.~\cite{Duhr:2024xsy} that  $\widetilde{\bD}_{\bs{s}_0,f}=\sum_k \eps^k \widetilde{\bD}_{f}^{(k)}$ with $\widetilde{\bD}_{f}^{(k)}$ constant matrices with entries in $\mathbb{C}$. 
This is a consequence of the linear independence of the iterated integrals in a canonical basis.\footnote{In the notation of ref.~\cite{Duhr:2024xsy}, we expand both sides of eq.~\eqref{eq:canonicalsym} in $\eps$. Then each term on the right-hand side admits an expansion $\sum_{w\in \mathcal{B}_\mathbb{A}}c_w J(w)$ with $c_{w}\in \mathbb{C}$ and $J(w)\in V_\mathbb{A}$. Meanwhile, terms on the left-hand side are of the form $\sum_{w\in \mathcal{B}_\mathbb{A}}\eta_w J(w)$ with $\eta \in \mathcal{F}_\mathbb{C}$. Because of the linear independence of the iterated integrals $J(w)$ it follows from $\sum_{w\in \mathcal{B}_\mathbb{A}}(\eta_w-c_w) J(w)$ that $\eta_w=c_w$ for all $w$. Hence all entries of the matrix $\tilde{\bD}_f^{(k)}$ lie in $\mathbb{C}$.} 
Hence $\widetilde{\bD}_{\bs{s}_0,f}$ is independent of the parameter $\bs{s}$: 
\beq\label{eq:dD}
\rd\widetilde{\bD}_{\bs{s}_0,f}=0\,. 
\eeq
Similarly to ref.~\cite{Duhr:2024xsy}, there exists a value $\eps_0\in \mathbb{Q}$ such that $\bs{\widetilde{D}}_{\bs{s}_0,f}(\eps_0)$ is of full rank and a representation of the symmetry group $G_{\bs{s}_0}$.
Accordingly, there exists a constant and $\eps$-independent transformation $\bT$ such that the entries of $\bP_{\!c}(\bx,\eps_0)=\bT \bs{\widetilde{P}}(\bx,\eps_0)$ transform in irreducible representations. Then, the differential equation and the representation transform as $\bs{\Omega}_c(\bs{s})=\bT\bs{\widetilde{\Omega}}_c(\bs{s})\bT^{-1}$, $\bD_{f,c}=\bT\bs{\widetilde{D}}_f(\eps_0)\bT^{-1}$, respectively. Finally, from the constancy of $\bD_{f,c}$, it follows that
\begin{align}
    [\bD_{f,c},\bs{\Omega}_c(\bs{s})]=0\,.
\end{align}
Hence $\bs{\Omega}_c(\bs{s})$ admits a block-diagonal structure, where the blocks correspond to irreducible representations.
\end{proof}

Lemma~\ref{eq:canonical_symmetry} implies that we may always pick a canonical basis (if it exists) in the bulk (where the symmetry group is trivial) such that at a point $\bs{s}_0$ the elements of this canonical basis transform in irreducible representations of $G_{\bs{s}_0}$. In other words, the canonical basis in the bulk remembers in some sense that there is a point $\bs{s}_0$ of enlarged symmetry. 
This effect becomes even stronger when looking at the intersection matrix defined from the canonical basis.

Using the notations of the previous proof, let $\bC(\bs{s},\eps)$ denote the intersection matrix defined using the original basis $\bs{\varphi}_{\bs{s}}$ of $\cH_{\bs{s}}^{\textrm{m.c.}}$, and let $\bU(\bs{s},\eps)$ be the rotation to the canonical basis aligned with the irreducible representations of $G_{\bs{s}_0}$.
In ref.~\cite{Duhr:2024xsy} it was shown that the intersection matrix in a canonical basis takes a particularly simple form, 
\begin{align}\label{eq:canonical_Delta}
\bU(\bs{s},\eps)\bC(\bs{s},\eps)\bU(\bs{s},-\eps)^T=f(\eps)\bDelta\,,  
\end{align}
where $f(\eps)$ is a rational function of $\eps$ only and $\bDelta$ is a constant matrix with entries in $\mathbb{C}$, called the \emph{canonical intersection matrix}. In particular, we see that the right-hand side of eq.~\eqref{eq:canonical_Delta} is independent of the kinematics. This very special structure of the intersection matrix is a direct consequence of working with a canonical basis. If in addition the kinematic space has a point $\bs{s}_0$ where the symmetry is enlarged, then we can say even more about the structure of the canonical intersection matrix:
\begin{proposition}
    Assume that a family of integrals has a point $\bs{s}_0$ of enlarged symmetry, and that the hypersurface arrangements of the corresponding twist change continuously between a generic point $\bs{s}$ in the bulk and $\bs{s}_0$. If the canonical basis in the bulk is chosen such that it reflects the decomposition into irreducible representations of $G_{\bs{s}_0}$, then the canonical intersection matrix in the bulk is block-diagonal, with the blocks corresponding to the irreducible representations of $G_{\bs{s}_0}$. 
\end{proposition}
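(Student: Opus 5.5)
The strategy is to combine three earlier ingredients: the constancy of the canonical intersection matrix from eq.~\eqref{eq:canonical_Delta}, the invariance of the cohomology intersection pairing under twisted symmetries from eq.~\eqref{matrixinvcond}, and Schur-type vanishing for invariant pairings. I will first fix the canonical basis $\bs{\psi}_{\bs{s}}=\bU(\bs{s},\eps)\bs{\varphi}_{\bs{s}}$ provided by Lemma~\ref{eq:canonical_symmetry}. At $\bs{s}_0$ its elements transform in irreducible representations of $G_{\bs{s}_0}$, and the representation matrices $\bD_{f,c}$ are constant. I will do the same for the dual basis, taking $\bs{\check{\psi}}_{\bs{s}}=\bU(\bs{s},-\eps)\bs{\check{\varphi}}_{\bs{s}}$ in the self-dual setting. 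The dual elements then transform in the contragredient representations, as in eq.~\eqref{similarityrep}, with $\bs{\check{D}}_{f,c}(\eps)=\bD_{f,c}(-\eps)$.

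Next I evaluate the intersection matrix of the canonical basis at $\bs{s}=\bs{s}_0$. By eq.~\eqref{matrixinvcond} it satisfies $f(\eps)\bDelta=\bD_{f,c}\,f(\eps)\bDelta\,\bs{\check{D}}^T_{f,c}$ for all $f\in G_{\bs{s}_0}$. The statement in section~\ref{subsec:symmgrouptwisted} about invariant perfect pairings then applies: entries $\Delta_{ij}$ vanish unless $\psi_i$ and $\check{\psi}_j$ transform in mutually contragredient irreducible representations. Concretely, I group the basis elements into isotypic blocks $\cH_R$ and apply Schur's lemma to each block $\Delta_{RR'}$, viewed as an intertwiner between $R$ and the dual of $R'$; this forces $\Delta_{RR'}=0$ for $R\neq R'$. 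So at $\bs{s}_0$ the canonical intersection matrix is block-diagonal along the irreducible decomposition.

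Finally, I transport this to the bulk. Eq.~\eqref{eq:canonical_Delta} says that $\bU\bC\bU^T(-\eps)=f(\eps)\bDelta$ with $\bDelta$ independent of $\bs{s}$. Provided the canonical basis is a single continuous family over the kinematic space, including $\bs{s}_0$, the matrix $\bDelta$ computed at $\bs{s}_0$ is the same matrix as in the bulk. Hence the bulk intersection matrix is block-diagonal as well.

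I expect this last step to be the main obstacle. One must justify that the intersection numbers at $\bs{s}_0$ are the limit of the bulk ones, with no jump in dimension and no degeneration of $\bU$ or $\bC$ at $\bs{s}_0$. This is exactly where the hypothesis enters that the hypersurface arrangement varies continuously between $\bs{s}$ and $\bs{s}_0$. I would argue that under this hypothesis $\dim\cH^{\mathrm{m.c.}}_{\bs{s}}$ is locally constant, and that the intersection pairing, being computed from local residue data of the arrangement, is a rational function in $\bs{s}$ regular at $\bs{s}_0$. A constant rational function's value at $\bs{s}_0$ then equals its bulk value. I would also remark that $\bs{s}_0$ may be taken generic within the locus of enlarged symmetry, as in the proof of Lemma~\ref{eq:canonical_symmetry}, so that evaluating there is legitimate.
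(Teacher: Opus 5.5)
Your proposal follows the paper's proof: the canonical intersection matrix is block-diagonal at $\bs{s}_0$ by invariance of the intersection pairing together with Schur's lemma, and this is transported to the bulk by the kinematics-independence of $f(\eps)\bDelta$ in eq.~\eqref{eq:canonical_Delta}. The paper simply takes the continuous connection between $\bs{s}$ and $\bs{s}_0$ as part of the hypothesis, so your extra regularity discussion goes beyond it; note, though, that it cannot rest on rationality alone, because the rotation $\bU(\bs{s},\eps)$ may involve transcendental functions, so the assumption actually doing the work is your explicit proviso that the canonical basis extends continuously to $\bs{s}_0$.
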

\begin{proof}
    We assume that we can connect a point $\bs{s}$ in the bulk to a point $\bs{s}_0$ of enlarged symmetry in a continuous fashion. Using the notations introduced earlier, the intersection matrix in the canonical basis $\bU(\bs{s}_0,\eps)\bC(\bs{s}_0,\eps)\bU(\bs{s}_0,-\eps)^T$ at the point $\bs{s}_0$ is block-diagonal, because the intersection pairing is zero unless the two cocycles transform in contragredient irreducible representations. However, if the basis is canonical, the intersection matrix is independent of the kinematics, and we have
    \begin{align}
 \bU(\bs{s},\eps)\bC(\bs{s},\eps)\bU(\bs{s},-\eps)^T=\bU(\bs{s}_0,\eps)\bC(\bs{s}_0,\eps)\bU(\bs{s}_0,-\eps)^T=f(\eps)\bDelta\,.   
\end{align}
Hence, $\bDelta$ is also block-diagonal.
\end{proof}

The results of this section indicate that there is a strong connection between canonical bases and the group of symmetry transformations of a sector. In particular, the structure of the irreducible representations at a point of enlarged symmetry is encoded into a canonical basis and the associated canonical intersection matrix in the bulk. We find it remarkable that the canonical bases in the bulk remember the enlarged symmetry at the point $\bs{s}_0$.


\subsection{Example: Appell \texorpdfstring{$F_2$}{F2} function}\label{ex2F}
Let us illustrate these ideas on the example of a two-variable hypergeometric function, namely the Appell $F_2$ function, which can be defined by the series:
\beq
F_2(a;b_1,b_2;c_1,c_2;y_1,y_2) = \sum_{m,n=0}^{\infty}\frac{(a)_{m+n}(b_1)_m(b_2)_n}{(c_1)_m(c_2)_n}\,\frac{y_1^m}{m!}\,\frac{y_2^n}{n!}\,.
\eeq
Hypergeometric functions are the prime examples of applications of twisted cohomology theory in pure mathematics (cf.,~e.g.,~refs.~\cite{aomoto_theory_2011}), and they admit integral representations as twisted periods. The twisted cohomology group relevant to the Appell $F_2$ function was studied in detail in ref.~\cite{goto2013} (see also ref.~\cite{Abreu_2020} for a discussion in a physics context).
For simplicity and concreteness, and in order to have a context where we can easily identify a non-trivial groupoid of twisted symmetries, we focus on the special instances where ($i=1,2$)
\beq
a= \frac{1}{2}+\eps\,,\quad b_i = \frac{1}{2}-\eps\,,\qquad c_i = 1-2\eps\,. 
\eeq
We consider the non-relative twisted cohomology group defined by the twist
\begin{align}\label{twistF2}
\Psi_{F_2}&= x_1^{-\frac{1}{2}-\eps } x_2^{ -\frac{1}{2}-\eps}  (1-x_1)^{ -\frac{1}{2}-\eps}(1-x_2)^{ -\frac{1}{2}-\eps}(1-x_1 y_1-x_2 y_2)^{ -\frac{1}{2}-\eps}\,\\
&=:H_1^{ -\frac{1}{2}-\eps} H_2^{ -\frac{1}{2}-\eps} H_3^{ -\frac{1}{2}-\eps}H_4^{ -\frac{1}{2}-\eps}H_5^{ -\frac{1}{2}-\eps}\notag\,,
\end{align}
where $H_i$ are the (linear) polynomials defining individual hyperplanes in $\mathbb{C}^2$ with coordinates $\bx=(x_1,x_2)$, and $\bs{s}=(y_1,y_2)$ are external parameters. The twisted variety is the union of hyperplanes (see figure~\ref{boundchambf2})
\beq
\Sigma_{F_2} = \{(x_1,x_2)\in\mathbb{C}^2 : H_1\cdots H_5=0\}\,,
\eeq
and we consider the twisted cohomology group $\cH = H^2(\mathbb{C}^2\setminus \Sigma_{F_2}, \nabla_{F_2})$ with $\nabla_{F_2} = \rd + \rd\!\log\Psi_{F_2}$. A basis for $\cH$ is
\begin{align}\label{f2basis}
\bs{\varphi}_{F_2}&=\left(\frac{\rd x_1 \wedge \rd x_2 }{x_2},\frac{\rd x_1 \wedge \rd x_2}{x_1},-\frac{2\rd x_1 \wedge \rd x_2}{(1-x_1) (1-x_2)},\rd x_1 \wedge \rd x_2\right)\,.
\end{align}
A basis for the twisted homology group defined by the twist in eq.~\eqref{twistF2} is given by (regularized) chambers between regulated boundaries, (cf. refs.~\cite{aomoto_theory_2011,yoshida_hypergeometric_1997,kita_intersection_1994-2,goto2014intersectionnumberstwistedperiod,Duhr:2023bku,Bhardwaj:2023vvm} for details).
Here these read
\begin{align}\label{hombasisf2}
\bs{\gamma}_{F_2}=\big(\gamma_{125},\gamma_{235},\gamma_{345},\gamma_{145}\big)\,,
\end{align}
where $\gamma_{ijk}$ is the chamber bounded by the hyperplanes $H_i$, $H_j$ and $H_k$, as depicted in figure~\ref{boundchambf2}. Our branch choice is the same as in ref.~\cite{goto2013} for parameters $0<y_1,y_2<1$.

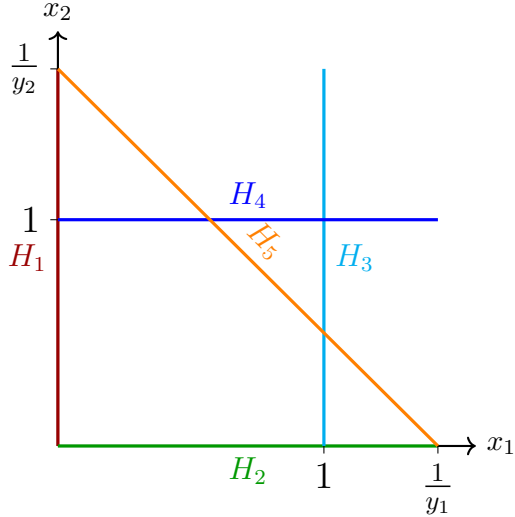
\begin{figure}[!th]
\centering
\begin{tikzpicture}
  \draw[->, thick] (0,0) -- (5.5,0) node[right] {$x_1$};
  \draw[->, thick] (0,0) -- (0,5.5) node[above] {$x_2$};

  \draw[very thick, red!60!black] (0,0) -- (0,5) node[midway, left, red!60!black] {\large $H_1$};
  \draw[very thick, green!60!black] (0,0) -- (5,0) node[midway, below, green!60!black] {\large $H_2$};
  \draw[very thick, cyan] (3.5,0) -- (3.5,5) node[midway, right, cyan] {\large $H_3$};
  \draw[very thick, blue] (0,3) -- (5,3) node[midway, above, blue] {\large $H_4$};
  \draw[very thick, orange] (0,5) -- (5,0) node[midway, above, rotate=-45, orange] {\large $H_5$};

  \draw (0,3) -- (-0.1,3) node[left] {\Large $1$};
  \draw (0,5) -- (-0.1,5) node[left] {\Large $\frac{1}{y_2}$};
  \draw (3.5,0) -- (3.5,-0.1) node[below] {\Large $1$};
  \draw (5,0) -- (5,-0.1) node[below] {\Large $\frac{1}{y_1}$};
\end{tikzpicture}

 \caption{Hyperplane arrangement defined by the twist in eq.~\eqref{twistF2}. 
}
\label{boundchambf2}
\end{figure}

If we pair elements from $\bs{\varphi}_{F_2}$ with elements from $\bs{\gamma}_{F_2}$ in all possible ways by integration, we obtain the period matrix $\bP_{F_2}$. All entries of the period matrix can be expressed in terms of Appell $F_2$ functions (see appendix~\ref{appF1} for explicit expressions). We can also fix bases for the dual homology and cohomology groups. In particular, if we pick the following basis of the dual cohomology group,
\begin{align}
\bs{\check{\varphi}}_{F_2}&=\frac{1}{(1-x_1) x_1 (1-x_2) x_2 (1-x_1 y_1-x_2 y_2)}\bs{\varphi}_{F_2}\,,
\end{align}
then we obtain a self-dual basis in the sense of ref.~\cite{Duhr:2024xsy} (see also eq.~\eqref{eq:self-duality}). From now on we work with this self-dual basis, and we assume that the dual period matrix $\bs{\check{P}}_{F_2}$ and $\bP_{F_2}$ are simply related by changing the sign of $\eps$.

Let us now discuss the twisted symmetry transformations. Since we are dealing with a non-relative twisted cohomology theory and all the exponents of the linear forms $H_i$ in the twist are non-integer, the sector filtration is trivial, and there is a single sector. Hence, the groupoid reduces to a group (because a groupoid with a single object is by definition a group). The precise structure of the group depends on the value of the external parameters $\bs{s}=(y_1,y_2)$:\footnote{There are other regions with non-trivial symmetry group where either $y_1$ or $y_2$ vanish. In those regions the twofold integrals decouple into two onefold integrals.}
\begin{itemize}
    \item In the bulk, where $(y_1,y_2)$ take generic values, the group of twisted symmetry transformations is trivial.
    \item On the locus $y_1=y_2$, the group of twisted symmetry transformations is isomorphic to $\mathbb{Z}_2=\{\id,\sigma\}$, where $\sigma$ exchanges $x_1$ and $x_2$, 
    \beq
    \sigma(x_1,x_2) = (x_2,x_1)\,.
    \eeq
        \item On the locus $y_1=-y_2$, the group of twisted symmetry transformations is isomorphic to $\mathbb{Z}_2=\{\id,\bar{\sigma}\}$, with
    \beq
    \bar{\sigma}(x_1,x_2) = (1-x_2,1-x_1)\,.
    \eeq
\end{itemize}
In the following we will only discuss the case $y_1=y_2=y$ in detail. The discussion of the case $y_1=-y_2$ is similar.

Let us start by discussing the representation of $G = \{\id,\sigma\}$ on $\cH_y := \cH_{(y,y)}$. We only need to give the representation matrix of the generator $\sigma$:
\beq\label{eq:D_F2}
\bD_{y,\sigma} = \left(
\begin{array}{cccc}
 0 & -1 & 0 & 0 \\
 -1 & 0 & 0 & 0 \\
 0 & 0 & -1 & 0 \\
 0 & 0 & 0 & -1 \\
\end{array}
\right)\,.
\eeq
It acts on the basis vector $\bs{\varphi}_{F_2}$ via
\beq
\sigma^*\bs{\varphi}_{F_2} = \bD_{y,\sigma}\bs{\varphi}_{F_2}\,.
\eeq

Let us now study the decomposition of the representation in eq.~\eqref{eq:D_F2} into irreducible representations. The group $\mathbb{Z}_2$ has two irreducible representations, the trivial representation (which sends $\sigma$ to 1) and the sign representation (with $\sign(\sigma)=-1$). The sign representation agrees with the determinant representation. An easy computation shows that 
\beq\label{eq:irreps_F2}
\cH_y =   (\cH_G)_y\oplus (\cH_{\bs{1}})_y\,,
\eeq
with 
\beq\bsp
(\cH_{G})_y &\,= \{\varphi\in\cH_y: \sigma^*\varphi = -\varphi\}\,,\\
(\cH_{\bs{1}})_y &\,= \{\varphi\in\cH_y: \sigma^*\varphi = \varphi\}\,,
\esp\eeq
and we have
\beq\label{eq:F2_dims}
\dim(\cH_{G})_y=3 \textrm{~~~and~~~} \dim (\cH_{\bs{1}})_y = 1 \,.
\eeq
We can rotate the basis such that it aligns with the decomposition into irreducible representations,
\begin{align}\label{symF2}
\bs{\varphi}_{F_2}^{\scriptscriptstyle \mathrm{irrep}}=\bT^{\scriptscriptstyle \mathrm{c}}_{\scriptscriptstyle F_2}\,\bs{\varphi}_{F_2}\,,\text{~~~~with~~~~} \bT^{\scriptscriptstyle \mathrm{c}}_{\scriptscriptstyle F_2}=\left(
\begin{array}{cccc}
 \frac{1}{2} & \frac{1}{2} & 0 & 0 \\
 0 & 0 & 1 & 0 \\
 0 & 0 & 0 & 1 \\
 \frac{1}{2} & -\frac{1}{2} & 0 & 0 \\
\end{array}
\right)\,.
\end{align}
The first three elements of $\bs{\varphi}_{\scriptscriptstyle F_2}^{\scriptscriptstyle \mathrm{irrep}}$ provide a basis for $(\cH_{G})_y$  and the last one generates $(\cH_{\bs{1}})_y$.

We may similarly rotate the basis in the homology group and the dual homology and cohomology groups to be aligned with the decomposition into irreducible representations. Let us assume that we have performed this rotation, and let us evaluate the pairings between the different groups in these new bases. For the cohomology intersection matrix, we find
\begin{equation}
\bC_{F_2}^{\scriptscriptstyle \mathrm{irrep}}=\left(\begin{array}{cc}
\bC_{F_2}^{({G})}&0 \\
0&\bC_{F_2}^{(\bs{1})}
\end{array}
\right)\,, 
\end{equation}
with 
\begin{align}
\bC_{F_2}^{(\bs{1})}&=-\frac{2}{(y-1) \left(4 \eps ^2-1\right)}\,,\\
\bC_{F_2}^{({G})}&=\left(
\begin{array}{ccc}
 \frac{2-4 y}{(y-1) \left(4 \eps ^2-1\right)} & \frac{8 (-6 y \eps +y+4 \eps )}{(y-1) (2 y-1) (1-2 \eps )^2 (2 \eps +1)} & 0 \notag\\
 -\frac{8 (6 y \eps +y-4 \eps )}{\left(2 y^2-3 y+1\right) (2 \eps -1) (2 \eps +1)^2} & \frac{16 (3 y-1) y^2+64 (y (y (37 y-63)+28)-4) \eps ^2}{(y-1) (2 y-1)^3 \left(1-4 \eps ^2\right)^2} & -\frac{8}{(2 y-1) (2 \eps +1)^2} \\
 0 & -\frac{8}{(2 y-1) (1-2 \eps )^2} & 0 \\
\end{array}
\right)\,.\notag
\end{align}
We see that the cohomology intersection matrix is block-diagonal, with the block representing the decomposition into irreducible representations in eq.~\eqref{eq:irreps_F2}. This structure is a direct consequence of the invariance of the cohomology intersection pairing under twisted symmetry transformations. Similarly, we find for the homology intersection matrix computed in bases aligned with the decomposition into irreducible representations,
\begin{equation}
\bH_{F_2}^{\scriptscriptstyle \mathrm{irrep}}=\left(\begin{array}{cc}
\bH_{F_2}^{(G)}&0 \\
0&\bH_{F_2}^{(\bs{1})}
\end{array}
\right)\, ,
\end{equation}
with
\begin{align}
\bH_{F_2}^{(\bs{1})}&=\frac{1}{8} \left(4-3 \,\mathrm{s}^2\right) \,,\\
\bH_{F_2}^{(G)}&=\frac{1}{16}\left(
\begin{array}{ccc}
 -\,\mathrm{c}^2-12 \,\mathrm{s}^2+25 & -\,\mathrm{c}^2-4 \,\mathrm{s}^2+15 & 9-\,\mathrm{c}^2 \\
 -\,\mathrm{c}^2-4 \,\mathrm{s}^2+15 & -\,\mathrm{c}^2-6 \,\mathrm{s}^2+17 & -\,\mathrm{c}^2-4 \,\mathrm{s}^2+15 \\
 9-\,\mathrm{c}^2 & -\,\mathrm{c}^2-4 \,\mathrm{s}^2+15 & -\,\mathrm{c}^2-12 \,\mathrm{s}^2+25 \\
\end{array}
\right)\,,\notag
\end{align}
where we introduced the abbreviations $\mathrm{s}:=\sec(\pi \eps)$ and  $\mathrm{c}:=\csc(\pi \eps)$. Also the period matrix becomes block-diagonal\footnote{The same holds true for the dual period matrix, because in a self-dual basis the period matrix and its dual only differ by the sign of $\eps$.}
\begin{align}
\bP^{\scriptscriptstyle \mathrm{irrep}}_{\scriptscriptstyle F_2}
=\left(
    \begin{array}{cccc}
    \cellcolor{ourlightblue}\,  \textcolor{ourblue}{\bullet} & 0   &  0 & 0 \\ 
        0& \cellcolor{ourlightblue}\,  \textcolor{ourblue}{\bullet}   & \cellcolor{ourlightblue}\,  \textcolor{ourblue}{\bullet} \, \, &     \cellcolor{ourlightblue}\,  \textcolor{ourblue}{\bullet} \\ 0& \cellcolor{ourlightblue}\,  \textcolor{ourblue}{\bullet}   & \cellcolor{ourlightblue}\,  \textcolor{ourblue}{\bullet} \, \, &     \cellcolor{ourlightblue}\,  \textcolor{ourblue}{\bullet} \\
0 &    \cellcolor{ourlightblue}\,  \textcolor{ourblue}{\bullet}&    \cellcolor{ourlightblue}\,  \textcolor{ourblue}{\bullet}& \cellcolor{ourlightblue}\,  \textcolor{ourblue}{\bullet} \, \,
    \end{array}\right)\,.
\end{align}
The explicit expressions are lengthy and not recorded here. Instead, we give the differential equation matrix, which also takes a block-diagonal form:
\begin{equation}
\rd \bP^{\scriptscriptstyle \mathrm{irrep}}_{\scriptscriptstyle F_2} = \bOmega_{F_2}^{\scriptscriptstyle \mathrm{irrep}}\bP^{\scriptscriptstyle \mathrm{irrep}}_{\scriptscriptstyle F_2}\,,\qquad \bOmega_{F_2}^{\scriptscriptstyle \mathrm{irrep}}=\left(\begin{array}{cc}
\bOmega_{F_2}^{(G)}&0 \\
0&\bOmega_{F_2}^{(\bs{1})}
\end{array}
\right)\,, 
\end{equation}
with
\begin{align}
\bOmega_{F_2}^{(\bs{1})}&=-\frac{\rd y\, (2 y \eps +y+4 \eps )}{2 (y-1) y}\,,\\
\bOmega_{F_2}^{(G)}&=\left(
\begin{array}{ccc}
 \frac{\rd y\, \left(8 y^2 \eps -y (22 \eps +3)+4 \eps \right)}{2 y \left(2 y^2-3 y+1\right)} & \frac{\rd y\, (2 y-1) (2 \eps +1)}{4 (y-1) y} & \frac{\rd y\, (10 \eps +1) (6 y \eps +y+2 \eps +1)}{2 (y-1) (2 y-1) (2 \eps +1)} \\
 -\frac{2 \rd y\, (y (4 \eps +2)+6 \eps +1)}{(1-2 y)^2 (y-1)} & -\frac{\rd y\, (y-2) (2 \eps +1)}{(y-1) (2 y-1)} & \frac{4 \rd y\, y \left(40 \eps ^2+14 \eps +1\right)}{(1-2 y)^2 (y-1) (2 \eps +1)} \\
 -\frac{2 \rd y\, \eps +\rd y\,}{y-2 y^2} & 0 & \frac{\rd y\, (2 y \eps +y+4 \eps )}{y-2 y^2} \\
\end{array}
\right)\,.\notag
\end{align}

So far we have only focused on the action of the the group of twisted symmetry transformations on the homology and cohomology groups and their duals. Let us briefly also discuss what happens if we consider a family of integrals. We consider a cycle $\gamma$ that is an (integer) linear combination of real bounded chambers. In particular, we can consider a linear combination of the basis elements from eq.~\eqref{hombasisf2}. The most general linear combination that transforms in the determinant representation of $G$ is
\beq
\gamma = m\,\gamma_{125} + n\,\gamma_{345} + l\,(\gamma_{145} + \gamma_{235})\,.
\eeq
Note that the space of such cycles is three-dimensional. This could have been anticipated, because on the cohomology side, the dimension of the irreducible subspace transforming in the determinant representation has dimension three, cf.~eq.~\eqref{eq:F2_dims}.
Let us now pick such a cycle $\gamma$, and let us consider the family of integrals $\cV_{\gamma,(y_1,y_2)}$. In the bulk, where $(y_1,y_2)$ take generic values, we have $\dim \cV_{\gamma,(y_1,y_2)} = \cH_{(y_1,y_2)}=4$. If $y_1=y_2$, however, since $\sigma^*\gamma=-\gamma$, the number of master integrals is reduced compared to the bulk, and we have (with $\cV_{\gamma,y} := \cV_{\gamma,(y,y)}$):
\beq
\dim \cV_{\gamma,y} = \dim (\cH_G)_y = 3\,.
\eeq


\subsection{Example: 3-loop banana integrals}
As a second example, we focus on the three-loop banana integral defined in eq.~\eqref{eq:ban_int_def}.
We have already seen in eq.~\eqref{eq:ban_em_Aut} that the symmetry group in the equal-mass case is 
$\Aut_G(\Theta_{\text{ban}})\simeq S_4$. Without loss of generality, we set $p^2=1$ throughout this section.

\paragraph{The representation in the equal-mass case.}
We first discuss how we can decompose the action of $G$  into irreducible 
representations for the equal-mass case. 
Let $\cH_{\mathrm{em}}$ denote the cohomology group associated with the maximal cut in the equal-mass case. It is eleven-dimensional. 
We pick a basis $\bs{\varphi}_{\mathrm{em}}$ of twisted cocycles of  $\cH_{\mathrm{em}}$ corresponding to the choice of master integrals in the unequal-mass case in eq.~\eqref{bananbasis}.
The group $G$ acts on this basis. Then the generators are represented by
\begin{align}  \tau^*\bs{\varphi}_{\mathrm{em}}=\bD^{\mathrm{em}}_{\tau} \,\bs{\varphi}_{\mathrm{em}} \text{~~~~and~~~~}\sigma^*\bs{\varphi}_{\mathrm{em}}=\bD^{\mathrm{em}}_{\sigma} \,\bs{\varphi}_{\mathrm{em}}\,,
\end{align}
where $\bD^{\mathrm{em}}_{\tau}$ and  $\bD^{\mathrm{em}}_{\sigma}$ are the representation matrices of the two generators of $\text{Aut}_G(\Theta_\text{ban})$ introduced in section~\ref{sec3loopbanana}:
\begin{align}
\bD^{\mathrm{em}}_{\tau}&=\left(
\begin{array}{ccccccccccc}
 -1 & 0 & 0 & 0 & 0 & 0 & 0 & 0 & 0 & 0 & 0 \\
 0 & 0 & 0 & 0 & -1 & 0 & 0 & 0 & 0 & 0 & 0 \\
 0 & 0 & -1 & 0 & 0 & 0 & 0 & 0 & 0 & 0 & 0 \\
 0 & 0 & 0 & -1 & 0 & 0 & 0 & 0 & 0 & 0 & 0 \\
 0 & -1 & 0 & 0 & 0 & 0 & 0 & 0 & 0 & 0 & 0 \\
 -2 m^2-1 & 0 & 0 & 0 & 0 & 0 & 0 & -2 & 2 & 2 & 0 \\
 1 & 0 & 0 & 0 & 0 & 0 & 0 & 1 & -1 & 0 & 0 \\
 1 & 0 & 0 & 0 & 0 & 0 & 1 & 0 & -1 & 0 & 0 \\
 0 & 0 & 0 & 0 & 0 & 0 & 0 & 0 & -1 & 0 & 0 \\
 \frac{1}{2}-m^2 & 0 & 0 & 0 & 0 & \frac{1}{2} & 1 & 0 & 0 & 0 & 0 \\
 0 & -\frac{\varepsilon +1}{2 m^2} & 0 & 0 & \frac{\varepsilon +1}{2 m^2} & 0 & 0 & 0 & 0 & 0 & -1 \\
\end{array}
\right)\,,
\end{align}
and
\begin{align}
\bD^{\mathrm{em}}_{\sigma}&=\left(
\begin{array}{ccccccccccc}
 -1 & 0 & 0 & 0 & 0 & 0 & 0 & 0 & 0 & 0 & 0 \\
 0 & 0 & 0 & 0 & -1 & 0 & 0 & 0 & 0 & 0 & 0 \\
 0 & -1 & 0 & 0 & 0 & 0 & 0 & 0 & 0 & 0 & 0 \\
 0 & 0 & -1 & 0 & 0 & 0 & 0 & 0 & 0 & 0 & 0 \\
 0 & 0 & 0 & -1 & 0 & 0 & 0 & 0 & 0 & 0 & 0 \\
 -1 & 0 & 0 & 0 & 0 & -1 & -2 & 0 & 0 & 0 & 0 \\
 1 & 0 & 0 & 0 & 0 & 0 & 1 & 0 & 0 & 0 & 0 \\
 1 & 0 & 0 & 0 & 0 & 0 & 1 & 0 & -1 & 0 & 0 \\
 0 & 0 & 0 & 0 & 0 & 0 & 0 & 1 & 0 & 0 & 0 \\
 0 & 0 & 0 & 0 & 0 & -\frac{1}{2} & 0 & -1 & 0 & 1 & 0 \\
 0 & -\frac{\eps +1}{2 m^2} & 0 & 0 & \frac{\eps +1}{2 m^2} & 0 & 0 & 0 & 0 & 0 & -1 \\
\end{array}
\right)\,.
\end{align}
They satisfy the relations (cf.~eq.~\eqref{eq:S4_presentation}):
\beq
(\bD_{\tau}^{\textrm{em}})^2=(\bD_{\sigma}^{\textrm{em}})^4=(\bD_{\tau}^{\textrm{em}}\bD_{\sigma}^{\textrm{em}})^3=\mathds{1}\,.
\eeq
The representation matrices depend on the kinematics. As explained in section \ref{subsec:symmgrouptwisted}, we can always find a basis in which the representation matrices are constant. In particular, we may find a decomposition into irreducible representations. Indeed, the representation $\bD^{\mathrm{em}}$ is reducible. To see this, recall that the inequivalent irreducible representations of the symmetric group $S_n$ are classified by Young tableaux with $n$ boxes (see appendix~\ref{recapgrouptheory} for a review). By computing its character, we find that $\bD^{\mathrm{em}}$ decomposes as 
\ytableausetup{aligntableaux=center,boxsize=0.5em}
\begin{equation}
    \bD^{\mathrm{em}}=2\,\,  \begin{ytableau} \ & \ \\ \ \\ \
\end{ytableau}\oplus \, \begin{ytableau}
\ & \  \\
\ & \
\end{ytableau} \oplus 3 \,\, 
\begin{ytableau}
\  \\ \\ \ \\\
\end{ytableau} \,.
\label{eq:banana_irreps}
\end{equation} 
In particular, the subspace $(\cH_{G})_{\mathrm{em}} = \{\varphi\in\cH_{\mathrm{em}}: \sigma^*\varphi = -\varphi\}$ transforming in the determinant representation is three-dimensional, and it is here simply the alternating representation of $S_4$, representated by the Young tableau $\begin{ytableau}
\  \\ \\ \ \\\
\end{ytableau}$.
By defining a new basis 
$\bs{\tilde{\varphi}}_{\mathrm{\mathrm{em}}}=\bT_{\mathrm{em}}\bs{\varphi}_{\mathrm{em}}$, with
\begin{align}\label{invsubspace}
\bT_{\mathrm{em}}=
\left(
\begin{array}{ccccccccccc}
 0 & 1 & -1 & 0 & 0 & 0 & 0 & 0 & 0 & 0 & 0 \\
 0 & 0 & 0 & 1 & -1 & 0 & 0 & 0 & 0 & 0 & 0 \\
 0 & -1 & 0 & 0 & 1 & 0 & 0 & 0 & 0 & 0 & 0 \\
 -1 & 0 & 0 & 0 & 0 & 0 & -2 & 0 & 0 & 0 & 0 \\
 0 & 0 & 0 & 0 & 0 & 0 & 1 & -1 & 1 & 0 & 0 \\
 0 & 0 & 0 & 0 & 0 & 0 & -\frac{1}{2} & 1 & 0 & 0 & 0 \\
 -2 m^2& 0 & 0 & 0 & 0 & 1 & 1 & -1 & 1 & 2 & 0 \\
 -1 & 0 & 0 & 0 & 0 & -1 & -1 & -2 & 2 & 4 & 0 \\
 1 & 0 & 0 & 0 & 0 & 0 & 0 & 0 & 0 & 0 & 0 \\
 0 & \frac{1}{4} & \frac{1}{4} & \frac{1}{4} & \frac{1}{4} & 0 & 0 & 0 & 0 & 0 & 0 \\
 0 & \frac{3 (\varepsilon +1)}{8 m^2} & -\frac{\varepsilon +1}{8 m^2} & -\frac{\varepsilon +1}{8 m^2} & -\frac{\varepsilon +1}{8 m^2} & 0 & 0 & 0 & 0 & 0 & 1 \\
\end{array}
\right)\,,
\end{align}
we find that the representation matrices split into blocks corresponding to the irreducible representations
\ytableausetup{boxsize=0.7em}
\begin{align}
\label{eq:banana_irreps_2}
\bs{T}_{\mathrm{em}}\bD^{\mathrm{em}}\bs{T}_{\mathrm{em}}^
{-1}=\scalebox{0.6}{$
\left(
\begin{array}{cccccc}
\scriptscriptstyle  \begin{ytableau} \ & \ \\ \ \\ \
\end{ytableau}&  & &&&  \\
  &
\scriptscriptstyle \begin{ytableau} \ & \ \\ \ \\ \
\end{ytableau}& && &  \\
  &  & \scriptscriptstyle\begin{ytableau}
\ & \ \\
\ & \ 
\end{ytableau}
 & && \\
  &  &  & \scriptscriptstyle\qquad \begin{ytableau}
\  \\ \\ \ \\\
\end{ytableau}&&\\
 \\
  & &  && \qquad \begin{ytableau}
\  \\ \\ \ \\\
\end{ytableau}& \\
 \\
  &  &  &&&\qquad \scriptscriptstyle \begin{ytableau}
\  \\ \\ \ \\\
\end{ytableau} \\
\end{array}
\right)$}\,.
\end{align}

\paragraph{The intersection matrix.}
Let us now discuss the structure of the intersection matrix. To this end, we first pick a basis $\bs{\check{\varphi}}_{\textrm{em}}$ for the dual cohomology group. Since we are working on the maximal cut, we may pick the dual basis such that we make the self-duality of the maximal cuts manifest, i.e., such that, if $\bI(\eps)$ and $\bs{\check{I}}(\eps)$ are the vectors of master integrals and their duals obtained from $\bs{\varphi}_{\textrm{em}}$ and $\bs{\check{\varphi}}_{\textrm{em}}$, respectively, then 
$\check{\bI}(\eps)=\bI(-\eps)$.
Let $\bC_{\textrm{em}}(m^2,\eps)$ denote the intersection matrix computed in these bases. We then see that
\beq
\bD_x^{\textrm{em}}(\eps)\bC_{\textrm{em}}(m^2,\eps)\bD_x^{\textrm{em}}(-\eps)^T = \bC_{\textrm{em}}(m^2,\eps)\,,\qquad x=\sigma,\tau\in S_4\,.
\eeq
This relation expresses the fact that the intersection pairing is invariant under twisted symmetry transformations. We can rotate the bases so that they transform in irreducible representations, and we see that the intersection matrix becomes block-diagonal,
\beq
\bs{\widetilde{C}}_{\textrm{em}}(m^2,\eps)=\bT_{\textrm{em}}(\eps)\bC_{\textrm{em}}(m^2,\eps)\bT_{\textrm{em}}(-\eps)^T=\left(
\begin{array}{cccc}
 \bs{\widetilde{C}}^{(1)}_{\mathrm{em}} & 0 & 0 & 0 \\
 0 &\bs{\widetilde{C}}^{(2)}_{\mathrm{em}}& 0 & 0 \\
 0 & 0 & \bs{\widetilde{C}}^{(3)}_{\mathrm{em}}& 0 \\
 0 & 0 & 0 & \bs{\widetilde{C}}^G_{\mathrm{em}} \\
\end{array}
\right)\,,
\eeq
where the blocks on the diagonal are given by
\beq\bsp
\bs{\widetilde{C}}^{(1)}_{\mathrm{em}} &=\frac{1}{4 m^4  (4  m^2-1)}\left(
\begin{array}{ccc}
 -2 & 0 & 1 \\
 0 & -2 & 1 \\
 1 & 1 & -2 \\
\end{array}
\right)\,,\\
 \bs{\widetilde{C}}^{(2)}_{\mathrm{em}} &=\frac{1}{6\eps^2}\left(
\begin{array}{ccc}
 1 & -\frac{1}{2} & 0 \\
 -\frac{1}{2} & \frac{1}{2} & -\frac{1}{8} \\
 0 & -\frac{1}{8} & \frac{1}{8} \\
\end{array}
\right)\,
\,,\\
\bs{\widetilde{C}}^{(3)}_{\mathrm{em}}&=\frac{1}{4\eps^2}\left(
\begin{array}{cc}
 1 & \frac{1}{2} \\
 \frac{1}{2} & 1 \\
\end{array}
\right)\,,\\
\label{eq:symmIntMatrixEM}
    \bs{\widetilde{C}}^G_{\mathrm{em}}&=-\frac{3}{16 \left(64 m^8-20 m^6+m^4\right)}\left(
\begin{array}{ccc}
 0 & 0 & -8 m^2 \\
 0 & 1 & (\widetilde{C}^G_{\mathrm{em}})_{2,3} \\
 -8 m^2 &(\widetilde{C}^G_{\mathrm{em}})_{3,2}& (\widetilde{C}^G_{\mathrm{em}})_{3,3}\\
\end{array}
\right)\,,
\esp\eeq
with 
\begin{align}
(\widetilde{C}^G_{\mathrm{em}})_{2,3}&=(\widetilde{C}^G_{\mathrm{em}})_{3,2}|_{\eps\rightarrow -\eps}=\frac{4 m^2 \left(64 m^4 (2 \eps +1)+m^2 (36 \eps -46)-9 \eps +7\right)+\eps -1}{2 \left(64 m^6-20 m^4+m^2\right)}\,,\notag\\
(\widetilde{C}^G_{\mathrm{em}})_{3,3}=& \frac{1}{4 m^4 \left(64 m^4-20 m^2+1\right)^2} \Big(8 m^2 \Big(2 m^2 \Big(4 m^2 \Big(-640 m^4+545 m^2-\\\notag
&4 \left(512 m^6-352 m^4+357 m^2-81\right) \eps ^2-151\Big)-121 \eps ^2+70\Big)+9 \eps ^2-7\Big)\notag\\
&-\eps ^2+1\Big)\,. \notag
\end{align}

\paragraph{The canonical basis.}
The rotation to a canonical form of this integral family with four unequal masses was derived in refs.~\cite{Duhr:2025kkq,Pogel:2025bca}, while a canonical basis in the equal-mass case was first obtained in ref.~\cite{Pogel:2022yat}. We start from the canonical basis in the unequal mass case, and then put all four masses equal to obtain $\bs{\tilde{\varphi}}_{\textrm{em},c}=\bU_{\mathrm{em}}\bs{\tilde{\varphi}}_{\textrm{em}}$. In the canonical basis, the group acts via the representation matrices $\bs{\widetilde{D}}^{\textrm{em}} = \bs{U}_{\textrm{em}}\bD^{\textrm{em}}\bs{U}_{\textrm{em}}^{-1}$. We find for the two generators $\tau$ and $\sigma$:
\begin{align}
\bs{\widetilde{D}}^{\mathrm{em}}_{\tau}=\left(
\begin{array}{ccccccccccc}
 -1 & 0 & 0 & 0 & 0 & 0 & 0 & 0 & 0 & 0 & 0 \\
 0 & 0 & 0 & 0 & -1 & 0 & 0 & 0 & 0 & 0 & 0 \\
 0 & 0 & -1 & 0 & 0 & 0 & 0 & 0 & 0 & 0 & 0 \\
 0 & 0 & 0 & -1 & 0 & 0 & 0 & 0 & 0 & 0 & 0 \\
 0 & -1 & 0 & 0 & 0 & 0 & 0 & 0 & 0 & 0 & 0 \\
 -1 & 0 & 0 & 0 & 0 & 0 & 0 & -2 & 2 & 2 & 0 \\
 1 & 0 & 0 & 0 & 0 & 0 & 0 & 1 & -1 & 0 & 0 \\
 1 & 0 & 0 & 0 & 0 & 0 & 1 & 0 & -1 & 0 & 0 \\
 0 & 0 & 0 & 0 & 0 & 0 & 0 & 0 & -1 & 0 & 0 \\
 \frac{1}{2} & 0 & 0 & 0 & 0 & \frac{1}{2} & 1 & 0 & 0 & 0 & 0 \\
 6 & 0 & 0 & 0 & 0 & 0 & 6 & 6 & -6 & 0 & -1 \\
\end{array}
\right)\,,
\end{align}
and
\begin{align}
  \bs{\widetilde{D}}^{\mathrm{em}}_{\sigma}=  \left(
\begin{array}{ccccccccccc}
 -1 & 0 & 0 & 0 & 0 & 0 & 0 & 0 & 0 & 0 & 0 \\
 0 & 0 & 0 & 0 & -1 & 0 & 0 & 0 & 0 & 0 & 0 \\
 0 & -1 & 0 & 0 & 0 & 0 & 0 & 0 & 0 & 0 & 0 \\
 0 & 0 & -1 & 0 & 0 & 0 & 0 & 0 & 0 & 0 & 0 \\
 0 & 0 & 0 & -1 & 0 & 0 & 0 & 0 & 0 & 0 & 0 \\
 -1 & 0 & 0 & 0 & 0 & -1 & -2 & 0 & 0 & 0 & 0 \\
 1 & 0 & 0 & 0 & 0 & 0 & 1 & 0 & 0 & 0 & 0 \\
 1 & 0 & 0 & 0 & 0 & 0 & 1 & 0 & -1 & 0 & 0 \\
 0 & 0 & 0 & 0 & 0 & 0 & 0 & 1 & 0 & 0 & 0 \\
 0 & 0 & 0 & 0 & 0 & -\frac{1}{2} & 0 & -1 & 0 & 1 & 0 \\
 8 & 0 & 0 & 0 & 0 & 2 & 14 & -2 & -10 & -8 & -1 \\
\end{array}
\right)\,,
\end{align}
with the relations (cf.~eq.~\eqref{eq:S4_presentation}):
\beq
(\bs{\widetilde{D}}_{\tau}^{\textrm{em}})^2=(\bs{\widetilde{D}}_{\sigma}^{\textrm{em}})^4=(\bs{\widetilde{D}}_{\tau}^{\textrm{em}}\bD_{\sigma}^{\textrm{em}})^3=\mathds{1}\,.
\eeq
We see that in the canonical basis, the representation matrices are kinematics-independent. 

We know that if we use a self-dual basis, the intersection matrix after rotation to the canonical basis is constant. The $11\times 11$ canonical intersection matrix $\bs{\Delta}_{\textrm{ban}}$ for the three-loop banana integrals with four unequal masses was obtained in ref.~\cite{Duhr:2025kkq}. Since this matrix is independent of the masses, it is the same in the equal-mass limit. If we pass to a basis that is aligned with the decomposition of the $S_4$ symmetry into irreducible representations in the equal-mass limit, we see that the canonical intersection matrix $\bs{\Delta}_{\textrm{ban}}$ becomes block-diagonal, with the blocks representing the decomposition into irreducible representations in eqs.~\eqref{eq:banana_irreps} and~\eqref{eq:banana_irreps_2}:
\begin{align}\label{eq:ban_canon}
\bT_{\mathrm{em}}\bDelta_{\textrm{ban}}\bT_{\mathrm{em}}^T=\left(
\begin{array}{ccccccccccc}
 1 & 0 & -\frac{1}{2} & 0 & 0 & 0 & 0 & 0 & 0 & 0 & 0 \\
 0 & 1 & -\frac{1}{2} & 0 & 0 & 0 & 0 & 0 & 0 & 0 & 0 \\
 -\frac{1}{2} & -\frac{1}{2} & 1 & 0 & 0 & 0 & 0 & 0 & 0 & 0 & 0 \\
 0 & 0 & 0 & -\frac{1}{2} & -\frac{1}{4} & 0 & 0 & 0 & 0 & 0 & 0 \\
 0 & 0 & 0 & -\frac{1}{4} & -\frac{1}{2} & 0 & 0 & 0 & 0 & 0 & 0 \\
 0 & 0 & 0 & 0 & 0 & -\frac{1}{3} & \frac{1}{6} & 0 & 0 & 0 & 0 \\
 0 & 0 & 0 & 0 & 0 & \frac{1}{6} & -\frac{1}{6} & \frac{1}{24} & 0 & 0 & 0 \\
 0 & 0 & 0 & 0 & 0 & 0 & \frac{1}{24} & -\frac{1}{24} & 0 & 0 & 0 \\
 0 & 0 & 0 & 0 & 0 & 0 & 0 & 0 & 0 & 0 & -\frac{1}{2} \\
 0 & 0 & 0 & 0 & 0 & 0 & 0 & 0 & 0 & -6 & 0 \\
 0 & 0 & 0 & 0 & 0 & 0 & 0 & 0 & -\frac{1}{2} & 0 & \frac{11}{12} \\
\end{array}
\right)\,.
\end{align}
We stress that the matrix in eq.~\eqref{eq:ban_canon} is the canonical intersection matrix in the unequal mass case, even though it reflects that decomposition into irreducible representations from the equal-mass case (in particular the rotation matrix $\bs{T}_{\textrm{em}}$ was constructed from the decomposition into irreducible representations in the equal-mass limit).

\paragraph{Irreducible representations for unequal masses.}
Let us briefly comment on other mass configurations, and in particular how the representation of $S_4$ in the equal-mass limit encodes the number of master integrals for other configurations. This can be done using the branching rules for the symmetric group, which are reviewed in appendix~\ref{recapgrouptheory}.

For the two-mass configuration $m_1=m_3=m_4$, the symmetry group is $S_3$. We find that the $S_4$ representation $\bD^{\textrm{em}}$ from the equal-mass case decomposes into irreducible representations of $S_3$ as follows:
\ytableausetup{aligntableaux=center,boxsize=0.5em}
\begin{equation}
    \bD^{\mathrm{em}}\big|_{S_3}=3\, \begin{ytableau}
\ & \  \\
\ 
\end{ytableau} \oplus 5 \, 
\begin{ytableau}
\  \\ \\ \ 
\end{ytableau} \,.
\label{eq:banana_S3}
\end{equation} 
The coefficients appearing in this decomposition are determined from the decomposition of $\bD^{\textrm{em}}$ into irreducible representation of $S_4$ in eq.~\eqref{eq:banana_irreps} and from the branching rules for its subgroup $S_3$. The determinant representation is the alternating representation $\begin{ytableau}
\  \\ \\ \ 
\end{ytableau} $ of $S_3$. We then see from eq.~\eqref{eq:banana_S3} that this configuration has five master integrals, in agreement with refs.~\cite{Maggio:2025jel,Duhr:2025ouy} and table~\ref{tab:3-loop-banana}.
\ytableausetup{aligntableaux=center, boxsize=normal}

For the 
configuration $m_1=m_3,\,m_2=m_4$, the symmetry group is $\mathbb{Z}_2\times \mathbb{Z}_2$. In this case we find that the representation from the equal-mass case decomposes as 
\begin{equation}
     \bD^{\mathrm{em}}\big|_{\mathbb{Z}_2\times \mathbb{Z}_2}=(++)\oplus 2(+-)\oplus 2(-+)\oplus 6(--) \,,
     \label{eq:banana_Z2xZ2}
\end{equation}
where the irreducible representations are labeled by pairs $(s_1,s_2)$ with $s_i=\pm$ (recall that the irreducible representations of $\mathbb{Z}_2$ just correspond to decompositions into even and odd contributions). In this notation, the determinant representation is $(--)$, and so we see from eq.~\eqref{eq:banana_Z2xZ2} that this configuration has 6 master integrals, in agreement with ref.~\cite{Maggio:2025jel} and table~\ref{tab:3-loop-banana}.

Finally, for the configuration $m_1=m_2$, the symmetry group is $\mathbb{Z}_2$. The decomposition of the representation $\bD^{\textrm{em}}$ into irreducible representations for $\mathbb{Z}_2$ reads,
\begin{equation}
    \bD^{\mathrm{em}}\big|_{\mathbb{Z}_2}=3(+)\oplus 8(-) \,.
\end{equation}
The determinant representation is the odd representation of $\mathbb{Z}_2$, and so we see that this mass configuration has 8 master integrals, in agreement with table~\ref{tab:3-loop-banana}.

\section{From group theory to topology}
\label{sec:OrbitSpaceEuler}

In the previous section, we have argued that we may apply tools from the theory of finite groups to study the symmetry group of a sector. It is well known from the representation theory of finite groups that the decomposition into irreducible representations is governed by characters (see appendix~\ref{recapgrouptheory} for a review). 
In this section, we show that we can compute the character of the representation of the twisted symmetry transformations in a sector using purely topological information.

\subsection{The Euler characteristic}
The Euler characteristic of a (co-)homology theory is defined as the alternating sum of the dimensions of the cohomology groups. In our setting, we focus on the cohomology groups of a space $X = \mathbb{C}^n\setminus \Sigma$ (i.e., we assume $D_+=D_-=\emptyset$). For example, the Euler characteristic of the twisted cohomology theory $H^n(X,\nabla_{\omega})$ is
\begin{equation}\label{eq:twisted_Euler}
    \chi(X,\nabla_{\omega}):=\sum_{k=0}^{2n}(-1)^k\, \dim H^k(X,\nabla_{\omega}) \,.
\end{equation}
Similarly, we can consider the \emph{topological} Euler characteristic of $X$, computed from the ordinary (untwisted) cohomology groups
\begin{equation}\label{eq:topo_Euler}
    \chi(X):=\sum_{k=0}^{2n}(-1)^k\, \dim H^k(X,\mathbb{C}) \,.
\end{equation}
In general, the dimensions of the ordinary and twisted cohomology groups do not agree, $\dim H^k(X,\nabla_{\omega}) \neq \dim H^k(X,\mathbb{C})$. Hence, there is no reason to expect that the twisted and topological Euler characteristics of $X$ should be the same. Remarkably, it was shown that the two concepts always agree~\cite[sec. 2.2.13]{aomoto_theory_2011}:
\beq\label{eq:twistedchi_topochi}
\chi(X,\nabla_{\omega}) = \chi(X)\,.
\eeq
For this reason, we will not distinguish the two concepts anymore, and we will only talk about the Euler characteristic $\chi(X)$ in the rest of the paper.

In the context of twisted cohomology, there is a direct connection between the Euler characteristic and the dimension of the middle cohomology $H^n(X,\nabla_{\omega})$. More precisely, if the twisted variety $\Sigma$ defines a normal crossing divisor (meaning, all its irreducible components intersect transversely) and if there is no resonance (by which we mean that the monodromy of the twist around the twisted variety does not have finite order, including at infinity), then only the middle cohomology is non-zero~\cite{aomoto_theory_2011},\footnote{The vanishing theorem generalizes to certain partially-twisted cases where $D_-\neq \emptyset~$ \cite{Agostini:2022cgv}. }
\beq
\dim H^k(X,\nabla_{\omega}) = 0\,,\quad \textrm{if } k\neq n\,.
\eeq
We refer to this result as the \emph{vanishing theorem} for the twisted cohomology theory. It is easy to see that, if the vanishing theorem holds, the dimension of the middle cohomology is given by the Euler characteristic of $X$, up to a sign,
\beq\label{eq:dim_chi}
\dim H^n(X,\nabla_{\omega}) = (-1)^n\,\chi(X)\,,\quad \textrm{if the vanishing theorem holds.}
\eeq
Note that eq.~\eqref{eq:dim_chi} has an immediate consequence, which we will need later on. The Euler characteristic $\chi(X)$ is an integer number, either positive or negative. The left-hand side of eq.~\eqref{eq:dim_chi}, however, must be positive, and so
\beq\label{eq:sign_chi}
\operatorname{sign}(\chi(X)) =(-1)^n \quad\textrm{if the vanishing theorem holds.}
\eeq

The relationship between the Euler characteristic and the dimension of the middle cohomology group has an interesting implication for families of Feynman integrals. Let us use the notations and assumptions from section~\ref{sec:decomposition}, namely we assume that (1) the symmetry group is trivial, (2) there are no additional elements in the kernel of the integration map $\mathcal{I}_{\gamma_{\cG}}$, and (3) the vanishing theorem holds. We then see that the dimension of the cohomology group $\cH^{\textrm{m.c.}}$ associated to the maximal cuts in eq.~\eqref{eq:cH_mc} is given by the absolute value of the Euler characteristic:
\beq\label{eq:cH_mc_dim}
\dim \cV_{\gamma_{\cG}} = \dim \cH^{\textrm{m.c.}} = \big|\chi\big(\mathbb{C}^P\setminus \Sigma\big)\big|\,.
\eeq
This result was first pointed out in refs.~\cite{Bitoun:2017nre,bitoun2018numbermasterintegralseuler,Mastrolia:2018uzb}.\footnote{The result of refs.~\cite{Bitoun:2017nre,bitoun2018numbermasterintegralseuler} was not obtained in the context of twisted cohomology theories, but in the context of $D$-modules for families of Feynman integrals with generic kinematics (so no symmetries) and with arbitrary propagator powers. The last point essentially identifies the corresponding cohomology theory as a non-relative twisted cohomology theory, and the genericity assumptions exclude resonance effects, so that the vanishing theorem holds.} It relates the number of master integrals in a given sector to a purely topological property of the space $X=\mathbb{C}^P\setminus\Sigma$, namely the Euler characteristic $\chi\big(\mathbb{C}^P\setminus \Sigma\big)$. The absolute value of the Euler characteristic in turn can often be computed (under some mild assumptions) using Morse theory by counting the number of critical points~\cite{Lee:2013hzt}. 

The previous considerations, however, fail in the presence of a non-trivial symmetry group $G$ of symmetry transformations for the sector. Indeed, if $G$ is non-trivial, then the number of master integrals is given by the dimension of the irreducible subspace $\cH_{G}^{\textrm{m.c.}}$ transforming in the determinant representation, cf.~eq.~\eqref{eq:dimV_dimHG}. The dimension of $\cH_{G}^{\textrm{m.c.}}$ is typically strictly less than the dimension of $\cH^{\textrm{m.c.}}$. In the remainder of this section we discuss how we can compute the dimension of $\cH_{G}^{\textrm{m.c.}}$ using purely topological information. The proof of the result will heavily rely on tools from algebraic topology, which we introduce in the next subsections.

\subsection{Some background in algebraic topology}
In this section we review some background material in algebraic topology and we prove some lemmas that are needed to understand the relationship between the number of master integrals in a given sector and the Euler characteristic in the presence of symmetries. In order to have rigorous and complete proofs of our results, we include many details. The reader not interested in these details may skip this section (at first reading).

\subsubsection{Some technical lemmas}
We start by proving some technical lemmas from linear algebra which appear repeatedly in the subsequent proofs.

\begin{lemma}\label{lemma1fromnotes}
    Let $V_i$, $i=1,2,3$, be three complex vector spaces, and $A_i:V_i\to V_i$ linear maps. Assume that we have a short exact sequence,\footnote{This simply means that for every node, the image of the map on the incoming arrow equals the kernel of the map on the outgoing arrow.}
    \beq
    \label{eq:exactSeqLemma1}
    0\longrightarrow V_1\xrightarrow{\phantom{a}\alpha\phantom{a}}V_2\xrightarrow{\phantom{a}\beta\phantom{a}}V_3\longrightarrow 0\,.
    \eeq
    Assume that the linear maps $A_i$ commute with the linear maps $\alpha$ and $\beta$,
    \beq
    A_2\alpha = \alpha A_1\,,\qquad A_3\beta=\beta A_2\,.
    \eeq
    Then $\Tr(A_2) = \Tr(A_1) +\Tr(A_3)$.
\end{lemma}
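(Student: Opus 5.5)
The plan is to reduce the statement to a block-triangular matrix computation by choosing a basis of $V_2$ adapted to the short exact sequence. We implicitly assume the spaces are finite-dimensional, since otherwise the traces are not defined; this is the setting in which the lemma will be applied, namely to (co)homology groups of finite dimension.

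First, exactness at $V_1$ means $\alpha$ is injective, so $W:=\alpha(V_1)\subseteq V_2$ is a subspace isomorphic to $V_1$. Exactness at $V_2$ gives $W=\ker\beta$, and exactness at $V_3$ says $\beta$ is surjective. Hence $\beta$ induces an isomorphism $\bar\beta: V_2/W\to V_3$.

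Next, the intertwining relation $A_2\alpha=\alpha A_1$ shows that $A_2$ preserves $W$: for $w=\alpha(v)$ we have $A_2 w=\alpha(A_1 v)\in W$. Moreover, under the identification $\alpha:V_1\xrightarrow{\sim}W$, the restriction $A_2|_W$ is conjugate to $A_1$. Since $A_2 W\subseteq W$, $A_2$ descends to a map $\bar A_2$ on $V_2/W$. The relation $A_3\beta=\beta A_2$ then gives $\bar\beta\,\bar A_2=A_3\bar\beta$, so $\bar A_2$ is conjugate to $A_3$.

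Finally, pick a basis $w_1,\ldots,w_r$ of $W$, with $w_i=\alpha(v_i)$ where the $v_i$ form a basis of $V_1$. Extend it by vectors $u_1,\ldots,u_s$ to a basis of $V_2$, so that the images $\beta(u_j)$ form a basis of $V_3$. In this basis the matrix of $A_2$ is block upper-triangular,
\[
A_2=\begin{pmatrix} [A_1] & * \\ 0 & [A_3]\end{pmatrix},
\]
where $[A_1]$ is the matrix of $A_1$ in the basis $v_i$ and $[A_3]$ is the matrix of $A_3$ in the basis $\beta(u_j)$. The lower-right block is identified with $[A_3]$ by applying $\beta$ to $A_2u_j=\sum_i c_{ij}w_i+\sum_k d_{kj}u_k$: since $\beta(w_i)=0$, this gives $A_3\beta(u_j)=\sum_k d_{kj}\beta(u_j)$ with the $d_{kj}$ being the entries of that block. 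Taking the trace of the block-triangular matrix yields $\Tr(A_2)=\Tr(A_1)+\Tr(A_3)$.

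There is no real obstacle here. The only point needing care is checking that the lower-right block really is the matrix of $A_3$, which is exactly where the relation $A_3\beta=\beta A_2$ is used.
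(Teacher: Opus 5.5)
Your proposal is correct and is essentially the paper's argument: push a basis of $V_1$ forward by $\alpha$, complete it to a basis of $V_2$ whose extra vectors map under $\beta$ to a basis of $V_3$, and read off $\Tr(A_1)$ and $\Tr(A_3)$ from the diagonal blocks of the matrix of $A_2$. The only slip is a typo in the last step, where the sum should read $\sum_k d_{kj}\,\beta(u_k)$ rather than $\sum_k d_{kj}\,\beta(u_j)$.
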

\begin{proof}
    We will prove this by constructing convenient bases for the $V_i$ and computing the traces of the matrices representing the maps $A_i$ in these bases. To this end, we start by choosing some basis $v_1,\dots ,v_{d_1}$ of $V_1$, with $d_1=\dim V_1$. Then, since $\alpha$ is injective, the vectors $\alpha(v_1),\dots ,\alpha(v_{d_1})$ are linearly independent and can be completed into a basis of $V_2$, which we denote by
    \begin{equation}
        \{\alpha(v_1),\dots ,\alpha(v_{d_1}),w_1,\dots ,w_{d_3} \} \,,
    \end{equation}
    for some $d_3\geq 0$, such that $d_2=\dim V_2=d_1+d_3$. From the exactness of the sequence in eq.~\eqref{eq:exactSeqLemma1} it readily follows that $\dim V_3=d_3$:
    \begin{equation}
        \dim V_3=\dim V_2-\dim\mathrm{Ker}\,\beta=\dim V_2-\dim \mathrm{Im}\,\alpha=\dim V_2-\dim V_1=d_3 \,.
    \end{equation}  
    Now observe that $\beta(w_1),\dots ,\beta(w_{d_3})$ form a basis of $V_3$. To see this, first note that, since $\beta$ is surjective and $\beta(\alpha(v))=0$ for all $v\in V_1$, the vectors $\beta(w_1),\dots ,\beta(w_{d_3})$ certainly generate all elements of $V_3$. Since $d_3=\dim V_3$ this already implies that they form a basis.  

    To prove the claim let us now describe the form of the matrix $M_2$ representing $A_2$ in the chosen basis of $V_2$. Generally, we can write the matrix in a block form
    \begin{equation}
        M_2=\begin{pmatrix}
            M_2^{(11)} & M_2^{(12)} \\ M_2^{(21)} & M_2^{(22)} 
        \end{pmatrix} \,,
    \end{equation}
    where we split the $d_2$ rows and columns into a block of size $d_1$ and a block of size $d_3$. Then the trace of $A_2$ can be computed as
    \begin{equation}
    \label{eq:tracesLemmaProof}
        \Tr(A_2)=\Tr(M_2)=\Tr\big(M_2^{(11)}\big)+\Tr\big(M_2^{(22)}\big) \,.
    \end{equation}
    Now notice that
    \begin{equation}
    \label{eq:A2ActionLemmaProof}
        A_2(\alpha(v_j))=\alpha(A_1(v_j))=\sum_{k=1}^{d_1}\alpha(v_k(M_1)_{kj})=\sum_{k=1}^{d_1} \alpha(v_k)(M_1)_{kj} \,,
    \end{equation}
    for $j\in\{1,\dots ,d_1\}$, where $M_1$ is the matrix representing $A_1$ in the chosen basis for $V_1$. From eq.~\eqref{eq:A2ActionLemmaProof} it in particular follows that $M_2^{(11)}=M_1$ and hence $\Tr\big( M_2^{(11)}\big)=\Tr(A_1)$. Similarly we have
    \begin{equation}
        \beta(A_2(w_j))=A_3(\beta(w_j))=\sum_{k=1}^{d_3}\beta(w_k)(M_3)_{kj} \,,
    \end{equation}
    for $j\in\{1,\dots ,d_3\}$, where $M_3$ is the matrix representing $A_3$ in the chosen basis for $V_3$. Comparing this to
    \begin{equation}
        \beta(A_2(w_j))=\beta\left( \sum_{k=1}^{d_1}\alpha(v_k) \big(M_2^{(21)}\big)_{kj}+\sum_{k=1}^{d_3}w_k \big(M_2^{(22)}\big)_{kj}\right)=\sum_{k=1}^{d_3}\beta(w_k) \big(M_2^{(22)}\big)_{kj}\,,
    \end{equation}
    and using linear independence of $\beta(w_1),\dots ,\beta(w_{d_3})$, we see that $M_2^{(22)}=M_3$ and hence $\Tr\big(M_2^{(22)}\big)=\Tr(M_3)=\Tr(A_3)$. Inserting the results for $\Tr\big(M_2^{(11)}\big)$ and $\Tr\big(M_2^{(22)}\big)$ found above into eq.~\eqref{eq:tracesLemmaProof} proves the claim.
\end{proof}

\begin{lemma}\label{lemma2fromnotes}
    Consider a long exact sequence of vector spaces,
    \beq
    0\longrightarrow V_1 \xrightarrow{\phantom{a}d_1\phantom{a}}V_2 \xrightarrow{\phantom{a}d_2\phantom{a}} V_3 \xrightarrow{\phantom{a}d_3\phantom{a}} \ldots \xrightarrow{\phantom{a}d_{r-2}\phantom{a}}V_{r-1}\xrightarrow{\phantom{a}d_{r-1}\phantom{a}}V_r \longrightarrow0\,.
    \eeq
    Assume that there are linear maps $T_i: V_i\to V_i$ that commute with the $d_i$, $T_{i+1}d_i=d_iT_i$. Then the alternating sum of the traces is zero:
    \beq
    \label{eq:trTelescopeSumLemma2}
    \sum_{i=1}^r (-1)^i\,\Tr(T_i)=0\,.
    \eeq
\end{lemma}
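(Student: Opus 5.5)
The plan is to reduce the long exact sequence to a collection of short exact sequences and then apply Lemma~\ref{lemma1fromnotes} to each of them, so that the alternating sum telescopes. For each $i$, set $K_i := \ker d_i \subseteq V_i$ (with the convention $d_r=0$, so $K_r=V_r$) and $I_i := \operatorname{Im} d_{i-1}\subseteq V_i$ (with $d_0=0$, so $I_1=0$). Exactness at $V_i$ gives $I_i = K_i$.

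The first step is to check that $T_i$ restricts to these subspaces. From $T_{i+1}d_i=d_iT_i$, if $d_iv=0$ then $d_iT_iv=T_{i+1}d_iv=0$, so $T_i(K_i)\subseteq K_i$. Likewise $T_i(d_{i-1}w)=d_{i-1}T_{i-1}w$ shows $T_i(I_i)\subseteq I_i$. For every $i$ we then have the short exact sequence
\beq
0\longrightarrow K_i \xrightarrow{\phantom{a}\iota\phantom{a}} V_i \xrightarrow{\phantom{a}d_i\phantom{a}} I_{i+1}\longrightarrow 0\,,
\eeq
where $\iota$ is the inclusion. Here $d_i$ is surjective onto its image by definition, and exactness in the middle is the definition of the kernel. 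The restricted maps $T_i|_{K_i}$ and $T_{i+1}|_{I_{i+1}}$ commute with $\iota$ and with $d_i$ by the intertwining relation. Lemma~\ref{lemma1fromnotes} therefore gives $\Tr(T_i)=\Tr(T_i|_{K_i})+\Tr(T_{i+1}|_{I_{i+1}})$.

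Next, use $K_{i+1}=I_{i+1}$ and write $\tau_i:=\Tr(T_i|_{K_i})$. The relation above becomes $\Tr(T_i)=\tau_i+\tau_{i+1}$, with boundary values $\tau_1=0$ (since $K_1=I_1=0$ by injectivity of $d_1$) and $\tau_{r+1}=0$ (since $I_{r+1}=0$). Taking the alternating sum,
\beq
\sum_{i=1}^r(-1)^i\Tr(T_i)=\sum_{i=1}^r(-1)^i(\tau_i+\tau_{i+1}) = -\tau_1+(-1)^r\tau_{r+1}=0\,,
\eeq
which is eq.~\eqref{eq:trTelescopeSumLemma2}.

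I do not expect any serious obstacle. The only care needed is with the endpoint conventions, namely $K_1=0$ from injectivity of $d_1$ and $I_{r+1}=0$, i.e.\ $d_r\equiv 0$, which corresponds to surjectivity of $d_{r-1}$. One should also note that the spaces are implicitly finite-dimensional so that the traces are defined.
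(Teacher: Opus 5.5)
Your proof is correct and follows the paper's argument: split the long exact sequence into the short exact sequences $0\to\ker d_i\to V_i\to \operatorname{Im} d_i\to 0$, apply Lemma~\ref{lemma1fromnotes} to each with the restricted maps, and telescope the alternating sum. The only difference is cosmetic. You treat the endpoints uniformly via the conventions $d_0=0$ and $d_r=0$, whereas the paper applies the short exact sequences for $i=2,\dots,r-1$ and handles the two ends separately using injectivity of $d_1$ and surjectivity of $d_{r-1}$.
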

\begin{proof}
    For $i=2,\dots ,r-1$ consider the short exact sequence
    \begin{equation}
        0\longrightarrow B_i \xrightarrow{\phantom{a}\iota\phantom{a}} V_i
        \xrightarrow{\phantom{a}d_i\phantom{a}} B_{i+1} \longrightarrow 0 \,,
    \end{equation}
    where $B_i=\im d_{i-1}=\ker d_i \subseteq V_i$, and $\iota :B_i\rightarrow V_i$ is the inclusion. Consider $v\in B_i$. Then
    \begin{equation}
        d_i(T_i(v))=T_{i+1}(d_i(v))=0 \,,
    \end{equation}
    and hence $T_i(v)\in B_i$, i.e.,  $T_i(B_i)\subseteq B_i$ and $T_i\vert_{B_i}$ provides a well-defined linear map $B_i\rightarrow B_i$ for all $i=2,\dots ,r-1$. Furthermore it is clear that
    \begin{equation}
        T_i\circ\iota=\iota \circ T_i\vert_{B_i} \,, \qquad T_{i+1}\vert_{B_{i+1}}\circ d_i=d_i\circ T_i \,.
    \end{equation}
   We are hence in the situation of Lemma~\ref{lemma1fromnotes}, which implies that
    \begin{equation}
    \label{eq:trRelProofLemma2}
        \Tr(T_i)=\Tr\big( T_i\vert_{B_i}\big)+\Tr\big( T_{i+1}\vert_{B_{i+1}}\big) \,,
    \end{equation}
    for $i=2,\dots ,r-1$. Since $d_1$ is injective and $d_{r-1}$ surjective, we furthermore have
    \begin{equation}
    \label{eq:additionalTrsProofLemma2}
        \Tr\big(T_1\vert_{B_1}) =0 \,, \qquad \Tr\big(T_r\vert_{B_r}) =\Tr\big(T_r) \,.
    \end{equation}
    Hence, by applying eq.~\eqref{eq:trRelProofLemma2} for $i=2,\dots ,r-1$, as well as eq.~\eqref{eq:additionalTrsProofLemma2} for the first and last term, we see that the alternating sum in eq.~\eqref{eq:trTelescopeSumLemma2} indeed telescopes to zero.
\end{proof}

\subsubsection{Homotopies and retractions}
In our setup, the relevant space $X=\mathbb{C}^n\setminus \Sigma$ is not compact. Many results in algebraic topology, however, apply to compact spaces. In the next section, we describe a way to replace our non-compact space $X$ by a compact space with isomorphic cohomology groups.

The main idea is that two spaces that can be continuously deformed into each other have isomorphic (co-)homology groups. This idea of continuous deformation is captured by the concept of a \emph{homotopy}. Consider two topological spaces $X$ and $Y$, and continuous maps $f,g:X\to Y$ between them. We say that $f$ and $g$ are \emph{homotopic}, $f\simeq g$, if there is a continuous function $h:[0,1]\times X\to Y$ such that $h(0,\cdot)=f$ and $h(1,\cdot)=g$. Homotopy defines an equivalence relation on continuous maps. 

The two spaces $X$ and $Y$ are said to be \emph{homotopy-equivalent}, $X\simeq Y$, if there are continuous maps $g_1:X\to Y$ and $g_2:Y\to X$ such that their compositions are homotopic to the identity maps, $g_1\circ g_2\simeq \id_Y$ and $g_2\circ g_1\simeq \id_X$. A fundamental result in algebraic topology states that if $X$ and $Y$ are homotopy-equivalent, then they have isomorphic homology and cohomology groups, i.e., $X\simeq Y$ implies $H^k(X)\simeq H^k(Y)$.\footnote{We state the result here for ordinary cohomology groups. The result also applies to twisted cohomology theories.}

Homotopy-invariance allows one, at least as far as the computation of cohomology groups is concerned, to replace $X$ by $Y$ without loosing any information. In our setup, we will encounter a special case of homotopy-equivalence where $Y$ is a subset of $X$. We say that $Y\subseteq X$ is a \emph{strong deformation retract} of $X$ if there is a continuous map $r:X\to Y$ and a continuous map $h:[0,1]\times X\to X$ such that 
\begin{itemize}
    \item  $h(0,\cdot) = \id_X$ and $h(1,\cdot)=r$,
    \item $h(s,y) = y$, for all $s\in[0,1]$ and $y\in Y$.
\end{itemize}
It is easy to see that, if $Y$ is a strong deformation retract of $X$, then $Y$ is also homotopy-equivalent to $X$ (it suffices to pick $g_1 =r $ and $g_2=\iota:Y\to X$ the inclusion). In particular, this means that if $Y$ is a strong deformation retract of $X$, then $X$ and $Y$ have isomorphic homology and cohomology groups.

\subsubsection{The compact model}\label{sec:compactmodel}
The main idea is now the following: we may replace our space $X=\mathbb{C}^n\setminus\Sigma$ by some other space that is compact and is a strong deformation retract of $X$. To this end, we will make use of Hironaka's theorem for the resolution of singularities~\cite{Hironaka1964}, as well as arguments from differential topology.  The arguments are standard, and analogous to those in ref.~\cite[section 2.2.13]{aomoto_theory_2011}. We will therefore not go into technical details.

Let us embed $X$ into projective space, i.e., we write $X=\mathbb{CP}^n\backslash \Sigma_p$ where $\Sigma_p$ includes the hypersurface at infinity. Then, there exists a finite sequence of blowups along smooth centers (see, e.g., ref.~\cite{GriffithsHarris1978} for introductory material about blowups):
\begin{align}
    \pi:\overline{W}\to \mathbb{CP}^n\,,
\end{align}
such that, with $\overline{X}=\pi^{-1}(X)$, the complement $D:=\overline{W}\backslash \overline{X}$ is a simple normal crossing divisor, meaning that there exist local coordinates $x_i$ such that locally $D$ can be described by $D=\{(x_1\dots x_k)\in X,x_1\dots x_k=0\}$. The ambient space $\overline{W}$ is a smooth variety. Note that $\pi|_{\overline{X}}:\overline{X}\rightarrow X$ is an isomorphism of algebraic varieties. 

Further, assume that we have an automorphism $f:\mathbb{CP}^n\to \mathbb{CP}^n$ such that $f(\Sigma)\subseteq \Sigma$. Then there exists a (functorially) induced smooth morphism $\widetilde{f}:\overline{W}\to\overline{W} $ (c.f. ref.~\cite[Theorem 1.0.2]{bierstone2012effectivehironakaresolutioncomplexity})
\begin{align}\label{eq:pi-f}
\pi \circ \widetilde{f}=f\circ \pi\,.
\end{align}
We can then see that $\widetilde{f}$ is bijective on $\overline{X}$ in the following way: There exists an $f^{-1}$ with lift  $\widetilde{f^{-1}}$ such that, since $\pi$ is functorial, we can compose $\pi \circ \widetilde{f^{-1}}=f^{-1}\circ \pi$. Then $\pi \circ \widetilde{f}\circ \widetilde{f^{-1}}=f\circ f^{-1}\circ \pi=\pi$. Hence on $\overline{X}$, where $\pi$ is an isomorphism, $\widetilde{f}\circ \widetilde{f^{-1}}=\text{id}$, and the lift of $f^{-1}$ is the inverse of $\widetilde{f}$ and is a smooth morphism. Hence $\widetilde{f}|_{\overline{X}}$ is an automorphism of $\overline{X}$ such that $\widetilde{f}(D)\subset D$.

The divisor $D$ might still be singular and we construct a neighborhood $T_\delta$ around $D$ such that $\partial T_\delta$ is smooth and $T_\delta$ deformation retracts to $D$.  
To achieve this it is important that, according to ref.~\cite[Theorem 1.0.2]{bierstone2012effectivehironakaresolutioncomplexity}, if $f\in G$ is a group action, then the group structure is preserved by the lift. In particular, if $f$ is of finite order $k$, $f^k=\text{id}$ then $\widetilde{f}^k=\text{id}$. This makes it possible to choose an $\widetilde{f}$-invariant function $\phi:\overline{W}\to \mathbb{R}$, with $\phi^{-1}(0)=D$, that measures the distance to $D$. Then there is an (open) tubular neighborhood for some small $\delta$
\begin{align}\label{eq:tube}
    T_\delta:=\{p\in \overline{W}, \phi(p)<\delta\}\,,
\end{align}
 such that $\widetilde{f}(T_\delta)=T_\delta$. Furthermore, $\partial T_\delta$ is a smooth hypersurface such that $X_\delta:=\overline{W}\backslash T_\delta$ is compact and there is a strong deformation retraction 
\begin{align}
    r:\overline{X}\to X_\delta\,.
\end{align}
 We do not provide an explicit discussion of the construction of $\phi$ and $T_\delta$  here. Details on tubular neighbourhoods of normal crossing divisors can be found in ref.~\cite{elzein1999topologyalgebraicvarieties} and references therein.

\subsection{Lefschetz numbers}
Consider a space $X$ of complex dimension $n$, and assume that we define a twisted cohomology theory on it with flat connection $\nabla_{\omega}$. Consider a map $f:X\to X$. We have already seen that $f$ induces via pullback a map $f^*$ between the twisted cohomology groups of $X$. In order to keep track of the precise cohomology groups, throughout this section it will be useful to denote the induced maps $f^*$ by 
\beq
H^k(f,X,\nabla_{\omega}): H^k(X,\nabla_{\omega})\to H^k(X,\nabla_{\omega})\,.
\eeq
The \emph{twisted Lefschetz number} of $f$ is defined as the alternating sum of the traces of the induced maps in cohomology:
\beq\label{eq:twisted_Lefschetz}
L(f,X,\nabla_{\omega}) = \sum_{k=0}^{2n}(-1)^k\,\Tr\big(H^k(f,X,\nabla_{\omega})\big)\,.
\eeq
Similarly, we can define the \emph{topological Lefschetz number} of $f$ as
\beq\label{eq:topo_Lefschetz}
L(f,X) = \sum_{k=0}^{2n}(-1)^k\,\Tr\big(H^k(f,X)\big)\,.
\eeq
The most famous application of the (topological) Lefschetz number is the celebrated \emph{Lefschetz fixed-point theorem}:
\begin{theorem}
Let $X$ be a compact space and $f:X\to X$ a continuous map. If $L(f,X)\neq 0$, then $f$ has at least one fixed-point in $X$.
\end{theorem}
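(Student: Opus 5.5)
We will argue by contraposition: assuming $f$ has no fixed point, we show $L(f,X)=0$. The proof follows the classical simplicial route via the Hopf trace formula, and Lemmas~\ref{lemma1fromnotes} and~\ref{lemma2fromnotes} supply the linear-algebra input.

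Throughout we assume that $X$ is a compact polyhedron, i.e., homeomorphic to the geometric realization $|K|$ of a finite simplicial complex $K$. This covers the case we actually need, namely the compact model $X_\delta$ of section~\ref{sec:compactmodel}. That space is a compact smooth manifold with smooth boundary $\partial T_\delta$, so it admits a smooth triangulation. More general compact ENRs reduce to this case by a domination argument, which we will not pursue.

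The first step is the Hopf trace formula. Let $\varphi:C_\bullet(K)\to C_\bullet(K)$ be a chain map on the finite-dimensional simplicial chain complex with rational coefficients. We claim
\beq
\sum_k(-1)^k\Tr(\varphi_k|_{C_k})=\sum_k(-1)^k\Tr(\varphi_{*}|_{H_k})\,.
\eeq
To prove it, write $Z_k=\ker\partial_k$ and $B_k=\im\partial_{k+1}$. Since $\varphi$ commutes with $\partial$, it preserves $Z_k$ and $B_k$. We then apply Lemma~\ref{lemma1fromnotes} to the two short exact sequences
\beq
0\to Z_k\to C_k\xrightarrow{\partial}B_{k-1}\to0\,,\qquad 0\to B_k\to Z_k\to H_k\to0\,.
\eeq
Taking the alternating sum over $k$, the $B$-terms cancel telescopically and the formula follows. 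The same formula holds for cohomology. Indeed, by universal coefficients over a field, $H^k\simeq H_k^\vee$, and $f^*$ is the transpose of $f_*$, so the traces agree. Hence $L(f,X)$ may be computed from the simplicial chain level.

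The second step is to produce a chain-level representative of $f_*$ whose matrix has zero diagonal, and this is the main obstacle. Since $f$ has no fixed points and $X$ is compact, the continuous function $x\mapsto d(x,f(x))$ attains a positive minimum $\epsilon$. After barycentric subdivision we may assume that every simplex of $K$ has diameter less than $\epsilon/3$. By the simplicial approximation theorem, there is an iterated subdivision $K'$ of $K$ and a simplicial map $g:K'\to K$ approximating $f$. In particular $g(x)$ lies in the carrier simplex of $f(x)$, so $g\simeq f$ via the straight-line homotopy, and $d(g(x),f(x))<\epsilon/3$ for all $x$.

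We then consider the chain map $\varphi=g_\#\circ\mathrm{Sd}:C_\bullet(K)\to C_\bullet(K)$. Here $\mathrm{Sd}$ is the subdivision chain map, which induces the identity on homology under the canonical identification $H(K')\simeq H(K)$. It follows that $\varphi_*=g_*=f_*$. For a simplex $\sigma$ of $K$, the chain $\mathrm{Sd}\,\sigma$ is supported on simplices contained in $\sigma$. Each such simplex is sent by $g$ into a simplex of $K$ lying within distance $\epsilon/3$ of $f(\sigma)$. Every point of $f(\sigma)$ is at distance at least $\epsilon-\epsilon/3$ from $\sigma$, and $\sigma$ has diameter less than $\epsilon/3$. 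Therefore this image simplex is disjoint from $\sigma$, and in particular differs from $\sigma$. The coefficient of $\sigma$ in $\varphi(\sigma)$ therefore vanishes, so every $\varphi_k$ has zero diagonal and zero trace.

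Combining the two steps with the Hopf trace formula gives $L(f,X)=\sum_k(-1)^k\Tr(\varphi_k)=0$, which contradicts the hypothesis $L(f,X)\neq0$. The delicate points are the careful choice of mesh relative to $\epsilon$ and the bookkeeping between $K'$ and $K$ through $\mathrm{Sd}$. The triangulability of the compact space is taken as given from the construction in section~\ref{sec:compactmodel}.
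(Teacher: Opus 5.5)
Your proof is correct. It does not reconstruct anything in the paper, because the paper never proves this statement. The paper simply quotes it as the classical Lefschetz fixed-point theorem and invokes it once, for the compact model $\overline{X}_\delta$ at the end of the proof of Theorem~\ref{thm:main}.

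What you give is the standard textbook argument (as in section~2.C of ref.~\cite{hatcherAlgTop}). It has two parts:
\begin{enumerate}
\item The Hopf trace formula, which you derive from Lemma~\ref{lemma1fromnotes} applied to $0\to Z_k\to C_k\to B_{k-1}\to 0$ and $0\to B_k\to Z_k\to H_k\to 0$. This is the same content as the paper's appendix computation leading to eq.~\eqref{inducedC}, but organised more cleanly.
\item A fine subdivision plus a simplicial approximation $g$ for which $\varphi=g_\#\circ\mathrm{Sd}$ has vanishing diagonal. Your $\epsilon/3$ estimate via carrier simplices does give $g(\rho)\cap\sigma=\emptyset$ for every simplex $\rho\subseteq\sigma$ of $K'$, which is exactly what is needed.
\end{enumerate}

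Your restriction to compact polyhedra is not just a convenience, and it is worth saying so explicitly. For an arbitrary compact space the statement as written is false. There are contractible continua in $\mathbb{R}^3$ (Kinoshita's example) with fixed-point-free self-maps, even though $L(f,X)=1$ for every self-map. So some hypothesis such as triangulability, or being a compact ENR, is genuinely required. Your version covers the only use the paper makes of the theorem, since $\overline{X}_\delta$ is a compact smooth manifold with boundary and hence triangulable.
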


There is a similarity between the definition of the twisted and topological Lefchetz numbers in eqs.~\eqref{eq:twisted_Lefschetz} and~\eqref{eq:topo_Lefschetz} and the corresponding Euler characteristics in eqs.~\eqref{eq:twisted_Euler} and~\eqref{eq:topo_Euler}. Indeed, it is easy to see from the definitions that the Euler characteristics are nothing but the Lefschetz numbers of the identity map,
\beq\label{eq:L_to_chi}
\chi(X,\nabla_{\omega}) = L(\id,X,\nabla_{\omega}) \textrm{~~~and~~~} \chi(X) = L(\id,X)\,.
\eeq
We have also seen in eq.~\eqref{eq:twistedchi_topochi} that the twisted and topological Euler characteristics agree. We now show that the same result extends to the Lefschetz numbers, assuming the same assumptions as in the proof of eq.~\eqref{eq:twistedchi_topochi} (see, e.g., section 2.2.13 of ref.~\cite{aomoto_theory_2011}), which in particular implies that we are in the setting of section~\ref{sec:compactmodel}. We start by proving some technical results that we will also need in the next subsection, and that in particular connect Lefschetz numbers and strong deformation retractions.

First, let us mention an obvious fact. Consider a finite-dimensional $\mathbb{C}$-vector space $V$ with basis $\bv$ and a map $f:V\to V$ that induces an action on the basis $f:\bv\to \bA \bv$ with $\bA\in \textrm{GL}(\text{dim}(V),\mathbb{C})$. Any two basis choices are related via an isomorphism and the trace is basis independent $\textrm{Tr}(f)=\textrm{Tr}(\bA)$. Similarly, for any isomorphic vector space $W\simeq V$, $f$ induces a map $f_W:W\to W$ on $W$ with $\textrm{Tr}(f_W)=\textrm{Tr}(f)$. In particular, the dual space $\check{V}$ is isomorphic to $V$. 
That is why in the following we will not differentiate between the traces of induced map on (co)homology groups that are related via isomorphisms/dualities. For example
\begin{align}
\label{eq:differentLefshetzSums}
    L(f,X)=&\sum_k (-1)^k \,\mathrm{Tr}(H_k(f,X))=\sum_k (-1)^k  \,\mathrm{Tr}(H_\mathrm{dR}^k(f,X))\notag\\
  = &\sum_k (-1)^k  \,\mathrm{Tr}(H_\mathrm{dR,c}^{2n-k}(f,X))=\sum_k (-1)^k  \,\mathrm{Tr}(H_\text{dR,c}^{k}(f,X))\,.
\end{align}
Details on the corresponding isomorphisms can be found in appendix~\ref{dualities}.

\begin{lemma}\label{eq:Lefschetz-retraction}
Let $X$ and $Y\subseteq X$ be topological spaces, and $f:X\to X$ a continuous map such that $f(Y)\subseteq Y$. If $Y$ is a strong deformation retract of $X$, then $L(f,X) = L(f,Y)$.
\end{lemma}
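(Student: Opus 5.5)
The plan is to show that the inclusion $\iota:Y\hookrightarrow X$ induces isomorphisms $H^k(\iota):H^k(X)\to H^k(Y)$ intertwining $H^k(f,X)$ and $H^k(f|_Y,Y)$. Once this is established, the traces agree degree by degree, and the Lefschetz numbers, being alternating sums of these traces, coincide.

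\emph{First step.} We record that $f|_Y:Y\to Y$ is a well-defined continuous map, because $f(Y)\subseteq Y$. It also satisfies $f\circ\iota=\iota\circ f|_Y$ as maps $Y\to X$.

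\emph{Second step.} We apply contravariant functoriality of cohomology to this identity. This gives
\beq
H^k(\iota)\circ H^k(f,X)=H^k(f|_Y,Y)\circ H^k(\iota)\qquad\text{for all }k\,.
\eeq

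\emph{Third step.} We use the fact, recalled above, that a strong deformation retract is homotopy-equivalent to $X$ via $g_1=r$ and $g_2=\iota$. Homotopy invariance then makes $H^k(\iota)$ an isomorphism, with inverse $H^k(r)$. Conjugating by this isomorphism, we get
\beq
H^k(f|_Y,Y)=H^k(\iota)\,H^k(f,X)\,H^k(\iota)^{-1},
\eeq
so the two maps have equal traces, by the basis-independence of the trace noted just before the lemma.

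\emph{Fourth step.} Summing with signs $(-1)^k$ over $k$ yields $L(f,X)=L(f,Y)$.

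An alternative to the third and fourth steps is to phrase them through Lemma~\ref{lemma1fromnotes} or Lemma~\ref{lemma2fromnotes}, using the exact sequence $0\to H^k(X)\to H^k(Y)\to 0$. Direct conjugation is simpler, so we will use that.

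There is no serious obstacle here. The only point requiring care is that traces are well defined: the cohomology groups must be finite-dimensional. This holds in the setting of section~\ref{sec:compactmodel}, where $X$ retracts onto the compact manifold with boundary $X_\delta$.

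A second point worth a remark is that we never need $r$ to commute with $f$; only the inclusion must be $f$-equivariant, which is exactly the hypothesis $f(Y)\subseteq Y$. The same argument applies verbatim to the twisted cohomology groups, since homotopy invariance also holds for local systems when the local system on $Y$ is pulled back along $\iota$. This is presumably how the lemma will be used subsequently.
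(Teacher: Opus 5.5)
Your proof is correct and follows the paper's argument. Both show that $\iota^*$ is an isomorphism with inverse $r^*$ by homotopy invariance, and then obtain $(f|_Y)^*=\iota^*\circ f^*\circ(\iota^*)^{-1}$, so the traces agree in each degree. The paper reaches this via $f|_Y=r\circ f\circ\iota$, while you apply functoriality directly to $f\circ\iota=\iota\circ f|_Y$. Your remark that the cohomology groups must be finite-dimensional for the traces to make sense is a sensible addition; the paper leaves it implicit.
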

\begin{proof}
 Let $\iota:Y\rightarrow X$ be the inclusion and let $r:X\rightarrow Y$ be the strong deformation retraction. Then we have $r\circ \iota=\mathrm{id}_Y$, since $r\vert_Y=h(1,\cdot)\vert_Y=\mathrm{id}_Y$, where $h$ is the homotopy between $r$ and $\mathrm{id}_X$. Furthermore we have $\iota\circ r\simeq \mathrm{id}_X$, with the homotopy given by $h$. Since homotopic maps induce identical maps on (co)homology \cite{hatcherAlgTop} we have that
    \begin{equation}
        \iota^*\circ r^*=(r\circ \iota)^*=\mathrm{id}_{H^k(Y)}  \,, \qquad 
        r^*\circ \iota^*=(\iota\circ r)^*=\mathrm{id}_{H^k(X)}  \,,
    \end{equation}
    for all $k$. Recall that for two maps $\psi:M\rightarrow N$ and $\varphi:N\rightarrow M$, if $\psi\circ \varphi=\mathrm{id}_N$, then $\psi$ is surjective and $\varphi$ is injective. Applying this here, we see that the maps
    \begin{equation}
        \iota^*:H^k(X)\rightarrow H^k(Y) \,, \qquad r^*:H^k(Y)\rightarrow H^k(X) \,,
    \end{equation}
    are both bijections for all $k$ and are inverses to each other. In particular, we see that $H^k(X)\simeq H^k(Y)$ for all $k$. 
    Now let us define $g=f\vert_Y:Y\rightarrow Y$, which is well-defined by assumption. Then from $f\circ \iota=\iota\circ g$ it follows that
    \begin{equation}
        g=r\circ\iota\circ g=r\circ f\circ \iota \,,
    \end{equation}
    and hence
    \begin{equation}
        g^*=\iota^*\circ f^*\circ r^*=\iota^*\circ f^*\circ (\iota^*)^{-1} \,.
    \end{equation}
    We see that, since $g^*$ and $f^*$ are related by a similarity transformation, their traces are equal (on each cohomology group), i.e., we have
    \begin{equation}
        H^k(f,Y)=H^k(g,Y)=\Tr(g^*)=\Tr(f^*)=H^k(f,X) \,,
    \end{equation}
    for all $k$. The claim then follows by taking the alternating sum. 
\end{proof}

\begin{proposition}\label{prop:twistedL_TopL}
Let $f:X\to X$ be a twisted symmetry transformation in a given sector. Then,
    under the same assumptions as in section~\ref{sec:compactmodel},
    the twisted and topological Lefschetz numbers of $f$ agree,
    \beq
    L(f,X) = L(f,X,\nabla_{\omega})\,.
    \eeq
\end{proposition}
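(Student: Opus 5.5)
We will adapt the standard proof of the equality of twisted and topological Euler characteristics (the argument of \cite[sec.~2.2.13]{aomoto_theory_2011}) so that it keeps track of the action of $f$. The idea is to compute both Lefschetz numbers on the compact model $X_\delta$ of section~\ref{sec:compactmodel}, using an $\widetilde f$-invariant finite cell (or simplicial) decomposition. By Lemma~\ref{eq:Lefschetz-retraction}, and its twisted analogue proved the same way using homotopy invariance of local-system cohomology, we have $L(f,X)=L(\widetilde f,X_\delta)$ and $L(f,X,\nabla_\omega)=L(\widetilde f,X_\delta,\mathcal L_\omega)$. Here we use that $\widetilde f$ preserves $T_\delta$, and hence $X_\delta$, and that $\pi|_{\overline X}$ is an isomorphism intertwining $f$ and $\widetilde f$. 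We also use the de Rham isomorphism to pass from $H^k_{\mathrm{dR}}(X,\nabla_\omega)$ to the cohomology of the local system $\mathcal L_\omega$, and the observation before eq.~\eqref{eq:differentLefshetzSums} that traces are unchanged under such isomorphisms.

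Next we choose an equivariant triangulation of the compact manifold with boundary $X_\delta$. Since $\widetilde f$ generates a finite group acting smoothly, such a triangulation exists, and after subdivision we may assume $\widetilde f$ acts simplicially, permuting simplices. Cellular (simplicial) cochains then give finite-dimensional complexes $C^\bullet(X_\delta,\mathbb C)$ and $C^\bullet(X_\delta,\mathcal L_\omega)$, with chain maps induced by $\widetilde f$. Applying Lemma~\ref{lemma2fromnotes} to the exact sequences $0\to Z^k\to C^k\to B^{k+1}\to0$ and $0\to B^k\to Z^k\to H^k\to 0$ (the Hopf trace formula) gives
\begin{equation}
L(\widetilde f,X_\delta,\cdot)=\sum_k(-1)^k\,\Tr\big(\widetilde f^\#\big|_{C^k}\big)
\end{equation}
in both cases. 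We therefore only need to compare the cochain-level traces.

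On $C^k(X_\delta,\mathbb C)$ the trace counts, with orientation signs $\pm1$, the $k$-simplices $\sigma$ with $\widetilde f(\sigma)=\sigma$. On $C^k(X_\delta,\mathcal L_\omega)$ the trace is a sum over the same fixed simplices, but each term is multiplied by the map $\mathcal L_{\omega,b_\sigma}\to\mathcal L_{\omega,b_\sigma}$ that $f$ induces on the rank-one stalk at the barycenter $b_\sigma$. The key point is that this stalk map is the identity. A twisted symmetry transformation satisfies $\Psi\circ f=\Psi$, and the pullback action on forms is defined by the substitution $\varphi\mapsto f^*\varphi$ with no extra factor. Hence $f$ acts on the local system by identifying the branch of $\Psi$ at $x$ with the branch at $f(x)$. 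At a fixed point, continuation along a path inside the simplex is trivial, so the induced map is multiplication by $1$. With the stalk factor equal to $1$, the twisted and untwisted cochain traces agree term by term, and the proposition follows.

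The main obstacle is making this last step rigorous. We need a well-defined lift of $f$ to $\mathcal L_\omega$, and we must check that on fixed simplices it acts by exactly $1$ rather than by a root of unity or a monodromy factor. The lift is fixed by requiring $f^*\Psi=\Psi$ as a chosen branch on a simply connected simplex. Whether such a choice is globally consistent, and why the natural choice gives the identity stalk map on every fixed simplex, is the part we would verify most carefully. A secondary technical issue is the existence of the equivariant simplicial structure on $X_\delta$. For a finite smooth group action on a compact manifold with boundary this is standard, via Illman's equivariant triangulation theorem, and we would cite it.
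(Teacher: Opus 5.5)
Your proposal is correct and has the same architecture as the paper's proof. Both pass to the compact model $X_\delta$, triangulate it, reduce the two Lefschetz numbers to alternating sums of chain-level traces via the Hopf trace formula, and observe that the local-system factor attached to an invariant simplex is trivial because $\Psi\circ f=\Psi$.

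The real difference is how the map is made simplicial.
\begin{itemize}
\item The paper takes an arbitrary triangulation and replaces $\widetilde f$ by a simplicial approximation on a barycentric subdivision. This uses only the standard triangulation and approximation theorems, but it leaves loose ends:
\begin{itemize}
\item the approximating map is only homotopic to $\widetilde f$;
\item it need not be injective, so its chain map is not literally a signed permutation;
\item the barycenters it fixes need not be fixed by $\widetilde f$, so the paper's appeal to invariance of the twist at barycenters is really an appeal to invariance under a different map.
\end{itemize}
\item Your equivariant triangulation uses Illman's theorem for the finite cyclic group generated by $\widetilde f$, acting smoothly on the compact manifold with boundary $X_\delta$. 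This is a heavier input, but it removes those issues:
\begin{itemize}
\item $\widetilde f$ itself permutes simplices;
\item the chain map is an honest signed permutation;
\item barycenters of invariant simplices are genuine fixed points of $\widetilde f$, where twist invariance applies verbatim.
\end{itemize}
\end{itemize}
You also state the twisted analogue of the retraction lemma explicitly, which the paper uses tacitly.

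The step you single out as the main obstacle is not a real one.
\begin{itemize}
\item From $\Psi\circ f=\Psi$ you get $f^*\omega=\omega$.
\item Since $\nabla_\omega$ is a connection on the trivial line bundle, the substitution $s\mapsto s\circ f$ sends flat sections to flat sections over every open set. This is a canonical global lift of $f$ to $\mathcal L_\omega$, and it involves no choice of branch.
\item It is exactly the lift matching the plain pullback $\varphi\mapsto f^*\varphi$ under the de Rham isomorphism: it is the degree-zero part of $f^*$ on the complex $(\Omega^\bullet,\nabla_\omega)$.
\item At a fixed point $x_0$, the germs $s\circ f$ and $s$ are both flat and take the same value $s(f(x_0))=s(x_0)$, so they coincide.
\end{itemize}
Hence the stalk map is exactly $1$, and no root of unity or monodromy factor can occur.
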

\begin{proof}
First note that, as described in subsection~\ref{sec:compactmodel}, by Hironaka's theorem, there exists a resolution $\pi:\overline{W}\to X$ such that $\overline{W}$ is smooth and for $\overline{X}:=\pi^{-1}(X)$, the complement $D:=\overline{W}\backslash \overline{X}$ is a simple normal crossing divisor.
Further, as described in subsection~\ref{sec:compactmodel}, the lift $\widetilde{f}$ of $f$ is an automorphism of $\overline{W}$ such that $\widetilde{f}(\overline{X})\subseteq \overline{X}$ and since $\pi$ is bijective on $X$ it follows that
\begin{align}
    L(f,X)=L(\tilde{f},\overline{X})\,.
\end{align}
We have seen that we can construct an open neighborhood $T_\delta(D)$ around $D$ such that $\partial T_\delta$ is a smooth hypersurface and there exists a strong deformation retraction $r:\overline{X}\to X_\delta$, where $X_\delta=\overline{W}\backslash T_\delta$  is smooth and compact. Then the two Lefschetz numbers agree 
\begin{align}
    L(f,X)=L(\tilde{f},X_\delta)\,. 
\end{align}

We will now make use of this construction to show the equivalence of the twisted and the topological Lefschetz number. To this end, we will first argue that the topological Lefschetz number can be written in terms of traces of permutation matrices acting on a simplicial complex associated with $X_{\delta}$. Then, we will show that the twist does not modify this result.

Let us describe the associated simplicial complex. To this end, we will use terminology from simplicial homology. Details can be found in appendix~\ref{simplicialhomology}. We know that smooth, compact manifolds admit finite triangulations, such that there exists a homeomorphism $ t: K_\delta\to X_\delta\,,$
where $K_\delta$ is the associated simplicial complex. Then $\widetilde{f}_t:=t^{-1}\circ \widetilde{f} \circ t:K_\delta\to K_\delta$ is a map between simplicial complexes. Let $\widetilde{K}_\delta$ be the $k^{\text{th}}$ barycentric subdivision, for some finite $k\in \mathbb{N}$ such that there exists a simplicial approximation $f_t$ to $\widetilde{f}_t$. In particular $f_t$ is a simplicial map (cf. appendix~\ref{simplicialhomology}).  As the two maps are homotopic $\tilde{f}_t \simeq f_t$ their Lefschetz numbers agree 
\begin{align}
    L(f,X)=L(\widetilde{f}_t,K_\delta)=L(f_t,\widetilde{K}_\delta)\,,
\end{align}
where 
\begin{align}
L(f_t,\widetilde{K}_\delta)=\sum_k (-1)^k\,\text{Tr}(   H_k(f_t,\widetilde{K}_\delta))\,,
\end{align}
and $H_k(f_t,\widetilde{K}_\delta)$ is the induced map on the simplicial homology group.
For simplicity, let us from now on denote by $f$ the simplicial approximation and drop the tilde on $\widetilde{K}_\delta$.
Since $f$ is simplicial, the induced chain map on the $k$-simplices $\sigma^{(j)}_k=[v^{(j)}_0,\dots, v^{(j)}_k]\in C_k(K_\delta)$ is a signed permutation
\begin{align}
 f_*( \sigma_k^{(j)})=  [f(v^{(j)}_0),\dots, f(v^{(j)}_k)]=({\bs{\kappa}_k})_j\, \sigma_k^{\tau_k(j)}\,,
\end{align}
 with $({\bs{\kappa}_k})_j\in\{\pm 1\}$, $\tau_k\in S_{d_{C_k}}$ and where $d_{C_k}$ is the number of $k$-simplices in the triangulation.  
Then, as argued in the appendix~\ref{app:inducedmaps}, eq.~\eqref{inducedC}, the Lefschetz number becomes
\begin{align}
   L(f,K_\delta)=\sum_k (-1)^k \text{Tr}(C_k(f,K_\delta))=\sum_k (-1)^k\text{Tr}(\bP^T_k)\,,
\end{align}
where $\bP_k$ is a $d_{C_k}\times d_{C_k}$ dimensional signed permutation matrix.

Now we want to establish the same result for the twisted case, where we call the induced map $H_k(f,K_\delta,\mathcal{L}^\vee)$. 
Details on (simplicial) homology with coefficients in a local system can be found in refs.~\cite{hatcherAlgTop, aomoto_theory_2011}. For a simplex with coefficients in a local system, the coefficient can be assigned at the barycenter
\begin{align}
    \sigma \otimes s(b_\sigma)\,,
\end{align}
where $s(b_\sigma)\in \mathcal{L}^\vee_\omega(b_\sigma)$ is a (flat) section in the fiber of the (flat) line bundle over the point $b_\sigma$. The barycenter transforms under $f$ as 
\begin{align}
b_{{[
   f(v_0),\dots, f(v_k)]}}&=\frac{1}{1+k} \sum_l f(v_l)=f(b_{{[
   v_0,\dots, v_k]}})\,.
\end{align}
The section in the fiber over a point $p$ in an open chart $U_i$  is given by the value of the locally single-valued twist. As the twist is by definition invariant under $f$, if $f$ is a twisted symmetry transformation, the associated transition functions are unchanged, and the line bundle with its  fibers and sections remains unchanged under $f:K_\delta\to K_\delta$.
Accordingly, a chain with coefficients in the local system transforms as
\begin{align}f_*:\sigma_k^{(j)}\otimes s(b_{\sigma_k^{(j)}})\mapsto  
&\sigma_k^{\tau_k(j)}\otimes s(f(b_{\,\sigma_k^{(j)}}))=\sigma_k^{\tau_k(j)}\otimes s(b_{\,\sigma_k^{\tau_k(j)}})=(\bP_k)_{ij}(\sigma_k^{i}\otimes s(b_{\,\sigma_k^{i}}))\,,
\end{align}
where $(\bP_k)_{ij}=(\bs{\kappa}_k)_j\, \delta_{i,\tau_k(j)}$. 
Then the Lefschetz number is the same as in the untwisted case 
\begin{align}
    L(f,X_\delta,\mathcal{L}^\vee)&=
    \sum_{k=0}^{2n}(-1)^k \text{Tr}(C_k(f,K_\delta,\mathcal{L}^\vee)))=\sum_{k=0}^{2n}(-1)^k \text{Tr}(\bP^T_k)=\text{Tr}(C_k(f,K_\delta)))
   \,.
\end{align}
\end{proof}

Since the two notions agree, we will from now on no longer distinguish between the twisted and topological Lefschetz numbers. In the remainder of this subsection, we prove several properties of Lefschetz numbers that we will need in the next section to prove our main result. 

\begin{lemma}\label{lemma4fromnotes}
Let $X$ be a triangulable space,\footnote{See appendix~\ref{simplicialhomology} for details.} $A\subseteq X$ a closed subset, and $f:X\to X$ a proper invertible map with $f(A)\subseteq A$. Then
\begin{align}\label{eq:relativelefschetz}
L(f,X, A) = L(f,X) - L(f,A)\,,
\end{align}
where 
\begin{align}
     L(f,X,A)=\sum_k (-1)^k\, \Tr (H_k(f,X,A))\,.
\end{align}
\end{lemma}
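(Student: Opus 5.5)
The plan is to apply Lemma~\ref{lemma2fromnotes} to the long exact sequence of the pair $(X,A)$ in homology. Since $f(A)\subseteq A$, $f$ is a map of pairs $(X,A)\to(X,A)$. It therefore induces chain maps $f_*$ on $C_\bullet(A)$, on $C_\bullet(X)$, and on the relative chains $C_\bullet(X,A)=C_\bullet(X)/C_\bullet(A)$. These chain maps are compatible with the inclusion $\iota:C_\bullet(A)\to C_\bullet(X)$ and the projection $C_\bullet(X)\to C_\bullet(X,A)$. I would work simplicially, using the triangulability assumption. Concretely, I would choose a triangulation of $X$ in which $A$ is a subcomplex. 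Where necessary I would pass to a barycentric subdivision and a simplicial approximation of $f$ preserving $A$, exactly as in the proof of Proposition~\ref{prop:twistedL_TopL}. This ensures that all homology groups are finite-dimensional and the traces are defined. Properness and invertibility of $f$ are what make the induced maps well defined on the relevant (locally finite or compact) theory.

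The key steps, in order, are as follows. First, write the long exact sequence of the pair,
\beq
\cdots\to H_k(A)\xrightarrow{\iota_*}H_k(X)\xrightarrow{j_*}H_k(X,A)\xrightarrow{\partial}H_{k-1}(A)\to\cdots,
\eeq
which terminates at $H_0(X,A)\to 0$ on one side and starts at $0\to H_{2n}(A)$ on the other, since all groups vanish above the real dimension. Second, check that the induced maps $H_k(f,A)$, $H_k(f,X)$ and $H_k(f,X,A)$ commute with $\iota_*$, $j_*$ and the connecting map $\partial$. For $\iota_*$ and $j_*$ this is functoriality. For $\partial$ it is naturality of the connecting homomorphism, which follows from the snake lemma construction because $f_*$ is a morphism of short exact sequences of chain complexes. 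Third, apply Lemma~\ref{lemma2fromnotes} to this finite exact sequence. The sequence has three terms per degree $k$, so I would index it with position $3k+1$ for $H_k(A)$, $3k+2$ for $H_k(X)$ and $3k+3$ for $H_k(X,A)$, reading in the decreasing direction. The vanishing alternating sum of traces then gives, up to an overall sign,
\beq
\sum_k(-1)^k\big(\Tr H_k(f,A)-\Tr H_k(f,X)+\Tr H_k(f,X,A)\big)=0,
\eeq
which is exactly eq.~\eqref{eq:relativelefschetz}.

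An alternative route is the Hopf trace formula: compute each Lefschetz number at chain level as $\sum_k(-1)^k\Tr C_k(f)$. One then uses $C_k(X)\cong C_k(A)\oplus C_k(X,A)$, where $f_*$ is block triangular in a basis of simplices adapted to $A$, so that Lemma~\ref{lemma1fromnotes} gives the additivity of traces directly. I would mention this as a cross-check.

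I expect the main obstacle to be the sign bookkeeping in the third step. Each degree contributes three terms, so the parity pattern of $(-1)^i$ in Lemma~\ref{lemma2fromnotes} must be matched carefully to $(-1)^k$. The sign pattern is $(+,-,+)$ within each degree triple, and it must consistently shift by $(-1)$ when $k$ increases by one. The remaining technical point is justifying that a simplicial approximation of $f$ can be chosen to preserve the subcomplex $A$, so that finiteness and well-definedness hold. Standard relative simplicial approximation handles this.
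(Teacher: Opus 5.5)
Your proposal is correct and follows the paper's proof: take the long exact sequence of the pair, check that $f$ commutes with $\iota_*$, $\pi_*$ and the connecting map, and apply Lemma~\ref{lemma2fromnotes}. Your version is in places more careful than the paper's. You end the sequence correctly with $H_0(X)\to H_0(X,A)\to 0$, whereas the paper's display stops at $H_0(A)\to 0$. You also do the sign bookkeeping explicitly, and you state the implicit assumption that $A$ is a subcomplex of a finite triangulation, which is needed for the traces to be defined.
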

\begin{proof}
For such a pair $X$ and $A$, the following sequence between complexes is exact (cf.,~e.g.,~ref.~\cite[p.116]{hatcherAlgTop})
\begin{align}
 0\to  C_k(A)\overset{\iota^k}{\to}C_k(X) \overset{\pi^k}{\to} C_k(X,A)\to 0\,,
\end{align}
with the inclusion $\iota^k:C_k(A) \hookrightarrow C_k(X)$ and the quotient map $\pi^k:C_k(X)\to C_k(X)/C_k(A)$.
For every exact sequence of complexes of the above type, there is a long exact sequence in homology (cf.,~e.g.,~ ref.~\cite[p.117]{hatcherAlgTop})
\begin{align}
 0\to  H_{2n}(A)\overset{\iota_*^{2n}}{\to}  \dots &\to H_k(A)\overset{\iota_*^k}{\to}H_k(X) \overset{\pi_*^k}{\to} \\
 &H_k(X,A)\overset{\delta_*^k}{\to}H_{k-1}(A)\to\dots  \overset{\delta_*^1}{\to}H_{0}(A)\to 0\,,\notag
\end{align}
where $\delta_*^k$ is the (relative) boundary map, which assigns to a (relative) cycle the part of its boundary on $A$.

It is easy to show that the map $f$ commutes with the induced maps $\iota_*^k$, $\pi_*^k$ and $\delta_*^k$. 
It then follows from Lemma~\ref{lemma2fromnotes} that 
\begin{align}
  L(f,X,A)&=-L(f,A)+L(f,X)\,.
\end{align}
\end{proof}

\begin{lemma}\label{lemma:compact}
    If in addition to the assumptions from Lemma~\ref{lemma4fromnotes}, $X$ is compact, then
    \begin{align}\label{eqcompactlef}
L(f,X\backslash A)=L(f,X)-L(f,A)\,.
\end{align}
\end{lemma}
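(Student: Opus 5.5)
The plan is to reduce the statement to Lemma~\ref{lemma4fromnotes}. By that lemma, $L(f,X,A)=L(f,X)-L(f,A)$. It therefore suffices to show
\beq
L(f,X\setminus A)=L(f,X,A)\,.
\eeq
We prove this with Poincaré--Lefschetz (or Alexander--Lefschetz) duality, using the traces in compactly supported cohomology from eq.~\eqref{eq:differentLefshetzSums}. The first step is to note that, since $X$ is compact and $A$ is closed, compactly supported cohomology of the open set $U:=X\setminus A$ is relative cohomology of the pair:
\beq
H^k_c(X\setminus A)\simeq H^k(X,A)\,.
\eeq
For triangulable pairs one can arrange that $A$ is a subcomplex. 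The open set $U$ is then the complement of a subcomplex, and the isomorphism is realized on cochains by simplicial cochains of $X$ vanishing on $A$. This is the standard identification $H^*_c(X\setminus A)\simeq \check H^*(X,A)$ for compact $X$ with closed $A$; Čech and singular cohomology agree because $A$ is a subcomplex.

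The second step is naturality. Since $f$ is proper and invertible with $f(A)\subseteq A$, we have $f(A)=A$ by invertibility together with the fact that $f^{-1}$ also preserves $A$. In our application $f$ has finite order, so this holds. Hence $f$ restricts to a proper homeomorphism of $U$ and induces pullbacks on $H^k_c(U)$. The cochain-level description of the isomorphism commutes with $f^*$, so $\Tr(H^k_c(f,U))=\Tr(H^k(f,X,A))$ for every $k$. Relative cohomology traces agree with relative homology traces because of the universal coefficient duality over a field. Taking the alternating sum gives $L_c(f,U)=L(f,X,A)$, where $L_c$ denotes the Lefschetz number built from compactly supported cohomology.

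The remaining, and main, obstacle is to identify $L_c(f,U)$ with the ordinary $L(f,U)$, which is the quantity appearing on the left of eq.~\eqref{eqcompactlef}. First, eq.~\eqref{eq:differentLefshetzSums} in the paper asserts this equality. Second, if $U$ is a manifold of real dimension $2n$, Poincaré duality $H^k_c(U)\simeq H_{2n-k}(U)$ intertwines $f^*$ with $f_*^{-1}$ up to the degree $\pm1$ of $f$. For an orientation-preserving $f$, such as a holomorphic map, and for finite-order $f$, the trace of $f_*^{-1}$ equals the trace of $f_*$ up to complex conjugation, and the Lefschetz number is real. The sign $(-1)^{2n-k}=(-1)^k$ then shows that the alternating sums agree. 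I expect the justification of this step for general triangulable (non-manifold) $U$ to be the delicate point. In that case I would instead use the compact model of section~\ref{sec:compactmodel}: replace $U$ by a strong deformation retract and invoke Lemma~\ref{eq:Lefschetz-retraction}. This restricts the statement to the setting actually used later, where $U$ is a smooth complex variety and $f$ is a twisted symmetry of finite order.
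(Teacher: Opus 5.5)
Your proposal follows the paper's proof, which likewise combines the isomorphism $H^k_{\mathrm{dR},c}(X\setminus A)\simeq H^k_{\mathrm{dR}}(X,A)$ with Lemma~\ref{lemma4fromnotes}, and silently absorbs the passage from compactly supported to ordinary traces into eq.~\eqref{eq:differentLefshetzSums}. The step you flagged is genuinely where extra hypotheses enter: for $X=[0,1]$, $A=\{0,1\}$, $f=\mathrm{id}$ one gets $\chi((0,1))=1$ but $\chi(X)-\chi(A)=-1$, so your restriction to $X\setminus A$ an even-dimensional oriented manifold with $f$ orientation-preserving of finite order (the situation in which the lemma is used in Theorem~\ref{thm:main}) is necessary rather than merely convenient.
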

\begin{proof}
If $X$ is compact there is the isomorphism 
\begin{align}
    H_{\text{dR},c}^{k}(X\backslash A) \simeq H^{k}_{\text{dR}}(X,A)\,.
\end{align}
Together with Lemma~\ref{lemma4fromnotes}, this implies
\beq
L(f,X\setminus A) = L(f,X,A) = L(f,X)-L(f,A)\,.
\eeq
\end{proof}

\begin{lemma}\label{lemma5fromnotes}
Let $X$ be an oriented, real, even-dimensional manifold, $A,B\subseteq X$ closed subsets, and $f:X\to X$ a proper invertible map of finite order with $f(A)\subseteq A$ and $f(B)\subseteq B$. Then
\beq\bsp\label{compactversionadditivity}
L(f,X\setminus (A\cup B)) &\,= L(f,X\setminus A) + L(f,X\setminus B) - L(f,X\setminus (A\cap B))
 \,.
\esp\eeq
\end{lemma}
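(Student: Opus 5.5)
The plan is to derive this inclusion--exclusion identity from a Mayer--Vietoris sequence together with Lemma~\ref{lemma2fromnotes}. Set $U=X\setminus A$ and $V=X\setminus B$. These are open subsets of $X$ with
\beq
U\cap V = X\setminus(A\cup B)\,,\qquad U\cup V = X\setminus(A\cap B)\,.
\eeq
Because $f$ is a proper invertible map with $f(A)\subseteq A$ and $f(B)\subseteq B$, and $f$ has finite order, we get $f(A)=A$ and $f(B)=B$. Indeed, $f^k=\id$ gives $A=f^k(A)\subseteq f(A)\subseteq A$. Hence $f$ restricts to proper homeomorphisms of $U$, $V$, $U\cap V$ and $U\cup V$. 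This matters because the Lefschetz numbers of these non-compact spaces will be computed in compactly supported cohomology, in line with eq.~\eqref{eq:differentLefshetzSums}, and pullback on $H^\bullet_{\mathrm{dR},c}$ is only functorial for proper maps. Equivalently, by Poincar\'e duality on the oriented even-dimensional manifold $X$, one may work with ordinary cohomology, which is why orientability and even real dimension enter the statement.

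Next I would write down the Mayer--Vietoris sequence for compactly supported cohomology of the open cover $\{U,V\}$ of $U\cup V$:
\beq
\cdots\to H^k_c(U\cap V)\to H^k_c(U)\oplus H^k_c(V)\to H^k_c(U\cup V)\to H^{k+1}_c(U\cap V)\to\cdots
\eeq
Its maps are extension by zero (with a sign on one summand) and the connecting homomorphism built from a partition of unity. The sequence is finite, since all groups vanish outside degrees $0,\dots,2n$. I would then check that $f^*$, or more conveniently $(f^{-1})^*$ so that the action is covariant, commutes with every map in it. For the extension-by-zero maps this holds because $f$ preserves each of the open sets. For the connecting map, the partition of unity $\{\rho_U,\rho_V\}$ can be chosen $f$-invariant by averaging over the finite cyclic group generated by $f$. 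This is the only place where the finite-order assumption is really used. I expect this equivariance of the connecting homomorphism to be the main technical point. If averaging proves awkward, I would fall back on the fact that the connecting map is canonical and independent of the partition of unity at the level of cohomology, which gives naturality directly.

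Finally I would apply Lemma~\ref{lemma2fromnotes} to this long exact sequence, using the induced maps as the $T_i$. On the direct sum the trace is $\Tr(f^*|_{H^k_c(U)})+\Tr(f^*|_{H^k_c(V)})$, so the alternating sum of traces gives
\beq
L(f,U\cap V)-\big(L(f,U)+L(f,V)\big)+L(f,U\cup V)=0\,,
\eeq
after carefully tracking the degree shifts that fix the signs. Rearranged, this is eq.~\eqref{compactversionadditivity}. The remaining bookkeeping is to confirm that the sign pattern of the three-term periodicity matches the $(-1)^i$ in Lemma~\ref{lemma2fromnotes}, and that the Lefschetz numbers defined through $H_c$ agree with the paper's $L(f,\cdot)$ via eq.~\eqref{eq:differentLefshetzSums}.
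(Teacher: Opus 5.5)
Your argument is correct and has the same skeleton as the paper's proof. Both use Mayer--Vietoris for $U=X\setminus A$, $V=X\setminus B$, use finite order to upgrade $f(A)\subseteq A$, $f(B)\subseteq B$ to equalities, check equivariance of every map in the sequence, and conclude with Lemma~\ref{lemma2fromnotes}. The real difference is the cohomology theory.

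The paper uses the ordinary de~Rham sequence $H^k(U\cup V)\to H^k(U)\oplus H^k(V)\to H^k(U\cap V)\to H^{k+1}(U\cup V)$. It therefore obtains the identity directly for $L$ as defined in eq.~\eqref{eq:topo_Lefschetz}, and never uses orientability or dimension. For the connecting map it just notes that $(f^*\alpha,f^*\beta)$ is an admissible lift of $f^*\omega$. That is exactly your lift-independence (naturality) fallback, so no $f$-invariant partition of unity is needed.

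Your compactly supported sequence instead proves the identity for the $H_c$-Lefschetz numbers. This is the flavour that Lemma~\ref{lemma:compact} produces via $H^k_{\mathrm{dR},c}(X\setminus A)\simeq H^k_{\mathrm{dR}}(X,A)$, so in the proof of Theorem~\ref{thm:main} your version combines with it without any translation. The price is that matching the paper's $L$ requires eq.~\eqref{eq:differentLefshetzSums}, and this step is not free. Poincar\'e duality gives $\Tr(f^*|_{H_c^{2n-k}})=\epsilon\,\Tr\big((f^{-1})^*|_{H^k}\big)$, with $\epsilon=\pm1$ the orientation character of $f$. Finite order (real traces) lets you replace $f^{-1}$ by $f$, so $L_c=\epsilon L$. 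For the holomorphic maps of the paper $\epsilon=+1$, and on connected $X$ a uniform $\epsilon=-1$ would drop out of the linear identity anyway. So your route goes through, and this is where orientability genuinely enters.

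One small correction: you say averaging is ``the only place'' finite order is used. You already needed it to get $f(A)=A$ and $f(B)=B$, without which $f$ need not restrict to self-maps of $U$ and $V$ at all.
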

\begin{proof}
For two open sets $U$ and $V$, the following Mayer-Vietoris sequence is exact (cf., e.g.,~ref.~\cite[p.23]{BottTu1982}) 
\begin{align}
   \dots  \to H^k(U\cup V)\overset{\rho^k}\to H^k(U)\oplus H^k(V)\overset{\sigma^k}\to H^k(U\cap V)\overset{\partial^k}\to H^{k+1}(U\cup V)\to \dots\,,
\end{align}
where $\rho^k(\omega)\mapsto (\iota^*_U\omega\,,\iota^*_V\omega)$ and $\sigma^k(\omega_1\,,\omega_2)\mapsto j^*_U\,\omega_1-j^*_V\,\omega_2$, with the inclusions $\iota_U:U\rightarrow U\cup V$, $\iota_V:V\rightarrow U\cup V$ and $j_U:U\cap V\to U$, $j_V:U\cap V\to V$, and the boundary map $\partial^k$, which is discussed below.

Now choose $U=X\backslash A$ and $V=X\backslash B$. Then the sequence reads 
\begin{align}
\label{eq:mayerVietorisAppl}
   \dots  \to H^k(X\backslash(A\cap B))\to H^k(X\backslash A)\oplus H^k(X\backslash B)\to H^k(X\backslash(A\cup B))\to \dots\,.
\end{align}
Since $f$ has finite order, it is easy to see that $f$ fixes the sets $A$ and $B$, $f(A)=A$ and $f(B)=B$. Hence, it also fixes $X\setminus A$, $X\setminus B$, as well as $A\cap B$, $A\cup B$, $X\setminus(A\cap B)$ and $X\setminus(A\cup B)$. As a consequence, the induced maps are well defined on all the cohomology groups that appear in the Mayer-Vietoris sequence.
We now show that $f^*$ commutes with the maps along the sequence \eqref{eq:mayerVietorisAppl}. The fact that $f^*$ commutes with the maps induced from the inclusions is clear, let us hence focus on the boundary map $\partial$.

Let us pick a cohomology class $[\omega]\in H^k(X\backslash(A\cup B))$, where $\omega$ is a closed differential form on $X\backslash(A\cup B)$.
Pick differential forms $\alpha$ and $\beta$ on $ X\setminus A$ and $X\setminus B$, respectively, such that 
\beq
j^*_{X\setminus A}\alpha-j^*_{X\setminus B}\beta=\omega\,.
\eeq
Then, $\rd \alpha-\rd \beta=0$ on $X\backslash(A\cup B)$.
Hence, there is a well-defined $\eta$ on $X\backslash(A\cap B)$, where
\begin{equation}
\label{eq:partial_omega}
    \partial^k(\omega)=\eta:=\left\{ \begin{array}{rl} 
    \rd\alpha & \quad\mathrm{on}\quad X\setminus A\,, \\
    \rd\beta & \quad\mathrm{on} \quad X\setminus B \,,
    \end{array}\right.
\end{equation}
and $\rd\eta=0$.
We can then write
\begin{equation}
\label{eq:partial_rhs}
    f^*\partial^k(\omega)=f^*\eta=\left\{ 
    \begin{array}{rl}
    &f^*_{|_{X\setminus A}}\rd\alpha=\rd f^*_{|_{X\setminus A}}\alpha\quad\mathrm{on}\quad X\setminus A\,,\\
    &f^*_{|_{X\setminus B}}\rd\beta=\rd f^*_{|_{X\setminus B}}\beta\quad\mathrm{on}\quad X\setminus B\,,
\end{array}\right.
\end{equation}
and considering that 
\beq
j^*_{X\setminus A}f^*_{|_{X\setminus A}}\alpha-j^*_{X\setminus B}f^*_{|_{X\setminus B}}\beta=f^*_{|_{X\backslash(A\cup B)}}(j^*_{|_{X\setminus A}}\alpha-j^*_{|_{X\setminus B}}\beta)=f^*_{|_{X\backslash(A\cup B)}}\omega\,,
\eeq
we can now apply the boundary operator $\partial^k$ as in eq.~\eqref{eq:partial_omega} 
\begin{equation}
\label{eq:partial_fomega}
    \partial^k(f^*\omega)=\left\{ \begin{array}{rl} 
    \rd f^*_{|_{X\setminus A}}\alpha & \quad\mathrm{on}\quad X\setminus A\,, \\
    \rd f^*_{|_{X\setminus B}}\beta & \quad\mathrm{on} \quad X\setminus B \,.
    \end{array}\right.
\end{equation}
Comparing eqs.~\eqref{eq:partial_rhs} and~\eqref{eq:partial_fomega}, we can conclude that $f^*\partial^k(\omega)=\partial^k(f^*\omega)$.

Then Lemma~\ref{lemma2fromnotes} implies that 
\begin{align}
0=-L(f,X\backslash(A\cup B)) +L(f,X\backslash A)+L(f,X\backslash B) -L(f,X\backslash(A\cap B))  \,.
\end{align}
\end{proof}

\subsection{Relating Lefschetz numbers and Euler characteristics}

In the previous section we have introduced Lefschetz numbers of a map $f$, and we studied some of their properties. We now discuss how we can actually evaluate the Lefschetz numbers relevant to our scenarios. 

We have already seen that, via the Lefschetz fixed-point theorem, the Lefschetz number of a map $f$ is tightly linked to its fixed points. In fact, more refined versions of the theorem relate the value of $L(f,X)$ to the properties of the fixed points, like the Lefschetz-Hopf theorem in the scenario where $f$ has isolated fixed points. There are also generalizations where the set of fixed points of $f$ in $X$,
\beq
\operatorname{Fix}(f,X) = \big\{x\in X:f(x)=x\big\} \,,
\eeq
is a continuous set. In particular, if $X$ is a smooth closed\footnote{A manifold is closed if it is compact and has no boundary.} manifold, and the fixed-point set is a union of submanifolds, the value of the Lefschetz number can be computed from the Euler characteristics of the submanifolds of fixed points~\cite{frumosu1998mathaiquillenformslefschetztheory}. However, the theorem of ref.~\cite{frumosu1998mathaiquillenformslefschetztheory} does not apply to our setup, because $X=\mathbb{C}^n\setminus \Sigma$ is a manifold, but it is not compact, and thus not closed. In the following we prove a variant of this result which applies to our setting. While our result is very close to existing theorems in the literature (cf.,~e.g.,~ref.~\cite{frumosu1998mathaiquillenformslefschetztheory}), we are not aware of any existing theorem which precisely covers the scenario which we are interested in.

\begin{theorem}\label{thm:main}
    Let $\Sigma$ be a codimension-1 variety in $\mathbb{C}^n$, and $f:\mathbb{C}^n\to \mathbb{C}^n$ a proper invertible affine map of finite order such that $f(\Sigma)\subseteq \Sigma$. Let $V_f=\operatorname{Fix}(f,\mathbb{C}^n)$ be the set of fixed-points of $f$ in $\mathbb{C}^n$. Then
    \beq
    L(f,\mathbb{C}^n\setminus\Sigma) = \chi\big(V_f\setminus(V_f\cap \Sigma)\big)\,.
    \eeq
\end{theorem}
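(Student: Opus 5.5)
The idea is to reduce the Lefschetz number on the non-compact complement to Lefschetz numbers of $f$ on spaces where $f$ acts in a controlled way, and then localize onto the fixed-point set. Throughout, traces are taken on compactly supported cohomology, which is allowed by eq.~\eqref{eq:differentLefshetzSums}.

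\emph{Step 1: reduce to two Lefschetz numbers.} Since $\Sigma\subseteq\mathbb{C}^n$ is closed and $f(\Sigma)=\Sigma$ (finite order), there is the long exact sequence of compactly supported cohomology
\[
\cdots\to H^k_c(\mathbb{C}^n\setminus\Sigma)\to H^k_c(\mathbb{C}^n)\to H^k_c(\Sigma)\to\cdots .
\]
This is the compact-support analogue of Lemma~\ref{lemma:compact}, and it holds without compactness of the ambient space. All maps in it commute with $f^*$, because $f$ is proper. Lemma~\ref{lemma2fromnotes} then gives
\[
L(f,\mathbb{C}^n\setminus\Sigma)=L(f,\mathbb{C}^n)-L(f,\Sigma).
\]
The same additivity holds for $V_f$: we have $V_f\cap\Sigma$ closed in $V_f$, so $\chi(V_f\setminus(V_f\cap\Sigma))=\chi(V_f)-\chi(V_f\cap\Sigma)$, where $\chi$ is the compactly supported Euler characteristic. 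For complex algebraic varieties this coincides with the ordinary one. It therefore suffices to prove the identity
\[
L(f,Y)=\chi(Y\cap V_f)
\]
for $Y=\mathbb{C}^n$ and for $Y=\Sigma$, and more generally for any $f$-invariant complex algebraic variety $Y\subseteq\mathbb{C}^n$.

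\emph{Step 2: the case $Y=\mathbb{C}^n$.} Because $f$ has finite order and is affine, it is linearizable: averaging over the cyclic group $\langle f\rangle$ produces a fixed point $x_0$. Hence $V_f=x_0+\ker(\bA-1)$ is a nonempty affine subspace, and $\chi(V_f)=1$. Since $H^*_c(\mathbb{C}^n)$ is one-dimensional in degree $2n$, $f^*$ acts there by the orientation character. A complex linear map preserves orientation, so $L(f,\mathbb{C}^n)=1$.

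\emph{Step 3: the case of a general invariant variety $Y$ (main obstacle).} Here I would stratify $Y$ into finitely many $\langle f\rangle$-invariant, locally closed, smooth strata on which the finite group acts. Such strata exist by an equivariant Whitney stratification, or alternatively via the equivariant resolution from section~\ref{sec:compactmodel}. I would refine the stratification so that each stratum $S$ either is pointwise fixed or is permuted and acted on freely by $f$ in a way that makes $S\setminus V_f$ a union of $f$-orbits of strata.

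The additivity for closed invariant subsets (Step 1, iterated, together with Lemma~\ref{lemma5fromnotes} to handle unions) reduces everything to single strata. On a stratum with $S\subseteq V_f$, $f^*$ is the identity and $L(f,S)=\chi(S)$. On strata not fixed by $f$, I would show $L(f,S)=0$. If $f$ permutes several strata cyclically, the induced map on $\bigoplus H^*_c$ is a block permutation with no diagonal blocks, so its trace vanishes. If $f$ preserves a stratum $S$ but acts on it without fixed points, I would pass to the compact model: take an invariant triangulation with a simplicial map as in the proof of Proposition~\ref{prop:twistedL_TopL}, subdivided until no simplex is mapped to itself. A fixed-point-free simplicial automorphism has permutation matrices with zero trace, so $L=0$.

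Summing the contributions over strata gives $L(f,Y)=\sum_{S\subseteq V_f}\chi(S)=\chi(Y\cap V_f)$. Combined with Step 1 this yields the theorem. The delicate point is making the fixed-point-free part rigorous for non-compact strata. This is where the compact model, with its invariant tubular neighborhood and strong deformation retraction (Lemma~\ref{eq:Lefschetz-retraction}), has to be invoked, so that the equivariant triangulation argument applies to a compact space.
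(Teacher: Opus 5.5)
Your argument is correct in outline, but it takes a different route from the paper.

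The paper compactifies to $\mathbb{CP}^n$ and applies the Mayer--Vietoris additivity of Lemma~\ref{lemma5fromnotes} to the open sets $\mathbb{CP}^n\setminus\Sigma_p$ and $\mathbb{CP}^n\setminus\overline{V}_f$. It then uses Lemma~\ref{lemma:compact} in the compact ambient space to isolate $L(\bar f,\overline{V}_f)-L(\bar f,\Sigma_p\cap\overline{V}_f)$, where $\bar f$ acts as the identity. After that it needs exactly one vanishing statement, $L(\bar f,\mathbb{CP}^n\setminus(\Sigma_p\cup\overline{V}_f))=0$, which it gets from the compact model of section~\ref{sec:compactmodel} and the Lefschetz fixed-point theorem on a single smooth open set.

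You instead stay in $\mathbb{C}^n$ with compactly supported cohomology and evaluate $L_c(f,\mathbb{C}^n)=1=\chi(V_f)$ directly, so you never see fixed points at infinity. All the work then shifts onto the closed, possibly singular set $\Sigma$, through the localization $L_c(f,\Sigma)=\chi_c(\Sigma\cap V_f)$. This gives a more general statement, valid for any $f$-invariant subvariety. The price is an invariant stratification, Poincar\'e duality on each smooth stratum to trade $H^*_c$ for $H^*$, and an equivariant normal-crossing compactification for the closure of every non-fixed stratum. That is a heavier use of the compact model than in the paper. On the non-fixed strata you only need that $f$ has no fixed points. Freeness of the $\langle f\rangle$-action is not needed, and it fails in general, since powers of $f$ can have fixed points off $V_f$.

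In fact you can drop the stratification. Apply your compact-support exact sequence to the closed subset $\Sigma\setminus V_f$ of $\mathbb{C}^n\setminus V_f$. This gives $L_c(f,\Sigma\setminus V_f)=L_c(f,\mathbb{C}^n\setminus V_f)-L_c(f,\mathbb{C}^n\setminus(\Sigma\cup V_f))$. The first term equals $1-\chi_c(V_f)=0$. The second term is, by Poincar\'e duality on this smooth set, exactly the paper's vanishing term, because $\mathbb{C}^n\setminus(\Sigma\cup V_f)=\mathbb{CP}^n\setminus(\Sigma_p\cup\overline{V}_f)$. In this streamlined form the two proofs agree in substance. They differ only in whether the bookkeeping is done with compact supports in $\mathbb{C}^n$ or with Mayer--Vietoris in $\mathbb{CP}^n$.
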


\begin{proof} 
The fixed-point set of $f$ in $\mathbb{C}^n$ is some affine subspace $\mathbb{C}^m\subseteq \mathbb{C}^n$ and hence not compact. 
We therefore pass to projective space, and we write $\mathbb{C}^n\backslash \Sigma=\mathbb{CP}^n\backslash \Sigma_p$, where $\Sigma_p$ includes the hypersurface at infinity. It is easy to see that $f$ extends to a projective linear map $\bar{f}:\mathbb{CP}^n\to \mathbb{CP}^n$ that keeps the hyperplane at infinity fixed and whose restriction to the affine chart $\mathbb{C}^n\setminus\Sigma$ is precisely $f$. It follows that $\bar{f}$ keeps $\Sigma_p$ fixed, $\bar{f}(\Sigma_p)\subseteq \Sigma_p$. We then have
\beq
L(f,\mathbb{C}^n\setminus\Sigma) = L(\bar{f},\mathbb{CP}^n\setminus\Sigma_p)\,.
\eeq
Let $\overline{V}_f$ denote the fixed-point set of $\bar{f}$ in $\mathbb{CP}^n$,
\beq
\overline{V}_f = \operatorname{Fix}(\bar{f},\mathbb{CP}^n)\,.
\eeq
Since $\bar{f}$ is a projective linear map, $\overline{V}_f$ is a union of projective subspaces, and thus compact.
Then, since $\Sigma_p$ and $\overline{V}_f$ are closed subsets of $\mathbb{CP}^n$ kept fixed by $\bar{f}$, we may apply Lemma~\ref{lemma5fromnotes},
\beq\bsp
L(\bar{f},\mathbb{CP}^n\setminus(\Sigma_p\cup \overline{V}_f)) &\,= L(\bar{f},\mathbb{CP}^n\setminus\Sigma_p)+L(\bar{f},\mathbb{CP}^n\setminus\overline{V}_f)-L(\bar{f},\mathbb{CP}^n\setminus(\Sigma_p\cap \overline{V}_f))\\
&\,= L({f},\mathbb{C}^n\setminus\Sigma)-L(\bar{f},\overline{V}_f)+L(\bar{f},\Sigma_p\cap \overline{V}_f)
\,,
\esp\eeq
where in the last step we used the fact that $\mathbb{CP}^n$ is compact, so that Lemma~\ref{lemma:compact} applies. Since furthermore $\overline{V}_f$ is compact, we have
\beq\bsp
L(\bar{f},\overline{V}_f)-L(\bar{f},\Sigma_p\cap \overline{V}_f)&\,=L(\bar{f},\overline{V}_f\setminus(\Sigma_p\cap \overline{V}_f))\\
&\,= \chi(\overline{V}_f\setminus(\Sigma_p\cap \overline{V}_f)\\
&\,= \chi({V}_f\setminus(\Sigma\cap {V}_f)\,,
\esp\eeq
where we used the fact that by definition $\bar{f}$ acts as the identity when restricted to its fixed-point set, so that the Lefschetz number reduces to the Euler characteristic (cf.~eq.~\eqref{eq:L_to_chi}).
Putting everything together, we arrive at
\beq\bsp
L(\bar{f},\mathbb{CP}^n\setminus(\Sigma_p\cup \overline{V}_f)) &\,=L({f},\mathbb{C}^n\setminus\Sigma)-\chi({V}_f\setminus(\Sigma\cap {V}_f)
\,.
\esp\eeq
Comparing this to the statement of the theorem, we see that it remains to prove that $L(\bar{f},\mathbb{CP}^n\setminus(\Sigma_p\cup \overline{V}_f))=0$.

To see that this is true, we apply the construction of the compact model from subsection~\ref{sec:compactmodel} to the space $X = \mathbb{CP}^n\setminus(\Sigma_p\cup \overline{V}_f)$. Using the notations from subsection~\ref{sec:compactmodel}, we see that we can embed $X$ into a compact space $\overline{W}$ such that $\pi^{-1}(X) = \overline{X} = \overline{W}\setminus D$, where $D$ is a normal crossing divisor. In subsection~\ref{sec:compactmodel}, we have also argued that we can construct an (open) tubular neighborhood $T_{\delta}$ (cf.~eq.~\eqref{eq:tube}) and a strong deformation retraction $r$ from $\overline{X}$ to $\overline{X}_{\delta} = \overline{W}\setminus T_{\delta}$. Then, since $\pi$ is a bijection on $X$, Lemma~\ref{eq:Lefschetz-retraction} implies that 
\beq
L(\bar{f},X) = L(\widetilde{f},\overline{X}) = L(\widetilde{f},\overline{X}_{\delta})\,.
\eeq
where $\widetilde{f}$ is the lift of $\bar{f}$ to $\overline{W}$ (cf. subsection~\ref{sec:compactmodel}, in particular eq.~\eqref{eq:pi-f}). We now observe that $\widetilde{f}$ has no fixed-points on $\overline{X}$ (and thus also not on $\overline{X}_{\delta}\subset \overline{X})$. Indeed, if there was a fixed-point $x\in \overline{X}$, $x = \widetilde{f}(x)$, then $\pi(x)\in X$ would be a fixed-point of $\bar{f}$, because eq.~\eqref{eq:pi-f} implies
\beq
\pi(x) = \pi\big(\widetilde{f}(x)\big) = \bar{f}\big(\pi(x)\big)\,.
\eeq
But this is impossible, because, by definition, $X$ was obtained by removing the fixed-point set $\overline{V}_{f}$. Finally, $\bar{f}$ is continuous, and so is its lift $\widetilde{f}$. Since $\overline{W}$ is compact and $T_{\delta}$ is open, $\overline{X}_{\delta}$ is compact, and so the Lefschetz Fixed-Point Theorem implies that $L(\widetilde{f},\overline{X}_{\delta})=0$, which finishes the proof.
\end{proof}

\subsection{Euler characteristics and  representations of twisted symmetry transformations}
We now apply the results from the previous section to study the decomposition into irreducible representations of the action of the group $G$ of twisted symmetry transformations on the twisted cohomology group $\cH:=H^n(\mathbb{C}^n\setminus\Sigma,\nabla_{\omega})$. 
Recall that $G$ acts on $\cH$ by pullbacks, and in a basis $\bs{\varphi}$ of $\cH$ the action is given by the matrix representation $\bD_{f}$ in eq.~\eqref{eq:Dsf}.
The character of this representation is as usual defined as the trace of the matrices:
\beq
\chi_{\bD}(f) := \Tr(\bD_{f})\,.
\eeq
For simplicity, let us define $X_f :=V_f\setminus(V_f\cap\Sigma)$ as the fixed-point set of $f$ in $\mathbb{C}^n\setminus\Sigma$. The representation admits a decomposition into irreducible representations as in eq.~\eqref{eq:irrep_dec_1}, and $\cH_R$ denotes the subspace of $\cH$ transforming in the irreducible representation $R$.
It is well known from the theory of finite groups that the characters are enough to entirely describe the decomposition into irreducible representations.

\begin{theorem}\label{thm:character_euler}
If the vanishing theorem holds, then the character is given up to a sign by the Euler characteristic of the fixed-point set:
\beq\label{eq:Euler_to_character}
\chi_{\bD}(f) = (-1)^n\,\chi\big(X_f\big)\,.
\eeq
Moreover, the multiplicity of the irreducible representation $R$ in $\bD$ is 
\beq
m_{R} = \frac{(-1)^n}{|G|}\sum_{f\in G}\chi_R(f)^*\,\chi\big(X_f\big)\,,
\eeq
and the dimension of $\cH_R$ is 
\beq\label{eq:thm_dim_chi}
\dim\cH_R = m_{R}\,d_R= \frac{(-1)^n\,d_R}{|G|}\sum_{f\in G}\chi_R(f)^*\,\chi\big(X_f\big)\,.
\eeq
\end{theorem}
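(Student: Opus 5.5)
The plan is to identify the character with a twisted Lefschetz number. Then Proposition~\ref{prop:twistedL_TopL} and Theorem~\ref{thm:main} convert it into an Euler characteristic, and the multiplicity and dimension formulas follow from standard character theory. Concretely, fix $f\in G$. It is an invertible affine map of finite order (since $G$ is finite) with $f(\Sigma)\subseteq\Sigma$, and it preserves the twist. Its pullbacks $H^k(f,X,\nabla_\omega)$ are therefore defined on all twisted cohomology groups of $X=\mathbb{C}^n\setminus\Sigma$.

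First I use the vanishing theorem. Under its hypotheses $H^k(X,\nabla_\omega)=0$ for $k\neq n$, so the alternating sum in eq.~\eqref{eq:twisted_Lefschetz} has a single term:
\beq
L(f,X,\nabla_\omega)=(-1)^n\,\Tr\big(H^n(f,X,\nabla_\omega)\big)=(-1)^n\,\Tr(\bD_f)=(-1)^n\chi_{\bD}(f)\,.
\eeq
One small point must be checked here: $\bD_f$ was defined through the action $f^*\varphi_k=(\bD_f)_{kl}\varphi_l$, which is the transpose convention. Transposition does not change the trace, so this is harmless. Next, Proposition~\ref{prop:twistedL_TopL} replaces the twisted Lefschetz number by the topological one, $L(f,X,\nabla_\omega)=L(f,X)$. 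Theorem~\ref{thm:main} applied to $f$ then gives $L(f,\mathbb{C}^n\setminus\Sigma)=\chi(V_f\setminus(V_f\cap\Sigma))=\chi(X_f)$. Combining these and using $(-1)^{2n}=1$ yields eq.~\eqref{eq:Euler_to_character}.

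For the second part I invoke the orthogonality of irreducible characters of a finite group, as reviewed in appendix~\ref{recapgrouptheory}. The multiplicity is
\beq
m_R=\frac{1}{|G|}\sum_{f\in G}\chi_R(f)^*\chi_{\bD}(f)\,.
\eeq
Substituting eq.~\eqref{eq:Euler_to_character} gives the stated formula for $m_R$. Finally, $\dim\cH_R=m_Rd_R$ by eq.~\eqref{eq:irrep_dec_1}. Equivalently, one can take the trace of the projector $\bP_R$ in eq.~\eqref{eq:Reynolds}. Either route gives eq.~\eqref{eq:thm_dim_chi}.

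I expect the main obstacle to be verifying that the hypotheses of the earlier results hold simultaneously for every $f\in G$. Proposition~\ref{prop:twistedL_TopL} requires the setting of section~\ref{sec:compactmodel}, including the equivariant resolution and invariant tubular neighbourhood for $f$. It also requires that $f$ genuinely preserve the local system, which is where the twist invariance of a twisted symmetry transformation enters. Theorem~\ref{thm:main} requires $f$ to be proper, invertible, affine and of finite order. I would check these conditions briefly and note that they follow from $G$ being a finite group of affine twisted symmetries. A second subtlety is that the vanishing theorem is stated for the twisted cohomology, whereas Theorem~\ref{thm:main} concerns the topological Lefschetz number. The equality in Proposition~\ref{prop:twistedL_TopL} is exactly what bridges this gap, so I would emphasise that step.
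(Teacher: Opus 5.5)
Your proposal is correct and follows the paper's proof. The vanishing theorem reduces the twisted Lefschetz number to $(-1)^n\chi_{\bD}(f)$, Proposition~\ref{prop:twistedL_TopL} and Theorem~\ref{thm:main} convert this into $\chi(X_f)$, and the multiplicity and dimension formulas follow from standard character theory, where the paper takes the trace of the projector $\bP_R$ in eq.~\eqref{eq:Reynolds}, one of the two routes you mention.
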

\begin{proof}
If the vanishing theorem holds, we have
\beq\bsp
\chi_{\bD}(f) &\,= \Tr\big(H^n(f,\mathbb{C}^n\setminus\Sigma,\nabla_{\omega})\big) \\&\,= (-1)^n\,L(f,\mathbb{C}^n\setminus\Sigma,\nabla_{\omega}) \\&\,= (-1)^n\,L(f,\mathbb{C}^n\setminus\Sigma)\,,
\esp\eeq
where the last step follows from Proposition~\ref{prop:twistedL_TopL}. The claim then follows from Theorem~\ref{thm:main}.

The projector $\bP_{R}$ onto the irreducible representation $\cH_R$ is given by eq.~\eqref{eq:Reynolds}. We then have
\beq
\dim\cH_R = \Tr\big(\bP_{R}\big)= \frac{d_R}{|G|}\sum_{f\in G}\chi_R(f)^*\,\chi_{\bD}(f)\,,
\eeq
and the claim follows.
\end{proof}

Theorem~\ref{thm:character_euler} is one of the main mathematical results of our paper. Under the assumption that the vanishing theorem applies, it relates the properties of the action of the group of twisted symmetry transformations on the twisted cohomology group, in particular its character and the decomposition into irreducible representations, to topological properties of the fixed-point sets of the individual group elements. We find it remarkable that there is such a simple connection between topological and group-theoretical quantities.

Let us briefly discuss a direct consequence of Theorem~\ref{thm:character_euler} for the decomposition into irreducible representations. We have already seen that the representations $\bD$ and $\bA$ on the cohomology and homology group are equivalent, and the representations $\bs{\check{D}}$ and $\bs{\check{A}}$ on the dual groups are equivalent to the contragredient representation of $\bD$. Under the assumptions where Theorem~\ref{thm:character_euler} holds, we can say more. From eq.~\eqref{eq:Euler_to_character} we see that the character of $\bD$ is real. Since every finite-dimensional representation of a finite group is equivalent to a unitary representation, we get that the contragredient representation of $(\bD^{-1})^T$ is equivalent to the complex conjugate $\bD^*$. this implies
\beq\chi_{(\bD^{-1})^T} = \chi_{\bD^*} = \chi_{\bD}^* =\chi_{\bD}\,,
\eeq
where we used the fact that the trace is invariant under conjugation and the last step follows from the fact that the character of $\bD$ is real. In other words, we see that the contragredient representation $(\bD^{-1})^T$ has the same character as $\bD$. If two representations have the same character, they admit equivalent decompositions into irreducible representations. As a consequence, under the assumptions that Theorem~\ref{thm:character_euler} holds, all four representations  $\bD$, $\bA$, $\bs{\check{D}}$ and $\bs{\check{A}}$ are equivalent.

We have already seen that in the context of families of integrals the determinant representation plays a special role. Before we discuss the implications for families of Feynman integrals in more detail in the next section, we conclude this section by an application of Theorem~\ref{thm:character_euler} to the trivial representation.
Given an $n$-dimensional manifold $X$ and an action of a finite group $G$ on $X$, recall that the \emph{orbifold} $X/G$ is the space that locally looks like the quotient of an open set $U\subset \mathbb{C}^n$ by the group $G$. It is possible to define concepts from topology and geometry for orbifolds. In particular, the Euler characteristic of an orbifold $X/G$ is defined as~\cite{HirzebruchHöfer1990}
\beq\label{eq:orbifold_Euler}
\chi(X/G) = \frac{1}{|G|}\sum_{g\in G}\chi(X_g)\,,
\eeq
where $X_g$ is the set of fixed-points for $g$. It is also possible to define de Rham cohomology groups for $X/G$, and one can show that the (twisted) de Rham cohomology of $X/G$ can be identified with the subspace of the cohomology of $X$ left invariant by $G$, cf.~\cite{Satake1956}. In other words, we have
\beq
H^k(X/G,\nabla_\omega) = H^k(X,\nabla_\omega)^G\,,
\eeq
where $H^k(X,\nabla_\omega)^G$ is the space of $G$-invariants of $H^k(X,\nabla_\omega)$, i.e., the subspace of $H^k(X,\nabla_\omega)$ transforming in the trivial representation of $G$. If the vanishing theorem holds for $H^k(X,\nabla_\omega)$, then the only non-trivial cohomology group has $k=n$. Theorem~\ref{thm:character_euler} then allows us to relate the dimension of this subspace to the Euler characteristics of the fixed-points sets in $X$. Comparing with the definition of the orbifold Euler characteristic in eq.~\eqref{eq:orbifold_Euler}, we see that we have
\beq\label{eq:orbifold_Euler_2}
\dim H^n(X/G,\nabla_{\omega}) = (-1)^n\,\chi(X/G)\,,
\eeq
in agreement with eq.~\eqref{eq:dim_chi}.

\section{The number of master integrals and Euler characteristics}
\label{sec:number_MIs}

In this section we present a formula to compute the number of master integrals in a given sector in the presence of a non-trivial symmetry group. Our result extends the relation between the number of master integrals and the Euler characteristic to include non-trivial symmetry groups.

\subsection{The number of master integrals from group theory and topology}

In this section we apply Theorem~\ref{thm:character_euler} to obtain a formula for the number $N_{\Theta}$ of master integrals in a sector $\Theta$ of a family of Feynman integrals. The Lee-Pomeransky polynomial of the sector is denoted by $\cG$, and, following the notations from section~\ref{sec:single_sector}, we have (see eq.~\eqref{eq:dimV_dimHG})
\beq
N_{\Theta} = \dim_{\cF} \cV_{\gamma_{\cG}}= \dim_{\cF} \cH_G^{\textrm{m.c.}}\,,
\eeq
where $\cH_G^{\textrm{m.c.}}$ is the subspace of $\cH^{\textrm{m.c.}}$ transforming in the determinant representation of the symmetry group $G = \mathbb{G}(\cG)$ of the family. If we assume that the vanishing theorem holds for $\cH^{\textrm{m.c.}}$ (which is typically expected for Feynman integrals), we may apply Theorem~\ref{thm:character_euler} to compute the dimension of $\cH_G^{\textrm{m.c.}}$. Since the determinant representation is one-dimensional, it is irreducible, and equal to its character. Moreover, the elements of $G$ act by permuting the Feynman parameters of the sector, and for every permutation $\sigma\in G$ we have
\beq
\det (\sigma) = \sign(\sigma)\,.
\eeq
Hence, the formula for the dimension in eq.~\eqref{eq:thm_dim_chi} takes the form
\beq\label{eq:NMIs_raw}
N_{\Theta} = \frac{(-1)^P}{|G|}\sum_{\sigma\in G}\sign(\sigma)\,\chi(X_{\sigma})\,,
\eeq
where $X_{\sigma}$ is the set of points in $\mathbb{C}^P\setminus\Sigma$ left invariant by the permutation $\sigma$ of the Feynman parameters. Equation~\eqref{eq:NMIs_raw} is another main result of our paper, and expresses the number of master integrals $N_{\Theta}$ in terms of the Euler characteristics of the fixed-point sets $X_{\sigma}$. In the remainder of this section, we show that we can cast eq.~\eqref{eq:NMIs_raw} in an equivalent form that makes it more suitable for applications.

\subsection{The fixed-point sets} 

In this section we describe in detail the fixed-point sets $X_{\sigma}$ appearing in eq.~\eqref{eq:NMIs_raw}. Let $X = \mathbb{C}^P\setminus \Sigma$. An element $\sigma\in\mathbb{G}(\cG)\subseteq S_P$ acts on affine space $\mathbb{C}^P$ simply by permuting the coordinates, i.e., if $\bs{x}=(x_1,\ldots,x_P)^T$, then
\beq\sigma(\bx) = (x_{\sigma^{-1}(1)},\ldots,x_{\sigma^{-1}(P)})^T\,.
\eeq
This action of $\mathbb{G}(\cG)$ on $\mathbb{C}^P$ is linear, and we have
\beq
\sigma(\bs{x}) = \bP_{\!\sigma}\,\bs{x}  \,,
\eeq
where $\bP_{\!\sigma}$ is a permutation matrix, $(\bP_{\!\sigma})_{ij} = \delta_{i\sigma(j)}$. Since by definition every $\sigma\in\mathbb{G}(\cG)$ leaves the Lee-Pomeransky polynomial invariant, and thus also the twisted variety $\Sigma$, we obtain an action of $\mathbb{G}(\cG)$ on $X$, i.e., we obtain a map $\sigma:X\to X$. The fixed-point sets are then explicitly given by
\beq\label{fixedpointset}
X_{\sigma} = \big\{\bs{x}\in X: \sigma(\bs{x})=\bP_{\!\sigma}\,\bs{x}=\bs{x}\big\}\,.
\eeq
We now discuss various properties of these fixed-point sets that allow us to simplify eq.~\eqref{eq:NMIs_raw}.

First, we show that we do not need to evaluate the Euler characteristic for each $X_{\sigma}$ separately. Recall that two elements $\sigma,\sigma'\in G$ are said to be \emph{conjugate in $G$}  if there is a $\tau\in G$ such that $\sigma'=\tau \sigma\tau^{-1}$. We then write $\sigma \sim\sigma'$. Conjugacy defines an equivalence relation on the elements of a group, and we denote the set of conjugacy classes of $G$ by
\beq
\mathcal{C}_G := \big\{[\sigma]: \sigma\in G\big\}\,,\textrm{~~~with~~~} [\sigma] = \big\{\sigma'\in G: \sigma'\sim \sigma\big\}\,.
\eeq
\begin{lemma}
    If $\sigma,\sigma^\prime\in G$ are conjugate in $G$, then the Euler characteristics of $X_\sigma$ and $X_{\sigma^\prime}$ agree:
    \begin{align}\label{eulerconjugacy}
    \chi(X_\sigma)=\chi(X_{\sigma^\prime})\,,\qquad \textrm{if~~~} \sigma\sim \sigma'.
    \end{align}
\end{lemma}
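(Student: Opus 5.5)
The plan is to exhibit an explicit homeomorphism between $X_{\sigma}$ and $X_{\sigma'}$. Euler characteristics are topological (indeed homotopy) invariants, so this is enough. Let $\sigma'=\tau\sigma\tau^{-1}$ with $\tau\in G=\mathbb{G}(\cG)$. Since $\tau$ leaves the Lee-Pomeransky polynomial invariant, it preserves $\Sigma$ and hence restricts to a bijection $\tau:X\to X$ with inverse $\tau^{-1}$. It is linear (given by the permutation matrix $\bP_{\!\tau}$), so it is a homeomorphism, in fact a biholomorphism, of $X$.

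The key step is to show that $\tau(X_\sigma)=X_{\sigma'}$. Take $\bx\in X_\sigma$, so $\bP_{\!\sigma}\bx=\bx$. Then
\beq
\bP_{\!\sigma'}\,\bP_{\!\tau}\bx=\bP_{\!\tau}\bP_{\!\sigma}\bP_{\!\tau}^{-1}\bP_{\!\tau}\bx=\bP_{\!\tau}\bx\,,
\eeq
using that $\sigma\mapsto\bP_{\!\sigma}$ is a homomorphism. Since $\tau(\bx)\in X$, this gives $\tau(X_\sigma)\subseteq X_{\sigma'}$. Applying the same argument with $\tau^{-1}$ and the relation $\sigma=\tau^{-1}\sigma'\tau$ gives the reverse inclusion.

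It follows that $\tau|_{X_\sigma}:X_\sigma\to X_{\sigma'}$ is a homeomorphism with inverse $\tau^{-1}|_{X_{\sigma'}}$. Both sets are closed subsets of $X$ cut out by linear equations, so the restrictions are continuous. Homeomorphic spaces have isomorphic (co)homology groups, so all Betti numbers agree and therefore $\chi(X_\sigma)=\chi(X_{\sigma'})$. Alternatively, one can note that $X_\sigma$ is the complement of the hypersurface $\Sigma\cap V_\sigma$ in the linear space $V_\sigma$, and $\tau$ maps the pair $(V_\sigma,\Sigma\cap V_\sigma)$ isomorphically to $(V_{\sigma'},\Sigma\cap V_{\sigma'})$ as algebraic varieties.

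I do not expect a real obstacle. The only point requiring care is that $\tau$ must lie in $G$ and not merely in $S_P$, because this is what guarantees that $\tau$ preserves $\Sigma$ and hence maps $X$ to $X$. I would also check that the convention $(\sigma(\bx))_i=x_{\sigma^{-1}(i)}$ really makes $\sigma\mapsto\bP_{\!\sigma}$ a homomorphism rather than an anti-homomorphism. Even if it were an anti-homomorphism, the same computation would go through with $\tau$ replaced by $\tau^{-1}$, so the conclusion is unaffected.
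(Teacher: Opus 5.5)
Your proposal is correct and follows essentially the same route as the paper: both use the permutation matrix of the conjugating element $\tau\in G$ (you use $\bP_{\!\tau}$, the paper uses $\bP_{\!\tau}^{-1}$) to map one fixed-point set onto the other. Both rely on $G$ preserving $X$ and conclude from the resulting homeomorphism that the Euler characteristics agree. Your check that $\sigma\mapsto\bP_{\!\sigma}$ is a homomorphism under the convention $(\bP_{\!\sigma})_{ij}=\delta_{i\sigma(j)}$ is correct, and the paper uses this fact implicitly.
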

\begin{proof}
If $\sigma\sim \sigma^\prime$, then there exists a $\tau\in G$ such that $\sigma^\prime=\tau\sigma \tau^{-1}$.
If $\bx^\prime\in X_{\sigma^\prime}$ we can locally write
\begin{align}
\bx^\prime=\bP_{\!\sigma^\prime}\,\bx^\prime=\bP_{\!\tau}\bP_{\!\sigma}\bP_{\!\tau^{-1}}\,\bx^\prime\,.
\end{align}
It follows that
\begin{align}\label{eqinv}
\bP_{\!\sigma}\left(\bP_{\!\tau}^{-1}\,\bx^\prime\right)=\bP_{\!\tau}^{-1}\,\bx^\prime\,. \end{align}
Since the space $X$ is closed under the group action, we know that $\bP_{\!\tau}^{-1}\bx^\prime\in X$. Then, according to eq.~\eqref{eqinv}, we conclude that $\bP_{\!\tau}^{-1}\,\bx^\prime\in X_{\sigma}$. Conversely, if $\bx\in X_\sigma$ then analogously $\bP_{\!\tau}\,\bx\in X_{\sigma^\prime}$. 

The permutation matrix $\bP_{\!\tau}$ defines a homeomorphism from $\mathbb{C}^P$ to itself which is preserved under restriction to $X_{\sigma^\prime}$.
Consequently, the restriction 
\begin{align}
\bP_{\!\tau}^{-1}|_{X_{\sigma^\prime}}:X_{\sigma^\prime}\rightarrow   X_{\sigma}\,,
\end{align}
is a global homeomorphism between $X_{\sigma}$ and $X_{\sigma^\prime}$.
Since the fixed-point sets $X_\sigma$ and $X_{\sigma^\prime}$ are homeomorphic, their Euler characteristics are equal.
\end{proof}

Using the previous lemma, and the fact that the signature of a permutation is constant on conjugacy classes, we see that we can reduce the sum in eq.~\eqref{eq:NMIs_raw} to a sum over the conjugacy classes of $G$,
\beq\label{eq:NMIs_2}
N_{\Theta} = \frac{(-1)^P}{|G|}\sum_{\sigma\in \mathcal{C}_G}\,n_{[\sigma]}\,\sign(\sigma)\,\chi(X_{\sigma})\,,
\eeq
where $n_{[\sigma]}=\#[\sigma]$ is the cardinality of the conjugacy class $[\sigma]$, i.e., the number of distinct elements conjugate to $\sigma$. Since the number of conjugacy classes is typically much lower than the number of elements of the group, this often considerably reduces the number of Euler characteristics that need to be computed.

Let us now discuss what exactly the fixed-point sets are. For $\sigma=\id$, we obviously have $X_{\id}=X$. For $\sigma\neq \mathrm{id}$, we proceed by parameterizing the fixed-point set $V_{\sigma}$ of $\sigma$ in  $\mathbb{C}^P$ and we will then intersect it with $X$ to obtain 
the fixed-point set  $X_\sigma = V_\sigma\cap X$.

The spaces $V_{\sigma}$ can be described explicitly. Recall that every permutation $\sigma\in S_P$ admits a decomposition into a product of disjoint cycles, and this decomposition is unique up to the order of terms in the product. Here a \emph{cycle of length $k$} is a permutation $c$ such that
\beq
c : i_{1}\to i_{2}\to\ldots \to i_{k} \to i_{1}\,,
\eeq
while all other elements remain fixed. Every permutation $\sigma$ admits a representation as a product of cycles, $\sigma=c_1\cdots c_r$. We denote the number $r$ of cycles in $\sigma$ by $c(\sigma)$. It is easy to see that $V_{\sigma}$ is the subspace of $\mathbb{C}^P$ where we identify the coordinates that lie on the same cycle $c_k$. In particular, it follows that the dimension of $V_{\sigma}$ is given by the number of cycles in $\sigma$, 
\beq
\dim X_{\sigma} = \dim V_{\sigma} = c(\sigma)\,.
\eeq
We then see from eq.~\eqref{eq:sign_chi} that, if the vanishing theorem holds for the twist restricted to the locus $X_{\sigma}$, then
\beq
\sign(\chi(X_{\sigma}))=(-1)^{c(\sigma)}\,,
\eeq
so that 
\beq\label{eq:final_1}
(-1)^{c(\sigma)}\,\chi(X_{\sigma})=|\chi(X_{\sigma})|\,.
\eeq
Note that we have the relation
\beq\label{eq:final_2}
\sign(\sigma) = (-1)^{P-c(\sigma)}.
\eeq
We see that in the context of the Lee-Pomeransky representation, we have a very concrete description of the fixed-point sets $X_{\sigma}$.

\subsection{The final formula for the number of master integrals}
Inserting eqs.~\eqref{eq:final_1} and~\eqref{eq:final_2} into~eq.~\eqref{eq:NMIs_2}, we obtain our final formula for the number of master integrals in the sector $\Theta$,
\beq\label{eq:NMIs_3}
N_{\Theta} =  \frac{1}{|G|}\sum_{\sigma\in G}\,|\chi(X_{\sigma})|=\frac{1}{|G|}\sum_{\sigma\in \mathcal{C}_G}\,n_{[\sigma]}\,|\chi(X_{\sigma})|\,.
\eeq
In other words, the number of master integrals is obtained by taking the average over the whole group $G$ of the absolute values of the Euler characteristics of the fixed-points sets. 
Let us make some comments about this result. First of all, we stress that eq.~\eqref{eq:NMIs_3} hinges on two assumptions:
\begin{enumerate}
    \item the vanishing theorem applies to the twisted cohomology group for $X$ and all the fixed-points sets $X_{\sigma}$.
    \item There are no additional relations that reduce the number of master integrals, like relations coming from supersectors or non-linear transformations like those considered in refs.~\cite{Buccioni:2023okz,Becchetti:2025rrz,Coro:2025vgn}. 
\end{enumerate}
If these assumptions fail, then there is no guarantee that eq.~\eqref{eq:NMIs_3} computes the number of master integrals in the sector $\Theta$. We stress that these assumptions are similar to the assumptions that underlie the original result of refs.~\cite{Lee:2013hzt,Bitoun:2017nre,bitoun2018numbermasterintegralseuler}
valid in the absence of symmetries. In fact, eq.~\eqref{eq:NMIs_3} reduces to the result of 
refs.~\cite{Lee:2013hzt,Bitoun:2017nre,bitoun2018numbermasterintegralseuler} 
if $G=\{\id\}$ is trivial: in that case the whole space is invariant, 
$X_{\id}=X$, and we obtain $N_{\Theta} = |\chi(X)|$. 
Second, note that eq.~\eqref{eq:NMIs_3} is strikingly 
similar to the formula for the orbifold Euler characteristic in eqs.~\eqref{eq:orbifold_Euler} and~\eqref{eq:orbifold_Euler_2}, and they only differ by the overall sign factor and the absolute value in the sum. This distinction, however, is crucial: without the absolute value, we obtain the dimension of the invariant subspace of the twisted cohomology group. As argued in section~\ref{sec:twistedsymparam}, the number of master integrals, instead, is related to the subspace transforming in the determinant representation.

Finally, let us point out that the fact that only the absolute values of the Euler characteristics enter eq.~\eqref{eq:NMIs_3} has an important practical implication.
To see this, recall that for the identity element, the relevant Euler characteristic $|\chi(X_{\mathrm{id}})|$ in eq.~\eqref{eq:NMIs_2} is just the absolute value of the Euler characteristic of $X$. In that case we have a clear recipe for how to compute the relevant Euler characteristic: we simply need to count the number of critical points of $X$, i.e., the number of solutions to the equations
\beq
\partial_{x_i}\log\cG(\bs{x},\bs{s}) = 0\,,\qquad i=1,\ldots,P\,.
\eeq
We then have (cf., e.g., ref.~\cite{Lee:2013hzt} for an application in the context of Feynman integrals)
\beq
|\chi(X)| = \#(\textrm{critical points of $X$})\,.
\eeq
We now argue that also for the remaining elements $\sigma\neq\id$ the computation of the Euler characteristics can be reduced to counting numbers of critical points (assuming that the critical points form a finite set).

We start by providing an explicit parametrization of the fixed-point set of $V_{\sigma}\subseteq\mathbb{C}^P$.
Note that $V_{\sigma}$ is spanned by the eigenvectors of $\bP_{\!\sigma}$ with eigenvalue $1$:
\begin{align}\label{invsubspaceC}
V_\sigma=\mathrm{Ker}(\bP_{\!\sigma}-\mathds{1})=\mathrm{Span}\{\bs{v}_{\sigma}^1,\dots,\bs{v}_{\sigma}^m\}\,.
\end{align}
A possible parametrization of $V_\sigma$ is a homeomorphism $f_\sigma$ with 
\begin{align}\label{invsubspaceparam}
f_\sigma:\mathbb{C}^m\rightarrow V_\sigma,\quad \bs{y} \mapsto  \sum_{i=1}^{m}y_i\,\bs{v}_{\sigma}^i  \,,
\end{align}
such that $V_\sigma=\mathrm{Im}(f_\sigma)$. 
Let us define the pullback to the fixed-point locus  
\begin{align}\label{twistpullback}
\mathcal{G}_\sigma:\mathbb{C}^m\rightarrow \mathbb{C}\,, \text{~~~with~~~} \mathcal{G}_\sigma(\bs{y}):=\mathcal{G}(f_\sigma(\bs{y}))\,.
\end{align}
We may then describe the fixed-point set $X_{\sigma}$ as
\begin{align}
X_\sigma=\big\{f(\bs{y}): \bs{y}\in \mathbb{C}^m,\, \mathcal{G}_\sigma(\bs{y})\neq 0\big\}\,.
\end{align}
Since $f_\sigma$ is a homeomorphism it follows that 
\begin{align}
X_\sigma\simeq \big\{\bs{y}\in \mathbb{C}^m:\mathcal{G}_\sigma(\bs{y})\neq0\big\}=\mathbb{C}^m\backslash \{\mathcal{G}_\sigma=0\}\,.
\end{align}
We know how to compute the Euler characteristic of such a space by counting critical points of 
\begin{align}
\partial_{y_i}\log(\mathcal{G}_\sigma(\bs{y}))=0\,, \quad i\in \{1,\dots,m\}\,.
\end{align}
If the fixed-point set for a given group element is a single isolated point,  the Euler characteristic is $\chi(\mathrm{point})=1$.  

The previous considerations show that, just like in the case of a trivial symmetry group, we can reduce the problem of computing the number of master integrals to counting the number of critical points. This problem can be addressed using modern computer algebra packages.
A \textsc{Mathematica} package based on the SP$\mathbb{Q}$R package~\cite{ChestnovCrisanti2025} to compute the number of master integrals taking into account possible symmetries is currently being developed~\cite{packagemasters}.

\section{Conclusions}
\label{sec:conclusion}
In this paper we performed a comprehensive study of certain classes of discrete symmetries that naturally arise in the context of multiloop Feynman integrals in dimensional regularization. From a mathematical perspective, these symmetries are most appropriately described in the context of twisted cohomology theories as invertible affine maps that leave the twist invariant and map different irreducible components of $D_+$ and $D_-$ into each other. These irreducible components provide a natural notion of sectors, which generalizes the one known from Feynman integrals. The symmetry transformations naturally organize into a groupoid whose objects are the sectors and the morphisms are the affine maps. The affine maps then define linear maps on the twisted homology and cohomology groups and their duals via pullbacks and pushforwards, and these linear maps naturally preserve the pairings between these groups. If we focus on the subgroupoid of transformations that leave a specific twisted cycle $\gamma$ setwise invariant (but they may change the orientation of the cycle), then we obtain a groupoid of symmetry transformations defined on the vector space generated by all twisted periods obtained by integrating twisted cocycles over this fixed cycle $\gamma$.

If we further focus on families of Feynman integrals defined in loop-momentum space, then this notion of symmetry transformation agrees with the corresponding notion introduced for Feynman integrals in the literature, cf., e.g.,~refs.~\cite{Lee:2012cn,Lee:2013mka,wu2024neatibp10packagegenerating,Wu:2024paw}. However, there is no unique twisted cohomology theory that one may associate with a family of Feynman integrals, and so the question naturally arises if and how the set of symmetry transformations of a family of Feynman integrals depends on the choice of twisted cohomology theory used to define the family. Remarkably, we find that for the loop-momentum, (democratic) Baikov, Lee-Pomeransky and Feynman parameter representations, the resulting groupoid of symmetry transformations has a factorized form, where one factor depends on the integral representation and acts trivially on the integrals (at least for an appropriately chosen basis of cocycles), while the other is universal and can be identified with the set of permutations of the Feynman parameters that map the Lee-Pomeransky polynomials of the two sectors into each other. We explicitly showed how we can lift such a permutation to a symmetry transformation in the loop-momentum or Baikov representations. In particular, the resulting affine maps in loop-momentum space admit very natural interpretations in terms of operations on the underlying Feynman graphs as certain isomorphisms between the graphical matroids. Moreover, our construction naturally explains the appearance of kinematics-dependent symmetry transformations between subsectors recently discussed in the literature~\cite{Wu:2024paw}. 

If we focus on the symmetry transformations from a sector to itself, we obtain a finite group acting on the twisted cohomology group. A substantial part of our paper consisted in applying tools from the representation theory of finite groups to study the decomposition into irreducible representations. One of our main mathematical results is the fact that, under certain assumptions, the character of the representation is given by the Euler characteristic of the corresponding fixed-point set, up to a sign. The proof of this result heavily relies on tools from algebraic topology. We have also studied the implications of our results for families of Feynman integrals. We find that there is a connection between canonical bases for families of Feynman integrals and the group of symmetry transformations in a sector. In particular, the canonical intersection matrix of a sector can be chosen block-diagonal, even in the region of the parameter space where the symmetry is not realized. Finally we have derived a formula which relates the number of master integrals in a sector to the average over the group of the absolute values of the Euler characteristics of the fixed-point sets. The absolute values of the Euler characteristics can efficiently be evaluated by counting numbers of critical points. Our result generalizes a previously known result that relates the number of master integrals in a sector to the Euler characteristic or the number of critical points in the absence of symmetries~\cite{Lee:2013hzt,bitoun2018numbermasterintegralseuler,Mastrolia:2018uzb}.

We have identified various directions for future research. First, the connection between the Euler characteristic and the number of master integrals relies on certain assumptions, like the fact that the vanishing theorem from twisted cohomology theory applies. It would be interesting to better understand when this theorem applies, and how the formula for the master integrals needs to be extended if it does not. Second, by relating the characters of the representation to the Euler characteristics of the fixed-point sets, our analysis has revealed a connection between the topological properties of the underlying space and the representation theory of finite groups. This analysis, however, was restricted to the case of finite groups, which was sufficient to study our discrete symmetries. It would be interesting to investigate if the connection between topology and representation theory can be extended to other classes of symmetries, like Lie groups of Yangian symmetries. Finally, while in this work we have focused on affine changes of variables, it would be interesting to see if it is possible incorporate also non-linear changes of variables, like those recently considered in ref.~\cite{Buccioni:2023okz,Becchetti:2025rrz,Coro:2025vgn}.

\acknowledgments

We are grateful to Giulio Crisanti, Franziska Porkert and Yoann Sohnle for collaborations on related topics, and Giulio Crisanti and Einan Gardi for interesting and useful discussions. We are particularly thankful to Erik Panzer for suggesting to compute the number of master integrals in the presence of a symmetry in terms of the Euler characteristics of the corresponding fixed point sets.
Furthermore, SM, CS and SFS would like to thank the Erwin Schrödinger International Institute for Mathematics and Physics (ESI), University of Vienna (Austria), for the opportunity to participate in the Thematic Programme “Amplitudes and Algebraic Geometry" in 2026 where a part of this work has been carried out and for the support given. Finally CS would like to thank the organizers of the workshop "MathemAmplitudes" (2025) and the Mainz Institute for Theoretical Physics (MITP), where parts of this work have been carried out, for the invitation to the workshop and the possibility to present parts of this work there.
This work is supported in part by the the Deutsche Forschungsgemeinschaft (DFG, German Research Foundation) under Germany’s Excellence Strategy – EXC 3107 – Project-ID 53376636 and the CRC
1639 “NuMeriQS” (CS), and by the European Union (ERC Consolidator Grant LoCoMotive
101043686) (CD, SM, SFS). Views and opinions expressed are however those of the
author(s) only and do not necessarily reflect those of the European Union or the European
Research Council. Neither the European Union nor the granting authority can be held
responsible for them.

\begin{appendix}

\section{A brief review of matroid theory}
\label{app:matroidIntro}

In this appendix we will review some basic facts from matroid theory, with the goal of stating Whitney's classification of matroid isomorphisms in a precise way. 

A \emph{matroid} is a tuple $(E,I)$, where $E$ is some set, the \emph{ground set}, and $I$ is a set of subsets of $E$ such that
\begin{enumerate}
\item $\emptyset \in I$,
\item For all $A,B\subset E$ with $A\subset B$: if $B\in I$ then $A\in I$,
\item If $A,B\in I$ and $|A|>|B|$, then there is an $x\in A\setminus B$ such that $B\cup\{x\}\in I$,
\end{enumerate}
These properties abstract the notion of linear independence, and the elements of $I$ are referred to as the \emph{independent sets} in $E$. This connection can be made explicit through the notion of a \emph{representation} of a matroid. A representation of a matroid is a family of vectors in a vector space, typically given as the columns of a matrix, such that the linear independence relations of these vectors are precisely those of the matroid. Not every matroid admits a representation, and those that do are referred to as \emph{linear} matroids. Another important concept that we will need is that of an \emph{isomorphism} of matroids. Two matroids $M_1=(E_1,I_1)$ and $M_2=(E_2,I_2)$ are isomorphic if there is a bijection $\phi:E_1\rightarrow E_2$ such that $A\in I_1$ if and only if $\phi(A)\in I_2$. 

An important fact about matroid theory is that matroids can be characterized in many different ways. Here we will review three equivalent characterizations which will be of use to us. To this end, we need to introduce some further terminology. Firstly, a \emph{basis} of a matroid is an independent set $B\in I$ that becomes dependent if any element of $E$ is added. We will refer to the set of bases of the matroid by $\mathcal{B}$. Similarly we define a \emph{circuit} of a matroid as a dependent set that becomes independent after removing any element. We will refer to the set of circuits by $\mathcal{C}$. Finally, a \emph{cocircuit} of a matroid is a set $C^*$ which has nontrivial intersection with every basis, but fails to do so if any element of $C^*$ is removed. We will denote the set of cocircuits by $\mathcal{C}^*$.

The set of bases, the set of circuits as well as the set of cocircuits individually already fully determine the matroid, i.e., we can recover the entire set $I$ from either of these. In particular, we can formulate an isomorphism of two matroids as a bijection $\phi$ on the ground sets such that $A$ is a basis if and only if $\phi(A)$ is a basis, and similarly for circuits or cocircuits. Said otherwise, matroid isomorphisms necessarily preserve bases and (co)circuits.

We will actually not be interested in arbitrary matroids, but we will consider \emph{cycle matroids} or \emph{graphical matroids}. These are associated with a graph $G$ and are defined by the tuple of the set of edges $E_G$ and the set of all forests (subsets of edges without cycles) of $G$ as the set of independent sets. We will refer to this matroid by $M(G)$. In this case, the bases of $M(G)$ correspond to the spanning trees, the circuits correspond to the cycles and the cocircuits correspond to minimal cuts, i.e., minimal sets of edges disconnecting the graph. In particular, one equivalent way of defining the independent sets of $M(G)$ is the union of the spanning trees with all of their subsets. Further note that every cycle matroid is linear (indeed it is \emph{regular}, i.e., representable over any field). A possible representation of it is for example given by the incidence matrix of the graph.

There is an important relation between the isomorphisms of cycle matroids and isomorphisms between the associated graphs. This is captured by \emph{Whitney's $2$-Isomorphism Theorem}, which states that two cycle matroids $M(G_1),M(G_2)$ are isomorphic if and only if the graphs $G_1,G_2$ are $2$-isomorphic \cite{whitneyTheorem}. Two graphs are $2$-isomorphic if one can be obtained from the other by a graph isomorphism  of by a sequence of the following moves
\begin{enumerate}
    \item \emph{Vertex identification:} Identifying two vertices of two disconnected components.
    \item \emph{Vertex cleaving:} Disconnecting the graph at a vertex, i.e., performing the inverse operation to vertex identification.
    \item \emph{(Whitney) twisting:} Disconnecting the graph at two vertices and reidentifying in the opposite way.
\end{enumerate}
Note that if we focus on connected graphs, then only the Whitney twist is a valid operation and we can hence in this case understand all isomorphisms between cycle matroids as isomorphisms between the underlying graphs or Whitney twists.

\section{A brief review of group and representation theory}\label{recapgrouptheory}
In this section we briefly review some basic facts about group and representation theory. We focus on the representation theory of symmetric groups. The content of this section is standard material and can be found in many textbooks, e.g., refs.~\cite{fultonHarris,fuchsSchweigert,saganBook}.

\subsection{Basic terminology}
Consider a group $G$ and a finite-dimensional linear representation $\bR$, i.e., a map $\bR: G\rightarrow \mathrm{GL}(V)$, where $V$ is some finite-dimensional vector space. In general this representation will be (completely) reducible, i.e., there exists a matrix $\bM$ such that
\begin{equation}
    \bM \bR(g) \bM^{-1}=\begin{pmatrix}
        \bR_1 & & \\ & \ddots & \\ & & \bR_k
    \end{pmatrix} \,,
\end{equation}
takes a block-diagonal form. We also write $\bR=\bR_1\oplus\dots \oplus \bR_k$. If such a rotation does not exist, the representation is irreducible. Irreducible representations are of high importance, because they form the building blocks of any representation. An important result about irreducible representations is \emph{Schur's lemma}. In the formulation that we need it, it states the following: Let $\bR_1:G\rightarrow \mathrm{GL}(V), \, \bR_2:G\rightarrow \mathrm{GL}(W)$ be two irreducible representations of a group $G$ acting on non-isomorphic vector spaces $V,W$. Then a $\dim V\times \dim W$ matrix $\bA$ satisfying
\begin{equation}
    \bR_1(g)\bA=\bA\bR_2(g) \,,
\end{equation}
for all $g\in G$ has to be trivial, i.e., $\bA=0$.

An important tool to check if a representation is irreducible or not are \emph{characters}. The character of a representation is the function 
\begin{equation}
    \chi_{\bR}:G\rightarrow \mathbb{C}, \quad g\mapsto \chi_{\bR}(g)=\mathrm{Tr}(\bR(g)) \,.
\end{equation}
The characters of the irreducible representations of a group are the \emph{irreducible characters}. An important property of characters is that they are class functions, i.e., they are constant on conjugacy classes
\begin{equation}
    \chi_{\bR}(hgh^{-1})=\chi_{\bR}(g) \,,\quad\forall g,h\in G\,.
\end{equation}
Thus, a character is fully determined by its values on one representative of every conjugacy class. 

Let us now focus on finite groups. Then there are finitely many conjugacy classes and equally-many irreducible characters. For finite groups, one can write down the \emph{character table}, which is a list of all irreducible characters and their values on the conjugacy classes.
We can define an inner product on characters by
\begin{equation}
    \langle \chi_1, \chi_2 \rangle= \frac{1}{|G|}\sum_{g\in G} \chi_1(g){\chi}_2(g)^* =\sum_{c\in\mathcal{C}_G} \frac{n_c}{|G|}\chi_1(g_c){\chi}_2(g_c)^* \,,
\end{equation}
where $\mathcal{C}_G$ is the set of conjugacy classes of $G$, and $n_c$ and $g_c$ are the cardinality and a representative of $c$, respectively.
Importantly, irreducible characters are an orthonormal basis for the vector space of class functions with respect to this inner product. In particular, a character $\chi_{\bR}$ is irreducible if and only if it satisfies
\begin{equation}
    \langle \chi_{\bR},\chi_{\bR} \rangle=1 \,.
\end{equation}
This constitutes a very simple criterion to check if a character is irreducible.

If a character $\chi_{\bR}$ is not irreducible, its representation $\bR$ can be decomposed into irreducible representations $\bR=\bR_1\oplus \dots \oplus \bR_k$, from which it follows that $\chi_{\bR}=\chi_{\bR_1}+\dots +\chi_{\bR_k}$. In general, some of the $\bR_i$ could be identical, and hence we generally have
\begin{equation}
    \chi_{\bR}=\sum_i m_i\chi_{\bR_i} \,,
\end{equation}
where $\chi_{\bR_i}$ are the irreducible characters. The coefficients can simply by projected out using the inner product
\begin{equation}
    m_i=\langle \chi_{\bR_i},\chi_{\bR}\rangle \,.
\end{equation}
This provides a simple way of finding the decomposition of a reducible representation into irreducible representations.

An important application of characters is the computation of the \emph{restriction} $\bR\big|_H$ of a representation $\bR$ of $G$ to a subgroup $H$, i.e., the map
\begin{equation}
    \bR\big|_H:H\rightarrow \mathrm{GL}(V),\quad  h\mapsto \bR(h) \,,
\end{equation}
where $V$ is the representation space of $\bR$. Even if $\bR$ is an irreducible representation of $G$, it might be reducible as a representation of $H$. Explicitly, let $\bR_i$ be an irreducible representation of $G$ and denote the irreducible representations of $H$ by $\bS_1,\dots ,\bS_r$. We then have
\begin{equation}
    \bR_i=\bigoplus_{j=1}^r \mu_{ij}\bS_j \,,
\end{equation}
for some appropriate coefficients $\mu_{ij}\in\mathbb{Z}_{\geq 0}$. Such decompositions are referred to as \emph{branching coefficients}. Using characters we can also write down an explicit formula for the coefficients $\mu_{ij}$, namely
\begin{equation}
\label{eq:muij}
    \mu_{ij}=\frac{1}{|H|}\sum_{h\in H} \chi_i^G(h){\chi}_j^H(h)^* \,,
\end{equation}
where $\chi_i^G$ and $\chi_j^H$ are the characters of $\bR_i$ and $\bS_j$, respectively.
\subsection{The representation theory of the symmetric groups}
In the main text we mostly deal with the symmetric groups $S_n$ and their representations. Here we briefly review their representation theory.

Generally, the conjugacy classes and the irreducible representations of $S_n$ are labeled by partitions of $n$,
\begin{equation}
    \lambda=(\lambda_1,\dots ,\lambda_k) \quad \text{with}\quad \lambda_1\geq \dots \geq\lambda_k \quad \text{and} \quad \lambda_1+\dots \lambda_k=n \,.
\end{equation}
All irreducible representations of symmetric groups are real and can be realized by integer matrices. The combinatorics of the partitions of $n$ is conveniently captured by Young diagrams, i.e., left-justified arrays of square boxes with $k$ rows of lengths $\lambda_1,\dots ,\lambda_k$. The dimension of a representation of $S_n$ corresponding to the Young diagram $\lambda$ is given by the hook length formula
\begin{equation}
    \dim \bR_{\lambda}= \frac{n!}{\prod_h \ell_h} \,.
\end{equation}
Here the product is taken over all hooks (entering the diagram from the bottom and exiting to the right) admitted by the diagram and $\ell_h$ denotes the respective hook lengths, i.e., the number of boxes the hook passes through.

For representations of the symmetric group $S_n$, it is also particularly simple to compute the branching rules for the restriction to $S_{n-1}$, which are captured by the \emph{Branching Theorem}. Explicitly, for $[\lambda]$ the representation associated with the partition $\lambda$ of $n$ we have
\begin{equation}
    \left. [\lambda] \right|_{S_{n-1}}=\sum_{A}[\lambda\setminus A] \,,
\end{equation}
where the sum is over all \emph{removable boxes} of $\lambda$, i.e., all hooks of length $1$.

As a example consider the group $S_3$. There are $3$ distinct partitions of $3$ labeled by the Young diagrams
\ytableausetup{boxsize=0.7em}
\begin{equation}
    \begin{ytableau} \ & \ & \  \end{ytableau}\,, \quad
    \begin{ytableau} \ & \ \\ \  \end{ytableau} \,, \quad
    \begin{ytableau} \ \\ \ \\ \  \end{ytableau}\,,
\end{equation}
and hence $3$ conjugacy classes and irreducible representations. The conjugacy classes correspond to the different cycle types and have representatives
\begin{equation}
    \mathds{1},\quad (12), \quad (123) \,.
\end{equation}
These have the cardinalities
\begin{equation}
    n_{[\mathds{1}]}=1, \quad n_{[(12)]}=3, \quad n_{[(123)]}=2 \,.
\end{equation}
The character table takes the form 
\ytableausetup{boxsize=0.35em}
\begin{center}
\begin{tabular}{ |c||c|c|c| } 
 \hline
 $S_3$ & $\mathds{1}$ & $(12)$ & $(123)$ \\ 
 \hline \hline
 \begin{ytableau} \ & \ & \  \end{ytableau} & 1 & 1 & 1   \\
 \hline
 \begin{ytableau} \ \\ \ \\ \  \end{ytableau} & 1 & -1 & 1   \\
 \hline
 \begin{ytableau} \ & \ \\ \  \end{ytableau} & 2 & 0 & -1  \\
 \hline
\end{tabular}
\end{center}
\ytableausetup{boxsize=normal}
Note that the value of the characters on the identity given in the first column, yield precisely the dimensions of the respective representations.

In the context of the equal-mass banana integral we further encounter the group $S_4$. Here there are $5$ distinct partitions and hence $5$ conjugacy classes and irreducible representations. They are labeled by the Young diagrams
\ytableausetup{boxsize=0.7em}
\begin{equation}
    \begin{ytableau} \ & \ & \ & \ \end{ytableau}\,, \quad
    \begin{ytableau} \ & \ \\ \ & \ \end{ytableau} \,, \quad
    \begin{ytableau} \ & \ & \ \\ \  \end{ytableau} \,, \quad
    \begin{ytableau} \ & \ \\ \ \\ \  \end{ytableau} \,, \quad
    \begin{ytableau} \ \\ \ \\ \ \\ \ \end{ytableau}\,.
\end{equation}
\ytableausetup{boxsize=normal}
The conjugacy classes correspond to the different cycle types with representatives
\begin{equation}
    \mathds{1},\quad [(12)(34)],\quad [(12)], \quad [(123)], \quad [(1234)] \,,
\end{equation}
and cardinalities
\begin{equation}
    n_{\mathds{1}}=1, \quad n_{[(12)(34)]}=3, \quad n_{[(12)]}=6, \quad n_{[(123)]}=8, \quad n_{[(1234)]}=6 \,.
\end{equation}
The character table then takes the form 
\ytableausetup{boxsize=0.35em}
\begin{center}
\begin{tabular}{ |c||c|c|c|c|c| } 
 \hline
 $S_4$ & $\mathds{1}$ & $(12)(34)$ & $(12)$ & $(123)$ & $(1234)$ \\ 
 \hline \hline
 \begin{ytableau} \ & \ & \ & \ \end{ytableau} & 1 & 1 & 1 & 1 & 1  \\
 \hline
 \begin{ytableau} \ \\ \ \\ \ \\ \ \end{ytableau} & 1 & 1 & -1 & 1 & -1  \\
 \hline
 \begin{ytableau} \ & \ \\ \ & \ \end{ytableau} & 2 & 2 & 0 & -1 & 0  \\
 \hline
 \begin{ytableau} \ & \ & \ \\ \  \end{ytableau} & 3 & -1 & 1 & 0 & -1  \\
 \hline
 \begin{ytableau} \ & \ \\ \ \\ \  \end{ytableau} & 3 & -1 & -1 & 0 & 1  \\
 \hline
\end{tabular}
\end{center}
\ytableausetup{boxsize=normal}

\section{Period matrix for \texorpdfstring{$F_2$}{F2}}
\label{appF1}
In this appendix, we present an explicit expression for the period matrix $\bP_{F_2}$ in example~\ref{ex2F}. The matrices  $\bs{\mathcal{T}}_c$ and $\bT_h$  transform from the bases in refs.~\cite{goto2013,Abreu_2020} to the ones defined in eqs.~\eqref{f2basis} and~\eqref{hombasisf2}. The period matrix is given by 
\begin{align}
\bP_{F_2}=\bs{\mathcal{T}}_c\,\widetilde{\bP}^{(F_2)}\bs{\mathcal{T}}^T_h\,,
\end{align} where $\widetilde{\bP}_{F_2}$ is the period matrix in terms of $F_2$ functions derived in ref.~\cite{goto2013}:
\begin{align} 
\label{eq:periodMatF2}
&\widetilde{\bP}^{(F_2)}_{ij}=(-1)^{i+1}\frac{1}{4} (2 \eps +1)^2\,\tilde{\mathcal{F}}^{(j)}_2\left(\eps +\frac{1}{2},\bx_i,-2 \eps ,-2 \eps\right)\,.\end{align}
In each row the argument changes according to 
\begin{align}
\bx_1=\left(-\eps -\frac{1}{2},-\eps -\frac{1}{2} \right)\,,&\quad\bx_2=\left(\frac{1}{2}-\eps ,-\eps -\frac{1}{2}\right)\notag\,,\\
\bx_3=\left(\frac{1}{2}-\eps ,\frac{1}{2}-\eps\right)\,,&\quad\bx_4=\left(\frac{1}{2}-\eps ,\frac{1}{2}-\eps\right)\,,
\end{align}
and the column changes the function 
\begin{align}
\widetilde{\mathcal{F}}^{(1)}_2(\alpha,\beta,\beta',\gamma,\gamma')&=\frac{\Gamma (\beta ) \Gamma (\beta') \Gamma (\gamma -\beta ) \Gamma (\gamma'-\beta') }{\Gamma (\gamma ) \Gamma (\gamma')}F_2(\alpha ;\beta ,\beta';\gamma ,\gamma';x,y)\,,\notag\\
\widetilde{\mathcal{F}}^{(2)}_2(\alpha,\beta,\beta',\gamma,\gamma')&=-\frac{\Gamma (1-\alpha ) \Gamma (\beta') \Gamma (\gamma -1) x^{1-\gamma } \Gamma (\gamma'-\beta') \exp (i \pi  (\beta -\gamma )) }{\Gamma (\gamma') \Gamma (\gamma -\alpha )}\notag\\
&\times F_2(\alpha -\gamma +1;\beta -\gamma +1,\beta';2-\gamma ,\gamma';x,y)\,,\notag\\
\widetilde{\mathcal{F}}^{(3)}_2(\alpha,\beta,\beta',\gamma,\gamma')&=-\frac{\Gamma (1-\alpha ) \Gamma (\beta ) \Gamma (\gamma'-1) y^{1-\gamma'} \Gamma (\gamma -\beta ) \exp (i \pi  (\beta'-\gamma')) }{\Gamma (\gamma ) \Gamma (\gamma'-\alpha )}\notag\\
&\times F_2(\alpha -\gamma'+1;\beta ,\beta'-\gamma'+1;\gamma ,2-\gamma';x,y)\,, \notag\\
\widetilde{\mathcal{F}}^{(4)}_2(\alpha,\beta,\beta',\gamma,\gamma')&=\frac{\Gamma (1-\alpha ) \Gamma (\gamma -1) \Gamma (\gamma'-1) x^{1-\gamma } y^{1-\gamma'} \exp (i \pi  (\beta +\beta'-\gamma -\gamma')) }{\Gamma (-\alpha +\gamma +\gamma'-1)}\notag\\
&\times F_2(\alpha -\gamma -\gamma'+2;\beta -\gamma +1,\beta'-\gamma'+1;2-\gamma ,2-\gamma';x,y)\,,\notag
\end{align}
with the transformations to $\bP_{F_2}$ 
\beq
\bs{\mathcal{T}}_c=\frac{x y (6 \eps +1) (10 \eps +3)}{4 (x+y-1)}\widetilde{\bs{\mathcal{T}}}_c\,, 
\eeq
with
\begin{align}
\notag\widetilde{\bs{\mathcal{T}}}_c&=\left(
\begin{array}{cccc}
 \frac{2 y (4 \eps +1)}{(2 \eps +1) (x+y-1)} & -\frac{2 (x-1) y (4 \eps +1)}{(2 \eps +1) (x+y-1)} & y & \frac{(y-1) y}{x+y-1} \\
 \frac{2 x (4 \eps +1)}{(2 \eps +1) (x+y-1)} & \frac{(x-1) x}{x+y-1} & x & -\frac{2 x (y-1) (4 \eps +1)}{(2 \eps +1) (x+y-1)} \\
 0 & 0 & -\frac{2 x y (6 \eps +1) (10 \eps +3)}{(2 \eps +1)^2 (x+y-1)} & 0 \\
 \frac{2 \eps  (x+y+3)+x+y+1}{(10 \eps +1) (x+y-1)} & \frac{(x-1) (x (8 \eps +2)-2 (y+3) \eps -y-1)}{(10 \eps +1) (x+y-1)} & \frac{8 \eps  (x+y)+2 x+2 y-6 \eps -1}{10 \eps +1} & -\frac{(y-1) (2 \eps  (x-4 y+3)+x-2 y+1)}{(10 \eps +1) (x+y-1)} \\
\end{array}
\right)\,,\\
\bs{\mathcal{T}}_h&=\left(
\begin{array}{cccc}
 \frac{1}{4} & \frac{1}{2} & \frac{1}{2} & 1 \\
 -\frac{1}{4} & \frac{1}{2} & -\frac{1}{2} & 1 \\
 \frac{1}{4} & -\frac{1}{2} & -\frac{1}{2} & 1 \\
 -\frac{1}{4} & -\frac{1}{2} & \frac{1}{2} & 1 \\
\end{array}
\right)\,.
\end{align}
\section{Some algebraic topology}
\subsection{Simplicial homology}\label{simplicialhomology}
In the following we review some basics about simplicial (co)homology that are relevant for the proof of Proposition~\ref{prop:twistedL_TopL}. Introductory material can be found, e.g., in ref.~\cite{hatcherAlgTop}.

The simplical $k$-chain group $C_k(X)$ of a triangulable space\footnote{A topological space $X$ is triangulable if it admits a homeomorphism to a simplicial complex.} $X$ is the free $\mathbb{Z}$-module 
\begin{align}
C_k(X):=\bigoplus_{\sigma_k^{(i)} \in C_X}\mathbb{Z}\,\sigma_k^{(i)}=\left\{   \sum_{i} n_i\,\sigma_k^{(i)}\,,\quad n_i\in \mathbb{Z}\right\}\,,
\end{align}
where $\sigma_k$ is a $k$-simplex of the associated simplicial complex $C_X$.
The \textit{chain complex} is then 
\begin{align}
    \dots \overset{\delta_{k+1}}{\to} C_k(X)\overset{\delta_{k}}{\to}C_{k-1}(X) \overset{\delta_{k-1}}\to \dots\,,
\end{align}
where the boundary maps $\delta_k:C_k(X)\rightarrow C_{k-1}(X)$, satisfy $\delta_{k-1}\circ \delta_k=0$. 
The $k^{\textrm{th}}$ simplicial homology group is defined as
    \begin{align}
H_k(X):=\frac{\ker(\delta_k:C_k(X)\to C_{k-1}(X))}{\Ima(\delta_{k+1}:C_{k+1}(X)\to C_{k}(X))}=\frac{Z_k(X)}{B_{k}(X)}\,.
    \end{align}
With coefficients in $\mathbb{C}$ there is an isomorphism
\begin{align}\label{decompCX}
   C_k(X)\simeq B_k(X)\oplus H_k(X)\oplus B_{k-1}(X)\,.
\end{align}
Now denote by $b_k^{(i)}$ a basis of $B_k(X)$: 
\begin{align}\label{basisBk}
  B_k(X)=\left\langle  b_k^{(i)},\dots , b_k^{(d_{B_k})}\right\rangle_\mathbb{C}\,,
\end{align}
where $d_{B_k}$ is the dimension of the $\mathbb{C}$-vector space $B_k(X)$. And similarly 
\begin{align}
  B_{k-1}(X)=\left\langle  b_{k-1}^{(i)},\dots , b_{k-1}^{(d_{B_{k-1}})}\right\rangle_\mathbb{C}\,,\quad  H_k(X)=\left\langle  h_k^{(i)},\dots , h_k^{(d_{H_k})}\right\rangle_\mathbb{C}\,,
\end{align}
with $h_k^{(i)}\in Z_k(X)$, where $d_{B_{k-1}}$ and $d_{H_k}$ are the dimensions of $  B_{k-1}(X)$ and $ H_k(X)$ respectively. 
Now we want to lift the basis of $ B_{k-1}(X)$ to a basis of some isomorphic vector space $\widetilde{B}_{k}(X) \simeq B_{k-1}(X)$ of $k$-simplices. Therefore, we use that $\delta_{k}:C_k(X)\to B_{k-1}(X)$ is surjective, such that there exist $\tilde{b}_{k}^{(i)} \in C_{k}(X)$ with $\delta_{k}(\tilde{b}_{k}^{(i)})=b_{k-1}^{(i)} $, for all $i$. Then define
\begin{align}\label{isombasisB}
\tilde{B}_k(X):=\left\langle \tilde{b}_k^{(1)},\dots ,\tilde{b}_{k}^{(d_{B_{k-1}})}\right \rangle\simeq B_{k-1}(X)\,.
\end{align}
Then note that the map $\delta_{k}|_{\tilde{B}_k(X)}:\tilde{B}_k(X)\to B_{k-1}(X)$ is injective, as $\ker(\delta_{k}|_{\tilde{B}_k(X)})=\{0\}$.
Then
\begin{align}\label{basisCX}
C_k(X)=   \left\langle b_k^{(1)},\dots,b_{k}^{(d_{B_{k}})},h_k^{(1)},\dots,h_{k}^{(d_{H_k})},\tilde{b}_k^{(1)},\dots,\tilde{b}_{k}^{(d_{B_{k-1}})} \right\rangle\,.
\end{align}
\subsubsection{Induced maps and the simplicial Lefschetz number}\label{app:inducedmaps}
\label{inducedmap}
A map $f:X\to Y$ between triangulable spaces $X$ and $Y$ induces maps $f_*:C_n(X)\to C_n(Y)$ by simply acting on respective $n$-simplices. Note that in particular any such map commutes with the boundary operator, $f_*\circ \partial=\partial \circ f_*$ (cf., e.g.,~ref.~\cite[p.111. Proposition 2.9, Proposition 2.10]{Hatcher2002}).
Further, a chain map between chain complexes induces homomorphisms between the homology groups of the two complexes and they induce the same homomorphisms if and only if two maps are homotopic.

For a simplicial complex, the Lefschetz number of a map $f:X\to X$ is determined by the induced maps on $C_k(X)$. This can be seen in the following way.
We use the isomorphism in eq.~\eqref{decompCX},
to write
any $c_k\in C_k(X)$ as 
\begin{align}
c_k=  \sum^{d_{B_{k}}}_{i=1} \alpha_i b_k^{(i)}+ \sum_{j=1}^{d_{H_k}}\beta_j h_{k}^{(j)}+\sum_{l=1}^{d_{B_{k-1}}}\gamma_l \tilde{b}_k^{(l)}\,.
\end{align}
Note that $f$ acts one these subspaces as 
\begin{align}
 f( B_k(X))&\subseteq B_k(X)\,,\quad    f(\tilde{H}_k(X))\subseteq \tilde{H}_k(X)\oplus B_k(X)\,,\\
  f(\tilde{B}_k(X))&\subseteq \tilde{B}_k(X)\oplus Z_k(X)\simeq  \tilde{B}_k(X)\oplus B_k(X)\oplus  \tilde{H}_k(X)\notag\,.
\end{align}
Then the induced map on $ C_k(K_\delta)$ is lower block triangular. Its trace is determined by the diagonal blocks
\begin{align}
B_k(f,X)&:B_{k}(X)\to B_k(X)\,, \quad \tilde{B}_k(f,X):\tilde{B}_{k}(X)\to \tilde{B}_{k}(X)\,, \\
\tilde{H}_{k}(f,X)&:\tilde{H}_{k}(X)\to \tilde{H}_{k}(X)\,.\notag
\end{align}
Then we notice that 
\begin{align}
    \sum_{k=0}^{2n}&(-1)^k \text{Tr}(C_k(f,X))
  \\
\notag    &  =\text{Tr}(C_0(f,X))+ \sum_{k=1}^{2n}(-1)^k \Big[ \text{Tr}(B_{k-1}(f,X))+ \text{Tr}\left(\tilde{H}_{k}(f,X)\right)+\text{Tr}\left( B_k(f,X)\right)   \Big]\,.
\end{align}
Additionally, notice that for $k=0$ we have $   C_0(X)=Z_0(X)\simeq B_0(X)\oplus H_0(X) $.
Then the trace of the induced chain map decomposes as 
\begin{align}
    \sum_{k=0}^{2n}&(-1)^k \text{Tr}(C_k(f,X))
\notag  \\
    &  =L(f,X)+ \sum_{k=1}^{2n}(-1)^k \Big( \text{Tr}\left(B_{k-1}(f,X)\right)+\text{Tr}\left( B_k(f,X)\right)   \Big)+\text{Tr}\left( B_0(f,X)\right)
      \\
    &  =L(f,X)-\sum_{k^\prime=0}^{2n}(-1)^{k^\prime} \text{Tr}\left(B_{k^\prime}(f,X)\right)+\sum_{k=0}^{2n}(-1)^k\text{Tr}\left( B_k(f,X)\right)\,.  \notag 
\end{align}
Finally, 
\begin{align}\label{inducedC}
      \sum_{k=0}^{2n}(-1)^k \text{Tr}(C_k(f,X))=L(f,X) \,.
\end{align}

For the proof of Lemma~\ref{prop:twistedL_TopL} a particular type of induced map is relevant, namely a
\textit{simplicial map}. If $K$ and $L$ are simplicial complexes, then a simplicial map is a map $f:K\to L$ such that each vertex $v_i\in K$ is sent to a vertex $f(v_i)\in L$ and each simplex $[v_0,\dots, v_k]\in K$ is mapped to a simplex $f([v_0,\dots, v_k])=[f(v_0),\dots, f(v_k)]\in L$. 
For a bijective map $f:K\to K$ this in particular implies that the induced map on the simplices is a signed permutation.
For any simplicial complex, there exists a triangulation such that any map $f$ becomes homotopic to a simplicial map. This is due to the simplicial approximation theorem. 
\begin{theorem}[Simplicial Approximation Theorem]  (cf., e.g., ref.~\cite[Theorem 2C.1]{hatcherAlgTop})\label{simplicialapproxtheorem} Let $K$ be a finite simplicial complex, and consider some map  $f:K\to K$. For some $r\in \mathbb{N}$, there exists a simplicial map $\phi:K^{r}\to K$ such that $\phi$ is homotopic to $f$.  Here $K^r$  is the $r^{\textrm{th}}$ barycentric subdivision.
\end{theorem}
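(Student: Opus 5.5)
The statement is the classical simplicial approximation theorem, and I would prove it by the standard star-condition argument (as in ref.~\cite{hatcherAlgTop}), working throughout with the geometric realization $|K|\subset\mathbb{R}^N$ and its Euclidean metric. The key notion is the open star $\mathrm{st}(v)$ of a vertex $v$: the union of the interiors of all simplices having $v$ as a vertex. I will use two elementary facts about stars.
\begin{itemize}
\item The open stars of the vertices of $K$ form an open cover of $|K|$.
\item Vertices $v_0,\ldots,v_k$ span a simplex of $K$ if and only if $\bigcap_i \mathrm{st}(v_i)\neq\emptyset$.
\end{itemize}
The plan is first to subdivide until the map $f$ satisfies a star condition, then to read off $\phi$ vertex by vertex, and finally to write down a straight-line homotopy.

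\emph{Step 1 (star condition).} The sets $f^{-1}(\mathrm{st}(v))$, for $v$ a vertex of $K$, form an open cover of the compact metric space $|K|$. Let $\lambda>0$ be a Lebesgue number of this cover. Next I use the mesh estimate for barycentric subdivision: if $K$ has dimension $n$, the mesh of $K^r$ is at most $\big(\tfrac{n}{n+1}\big)^r\,\mathrm{mesh}(K)$. Choose $r$ so large that every star $\mathrm{st}(w)$ of a vertex $w$ of $K^r$ has diameter less than $\lambda$; twice the mesh suffices. Then for each vertex $w$ of $K^r$ there is a vertex $v$ of $K$ with $f(\mathrm{st}(w))\subseteq\mathrm{st}(v)$. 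Set $\phi(w):=v$, making an arbitrary choice if several $v$ qualify.

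\emph{Step 2 ($\phi$ is simplicial).} Let $[w_0,\ldots,w_k]$ be a simplex of $K^r$ and let $x$ be a point in its interior. Then $x\in\bigcap_i\mathrm{st}(w_i)$, so
\[
f(x)\in\bigcap_i \mathrm{st}(\phi(w_i)).
\]
This intersection is therefore nonempty, so the $\phi(w_i)$ span a simplex of $K$. Extending $\phi$ linearly on each simplex gives a simplicial map $\phi:K^r\to K$.

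\emph{Step 3 (homotopy).} Fix $x\in|K|$ and let $\sigma$ be the unique simplex of $K$ whose interior contains $f(x)$. Let $[w_0,\ldots,w_k]$ be the carrier of $x$ in $K^r$. The inclusion $f(x)\in\mathrm{st}(\phi(w_i))$ says precisely that $\phi(w_i)$ is a vertex of $\sigma$. Hence $\phi(x)$, which is a convex combination of the $\phi(w_i)$, lies in the closed simplex $\sigma$. It follows that the straight-line homotopy
\[
h(t,x)=(1-t)\,f(x)+t\,\phi(x)
\]
stays inside $\sigma\subseteq|K|$ for all $t$. It is continuous as a map into $\mathbb{R}^N$, so it is continuous into $|K|$, and it gives $\phi\simeq f$.

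The only genuinely quantitative input is the mesh estimate used in Step 1, and I expect it to be the main obstacle. To prove it, I would first note that the diameter of a simplex is attained between two of its vertices. Then I would bound the distance from the barycenter $b$ of an $m$-simplex to any of its vertices by $\tfrac{m}{m+1}$ times the diameter, and proceed by induction on faces to show that every simplex of the subdivision has diameter at most $\tfrac{n}{n+1}$ times that of the simplex it subdivides. The remaining steps are routine bookkeeping with open stars.
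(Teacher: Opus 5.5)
Your proposal is correct. It is exactly the standard open-star and Lebesgue-number argument of Hatcher's Theorem 2C.1, which the paper cites rather than proves, including the mesh bound $\mathrm{mesh}(K^r)\le\bigl(\tfrac{n}{n+1}\bigr)^r\mathrm{mesh}(K)$ and the straight-line homotopy staying inside the closed simplex carrying $f(x)$.
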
 
The \textit{barycenter} of a $k$-simplex $\Delta_k=\sum_{i=0}^n \lambda_i v_i$ is 
\begin{align}
b_{[v_0,\dots , v_k]}=\frac{1}{1+k}\sum_{l=0}^kv_l\,.
\end{align} 
and a \textit{barycentric subdivision} is a triangulation such that the vertices are the barycenters of the simplices in $K$. Note that in particular a vertex is a $0$-simplex with itself as its barycenter, such that the vertices of $K$ are still vertices of $K^r$.

\subsection{Various dualities and isomorphisms}\label{dualities}
In the following, we summarize some isomorphisms between different types of cohomology groups. 
First, as a consequence of the universal coefficients theorem, the singular cohomology group $H^k(X,\mathbb{C})$ with coefficients in $\mathbb{C}$ is dual to the singular homology group (cf., e.g., ref.~\cite[p.198]{hatcherAlgTop}) 
\begin{align}\label{singularduality}
  H^k(X,\mathbb{C}) \simeq H_k(X,\mathbb{C})^\vee\,,
\end{align}
for $X$ a topological space.
Further, because of the non-degenerate period pairing, there is an isomorphism between de Rham and singular cohomology (see, e.g., ref.~\cite[appendix D, Theorem D.3.2]{Conlon2001}) 
\begin{align}\label{deRhamduality}
    H^k_\text{dR}(X,\mathbb{C})\simeq H^k(X,\mathbb{C})\,.
\end{align}
For $X$ be an orientable, real manifold of dimension $2 n$, Poincaré duality relates de Rham cohomology to its compactly supported version
(see, e.g., ref.~\cite[p.46]{BottTu1982}) 
\begin{align}\label{poicaredualitycomplex}
    H_{\text{dR}}^k(X,\mathbb{C})\simeq H_{\text{dR},c}^{2n-k}(X,\mathbb{C})^\vee \,.
\end{align}
Finally, if $X$ is a compact, oriented, real $2n$-dimensional closed manifold without boundary and $A\subset X$ a closed subset, then there are isomorphisms (cf., e.g., ref.~\cite{brown2025positivegeometriescanonicalforms}) 
\begin{align}
   H_{\mathrm{dR},c}^{k}(X\backslash A,\mathbb{C})\simeq H^{k}_{\mathrm{dR}}(X,A,\mathbb{C})\,,\quad   H_{k}(X\backslash A)\simeq H_{k}^\text{lf}(X,A)\,.
\end{align}
\end{appendix}

\bibliographystyle{JHEP}
\bibliography{biblio.bib}

\end{document}